\documentclass[acmsmall,screen,dvipsnames,nonacm]{acmart}
\pdfoutput=1

\makeatletter

\usepackage{hyperref}
\usepackage{enumitem}
\usepackage{cleveref}
\usepackage{graphicx}
\usepackage{mathpartir}
\usepackage{mathtools}
\usepackage{stmaryrd}
\usepackage{subcaption}
\usepackage{xcolor}
 \hypersetup{
    colorlinks,
    linkcolor={red!50!black},
    citecolor={blue!50!black},
    urlcolor={blue!80!black}
}
\usepackage{wrapfig}
\usepackage{thmtools} 
\usepackage{thm-restate}
\usepackage{xr}

\usepackage{utfsym}

\def\extended{0}
\usepackage{olmacros}

\ifx\extended\undefined
\externaldocument[a-]{extended}
\fi

\ifx\apponly\undefined\else
\usepackage[41-71]{pagesel}
\fi

\bibliographystyle{ACM-Reference-Format}
\citestyle{acmauthoryear} 

\begin{CCSXML}
<ccs2012>
   <concept>
       <concept_id>10003752.10003790.10002990</concept_id>
       <concept_desc>Theory of computation~Logic and verification</concept_desc>
       <concept_significance>500</concept_significance>
       </concept>
   <concept>
       <concept_id>10003752.10003790.10003806</concept_id>
       <concept_desc>Theory of computation~Programming logic</concept_desc>
       <concept_significance>500</concept_significance>
       </concept>
   <concept>
       <concept_id>10003752.10003790.10011741</concept_id>
       <concept_desc>Theory of computation~Hoare logic</concept_desc>
       <concept_significance>500</concept_significance>
       </concept>
 </ccs2012>
\end{CCSXML}

\ccsdesc[500]{Theory of computation~Logic and verification}
\ccsdesc[500]{Theory of computation~Programming logic}
\ccsdesc[500]{Theory of computation~Hoare logic}

\keywords{Outcome Logic, Computational Effects, Incorrectness}

\setcopyright{acmlicensed}
\acmJournal{TOPLAS}
\acmYear{2025} 
\acmDOI{10.1145/3743131}

\usepackage{fancyhdr}
\pagestyle{fancy}
\fancyfoot[C]{© 2025 Your Name. All rights reserved.}

\begin{document}
\title{Outcome Logic: A Unified Approach to the Metatheory of Program Logics with Branching Effects}

\author{Noam Zilberstein}
\email{noamz@cs.cornell.edu}
\orcid{0000-0001-6388-063X}
\affiliation{%
  \institution{Cornell University}
  \city{New York, NY}
  \country{USA}
}

  \thanks{{\textcopyright\ 2025 Noam Zilberstein.
  This is the author's version of the work. It is posted here for your personal use. Not for redistribution. The definitive version was published in \emph{ACM Transactions on Programming Languages and Systems} (TOPLAS).
  \\
  \texttt{\href{https://doi.org/10.1145/3743131}{https://doi.org/10.1145/3743131}}.}}

\begin{abstract}
Starting with Hoare Logic over 50 years ago, numerous program logics have been devised to reason about the different kinds of programs encountered in the real world. This includes reasoning about computational effects, particularly those effects that cause the program execution to branch into multiple paths due to, \eg nondeterministic or probabilistic choice.

Outcome Logic reimagines Hoare Logic with branching at its core, using an algebraic representation of choice to capture programs that branch into many outcomes. In this article, we give a comprehensive account of the Outcome Logic metatheory. This includes a relatively complete proof system for Outcome Logic with the ability to reason about general purpose looping. We also show that this proof system applies to programs with various types of branching, that it subsumes some well known logics such as Hoare Logic,  and that it facilitates the reuse of proof fragments across different kinds of specifications.
\end{abstract}

\maketitle              

\section{Introduction}

The seminal work of \citet{FLOYD67} and \citet{hoarelogic} on program logics in the 1960s has paved the way towards modern program analysis. 
The resulting \emph{Hoare Logic}---still ubiquitous today---defines triples $\hoare PCQ$ to specify the behavior of a program $C$ in terms of a precondition $P$ and a postcondition $Q$.
In the ensuing years, many variants of Hoare Logic have emerged, in part to handle the numerous computational effects found in real-world programs.

Relevant effects include nontermination, arising from while loops; nondeterminism, useful for modeling adversarial behavior or concurrent scheduling; and randomization, required for security and machine learning applications.
These effects have historically warranted specialized program logics with distinct inference rules. For example, partial correctness \cite{FLOYD67,hoarelogic} vs total correctness \cite{manna1974axiomatic} can be used to specify that the postcondition holds \emph{if} the program terminates vs that it holds \emph{and} the programs terminates, respectively. While Hoare Logic has classically taken a demonic view of nondeterminism (the postcondition must apply to \emph{all} possible outcomes), recent work on formal methods for incorrectness \cite{il,ilalgebra} has motivated the need for new program logics based on angelic nondeterminism (the postcondition applies to \emph{some} reachable outcome). Further, probabilistic Hoare Logics are quantitative, allowing one to specify the likelihood of each outcome, not just that they may occur \cite{hartog,ellora,rand2015vphl,den_hartog1999verifying}.

Despite the fact that distinct logics are used to handle different types of branching (\eg nondeterministic vs probabilistic), all of the aforementioned program logics share common reasoning principles. For instance, no-ops preserve the precondition, sequences of commands $C_1\fatsemi C_2$ are analyzed compositionally and the precondition (resp., postcondition) can be strengthened (resp., weakened) using logical consequences, as shown below.
\begin{mathpar}
\inferrule{\;}{\hoare P\skp{P}}

\inferrule{\hoare P{C_1}Q \\ \hoare Q{C_2}R}{\hoare P{C_1\fatsemi C_2}R}

\inferrule{P'\Rightarrow P \\ \hoare PCQ \\ Q \Rightarrow Q'}{\hoare{P'}C{Q'}}
\end{mathpar}
In this article, we show that those common reasoning principles are no mere coincidence. We develop \emph{Outcome Logic} (OL), which provides a uniform metatheory for a variety of computational effects---including nondeterminism and randomization---culminating in a single proof system for all of them.
We also show how specialized reasoning principles (\eg loop invariants for partial correctness) are derived from more general rules and how proof fragments can be shared between programs with different effects.

This work is also valuable in the context of static analysis.
Recent interest in bug finding tools \cite{gorogiannis2019true,blackshear2018racerd} prompted the development of Incorrectness Logic (IL) \cite{il,isl,ilalgebra,cisl}, which under-approximates a program's reachable states in order to identify true positive bugs. This theory was put to use in Meta's Pulse tool, which is deployed in large scale industrial codebases \cite{realbugs}.
Subsequently---and largely with the goal of consolidating static analysis tools---more logics were proposed to capture \emph{both} correctness \emph{and} incorrectness.
Some of those approaches involved combining Hoare Logic and Incorrectness Logic \cite{maksimovi_c2023exact,Bruni2021ALF,brunijacm}, but came with the significant downside of needing to simultaneously over- and under-approximate the set of reachable states, thereby inheriting the weaknesses of both logics without being able to leverage their strengths. One specific downside of IL is the inability to reason in abstract domains \cite{ascari}.

\citet{ilalgebra} and \citet{outcome} proposed to instead capture incorrectness with an \emph{angelic} version of Hoare Logic, often referred to as \emph{Lisbon Logic}, which has the advantage of easily identifying \emph{manifest errors} \cite{realbugs}, bugs that occur regardless of context. It was later shown that Pulse does not take advantage of Incorrectness Logic's unique features, and therefore could equivalently be modeled using Lisbon Logic \cite{raad2024nontermination,zilberstein2024outcome}. Lisbon Logic was then further developed as part of new bug finding tools, for which Incorrectness Logic was not a good fit \cite{ascari2025revealing,raad2024nontermination} and there is ongoing interest in developing more consolidated tools \cite{loow2024compositional,zilberstein2024unified}.

The key insight from the aforementioned work is that correctness and incorrectness can be represented by a choice between demonic or angelic nondeterminism. Outcome Logic supports both angelic and demonic reasoning, as well as more expressive ways to reason about the reachability of multiple different nondeterministic outcomes, or other effects such as probabilistic computation.
Outcome Logic was first proposed as a unified basis for correctness and incorrectness reasoning in nondeterministic and probabilistic programs, with semantics parametric on a \emph{monad} and a \emph{monoid} \cite{outcome}. The semantics was later refined such that each trace is weighted using an element of a semiring \cite{zilberstein2024outcome}. For example, Boolean weights specify which states are in the set of outcomes for a nondeterministic program, whereas real-valued weights quantify the probabilities of outcomes in a probabilistic program. Exposing these weights in pre- and postconditions means that a single program logic can express multiple termination criteria, angelic \emph{and} demonic nondeterminism, probabilistic properties, and more.

But while the aforementioned papers demonstrated the value of Outcome Logic in correctness and incorrectness applications, they left large gaps. Most notably, no proof strategies were given for unbounded iteration---loops could only be analyzed via bounded unrolling---and the connections to other logics were not deeply explored. In this article, we fill those gaps and provide a more authoritative and comprehensive reference on the Outcome Logic metatheory. In addition to demonstrating the applications of Outcome Logic to correctness and incorrectness reasoning, we provide a cleaner account of the semiring weights, which supports more models; we provide inference rules for unbounded looping and a relative completeness proof; and we more deeply explore the connections between OL and other logics by showing how the rules of those logics can be derived from the OL proof system.
The structure and contributions are follows:

\begin{itemize}[leftmargin=*]

\item We give an overview of the technical approach, highlighting the key ideas that will be formalized throughout the remainder of the article (\Cref{sec:overview}).

\item We give a cleaner Outcome Logic semantics and give six models (\Cref{ex:detsemi,ex:powersetsemi,ex:multisetsemi,ex:probsemi,ex:tropicalsemi,ex:langsemi}), including a multiset model (\Cref{ex:multisetsemi}) not supported by previous formalizations due to more restrictive algebraic constraints (\Cref{sec:semantics,sec:ol}). Our new looping construct naturally supports deterministic (while loops), nondeterministic (Kleene star), and probabilistic iteration---whereas previous OL versions supported fewer kinds of iteration \cite{zilberstein2024outcome} or used a non-unified, ad-hoc semantics \cite{outcome}.

\item We provide an extensional proof system based on semantic pre- and postconditions and prove that it is sound and relatively complete (\Cref{sec:proof}). Relative completeness means that the rules are sufficient for deriving any true specification, provided an oracle for deciding consequences between assertions used in pre- and postcondition \cite{cook1978soundness}; it is the best case scenario for program logics because those consequences are necessarily undecidable \cite{apt1981ten}. This is the first OL proof system that handles loops that iterate an indeterminate number of times. Our \ruleref{Iter} rule is sufficient for analyzing any iterative command, and from it we derive the typical loop invariant rule (for partial correctness), loop variants (with termination guarantees), and probabilistic loops (\Cref{sec:derived,sec:dynamichoare}).

\item In \Cref{sec:connections}, we prove that OL subsumes Hoare Logic and derive the entire Hoare Logic proof system (\eg loop invariants) in Outcome Logic. Inspired by Dynamic Logic \cite{pratt1976semantical,harel2001dynamic}, our encoding of Hoare Logic uses modalities to generalize partial correctness to types of branching beyond just nondeterminism. We also show that OL subsumes Lisbon Logic (a new logic for incorrectness), and its connections to Hyper Hoare Logic for proving hyperproperties of nondeterministic programs \cite{hyperhoare}.

\item Through case studies, we demonstrate the reusability of proofs across different effects (\eg nondeterminism or randomization) and properties (\eg angelic or demonic nondeterminism) (\Cref{sec:examples}). Whereas choices about how to handle loops typically require selecting a specific logic (\eg partial vs total correctness), loop analysis strategies can be mixed within a single OL derivation; we discuss the implications to program analysis in \Cref{sec:analysis}.
We also perform combinatorial analysis of graph algorithms based on alternative computation models (\Cref{sec:graphs}).

\item We contextualize the paper in terms of related work (\Cref{sec:discussion}) and discuss limitations and opportunities for future development (\Cref{sec:limitations}).
\end{itemize}

\section{Overview}
\label{sec:overview}

Outcome Logic is formalized in layers. At the bottom layer, there is a denotational model of programs, which uses various kinds of weights to represent computational effects arising due to different kinds of branching. Those weights are then reflected up into the logical layer, where they are used to quantify collections of outcomes, providing significant expressive power. In this section, we give an informal overview of these semantic and logical models, and show their applicability toward correctness and incorrectness.

\subsection{Unified Denotational Semantics}
\label{sec:ov-semantics}
Typical denotational models select a particular mathematical domain, corresponding to the computational effects of the program. For example, semantics of nondeterministic programs use powerdomains \cite{plotkin1976powerdomain,smyth1978power} to represent the result of the program as a set of possible states, \ie let $C$ be a program and $\mathcal S$ be the set of all states, then the semantics is captured by a function $\de{C}_{\mathsf{ND}} \colon \mathcal S \to \bb{2}^{\mathcal S}$ (where $\bb{2}^X$ is the powerset of $X$). A program that involves a nondeterministic choice---expressed as $C_1 + C_2$---produces a set of two outcomes:
\[
  \de{(x \coloneqq 0) + (x \coloneqq 1)}_{\mathsf{ND}}(s) = \{ s[x\coloneqq 0], s[x\coloneqq 1] \}
\]
On the other hand, probabilistic programs are interpreted in the probabilistic powerdomain \cite{jones1989probabilistic}; the result of the program is represented as a distribution of outcomes $\de{C}_{\mathsf{Prob}} \colon \mathcal S \to \mathcal D(\mathcal S)$, where distributions are maps from states to probabilities $\mathcal D(\mathcal S) = \mathcal S \to [0,1]$. A convex choice $C_1 +_p C_2$---where $C_1$ is executed with probability $p$ and $C_2$ is executed with probability $1-p$---results in a distribution over two outcomes.
\[
  \de{(x \coloneqq 0) +_p (x \coloneqq 1)}_{\mathsf{Prob}}(s) = \left\{
    \begin{array}{lll}
      s[x\coloneqq 0] & \mapsto & p
      \\
      s[x\coloneqq 1] & \mapsto & 1-p
    \end{array}
  \right.
\]
While these two computational domains at first appear distinct, they can in fact be unified through the lens of \emph{weighted programming} \cite{batz2022weighted}, where different kinds of weights are used to represent different kinds of effects. It is quite clear that the weights in a probabilistic program are real numbers on the unit interval, \ie probabilities. That is, the semantics of a probabilistic program weights each end state by a probability.

However, it is also well known that sets $S \in \bb{2}^{\mathcal S}$ are isomorphic to maps into the Booleans $f \colon \mathcal S \to \mathbb B$ (where $\mathbb B = \{0, 1\}$), \ie $S \cong f$ iff $S = \{ s\in\mathcal S \mid f(s) = 1\}$. Using this fact, we can rewrite the nondeterministic semantics above to look more similar to the probabilistic one, omitting any states that map to 0.
\[
  \de{(x \coloneqq 0) + (x \coloneqq 1)}_{\mathsf{ND}}(s) = \left\{
    \begin{array}{lll}
      s[x\coloneqq 0] & \mapsto & 1
      \\
      s[x\coloneqq 1] & \mapsto & 1
    \end{array}
  \right.
\]
Probabilities and Booleans are two examples of valid weights, which more generally must carry a \emph{semiring} structure, meaning that they can be added and multiplied. Addition corresponds to branching; the semantics of the two programs above can more abstractly be captured by a lifted addition operation $+$, which sums the weights of each outcome in both computation branches:
\[
  \de{C_1 + C_2}(s) = \de{C_1}(s) + \de{C_2}(s)
\]
In the Boolean instance, addition is logical disjunction, meaning that the weight of an outcome is 1 if it appears in either branch. In the probabilistic instance, addition has the standard arithmetic meaning, so the weight of an outcome is the cumulative probability across the two branches.

Multiplication is used to interpret sequential composition. In the Boolean instance, multiplication corresponds to logical conjunction, so in the program $(C_1 + C_2) \fatsemi (C_3 + C_4)$, the weight of executing both $C_1$ and $C_3$ is $1\land 1 = 1$. In the probabilistic instance, multiplication is also arithmetic, so the weight of executing both $C_1$ and $C_3$ in $(C_1 +_p C_2) \fatsemi (C_3 +_q C_4)$ is $p\cdot q$. In \Cref{sec:semantics}, we will give all the formal properties of these weights, and introduce a few more interpretations.

\subsection{A Unified Program Logic}
\label{ov:logic}

Outcome Logic's expressive power comes from exposing the semiring weights from \Cref{sec:ov-semantics} to the program logic, so that specifications can quantify the weight of each outcome. While this ability has previously featured in probabilistic logics---where expressing the probabilities of each event is essential \cite{hartog,ellora,rand2015vphl,den_hartog1999verifying}---it had not appeared in a logic for nondeterministic programs. Exposing the weights allows Outcome Logic to reason about nondeterminism in either an angelic or demonic style, or even mix the two. This mixture makes Outcome Logic a good foundation for unifying correctness and incorrectness, as we will see in \Cref{sec:ov-incorrectness}. In addition, the metatheoretic unification means that a single logic can be used for quantitative reasoning about probabilistic programs \emph{and} correctness and incorrectness in nondeterministic ones.

Weights are exposed to the logic via \emph{outcome assertions}, which are described formally in \Cref{sec:assertions}. In this section, we focus on two of these assertions. The first one---$\wg Pu$---lifts a basic assertion $P$ describing individual states to an assertion about weighted collections of states, such that the collection has cumulative weight $u$ and every state with nonzero weight satsifies $P$.
When the cumulative weight is 1, we simply write $\sure P$. To describe how the states are distributed across a collection, the \emph{outcome conjunction} $\varphi\oplus\psi$ asserts that the weight of the collection is split across the two outcomes $\varphi$ and $\psi$. This means that if a program is composed of two branches $C_1 + C_2$, then an outcome conjunction can be used to specify the result of that program:
\[
    \de{C_1}(s) \vDash \varphi
    \quad\text{and}\quad
    \de{C_2}(s) \vDash \psi
    \quad\implies\quad
    \de{C_1 + C_2}(s) \vDash \varphi \oplus \psi
\]
Just as in Hoare Logic, Outcome Logic judgements are given as triples $\triple{\varphi}C\psi$, but the pre- and postconditions are outcome assertions rather than assertions about individual states. This allows for inference rules to compositionally reason about the branching behavior of a program.
\[
\inferrule{
  \triple{\varphi}{C_1}{\psi_1}
  \\
  \triple{\varphi}{C_2}{\psi_2}
}{
  \triple{\varphi}{C_1 + C_2}{\psi_1 \oplus \psi_2}
}{\ruleref{Plus}}
\]
Stemming from this one rule, we can derive specifications for both the probabilistic and nondeterministic programs in \Cref{sec:ov-semantics}. The probabilistic specification expresses the precise probabilities of the two outcomes.
\[
\inferrule{
    \triple{\sure{\tru}}{x \coloneqq 0}{\sure{ x=0 }}
    \\
    \triple{\sure{\tru}}{x \coloneqq 1}{\sure{ x=1 }}
}{
  \triple{\sure{\tru}}{(x \coloneqq 0) +_p (x \coloneqq 1)}{\sure{x=0}^{(p)} \oplus \sure{x=1}^{(1-p)}}
}
\]
The nondeterministic one also gives a precise weight for the outcomes (\ie 1), meaning that both are definitely represented in the set of possible end states.
\[
\inferrule{
    \triple{\sure{\tru}}{x \coloneqq 0}{\sure{ x=0 }}
    \\
    \triple{\sure{\tru}}{x \coloneqq 1}{\sure{ x=1 }}
}{
  \triple{\sure{\tru}}{(x \coloneqq 0) + (x \coloneqq 1)}{\sure{x=0} \oplus \sure{x=1}}
}
\]
As such, this specification is a mix of both demonic and angelic styles. On the angelic side, we know there \emph{exist} traces that reach both outcomes. On the demonic side, we know that \emph{all} traces fall into the conjunction of the two outcomes. This is something that prior logics could not express, and it is useful for unifying correctness and incorrectness reasoning, as we will now see.

\newcommand{\bug}{\raisebox{-1pt}{\usym{1F41C}}}
\subsection{Applications to Correctness and Incorrectness}
\label{sec:ov-incorrectness}

We begin by explaining why typical \emph{correctness} logics like Hoare Logic are not suitable for incorrectness reasoning. In a Hoare triple $\hoare PCQ$, all reachable states must fall within the postcondition $Q$. Suppose that $C$ is a nondeterministic program, which \emph{sometimes} displays a bug ($\bug$), and sometimes results in a desirable outcome ($\checkmark$):
\[
  \de{C}(s) = \left\{
    \begin{array}{lll}
      \bug &\mapsto& 1
      \\
      \checkmark &\mapsto& 1
    \end{array}
  \right.
\]
Now let $Q_\ok$ be the desired postcondition of the program, where $\checkmark \vDash Q_\ok$ and $\bug \not\vDash Q_\ok$. The above program is incorrect, because it does not satisfy the correctness specification $\hoare PC{Q_\ok}$. However, the best we can do to specify the bug in Hoare logic is to use a disjunction: $\hoare PC{Q_\ok \vee Q_\er}$, where $Q_\er$ is a description of the bug such that $\bug\vDash Q_\er$. The problem is that the disjunction is not strong enough to guarantee the reachability of the bug; a program where $\checkmark$ is the only outcome would also satisfy the postcondition $Q_\ok \vee Q_\er$, meaning that reporting a bug based on the above Hoare triple may in fact be a false positive. This is a significant problem, and in practice many bugs do arise nondeterministically, \eg when the programmer forgets a null check after allocating a pointer \cite{il,isl}.

As we already saw, the above problem can be mitigated in Outcome Logic by using the specification $\triple{\sure P}C{\sure{Q_\ok} \oplus \sure{Q_\er}}$, which guarantees that $Q_\er$ is non-vacuously reachable. In fact, we can even drop information about the \emph{ok} outcome, to obtain the simpler specification $\triple{\sure P}C{\sure{Q_\er} \oplus \top}$, stating that $Q_\er$ is reachable, and additional outcomes may be reachable too ($\top$ is a trivial assertion that is satisfied by any collection of states).
Triples of this type are equivalent to specifications in Lisbon Logic, an increasingly popular foundation for bug finding tools \cite{ascari2025revealing,raad2024nontermination}. As we show in \Cref{sec:dynamichoare}, Outcome Logic subsumes Hoare Logic as well.

However, the real power of Outcome Logic comes not only from the ability to express correctness and incorrectness \emph{separately}, but also to express specifications that can be repurposed for both. For example, consider a program $C_1 \fatsemi C_2$, where nondeterministic branching occurs in $C_1$, and depending on which branch the program ends up in, $C_2$ may or may not result in a bug:
\[
  \de{C_1}(s) = \left\{
    \begin{array}{lll}
      s_1 &\mapsto& 1
      \\
      s_2 &\mapsto& 1
    \end{array}
  \right.
  \qquad
  \de{C_2}(s_1) = \left\{
    \begin{array}{lll}
      \bug &\mapsto& 1
    \end{array}
  \right.
  \qquad
  \de{C_2}(s_2) = \left\{
    \begin{array}{lll}
      \checkmark &\mapsto& 1
    \end{array}
  \right.
\]
Letting $Q_1$ and $Q_2$ be such that $s \vDash Q_i$ iff $s = s_i$ for each $i\in\{1,2\}$, we could obtain the triples below.
\begin{mathpar}
  \triple{\sure P}{C_1}{\sure{Q_1}\oplus\sure{Q_2}}

  \triple{\sure{Q_1}}{C_2}{\sure{Q_\er}}

  \triple{\sure{Q_2}}{C_2}{\sure{Q_\ok}}
\end{mathpar}
Now, suppose that we are analyzing the program in a forward fashion. After obtaining the above specification for $C_1$, we do not yet know \emph{if} a bug will be encountered, or \emph{which branch} will encounter it. As such, it is useful to retain the intermediate assertion $\sure{Q_1} \oplus \sure{Q_2}$ to guarantee that both branches are reachable (in case one encounters a bug) and also that they cover all possible outcomes (in case we want to ensure correctness). To complete the proof, we need the two composition rules shown below: \ruleref{Seq} is the standard sequential composition rule from, \eg Hoare Logic; and \ruleref{Choice} allows us to compositionally reason about each outcome.
\begin{mathpar}
  \inferrule*[right=\ruleref{Seq}]{
    \triple{\varphi}{C_1}{\vartheta}
    \\
    \triple{\vartheta}{C_2}{\psi}
  }{
    \triple{\varphi}{C_1\fatsemi C_2}{\psi}
  }

  \inferrule*[right=\ruleref{Choice}]{
    \triple{\varphi_1}C{\psi_1}
    \\
    \triple{\varphi_2}C{\psi_2}
  }{
    \triple{\varphi_1 \oplus \varphi_2}C{\psi_1\oplus\psi_2}
  }
\end{mathpar}
By applying these rules, we can obtain a specification for the complete program with a postcondition indicating that an error state is reachable, signifying that the program is \emph{incorrect}.
\[
\inferrule*[right=\ruleref{Seq}]{
  \triple{\sure P}{C_1}{\sure{Q_1}\oplus\sure{Q_2}}
  \\
  \inferrule*[Right=\ruleref{Choice}]{
    \triple{\sure{Q_1}}{C_2}{\sure{Q_\er}}
    \\
    \triple{\sure{Q_2}}{C_2}{\sure{Q_\ok}}
  }{
    \triple{\sure{Q_1}\oplus\sure{Q_2}}{C_2}{\sure{Q_\er} \oplus \sure{Q_\ok}}
  }
}{
  \triple{\sure P}{C_1\fatsemi C_2}{\sure{Q_\er} \oplus \sure{Q_\ok}}
}
\]
Of course, if a bug is encountered on one side of the $\oplus$, then it is not necessarily worthwhile to continue analyzing the other outcomes. This was one of \cites{il} key motivations in developing Incorrectness Logic, as its ability to \emph{drop disjuncts} from the postcondition meant that extraneous traces could be ignored. In our case, knowing that ${Q_\ok}$ is a reachable outcome is unimportant; the fact that ${Q_\er}$ is reachable already tells us everything that we need to know. So, it desirable to be able to \emph{drop outcomes} using that following inference rule that allows us to trivially analyze a program using the vacuous postcondition $\top$, which is satisfied by any collection of states.
\begin{mathpar}
%
\inferrule*[right=\ruleref{True}]{\;}{
  \triple{\varphi}C{\top}
}
\end{mathpar}
Indeed, the specification $\triple{\sure{Q_2}}{C_2}{\sure{Q_\ok}}$ may not be simple to prove; $C_2$ itself may be a complex sequence of commands. By contrast, \ruleref{True} can be applied regardless of how complex $C_2$ is. With this in mind, we perform a simpler derivation for the composite program below.
\[
\inferrule*[right=\ruleref{Seq}]{
  \triple{\sure P}{C_1}{\sure{Q_1}\oplus\sure{Q_2}}
  \\
  \inferrule*[Right=\ruleref{Choice}]{
    \triple{\sure{Q_1}}{C_2}{\sure{Q_\er}}
    \\
    \inferrule*[Right=\ruleref{True}]{\;}{
      \triple{\sure{Q_2}}{C_2}{\top}
    }
  }{
    \triple{\sure{Q_1}\oplus\sure{Q_2}}{C_2}{\sure{Q_\er} \oplus \top}
  }
}{
  \triple{\sure P}{C_1\fatsemi C_2}{\sure{Q_\er} \oplus \top}
}
\]
Here, we have obtained the postcondition $\sure{Q_\er}\oplus\top$, which states that $Q_\er$ is reachable and there may be any number of other reachable outcomes too.

If $C_1$ were reused in a different context $C_1 \fatsemi C_3$, where $C_3$ does not encounter a bug on state $s_1$, then we could instead collapse the two outcomes into a concise correctness specification $\triple{\sure P}{C_1\fatsemi C_2}{\sure{Q_\ok}}$\footnote{This could be done either by analyzing $C_3$ twice using two triples $\triple{\sure{Q_1}}{C_3}{\sure{Q_\ok}}$ and $\triple{\sure{Q_2}}{C_3}{\sure{Q_\ok}}$, or by first weakening the postcondition $\sure{Q_1}\oplus\sure{Q_2}\Rightarrow\sure Q$ for some $Q$ such that $Q_1\Rightarrow Q$ and $Q_2\Rightarrow Q$, and then proving a single specification $\triple{\sure Q}{C_3}{\sure{Q_\ok}}$.}. Importantly, this correctness specification does not involve $\top$---the program is only correct if it behaves properly in \emph{all} traces. The ability to witness reachability of multiple outcomes is a unique capability of Outcome Logic---which cannot be achieved in any prior logics such as Hoare Logic, Incorrectness Logic, or Lisbon Logic---and is what makes it suitable for both correctness and incorrectness.

\medskip

\noindent In the remainder of this article, we formalize the details that have been outlined thus far. We begin with the algebraic properties of weights and program semantics in \Cref{sec:semantics}. Next, we define Outcome Logic and its proof system in \Cref{sec:ol,sec:state}. In \Cref{sec:connections}, we prove that Outcome Logic subsumes a variety of other logics including Hoare Logic and Lisbon Logic, and derive the proof systems of those logics as well. We then present a variety of case studies in \Cref{sec:examples,sec:graphs}, and finally we conclude by discussing related and future work in \Cref{sec:discussion,sec:limitations}.

\section{Weighted Program Semantics}\label{sec:semantics}

We begin the technical development by defining a basic programming language and describing its semantics based on various interpretations of choice. The syntax for the language is shown below.
\begin{align*}
C &\Coloneqq\skp \mid C_1 \fatsemi C_2 \mid C_1 + C_2 \mid \assume e \mid
\iter{C}{e}{e'}
 \mid
 a 
 \qquad (a\in\mathsf{Act}, u\in U, t\in\mathsf{Test})\\
e &\Coloneqq b \mid u \\
b &\Coloneqq \tru \mid \fls \mid b_1 \lor b_2 \mid b_1 \land b_2 \mid \lnot b \mid t
\end{align*}
This language is similar to imperative languages such as \cites{GCL} Guarded Command Language (GCL), with familiar constructs such as $\skp$, sequential composition ($C_1 \fatsemi C_2$), branching ($C_1 + C_2$), and atomic actions $a \in\mathsf{Act}$. The differences arise from the generalized $\code{assume}$ operation, which weights the current computation branch using an expression $e$ (either a test $b$ or a \emph{weight} $u \in U$, to be described fully in \Cref{sec:algebra}). We already saw two examples of weights in \Cref{sec:overview}: probabilities and Booleans.

Weighting is also used in the iteration command $\iter C{e}{e'}$, which iterates $C$ with weight $e$ and exits with weight $e'$. It is a generalization of the Kleene star $C^\star$, and is also more general than the iteration constructs found in previous Outcome Logic work \cite{outcome,zilberstein2024outcome}. In \Cref{sec:sugar}, we will show how to encode while loops, Kleene star, and probabilistic loops using $\iter Ce{e'}$. Although the latter constructs can be encoded using while loops and auxiliary variables, capturing this behavior \emph{without} state can be advantageous, as it opens up the possibility for equational reasoning over programs with uninterpreted atomic commands \cite{KAT,r_o_zowski2023probabilistic}.

Tests $b$ contain the typical operations of Boolean algebras as well as primitive tests $t \in \mathsf{Test}$, assertions about a program state. Primitive tests are represented semantically, so $\mathsf{Test} \subseteq \bb{2}^\Sigma$ where $\Sigma$ is the set of program states (each primitive test $t\subseteq \Sigma$ is the set of states that it describes). Tests evaluate to $\zero$ or $\one$, which represent the Boolean values false and true, respectively.

The values $\zero$ and $\one$ are examples of weights from the set $\{\zero,\one\}\subseteq U$. These weights have algebraic properties, which were introduced in \Cref{sec:ov-semantics} and will be described fully in \Cref{sec:algebra}. Using a test $b$, the command $\assume b$ chooses whether or not to continue evaluating the current branch, whereas $\assume u$ more generally picks a weight for the branch, which may be a Boolean ($\zero$ or $\one$), but may also be some other type of weight such as a probability.
In the remainder of this section, we will define the semantics formally.

%

\subsection{Algebraic Preliminaries}\label{sec:algebra}

We begin by reviewing some algebraic structures.
As explained in \Cref{sec:ov-semantics}, our program semantics relies on semiring weights to represent different kinds of branching.
We begin by formally defining the properties of those weights.
%
\begin{definition}[Monoid] A monoid $\langle U, +, \zero\rangle$ consists of a carrier set $U$, an associative binary operation $+\colon U\times U\to U$, and an identity element $\zero\in U$ ($u + \zero = \zero + u = u$). If $+\colon U\times U\rightharpoonup U$ is partial, then the monoid is partial. If $+$ is commutative ($u+v = v+u$), then the monoid is commutative.
\end{definition}
As an example, $\langle \{0,1\}, \lor, 0\rangle$ is a monoid on Booleans. Building on monoids, we now introduce semirings. As we saw in \Cref{sec:overview}, the outcome weights used to formalize our program semantics must carry a semiring structure.
%
\begin{definition}[Semiring]\label{def:semiring}
A semiring $\langle U, +, \cdot, \zero, \one\rangle$ is a structure such that $\langle U, +, \zero\rangle$ is a commutative monoid, $\langle U,\cdot, \one\rangle$ is a monoid, and the following holds:
\begin{enumerate}[leftmargin=2em]
\item Distributivity: $u \cdot (v+w) = u\cdot v + u\cdot w$ and $(u+v)\cdot w = u\cdot w + v\cdot w$
\item Annihilation: $\zero \cdot u = u\cdot \zero = \zero$
\end{enumerate}
The semiring is partial if $\langle U, +, \zero\rangle$ is a partial monoid (but $\cdot$ remains total).
\end{definition}
In the remainder of this section, we introduce additional properties of semirings needed to make the semantics of loops well defined.
First, we define an order based on the addition operator of the semiring.
\begin{definition}[Natural Ordering]
Given a (partial) semiring $\langle U,+,\cdot,\zero,\one\rangle$, the natural order is defined to be:
\[
u\le v
\qquad\text{iff}\qquad
\exists w.~ u+w=v
\]
The semiring is naturally ordered if the natural order $\le$ is a partial order. Note that $\le$ is trivially reflexive and transitive, but it remains to show that it is anti-symmetric.
\end{definition}
Based on this order, we also define the notion of Scott continuity, which will be important for defining infinite sums and the semantics of while loops. This definition relies on directed sets and dcpos. A set $D$ is directed iff every pair of elements $x,y \in D$ have an upper bound in $D$. A dcpo is a poset where every directed set has a supremum.
\begin{definition}[Scott Continuity \cite{semirings}]
\label{def:continuity}
A (partial) semiring with order $\le$ is Scott continuous if it is a dcpo and for any directed set $D\subseteq X$, the following hold:
\[
\sup_{x\in D} (x+y) = (\sup D) +y
\qquad
\sup_{x\in D} (x\cdot y) = (\sup D)\cdot y
\qquad
\sup_{x\in D} (y\cdot x) = y\cdot \sup D
\]
\end{definition}
Given a Scott continuous semiring, we can also define a notion of infinite sums.
\begin{definition}[Infinite Sums]
Let $\langle U, +, \cdot, \zero,\one\rangle$ be a Scott continuous semiring. Now, for any (possibly infinite) indexed collection $(u_i)_{i\in I}$, we define the following operator:
\[
  \sum_{i\in I} u_i \triangleq \sup \{ u_{i_1} + \cdots + u_{i_n} \mid n\in \mathbb N, \{ i_1, \ldots, i_n \} \subseteq_{\mathsf{fin}} I \}
\]
That is, an infinite sum is the supremum of the sums of all finite subsequences. As shown by \citet[Corollary 1]{kuich2011algebraic}, $\sum$ makes $U$ a complete semiring, meaning it upholds the following properties:
\begin{enumerate}
\item If $I = \{i_1, \ldots, i_n\}$ is finite, then $\sum_{i\in I} u_i = u_{i_1} + \cdots + u_{i_n}$
\item If $\sum_{i\in I}u_i$ is defined, then $v\cdot\sum_{i\in I} u_i = \sum_{i\in I} v\cdot u_i$ and $(\sum_{i\in I}u_i)\cdot v = \sum_{i\in I} u_i\cdot v$
\item Let $(J_k)_{k\in K}$ be a family of nonempty disjoint subsets of $I$ ($I = \bigcup_{k\in K} J_k$ and $J_k \cap J_\ell = \emptyset$ if $k\neq\ell$), then $\sum_{k\in K}\sum_{j\in J_k} u_j = \sum_{i\in I} u_i$
\end{enumerate}
\end{definition}

\subsection{Weighting Functions}

Semirings elements will act as the \emph{weights} for traces in our semantics. The interpretation of a program at a state $\sigma\in\Sigma$ will map each end state to a semiring element $\de{C}(\sigma) \colon \Sigma\to U$. Varying the semiring will yield different kinds of effects. For example, as we saw in \Cref{sec:overview}, the Boolean semiring where $U = \{0,1\}$ corresponds to nondeterminism; $\de{C}(\sigma) \colon \Sigma\to \{0,1\} \cong \bb{2}^\Sigma$ tells us which states are in the set of outcomes. A probabilistic semiring where $U = [0,1]$ (the unit interval of real numbers) gives us a map from states to probabilities---a \emph{distribution} of outcomes. More formally, the result is a \emph{weighting function}, defined below.

\begin{definition}[Weighting Function]\label{def:weighting}
Given a set $X$ and a partial semiring $\mathcal A = \langle U, +, \cdot, \zero,\one\rangle$, the set of weighting functions is:
\[
\mathcal W_{\mathcal A}(X) \triangleq \Big\{ m \colon X \to U  ~\Big| ~|m| ~\text{is defined and} ~ \supp(m) ~\text{is countable} \Big\}
\]
Where $\supp(m) \triangleq \{ \sigma \mid m(\sigma)\neq\zero \}$ and $|m| \triangleq {\sum_{\sigma\in\supp(m)}} m(\sigma)$.
\end{definition}

Weighting functions can encode the following types of computation.

\begin{example}[Nondeterminism]
\label{ex:powersetsemi}
Nondeterministic computation is based on the Boolean semiring $\mathsf{Bool}=\langle \mathbb B, \vee, \wedge, 0, 1\rangle$, where weights are drawn from $\mathbb B = \{0,1\}$ and \emph{conjunction} $\land$ and \emph{disjunction} $\lor$ are the usual logical operations. This gives us $\mathcal W_{\mathsf{Bool}}(X) \cong \bb{2}^X$---weighting functions on $\mathsf{Bool}$ are isomorphic to sets, \ie any $m \in \mathcal W_{\mathsf{Bool}}(X)$, is isomorphic to the set $\{ x \in X \mid m(x) = 1\}$.

\end{example}
\begin{example}[Determinism]
\label{ex:detsemi}
Deterministic computation also uses Boolean weights, but with a different interpretation of the semiring $+$. That is, $0\dotplus x = x\dotplus 0 = x$, but $1\dotplus 1$ is undefined. The semiring is therefore $\mathsf{Bool}' = \langle \mathbb B, \dotplus, \land, 0, 1\rangle$. With this definition of $\dotplus$, the requirement of \Cref{def:weighting} that $|m|$ is defined means that $|\supp(m)| \le 1$, so we get that $\mathcal W_{\mathsf{Bool}'}(X) \cong X + \{\lightning\}$---it is either a single value $x\in X$, or $\lightning$, indicating that the program diverged.

\end{example}
\begin{example}[Multiset Nondeterminism]
\label{ex:multisetsemi}
Rather than indicating which outcomes are possible using Booleans, we use natural numbers (extended with $\infty$) $n \in \mathbb{N}^\infty$ to count the traces leading to each outcome. This yields the semiring $\mathsf{Nat} = \langle \mathbb{N}^\infty, +, \cdot, 0, 1\rangle$ with the standard arithmetic operations, and we get that $\mathcal W_{\mathsf{Nat}}(X) \cong \mathcal M(X)$ where $\mathcal M(X)$ is the set of \emph{multisets} over $X$. Multisets are useful for encoding probabilistic nondeterminism \cite{jacobs2021multisets,kozen2023multisets,ong2024probability}.
\end{example}
\begin{example}[Randomization]
\label{ex:probsemi}
Probabilities $p \in [0,1] \subset \mathbb R$ form a partial semiring $\mathsf{Prob}=\langle [0,1], +, \cdot, 0, 1\rangle$ where $+$ and $\cdot$ are real-valued arithmetic operations, but $+$ is undefined if $x+y > 1$ (just like in \Cref{ex:detsemi}). This gives us $\mathcal W_{\mathsf{Prob}}(X) \cong \mathcal D(X)$, where $\mathcal D(X)$ is the set of discrete probability sub-distributions over $X$ (the mass can be less than 1 if some traces diverge).
\end{example}


\begin{example}[Tropical Computation]
\label{ex:tropicalsemi}
 $\mathsf{Tropical}=\langle [0,\infty], \min, +, \infty, 0\rangle$ uses real-valued weights, but semiring addition is minimum and semiring multiplication is arithmetic addition. Computations in $\mathcal W_{\mathsf{Tropical}}(X)$ correspond to programs that choose the \emph{cheapest} path for each outcome.
\end{example}

\begin{example}[Formal Languages]
\label{ex:langsemi}
For some alphabet $\Gamma$, let $\Gamma^\ast$ be the set of finite strings over that alphabet, $\Gamma^\omega$ is the set of infinite strings, and $\Gamma^\infty = \Gamma^\ast\cup \Gamma^\omega$ is the set of all strings. Then
$\mathsf{Lang} = \langle \bb{2}^{\Gamma^\infty}, \cup,  \cdot, \emptyset, \{\varepsilon\}\rangle$ is the semiring of formal languages (sets of strings) where addition is given by the standard union, multiplication is given by concatenation (if the first string is finite):
\[
\ell_1 \cdot \ell_2 \triangleq \{ st \mid s\in \ell_1, t\in\ell_2 \}
\qquad\text{where}\qquad
st = s \;\; \text{if} \;\; s\in\Gamma^\omega
\]
The unit of addition is the empty language and the unit of multiplication is the language containing only the empty string $\varepsilon$. Note that multiplication (\ie concatenation) is not commutative. Computations in $\mathcal W_{\mathsf{Lang}}(\Sigma)$ correspond to programs that log sequences of letters \cite{batz2022weighted}.
\end{example}

We will occasionally write $\mathcal W(X)$ instead of $\mathcal W_{\mathcal A}(X)$ when $\mathcal A$ is obvious. The semiring operations for addition, scalar multiplication, and zero are lifted pointwise to weighting functions as follows
\begin{align*}
(m_1 + m_2)(x) &\triangleq m_1(x) + m_2(x)
&
\zero(x) &\triangleq \zero
\\
(u \cdot m)(x) &\triangleq u\cdot m(x)
&
(m \cdot u)(x) &\triangleq m(x) \cdot u
\end{align*}
Natural orders also extend to weighting functions, where $m_1 \sqsubseteq m_2$ iff there exists $m$ such that $m_1 + m = m_2$. This corresponds exactly to the pointwise order, so $m_1 \sqsubseteq m_2$ iff $m_1(\sigma) \le m_2(\sigma)$ for all $\sigma\in\supp(m_1)$.

These lifted semiring operations give us a way to interpret branching, but we also need an interpretation for sequential composition.
As is standard in program semantics with effects, we use a monad, which we define as a Klesli triple \cite{manes1976algebraic,moggi91}.

\begin{definition}[Kleisli Triple]\label{def:kleisli}
A Kleisli triple $\langle T, \eta, (-)^\dagger\rangle$ in $\mathbf{Set}$ consists of a functor $T \colon \mathbf{Set} \to \mathbf{Set}$, and two morphisms $\eta : \mathsf{Id} \Rightarrow T$ and for any sets $X$ and $Y$, $(-)^\dagger \colon (X \to T(Y)) \to T(X) \to T(Y)$ such that:
\[
\eta^\dagger = \mathsf{id}
\qquad\qquad
f^\dagger \circ \eta = f
\qquad\qquad
f^\dagger \circ g^\dagger = (f^\dagger \circ g)^\dagger
\]
\end{definition}

For any semiring $\mathcal A$, $\langle \mathcal W_{\mathcal A}, \eta, (-)^\dagger\rangle$ is a Kleisli triple where $\eta$ and $(-)^\dagger$ are defined below.
\[\arraycolsep=3pt
\eta(x)(y) \triangleq \left\{
\begin{array}{lll}
\one & \text{if} & x=y \\
\zero& \text{if} & x\neq y
\end{array}\right.
\qquad\qquad
f^\dagger(m)(y) \triangleq \smashoperator{\sum_{x\in\supp(m)}} m(x)\cdot f(x)(y)
\]
%

\subsection{Denotational Semantics}\label{sec:denotation}

We interpret the semantics of our language using the five-tuple $\langle \mathcal A, \Sigma, \mathsf{Act}, \mathsf{Test}, \de{\cdot}_{\mathsf{Act}}\rangle$, where the components are:
\begin{enumerate}[leftmargin=2em]
\item $\mathcal A = \langle U, +, \cdot,\zero,\one\rangle$ is a naturally ordered, Scott continuous, partial semiring with a top element $\top\in U$ such that $\top\ge u$ for all $u\in U$.
\item $\Sigma$ is the set of concrete program states.
\item $\mathsf{Act}$ is the set of atomic actions.
\item $\mathsf{Test} \subseteq \bb{2}^\Sigma$ is the set of primitive tests.
\item $\de{-}_{\mathsf{Act}} \colon \mathsf{Act} \to \Sigma \to \mathcal W_{\mathcal A}(\Sigma)$ is the  interpretation of atomic actions.
\end{enumerate}
This definition is a generalized version of the one used in Outcome Separation Logic \cite{zilberstein2024outcome}. For example, we have dropped the requirement that $\top = \one$, meaning that we can capture more types of computation, such as the multiset model (\Cref{ex:multisetsemi}). Multiset nondeterminism has been shown to be useful in computational domains that combine nondeterministic and probabilistic computation \cite{kozen2023multisets,jacobs2021multisets,ong2024probability}.

Commands are interpreted denotationally as maps from states $\sigma\in\Sigma$ to weighting functions on states $\de{C}\colon \Sigma\to\mathcal W_{\mathcal A}(\Sigma)$, as shown in \Cref{fig:denotation}. 
The first three commands are defined in terms of the monad (\Cref{def:kleisli}) and semiring operations (\Cref{def:weighting,def:semiring}): $\skp$ uses $\eta$, sequential composition $C_1\fatsemi C_2$ uses $(-)^\dagger$, and $C_1 +C_2$ uses the (lifted) semiring $+$. Since $+$ is partial, the semantics of $C_1+C_2$ may be undefined. In \Cref{sec:sugar}, we discuss simple syntactic checks to ensure that the semantics is total. Atomic actions are interpreted using $\de{-}_{\mathsf{Act}}$.

\begin{figure}[t]
\begin{align*}
\de{\skp}(\sigma) &\triangleq \eta(\sigma) \\
\de{C_1 \fatsemi C_2}(\sigma) &\triangleq \dem{C_2}{\de{C_1}(\sigma)}\\
\de{C_1 + C_2}(\sigma) &\triangleq \de{C_1}(\sigma) + \de{C_2}(\sigma) \\
\de{a}(\sigma) &\triangleq \de{a}_{\mathsf{Act}}(\sigma) \\
\de{\assume e}(\sigma) &\triangleq \de{e}(\sigma)\cdot\eta(\sigma) \\
\de{\iter C{e}{e'}}(\sigma) &\triangleq \mathsf{lfp}\Big(\Phi_{\langle C, e, e' \rangle}\Big)(\sigma)
\end{align*}
\[
\text{where} \quad \Phi_{\langle C, e, e' \rangle}(f)(\sigma) \triangleq \de{e}(\sigma)\cdot f^\dagger(\de{C}(\sigma)) + \de{e'}(\sigma) \cdot \eta(\sigma)
\]
%
%
\caption{Denotational semantics for commands $\de{C} \colon \Sigma\rightharpoonup \mathcal W_{\mathcal A}(\Sigma)$, given a partial semiring $\mathcal A = \langle X,+, \cdot,\zero,\one\rangle$, a set of program states $\Sigma$, atomic actions $\mathsf{Act}$, primitive tests $\mathsf{Test}$, and an interpretation of atomic actions $\de{a}_{\mathsf{Act}} \colon \Sigma\to \mathcal W_{\mathcal A}(\Sigma)$.}
\label{fig:denotation}
\end{figure}

The interpretation of $\code{assume}$ relies on the ability to interpret expressions and tests. We first describe the interpretation of tests, which maps $b$ to the weights $\zero$ or $\one$, that is $\de{b}_{\mathsf{Test}} \colon \Sigma \to \{\zero,\one\}$, with $\zero$ representing false and $\one$ representing true, so $\de{\fls}(\sigma) = \zero$ and $\de{\tru}(\sigma) = \one$. The operators $\land$, $\lor$, and $\lnot$ are interpreted in the obvious ways, and for primitive tests $\de{t}(\sigma) = \one$ if $\sigma\in t$ otherwise $\de{t}(\sigma) = \zero$. The full semantics of tests is given in \Aref{app:tests}.

Since an expression is either a test or a weight, it remains only to describe the interpretation of weights, which is $\de{u}(\sigma) = u$ for any $u\in U$. So, $\assume e$ uses $\de{e}\colon \Sigma\to U$ to obtain a program weight, and then scales the current state by it. If a test evaluates to false, then the weight is $\zero$, so the branch is eliminated. If it evaluates to true, then it is scaled by $\one$---the identity of multiplication---so the weight is unchanged.
The iteration command continues with weight $e$ and terminates with weight $e'$. We can start by defining it recursively as follows.
\begin{align*}
\de{\iter C{e}{e'}}(\sigma)
&= \de{\assume{e} \fatsemi C \fatsemi \iter C{e}{e'} + \assume{e'}}(\sigma) \\
&= \de{e}(\sigma)\cdot \dem{\iter Ce{e'}}{\de C(\sigma)} + \de{e'}(\sigma)\cdot\eta(\sigma)
\end{align*}
To ensure that this equation has a well-defined solution, we formulate the semantics as a least fixed point. Requiring that the semiring is Scott continuous (\Cref{def:continuity}), ensures that this fixed point exists. For the full details, see \Aref{app:totality}.

\subsection{Syntactic Sugar for Total Programs}
\label{sec:sugar}

As mentioned in the previous section, the semantics of $C_1 + C_2$ and $\iter C{e}{e'}$ are not always defined given the partiality of the semiring $+$. The ways that $+$ can be used in programs depends on the particular semiring instance. We begin with guarded branching, \ie if-statements, which are encoded below.
\[
\iftf b{C_1}{C_2} \triangleq (\assume b\fatsemi C_1) + (\assume{\lnot b}\fatsemi C_2)
\]
Regardless of which semiring is used, guarded choice is always valid. Since the two branches are guarded by opposing tests, the weight of one branch will be multiplied by $\one$ and the other by $\zero$. Since $\zero$ is the identity of addition, then $\zero + x = x$ for any $x$, so the semantics of the sum of the two branches is equal to just the nonzero branch, which is defined. For example, suppose that $\de{b}(\sigma) = \one$, then we get:
\begin{align*}
  \de{\iftf b{C_1}{C_2}}(\sigma)
  &= \de{(\assume b\fatsemi C_1) + (\assume{\lnot b}\fatsemi C_2)}(\sigma)
  \\
  &= \de{b}(\sigma) \cdot \de{C_1}(\sigma) + \de{\lnot b}(\sigma) \cdot \de{C_2}(\sigma)
  \\
  &= \one\cdot \de{C_1}(\sigma) + \zero \cdot\de{C_2}(\sigma)
  \\
  &= \de{C_1}(\sigma)
\end{align*}
Since $\mathsf{Bool}$, $\mathsf{Nat}$, and $\mathsf{Tropical}$ are total semirings, unguarded choice is always valid in those execution models. In the probabilistic case, choice can be used as long as the sum of the weights of both branches is at most 1. One way to achieve this is to weight one branch by a probability $p\in[0,1]$ and the other branch by $1-p$, a biased coin-flip. We provide syntactic sugar for that operation:
\[
C_1 +_p C_2 \triangleq (\assume p\fatsemi C_1) + (\assume{1-p}\fatsemi C_2)
\]
We also provide syntactic sugar for iterating constructs. While loops use a test to determine whether iteration should continue, making them deterministic.
\[
\whl bC \triangleq \iter Cb{\lnot b}
\]
The Kleene star $C^\star$ is defined for interpretations based on total semirings only; it iterates $C$ nondeterministically many times.\footnote{\label{foot:star}
In nondeterministic languages, $\whl bC\equiv (\assume b\fatsemi C)^\star\fatsemi \assume{\lnot b}$, however this encoding does not work in general since $(\assume b\fatsemi C)^\star$ is not a well-defined program when using a partial semiring (\eg \Cref{ex:detsemi,ex:probsemi}).}
\[
C^\star \triangleq \iter C\one\one
\]
Finally, the probabilistic iterator $C^{\langle p\rangle}$ continues to execute with probability $p$ and exits with probability $1-p$.
\[
C^{\langle p\rangle} \triangleq \iter Cp{1-p}
\]
This behavior can be replicated using a while loop and auxiliary variables, but adding state complicates reasoning about the programs and precludes, \eg devising equational theories over uninterpreted atomic commands \cite{r_o_zowski2023probabilistic}. This construct---which was not included in previous Outcome Logic work---is therefore advantageous.

In \Aref{app:totality}, we prove that programs constructed using appropriate syntax have total semantics. For the remainder of the paper, we assume that programs are constructed in this way, and are thus always well-defined.

\section{Outcome Logic}\label{sec:ol}

In this section, we define Outcome Logic, provide extensional definitions for assertions used in the pre- and postconditions of Outcome Triples, and present a relatively complete proof system.

\subsection{Outcome Assertions}
\label{sec:assertions}

Outcome assertions are the basis for expressing pre- and postconditions in Outcome Logic. Unlike pre- and postconditions of Hoare Logic---which describe \emph{individual states}---outcome assertions expose the weights from \Cref{sec:algebra} to enable reasoning about branching and the weights of reachable outcomes.
We represent these assertions semantically; outcome assertions $\varphi,\psi \in \bb{2}^{\mathcal W_{\mathcal A}(\Sigma)}$ are the sets of elements corresponding to their true assignments. For any $m\in\mathcal W_{\mathcal A}(\Sigma)$, we write $m\vDash \varphi$ ($m$ satisfies $\varphi$) to mean that $m\in\varphi$.

The use of semantic assertions makes our approach extensional. We will therefore show that the Outcome Logic proof system is sufficient for analyzing programs structurally, but it cannot be used to decide entailments between the pre- and postconditions themselves. No program logic is truly complete, as analyzing loops inevitably reduces to the (undecidable) halting problem \cite{apt1981ten,cook1978soundness}---it is well known that the ability to express intermediate assertions and loop invariants means that the assertion language must at least contain Peano arithmetic \cite{lipton1977necessary}. As a result, many modern developments such as Separation Logic \cite{semanticsep,yang}, Incorrectness Logic \cite{il}, Iris \cite{iris1,iris}, probabilistic Hoare-style logics \cite{ellora,kaminski}, and others \cite{hyperhoare,raad2024nontermination,cousot2012abstract,ascari2025revealing} use semantic assertions.

We now define useful notation for assertions, which is also repeated in \Cref{fig:astsem}. For example $\top$ (always true) is the set of all weighted collections, $\bot$ (always false) is the empty set, and logical negation is the complement.
\[
\top\triangleq \mathcal W_{\mathcal A}(\Sigma)
\qquad\quad
\bot\triangleq \emptyset
\qquad\quad
\lnot\varphi \triangleq \mathcal W_{\mathcal A}(\Sigma) \setminus \varphi
\]
Conjunction, disjunction, and implication are defined as usual:
\[
\varphi\lor\psi \triangleq \varphi\cup\psi
\qquad\qquad
\varphi\land\psi \triangleq \varphi\cap\psi
\qquad\qquad
\varphi\Rightarrow\psi \triangleq (\mathcal W_{\mathcal A}(\Sigma)\setminus \varphi) \cup \psi
\]
Given a predicate $\phi\colon T \to \bb{2}^{\mathcal W_{\mathcal A}(\Sigma)}$ on some (possibly infinite) set $T$, existential quantification over $T$ is the union of $\phi(t)$ for all $t\in T$, meaning it is true iff there is some $t\in T$ that makes $\phi(t)$ true.
\[
\exists x:T.~\phi(x) \triangleq \bigcup_{t\in T}\phi(t)
\]
Next, we define notation for assertions based on the operations of the semiring $\mathcal A = \langle U,+,\cdot,\zero,\one\rangle$.
The \emph{outcome conjunction} $\varphi\oplus\psi$ asserts that the collection of outcomes $m$ can be split into two parts $m = m_1+m_2$ such that $\varphi$ holds in $m_1$ and $\psi$ holds in $m_2$.
For example, in the nondeterministic interpretation, we can view $m_1$ and $m_2$ as sets (not necessarily disjoint), such that $m = m_1 \cup m_2$, so $\varphi$ and $\psi$ each describe subsets of the reachable states.
We define outcome conjunctions formally for a predicate $\phi \colon T\to\bb{2}^{\mathcal W_{\mathcal A}(\Sigma)}$ over some (possibly infinite) set $T$.
\[
\bigoplus_{x\in T}\phi(x) \triangleq \left\{ \sum_{t\in T} m_t ~\Big|~ \forall t\in T. ~m_t \in \phi(t) \right\}
\]
That is, $m \in \bigoplus_{x\in T} \phi(x)$ if $m$ is the sum of a sequence of $m_t$ elements for each $t\in T$ such that each $m_t \in \phi(t)$. The binary $\oplus$ operator can be defined in terms of the one defined above.
\[
\varphi \oplus\psi \triangleq \smashoperator{\bigoplus_{i\in\{1,2\}}}\phi(i)
\qquad\text{where}\qquad
\phi(1) = \varphi
\qquad\text{and}\qquad
\phi(2) = \psi
\]
The weighting operations $\varphi\odot u$ and $u\odot \varphi$, inspired by \citet{batz2022weighted}, scale the outcome $\varphi$ by a literal weight $u\in U$. Note that there are two variants of this assertion for scaling on the left and right, since multiplication is not necessarily commutative (see \Cref{ex:langsemi}).
\[
\varphi\odot u \triangleq \{ m\cdot u \mid m\in\varphi \}
\qquad\qquad
u\odot \varphi \triangleq \{ u\cdot m \mid m\in\varphi \}
\]
Finally, given a semantic assertion on states $P \subseteq \Sigma$, we can lift $P$ to be an outcome assertion, $\wg Pu$, meaning that $P$ covers all the reachable states ($\supp(m) \subseteq P$) and the cumulative weight is $u$.
\[
\wg Pu\triangleq \Big\{ m\in \mathcal W_{\mathcal A}(\Sigma) ~\Big|~ |m| = u, \supp(m)\subseteq P \Big\}
\]
When the weight is $\one$, we will write $\sure P$ instead of $\wg P\one$. These assertions can interact with weighting assertions, for instance $u\odot \wg Pv \Rightarrow \wg P{u\cdot v}$ and $\wg Pv \odot u \Rightarrow \wg P{v\cdot u}$.
We also permit the use of tests $b$ as assertions, for instance:
\[
\sure{P\land b} = \left\{ m \in \mathcal W(\Sigma) ~\middle|~ |m| = \one, \forall \sigma\in\supp(m).~ \sigma\in P \land \de{b}_{\mathsf{Test}}(\sigma) = \one \right\}
\]
There is a close connection between the $\oplus$ of outcome assertions and the choice operator $C_1+C_2$ for programs. If $P$ is an assertion describing the outcome of $C_1$ and $Q$ describes the outcome of $C_2$, then $\sure P\oplus \sure Q$ describes the outcome of $C_1 + C_2$ by stating that both $P$ and $Q$ are reachable outcomes via a non-vacuous program trace. This is more expressive than using the disjunction $\sure P\vee \sure Q$ or $\sure{P\vee Q}$, since the disjunction cannot not guarantee that \emph{both} $P$ and $Q$ are reachable. Suppose $P$ describes a desirable program outcome whereas $Q$ describes an erroneous one; then $\sure{P}\oplus \sure{Q}$ tells us that the program has a bug (it can reach an error state) whereas neither $\sure{P}\vee \sure{Q}$ nor $\sure{P\vee Q}$ is not strong enough to make this determination \cite{outcome}.

Similar to the syntactic sugar for probabilistic programs in \Cref{sec:sugar}, we define:
\[
\varphi \oplus_p \psi \triangleq (p \odot \varphi) \oplus ((1-p) \odot\psi)
\]
If $\varphi$ and $\psi$ are the results of running $C_1$ and $C_2$, then $\varphi \oplus_p \psi$---meaning that $\varphi$ occurs with probability $p$ and $\psi$ occurs with probability $1-p$---is the result of running $C_1 +_p C_2$.

\begin{figure}[t]

\begin{align*}
\top &\triangleq \mathcal W_{\mathcal A}(\Sigma)
&
\bot &\triangleq \emptyset
&
\lnot \varphi &\triangleq \top \setminus \varphi
\\
\varphi \land \psi &\triangleq \varphi \cap \psi
&
\varphi \vee \psi &\triangleq \varphi \cup \psi
&
\varphi\Rightarrow\psi &\triangleq \lnot\varphi \cup \psi
\\
\exists x:T.\phi(x) &\triangleq \bigcup_{t\in T} \phi(t)
&
\varphi \odot u &\triangleq \{ m\cdot u \mid m\in\varphi \}
&
u\odot \varphi &\triangleq \{ u\cdot m \mid m\in\varphi \}
\end{align*}
\begin{align*}
\bigoplus_{x\in T}\phi(x) & \triangleq \left\{ \sum_{t\in T} m_t ~\Big|~ \forall t\in T. ~m_t \in \phi(t) \right\}
\\
\wg Pu &\triangleq \Big\{ m\in \mathcal W_{\mathcal A}(\Sigma) ~\Big|~ |m| = u, \supp(m)\subseteq P \Big\}
\end{align*}
\caption{Outcome assertion semantics, given a partial semiring $\mathcal A = \langle U,+,\cdot,\zero,\one\rangle$ where $u\in U$, $\phi\colon T \to \bb{2}^{\mathcal W_{\mathcal A}(\Sigma)}$, and $P \in \bb{2}^\Sigma$.}
\label{fig:astsem}
\end{figure}

\subsection{Outcome Triples}

Inspired by Hoare Logic, Outcome Triples $\triple\varphi{C}\psi$ specify program behavior in terms of pre- and postconditions \cite{outcome}. The difference is that Outcome Logic describes weighted collections of states as opposed to Hoare Logic, which can only describe individual states. We write $\vDash\triple\varphi{C}\psi$ to mean that a triple is semantically valid, as defined below.
\begin{definition}[Outcome Triples]
\label{def:triples}
Given $\langle \mathcal A, \Sigma, \mathsf{Act}, \mathsf{Test}, \de{\cdot}_{\mathsf{Act}}\rangle$, the semantics of outcome triples is defined as follows:
\[
\vDash\triple\varphi{C}\psi
\quad\text{iff}\quad
\forall m\in\mathcal{W}_{\mathcal A} (\Sigma).~ m\vDash\varphi\implies \dem Cm\vDash\psi
\]
\end{definition}
Informally, $\triple{\varphi}C\psi$ is valid if running the program $C$ on a weighted collection of states satisfying $\varphi$ results in a collection satisfying $\psi$.
Using outcome assertions to describe these collections of states in the pre- and postconditions means that Outcome Logic can express many types of properties including reachability ($\sure{P}\oplus \sure{Q}$), probabilities ($\varphi \oplus_p \psi$), and nontermination (the lack of outcomes, $\wg\tru\zero$). In \Cref{sec:connections}, we will see how Outcome Logic can encode several familiar program logics.

\subsection{Inference Rules}\label{sec:proof}

We now describe the Outcome Logic rules of inference, which are shown in \Cref{fig:rules,fig:struct-rules}. The rules are split into three categories.

\heading{Sequential Commands.} The rules for sequential (non-looping) commands mostly resemble those of Hoare Logic. The \ruleref{Skip} rule stipulates that the precondition is preserved after running a no-op. \ruleref{Seq} derives a specification for a sequential composition from two sub-derivations for each command. Similarly, \ruleref{Plus} joins the derivations of two program branches using an outcome conjunction. 

\ruleref{Assume} has a side condition that $\varphi\vDash e=u$, where $u\in U$ is a semiring element. Informally, this means that the precondition entails that the expression $e$ is some concrete weight $u$. More formally, it is defined as follows:
\[
\varphi\vDash e=u
\qquad\text{iff}\qquad
\forall m\in\varphi.\quad 
\forall \sigma\in\supp(m).\quad
\de{e}(\sigma) = u
\]
If $e$ is a weight literal $u$, then $\varphi\vDash u=u$ vacuously holds for any $\varphi$, so the rule can be simplified to $\vdash \triple{\varphi}{\assume u}{\varphi\odot{u}}$. But if it is a test $b$, then $\varphi$ must contain enough information to conclude that $b$ is true or false. Additional rules to decide $\varphi\vDash e=u$ are given in \Aref{app:soundcomplete}.

\heading{Iteration} The \ruleref{Iter} rule uses two families of predicates: $\varphi_n$ represents the result of $n$ iterations of $\assume{e}\fatsemi C$ and $\psi_n$ is the result of iterating $n$ times and then weighting the result by $e'$, so $\bigoplus_{n\in\mathbb N}\psi_n$ represents all the aggregated terminating traces. This is captured by the assertion $\psi_\infty$, which must have the following property.
\begin{definition}[Converging Assertions]
A family $(\psi_n)_{n\in\mathbb{N}}$ \emph{converges} to $\psi_\infty$ (written $\conv{\psi}$)
iff for any collection $(m_n)_{n\in\mathbb N}$, if $m_n\vDash\psi_n$ for each $n\in\mathbb N$, then $\sum_{n\in\mathbb N} m_n \vDash\psi_\infty$.
\end{definition}

\begin{figure}
\begin{mathpar}
\ruledef{Skip}{\;}{\triple{\varphi}{\skp}{\varphi}}

\ruledef{Seq}{
  \triple{\varphi}{C_1}{\vartheta}
  \\
  \triple{\vartheta}{C_2}{\psi}
}{
  \triple{\varphi}{C_1\fatsemi C_2}{\psi}
}
\\
\ruledef{Plus}{
  \triple{\varphi}{C_1}{\psi_1}
  \\
  \triple{\varphi}{C_2}{\psi_2}
}{
  \triple{\varphi}{C_1+ C_2}{\psi_1 \oplus\psi_2}
}

\inferrule{\varphi \vDash e = u}{\triple{\varphi}{\assume e}{\varphi\odot u}}{\rulename{Assume}}

\inferrule{
  \conv{\psi}
  \quad
  \forall n\in\mathbb N.
  \quad
  \triple{\varphi_n}{\assume{e}\fatsemi C}{\varphi_{n+1}}
  \quad
  \triple{\varphi_n}{\assume{e'}}{\psi_n}
}{
  \triple{\varphi_0}{C^{\langle e, e'\rangle}}{\psi_\infty}
}{\rulename{Iter}}
\end{mathpar}
\caption{Inference rules for program commands}
\label{fig:rules}
\end{figure}

\heading{Structural Rules.}
We also give rules that are not dependent on the program command in \Cref{fig:struct-rules}. This includes rules for trivial preconditions (\ruleref{False}) and postconditions (\ruleref{True}). The \ruleref{Scale} rule states that we may multiply the pre- and postconditions by a weight to obtain a new valid triple. 

Subderivations can be combined with logical connectives using the \ruleref{Disj}, \ruleref{Conj}, and \ruleref{Choice}) rules.
Existential quantifiers are introduced using \ruleref{Exists}. Finally, the rule of \ruleref{Consequence} can be used to strengthen preconditions and weaken postconditions in the style of Hoare Logic. These implications are semantic ones: $\varphi' \Rightarrow \varphi$ iff $\varphi'\subseteq\varphi$. We do not explore the proof theory for outcome assertions, although prior work in this area exists as outcome conjunctions are similar to the separating conjunction from Bunched Implications \cite{bi,simon}.

\begin{figure}[t]
\begin{mathpar}
\inferrule{\;}{\triple{\bot}C\varphi}{\rulename{False}}

\inferrule{\;}{\triple{\varphi}C\top}{\rulename{True}}

\inferrule{\triple{\varphi}C{\psi}}{\triple{u\odot \varphi}C{u\odot \psi}}{\rulename{Scale}}

\inferrule{
  \triple{\varphi_1}C{\psi_1}
  \\
  \triple{\varphi_2}C{\psi_2}
}{
  \triple{\varphi_1 \vee \varphi_2}C{\psi_1 \vee \psi_2}
}{\rulename{Disj}}

\inferrule{
  \triple{\varphi_1}C{\psi_1}
  \\
  \triple{\varphi_2}C{\psi_2}
}{
  \triple{\varphi_1 \land \varphi_2}C{\psi_1 \land \psi_2}
}{\rulename{Conj}}

\inferrule{
  \forall t\in T.\quad \triple{\phi(t)}C{\phi'(t)}
}{
  \triple{\textstyle\bigoplus_{x\in T} \phi(x)}C{\textstyle\bigoplus_{x\in T}\phi'(x)}
}{\rulename{Choice}}

\inferrule
  {\forall t\in T. \quad\triple{\phi(t)}C{\phi'(t)}}
  {\triple{\exists x:T.~\phi(x)}C{\exists x:T.~\phi'(x)}}
  {\rulename{Exists}}

\hypertarget{rule:Cons}{
\inferrule{
  \varphi' \Rightarrow \varphi
  \\
  \triple{\varphi}C\psi
  \\
  \psi \Rightarrow \psi'
}{
  \triple{\varphi'}C{\psi'}
}{\rulename{Consequence}}}
\end{mathpar}
\caption{Structural rules.}
\label{fig:struct-rules}
\end{figure}

\subsection{Soundness and Relative Completeness}
\label{sec:soundness}
Soundness of the Outcome Logic proof system means that any derivable triple (using the inference rules in \Cref{fig:rules} and axioms about atomic actions) is semantically valid according to \Cref{def:triples}. We write $\Gamma\vdash\triple\varphi{C}\psi$ to mean that the triple $\triple\varphi{C}\psi$ is derivable given a collection of axioms $\triple{\varphi}{a}{\psi} \in \Gamma$. Let $\Omega$ consist of all triples $\triple{\varphi}a\psi$ such that $a\in\mathsf{Act}$, and $\vDash\triple{\varphi}a\psi$ (all the true statements about atomic actions).
We also presume that the program $C$ is well-formed as described in \Cref{sec:sugar}. The soundness theorem is stated formally below.
\begin{restatable}[Soundness]{theorem}{thmsoundness}\label{thm:soundness}
\[
\Omega\vdash\triple\varphi{C}\psi
\qquad\implies\qquad
\vDash\triple\varphi{C}\psi
\]
\end{restatable}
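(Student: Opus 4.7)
The plan is to proceed by structural induction on the derivation $\Omega \vdash \triple{\varphi}{C}{\psi}$. The atomic-action base case holds because $\Omega$ contains only valid triples by construction, and the structural rules \ruleref{False} and \ruleref{True} are immediate from the definition of satisfaction. The remaining cases split into routine verifications that use monad laws plus semiring linearity, and the one substantive case, \ruleref{Iter}, which requires unrolling the least fixed point.

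Before running the induction I would establish a \textbf{linearity} lemma for the Kleisli extension: for any $f \colon X \to \mathcal W_{\mathcal A}(Y)$, the map $f^\dagger$ preserves sums (even infinite ones) and scalar multiplication, i.e., $f^\dagger(m_1 + m_2) = f^\dagger(m_1) + f^\dagger(m_2)$, $f^\dagger(u \cdot m) = u \cdot f^\dagger(m)$, and $f^\dagger(\sum_i m_i) = \sum_i f^\dagger(m_i)$. These follow by unfolding $f^\dagger(m)(y) = \sum_\sigma m(\sigma) \cdot f(\sigma)(y)$ and applying distributivity together with the Scott continuity assumed in the semantics. With this lemma in hand, the cases \ruleref{Skip}, \ruleref{Seq}, \ruleref{Plus}, \ruleref{Scale}, \ruleref{Disj}, \ruleref{Conj}, \ruleref{Choice}, \ruleref{Exists}, and \ruleref{Consequence} all reduce to direct calculation using the monad laws of \Cref{def:kleisli} and the definitions in \Cref{fig:astsem}. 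For \ruleref{Assume}, the side condition $\varphi \vDash e = u$ lets me replace $\de{e}(\sigma)$ by the literal $u$ on the support of any $m \vDash \varphi$, giving $\de{\assume e}^\dagger(m) = m \cdot u \in \varphi \odot u$.

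The interesting case is \ruleref{Iter}. Here I would prove a \textbf{Kleene unrolling} lemma: since $\Phi_{\langle C, e, e'\rangle}$ is Scott continuous in its functional argument, $\mathsf{lfp}(\Phi) = \sup_{n}\Phi^n(\bot)$, and combining this with linearity of $(-)^\dagger$ yields
\[
\de{C^{\langle e, e'\rangle}}^\dagger(m_0) \;=\; \sum_{n\in\mathbb N} \de{\assume{e'}}^\dagger\bigl((\de{\assume{e}\fatsemi C}^\dagger)^n(m_0)\bigr).
\]
Given $m_0 \vDash \varphi_0$, a secondary induction on $n$ using the premise $\triple{\varphi_n}{\assume{e}\fatsemi C}{\varphi_{n+1}}$ shows that $(\de{\assume{e}\fatsemi C}^\dagger)^n(m_0) \vDash \varphi_n$; the premise $\triple{\varphi_n}{\assume{e'}}{\psi_n}$ then gives that the $n$-th summand satisfies $\psi_n$, and the convergence hypothesis $\conv{\psi}$ packages the whole family of summands into an element of $\psi_\infty$.

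The main obstacle is the Kleene unrolling itself: besides checking Scott continuity of $\Phi$, one must justify commuting $(-)^\dagger$ with the supremum defining the least fixed point, which rests on the complete-semiring infinite-sum laws guaranteed by Scott continuity of $\mathcal A$. Once this piece of infrastructure is in place, the per-$n$ pointwise reasoning combines cleanly with convergence to discharge \ruleref{Iter}, and every other case of the induction is essentially mechanical.
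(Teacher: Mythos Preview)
Your proposal is correct and follows essentially the same approach as the paper: structural induction on the derivation, with the \ruleref{Iter} case discharged via a Kleene-style unrolling identity (the paper packages this as \Cref{lem:whilesem}) combined with a pointwise inner induction on $n$ and the convergence hypothesis. The linearity and Scott-continuity facts you isolate as preliminary lemmas are exactly the ingredients the paper uses, just threaded through the cases inline rather than stated upfront.
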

\noindent The full proof is shown in \Aref{app:soundcomplete} and proceeds by induction on the structure of the derivation $\Omega\vdash\triple\varphi{C}\psi$, with cases in which each rule is the last inference. Most of the cases are straightforward, but the following lemma is needed to justify the soundness of the \ruleref{Iter} case, where $C^0 =\skp$ and $C^{n+1} = C^n\fatsemi C$.
\begin{restatable}{lemma}{lemwhilesem}\label{lem:whilesem}
The following equation holds:
\[
\de{\iter C{e}{e'}}(\sigma) = \sum_{n\in\mathbb{N}} \de{(\assume{e}\fatsemi C)^n \fatsemi \assume{e'}}(\sigma)
\]
\end{restatable}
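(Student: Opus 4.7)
The plan is to show both sides are equal by Kleene's fixed point theorem, since the denotational semantics of $\iter C e {e'}$ is defined as a least fixed point over a Scott continuous structure. Concretely, let $\Phi = \Phi_{\langle C, e, e'\rangle}$ and define the chain $f_0 \triangleq \lambda\sigma.\zero$ and $f_{n+1} \triangleq \Phi(f_n)$. Since $\mathcal A$ is Scott continuous (and hence so is the pointwise-ordered function space into $\mathcal W_{\mathcal A}(\Sigma)$), and $\Phi$ is built out of $+$, scalar multiplication, and $(-)^\dagger$, which are all continuous in the argument $f$, we get $\mathsf{lfp}(\Phi) = \sup_{n\in\mathbb N} f_n$.

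The heart of the argument is then to prove by induction on $n$ that
\[
f_n(\sigma) \;=\; \sum_{k< n} \de{(\assume e\fatsemi C)^k \fatsemi \assume{e'}}(\sigma).
\]
The base case $n=0$ gives $f_0(\sigma) = \zero$, matching the empty sum. For the inductive step, I would unfold $f_{n+1}(\sigma) = \de{e}(\sigma)\cdot f_n^\dagger(\de{C}(\sigma)) + \de{e'}(\sigma)\cdot \eta(\sigma)$, push the induction hypothesis through $(-)^\dagger$, and rearrange. The $\de{e'}(\sigma)\cdot\eta(\sigma)$ summand equals $\de{\assume{e'}}(\sigma) = \de{(\assume e\fatsemi C)^0\fatsemi\assume{e'}}(\sigma)$, contributing the $k=0$ term. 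The remaining term, using the semantics of sequential composition, becomes $\sum_{k<n}\de{\assume e\fatsemi C\fatsemi (\assume e\fatsemi C)^k\fatsemi \assume{e'}}(\sigma) = \sum_{1\le j\le n}\de{(\assume e\fatsemi C)^j\fatsemi \assume{e'}}(\sigma)$, which together give exactly $\sum_{k<n+1}\de{(\assume e\fatsemi C)^k\fatsemi\assume{e'}}(\sigma)$.

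Finally, take suprema on both sides. On the left, $\sup_n f_n(\sigma) = \mathsf{lfp}(\Phi)(\sigma) = \de{\iter C e{e'}}(\sigma)$ by definition. On the right, the finite partial sums form a directed set whose supremum is, by the definition of infinite sums in a Scott continuous complete semiring, exactly $\sum_{n\in\mathbb N}\de{(\assume e\fatsemi C)^n\fatsemi\assume{e'}}(\sigma)$; this equality is pointwise in the end state $\tau$, but the definition lifts to weighting functions componentwise.

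The main obstacle is the inductive step, specifically justifying the push of the sum through $(-)^\dagger$ and the scalar multiplication by $\de e(\sigma)$. This requires two facts: first, that $(-)^\dagger$ is linear in its first argument, i.e.\ $(g_1 + g_2)^\dagger(m) = g_1^\dagger(m) + g_2^\dagger(m)$ and $\zero^\dagger = \zero$, which follows from the definition $f^\dagger(m)(y) = \sum_{x\in\supp(m)} m(x)\cdot f(x)(y)$ together with distributivity; and second, that scalar multiplication distributes over (possibly infinite) sums in each argument, which is exactly the content of Scott continuity and the complete-semiring properties already invoked in \Cref{def:continuity}. Once those are in hand, the calculation reduces to bookkeeping on the index of summation.
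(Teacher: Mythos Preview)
Your proposal is correct and follows essentially the same approach as the paper: the paper also applies the Kleene fixed point theorem to write $\de{\iter C e{e'}}(\sigma)$ as $\sup_n \Phi^n(\lambda x.\zero)(\sigma)$, proves by induction (its \Cref{lem:ifcharacteristic}) that $\Phi^{n+1}(\lambda x.\zero) = \sum_{k=0}^n \de{(\assume e\fatsemi C)^k\fatsemi\assume{e'}}$, and then identifies the supremum of partial sums with the infinite sum. Your inductive step and the facts you flag as obstacles (linearity of $(-)^\dagger$ and distributivity of scalar multiplication over finite sums) are exactly what the paper uses, just with the index shifted by one.
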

\noindent Completeness---the converse of soundness---tells us that our inference rules are sufficient to deduce any true statement about a program.
As is typical, Outcome Logic is \emph{relatively} complete, meaning that proving any valid triple can be reduced to implications $\varphi\Rightarrow\psi$ in the assertion language. For OL instances involving state (and Hoare Logic), those implications are undecidable since they must, at the very least, encode Peano arithmetic \cite{cook1978soundness,apt1981ten,lipton1977necessary}.

The first step is to show that given any program $C$ and precondition $\varphi$, we can derive the triple $\triple\varphi{C}\psi$, where $\psi$ is the \emph{strongest postcondition} \cite{dijkstra1990strongest}, \ie the strongest assertion making that triple true. As defined below, $\psi$ is exactly the set resulting from evaluating $C$ on each $m\in\varphi$. The proceeding lemma shows that the triple with the strongest postcondition is derivable.

\begin{definition}[Strongest Postcondition]\label{def:post}
\[
\spost(C, \varphi) \triangleq \{ \dem Cm \mid m\in \varphi \}
\]
\end{definition}
\begin{restatable}[Derivability of the Strongest Postcondition]{lemma}{lemcompleteness}\label{lem:completeness}
\[
\Omega\vdash\triple\varphi{C}{\spost(C,\varphi)}
\]
\end{restatable}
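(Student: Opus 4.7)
The plan is to proceed by induction on the structure of $C$. The inductive hypothesis is that $\Omega\vdash\triple{\varphi'}{C'}{\spost(C',\varphi')}$ for every proper subterm $C'$ of $C$ and every precondition $\varphi'$. A recurring tactic is to first derive the triple on singleton preconditions $\{m\}$, then lift to arbitrary $\varphi$ using \ruleref{Exists}, exploiting the identities $\varphi = \exists m:\varphi.\{m\}$ and $\spost(C,\varphi) = \exists m:\varphi.\spost(C,\{m\})$, the latter following directly from \Cref{def:post}.

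The simple cases dispatch quickly. For $\skp$, \ruleref{Skip} suffices since $\dem{\skp}{m} = \eta^\dagger(m) = m$ by the Kleisli identity. For an atomic action $a$, the triple $\triple{\varphi}{a}{\spost(a,\varphi)}$ is semantically valid by definition of $\spost$, hence is axiomatically in $\Omega$. For $C_1\fatsemi C_2$, two applications of the IH (at $\varphi$ and then at $\spost(C_1,\varphi)$) followed by \ruleref{Seq} yield a triple whose postcondition coincides with $\spost(C_1\fatsemi C_2,\varphi)$ by Kleisli associativity. For $C_1 + C_2$, singleton reduction is needed: on $\{m\}$ the outcome conjunction $\spost(C_1,\{m\}) \oplus \spost(C_2,\{m\})$ collapses to $\{\dem{C_1}{m} + \dem{C_2}{m}\} = \spost(C_1+C_2,\{m\})$, so \ruleref{Plus} with the IH gives the triple on singletons, and \ruleref{Exists} lifts it.

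The $\assume e$ case requires a finer decomposition, because the side condition of \ruleref{Assume} forces $\de{e}$ to be constant on the support of the precondition. Given a singleton $\{m\}$, write $m = \sum_{\sigma\in\supp(m)} m(\sigma)\cdot\eta(\sigma)$, which at the assertion level corresponds to $\{m\} = \bigoplus_{\sigma\in\supp(m)} \{m(\sigma)\cdot\eta(\sigma)\}$; on each summand the support is $\{\sigma\}$, so \ruleref{Assume} applies with $u = \de{e}(\sigma)$, and \ruleref{Choice} reassembles the pieces to produce exactly $\spost(\assume e,\{m\})$.

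The main obstacle is the iteration case $\iter{C}{e}{e'}$, where singleton reduction is essential to obtain a usable instance of $\conv{\psi}$. Fix a singleton $\{m\}$ and set $\varphi_n = \spost((\assume e\fatsemi C)^n,\{m\})$ and $\psi_n = \spost((\assume e\fatsemi C)^n \fatsemi \assume{e'},\{m\})$; both are themselves singletons. The premises $\triple{\varphi_n}{\assume e\fatsemi C}{\varphi_{n+1}}$ and $\triple{\varphi_n}{\assume{e'}}{\psi_n}$ of \ruleref{Iter} are derivable by combining the IH on $C$ with the already-handled cases for $\assume e$ and \ruleref{Seq}. Choosing $\psi_\infty = \spost(\iter{C}{e}{e'},\{m\})$, convergence $\conv{\psi}$ holds trivially: the only family $(m_n)$ with $m_n\vDash\psi_n$ is the canonical $m_n = \dem{(\assume e\fatsemi C)^n \fatsemi \assume{e'}}{m}$, and Lemma \ref{lem:whilesem} (lifted to $m$ via linearity of $(-)^\dagger$) gives $\sum_n m_n = \dem{\iter{C}{e}{e'}}{m}$. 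Invoking \ruleref{Iter} and then lifting from singletons to $\varphi$ via \ruleref{Exists} completes the argument. The key subtlety---and the reason singleton reduction is indispensable here---is that a non-singleton precondition would permit choosing $m_n\in\psi_n$ independently across iterations, producing sums that do not factor through any single $m\in\varphi$ and hence lie outside $\spost(\iter{C}{e}{e'},\varphi)$.
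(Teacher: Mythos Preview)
Your proposal is correct and follows essentially the same route as the paper: structural induction on $C$, with singleton reduction via \ruleref{Exists} for $C_1+C_2$ and $\iter{C}{e}{e'}$, and \Cref{lem:whilesem} to discharge convergence in the iteration case. The only visible difference is in the $\assume e$ case: the paper case-splits on whether $e$ is a weight literal $u$ (where \ruleref{Assume} applies directly to any $\varphi$ since $\varphi\vDash u=u$) or a test $b$ (where $m$ is split into its $b$-true and $b$-false parts), whereas you decompose $m$ pointwise over $\supp(m)$ via \ruleref{Choice}; both are sound, yours is more uniform, theirs avoids the per-state decomposition when $e$ is already constant.
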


The proof is by induction on the structure of the program, and is shown in its entirety in \Aref{app:soundcomplete}. The cases for $\skp$ and $C_1\fatsemi C_2$ are straightforward, but the other cases are more challenging and involve existential quantification. To give an intuition as to why existentials are needed, let us examine an example involving branching. We use a concrete instance of Outcome Logic with variable assignment (formalized in \Cref{sec:state}).

Consider the program $\skp + (\var x\coloneqq \var x+1)$ and the precondition $\sure{\var x \ge 0}$. It is tempting to say that $\spost$ is obtained compositionally by joining the $\spost$ of the two branches using $\oplus$:
\begin{align*}
\spost(\skp + (\var x \coloneqq \var x+1), ~ \sure{\var x \ge 0}) 
&= \spost(\skp, \sure{\var x\ge 0}) \oplus \spost(\var x:=\var x+1, ~\sure{\var x\ge 0}) \\
&= \sure{\var x\ge0} \oplus \sure{\var x\ge 1}
\end{align*}
However, that is incorrect. While $\sure{\var x\ge0} \oplus \sure{\var x\ge 1}$ is \emph{a} valid postcondition, it is not the strongest one because it does not account for the relationship between the values of $\var x$ in the two branches; if $x=n$ in the first branch, then it must be $n+1$ in the second branch. A second attempt could use existential quantification to dictate that relationship.  
\[
\exists n:\mathbb N. ~ \sure{\var x = n}\oplus\sure{\var x = n+1}
\]
Unfortunately, that is also incorrect; it does not account for the fact that that precondition $\sure{\var x \ge 0}$ may be satisfied by a \emph{set} of states in which $\var x$ has many different values---the existential quantifier requires that $\var x$ takes on a single value in all the initial outcomes. The solution is to quantify over the collections $m\in\varphi$ satisfying the precondition, and then to take the $\spost$ of $\ind m = \{m\}$.
\[
\spost(C_1 +C_2, \varphi) = \exists m : \varphi. ~\spost(C_1, \ind m) \oplus \spost(C_2, \ind m)
\]
While it may seem unwieldy that the strongest post is hard to characterize even in this seemingly innocuous example, the same problem arises in logics for probabilistic \cite{ellora,hartog} and hyperproperty \cite{hyperhoare} reasoning, both of which are encodable in OL. Although the \emph{strongest} postcondition is quite complicated, something weaker suffices in most cases. We will later see how rules for those simpler cases are derived (\Cref{sec:derived}) and used (\Cref{sec:examples,sec:graphs}).
The main relative completeness result is now a straightforward corollary of \Cref{lem:completeness} using the rule of \ruleref{Consequence}, since any valid postcondition is implied by the strongest one.

\begin{theorem}[Relative Completeness]\label{thm:completeness}
\[
\vDash \triple\varphi{C}\psi
\qquad\implies\qquad
\Omega\vdash\triple\varphi{C}{\psi}
\]
\end{theorem}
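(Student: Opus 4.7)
The plan is to derive the desired triple by a single application of the \ruleref{Consequence} rule, using \Cref{lem:completeness} to produce a starting derivation and the hypothesis $\vDash\triple\varphi C\psi$ to supply the needed postcondition weakening. Concretely, \Cref{lem:completeness} immediately yields $\Omega\vdash\triple{\varphi}{C}{\spost(C,\varphi)}$, so the entire task reduces to showing the semantic implication $\spost(C,\varphi)\Rightarrow\psi$.

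That implication is established by unfolding definitions. By \Cref{def:post}, every element of $\spost(C,\varphi)$ has the form $\dem{C}{m}$ for some $m\vDash\varphi$. By the validity hypothesis (\Cref{def:triples}), every such $m$ satisfies $\dem{C}{m}\vDash\psi$, and hence $\spost(C,\varphi)\subseteq\psi$. Under the semantic reading of $\Rightarrow$ from \Cref{sec:assertions}, this is exactly $\spost(C,\varphi)\Rightarrow\psi$.

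Combining these two ingredients with the trivial implication $\varphi\Rightarrow\varphi$, one application of \ruleref{Consequence} produces $\Omega\vdash\triple\varphi C\psi$, completing the proof. There is no real obstacle at this step: all of the genuine difficulty is concentrated in \Cref{lem:completeness}, where the strongest postcondition for each program construct must be shown expressible as an outcome assertion and derivable in the proof system (notably with the existential-over-input-collections pattern highlighted by the excerpt for $C_1+C_2$, $\assume e$, and $\iter C e{e'}$). Once that lemma is in hand, relative completeness is a one-line corollary, which explains why \Cref{thm:completeness} is stated as a theorem but requires no dedicated inductive argument beyond what has already been done.
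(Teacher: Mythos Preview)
Your proposal is correct and matches the paper's proof essentially line for line: establish $\spost(C,\varphi)\Rightarrow\psi$ by unfolding \Cref{def:post} and \Cref{def:triples}, then apply \ruleref{Consequence} to the triple supplied by \Cref{lem:completeness}. Your added remark that all the real work lives in \Cref{lem:completeness} is also exactly the paper's framing.
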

\begin{proof}
We first establish that $\spost(C, \varphi) \Rightarrow \psi$. Suppose that $m\in \spost(C, \varphi)$. That means that there must be some $m' \in \varphi$ such that $m = \dem C{m'}$. Using $\vDash\triple\varphi{C}\psi$, we get that $m\vDash\psi$. Now, we complete the derivation as follows:
\[
\inferrule*[right=\ruleref{Consequence}]{
  \inferrule*[right=\Cref{lem:completeness}]{
    \Omega
  }{
    \triple\varphi{C}{\spost(C,\varphi)}
  }
  \\
  \spost(C,\varphi) \Rightarrow \psi
}{
  \triple\varphi{C}\psi
}
\]
\end{proof}

\subsection{Derived Rules for Syntactic Sugar}
\label{sec:derived}

Recall from \Cref{sec:sugar} that if statements and while loops are encoded using the choice and iteration constructs. We now derive convenient inference rules for if and while. The full derivations of these rules are shown in \Aref{app:derived}.

If statements are defined as $(\assume b \fatsemi C_1) + (\assume \lnot b\fatsemi C_2)$.
Reasoning about them generally requires the precondition to be separated into two parts, $\varphi_1$ and $\varphi_2$, representing the collections of states in which $b$ is true and false, respectively. This may require---\eg in the probabilistic case---that $\varphi_1$ and $\varphi_2$ quantify the weight (likelihood) of the guard.

If it is possible to separate the precondition in that way, then $\varphi_1$ and $\varphi_2$ act as the preconditions for $C_1$ and $C_2$, respectively, and the overall postcondition is an outcome conjunction of the results of the two branches.
\[
\inferrule{
  \varphi_1\vDash b
  \\
  \triple{\varphi_1}{C_1}{\psi_1}
  \\
  \varphi_2 \vDash\lnot b
  \\
    \triple{\varphi_2}{C_2}{\psi_2}
}{
  \triple{\varphi_1 \oplus\varphi_2}{\iftf b{C_1}{C_2}}{\psi_2\oplus\psi_2}
}{\rulename{If}}
\]
From \ruleref{If}, we can also derive one-sided rules, which apply when one of the branches is certainly taken.
\[
\ruledef{If1}{
  \varphi\vDash b
  \\
  \triple{\varphi}{C_1}{\psi}
}{
  \triple{\varphi}{\iftf b{C_1}{C_2}}{\psi}
}
\qquad\qquad
\ruledef{If2}{
  \varphi\vDash \lnot b
  \\
  \triple{\varphi}{C_2}{\psi}
}{
  \triple{\varphi}{\iftf b{C_1}{C_2}}{\psi}
}
\]
The rule for while loops is slightly simplified compared to \ruleref{Iter}, as it only generates a proof obligation for a single triple instead of two. There are still two families of assertions, but $\varphi_n$ now represents the portion of the program configuration where the guard $b$ is true, and $\psi_n$ represents the portion where it is false. So, on each iteration, $\varphi_n$ continues to evaluate and $\psi_n$ exits; the final postcondition $\psi_\infty$ is an aggregation of all the terminating traces. Note that in all the rules below, the universal quantifiers apply to everything to their right.
\[
\inferrule{
  \conv{\psi}
  \\
  \forall n\in\mathbb N.
  \quad
  \triple{\varphi_n}{C}{\varphi_{n+1} \oplus \psi_{n+1}}
  \\
  \varphi_n \vDash b
  \\
  \psi_n\vDash \lnot b
}{
  \triple{\varphi_0 \oplus \psi_0}{\whl bC}{\psi_\infty}
}
{\rulename{While}}
\]
This \ruleref{While} rule is similar to those found in probabilistic Hoare Logics \cite{den_hartog1999verifying,ellora} and Hyper Hoare Logic \cite{hyperhoare}.

Loop variants are an alternative way to reason about loops that terminate after a finite number of steps. They were first studied in the context of total Hoare Logic \cite{manna1974axiomatic}, but are also used in other logics that require termination guarantees such as Reverse Hoare Logic \cite{reversehoare}, Incorrectness Logic \cite{il}, and Lisbon Logic \cite{ilalgebra,raad2024nontermination,ascari2025revealing}.\footnote{
Outcome Logic guarantees the \emph{existence} of terminating traces, but it is not a total correctness logic in that it cannot ensure that \emph{all} traces terminate. This stems from the program semantics, which collects the finite traces, but does not preclude additional nonterminating ones. For example, $\de{\skp}(\sigma) = \de{\skp + \whl\tru\skp}(\sigma)$. The exception is the probabilistic interpretation, where \emph{almost sure termination} can be established by proving that the probability of terminating is 1. See \Cref{sec:limitations} for a more in depth discussion.
}

The rule uses a family of \emph{variants} $(\varphi_n)_{n\in\mathbb N}$ such that $\varphi_n$ implies that the loop guard $b$ is true for all $n>0$, and $\varphi_0$ implies that it is false, guaranteeing that the loop exits.
The inference rule is shown below, and states that starting at some $\varphi_n$, the execution will eventually count down to $\varphi_0$, at which point it terminates.
\[
\inferrule{
  \forall n\in\mathbb N.
  \\
  \varphi_0 \vDash \lnot b
  \\
  \varphi_{n+1}\vDash b
  \\
  \triple{\varphi_{n+1}}C{\varphi_n}
}{
  \triple{\exists n:\mathbb N.\varphi_n}{\whl bC}{\varphi_0}
}{\rulename{Variant}}
\]
Since the premise guarantees termination after precisely $n$ steps, it is easy to establish convergence---the postcondition only consists of a single trace.

\section{Adding Variables and State}
\label{sec:state}

We now develop a concrete Outcome Logic instance with variable assignment as atomic actions. Let $\mathsf{Var}$ be a countable set of variable names
 and $\mathsf{Val} = \mathbb Z$ be integer program values. Program stores $s\in\mathcal S \triangleq \mathsf{Var} \to \mathsf{Val}$ are maps from variables to values and we write $s[\var{x}\mapsto v]$ to denote the store obtained by extending $s\in\mathcal S$ such that $\var x$ has value $v$. Actions $a\in\mathsf{Act}$ are variable assignments $\var x \coloneqq E$, where $\var x \in \mathsf{Var}$ and $E\in\mathsf{Exp}$ can be a variable $\var x\in\mathsf{Var}$, constant $v \in \mathsf{Val}$, test $b$, or an arithmetic operation ($+$, $-$, $\times$).
 \begin{align*}
\mathsf{Act} \ni a &\Coloneqq \var x\coloneqq E \\
\mathsf{Exp} \ni E &\Coloneqq \var x\in\mathsf{Var} \mid v\in\mathsf{Val} \mid b \mid E_1 + E_2 \mid E_1 - E_2 \mid E_1 \times E_2
\end{align*}
In addition, we let the set of primitive tests $\mathsf{Test} = \bb{2}^{\mathcal S}$ be all subsets of the program states $\mathcal S$.
We will often write these tests symbolically, for example $x \ge 5$ represents the set $\{ s\in\mathcal S \mid s(x) \ge 5 \}$.
The interpretation of atomic actions is shown below, where the interpretation of expressions $\de{E}_\mathsf{Exp} \colon \mathcal S \to \mathsf{Val}$ is in \Aref{app:state}.
\[
\de{x\coloneqq E}_{\mathsf{Act}}(s) \triangleq \eta(s[x \mapsto \de{E}_{\mathsf{Exp}}(s)])
\]
We define substitutions in the standard way \cite{hyperhoare,ellora,ascari2025revealing,kaminski}, as follows:
\[
\varphi[E/x] \triangleq \{ m \in \mathcal W(\mathcal S) \mid (\lambda s. \eta(s[x \mapsto \de{E}_{\mathsf{Exp}}(s))])^\dagger(m) \in\varphi \}
\]
That is, $m\in\varphi[E/x]$ exactly when assigning $x$ to $E$ in $m$ satisfies $\varphi$. This behaves as expected in conjunction with symbolic tests, for example $\sure{x \ge 5}[y+1/x] = \sure{y+1\ge 5} = \sure{y\ge 4}$. It also distributes over most of the operations in \Cref{fig:astsem}, \eg $(\varphi\oplus\psi)[E/x] = \varphi[E/x] \oplus \psi[E/x]$.
Using substitution, we add an inference rule for assignment, mirroring the typical weakest-precondition style rule of \citet{hoarelogic} Logic.
\[
\inferrule{\;}{
  \triple{\varphi[E/x]}{x \coloneqq E}{\varphi}
}{\rulename{Assign}}
\]
When used in combination with the rule of \ruleref{Consequence}, \ruleref{Assign} can be used to derive any semantically valid triple about variable assignment.
Though it is not needed for completeness, we also include the rule of \ruleref{Constancy}, which allows us to add information about unmodified variables to a completed derivation. Here, $\mathsf{free}(P)$ is the set of \emph{free variables} that are used by the assertion $P$ (\eg $\mathsf{free}(x \ge 5) = \{x\}$) and $\mathsf{mod}(C)$ are the variables modified by $C$, both defined in \Aref{app:state}. In addition, the $\always P$ modality means that $P$ holds over the entire support of the weighting function, but does not specify the total weight. It is defined $\always P \triangleq \exists u:U.\ \wg{P}u$, and is discussed further in \Cref{sec:dynamichoare}. We use $\always P$ to guarantee that the rule of \ruleref{Constancy} applies regardless of whether or not $C$ terminates, branches, or alters the weights of traces.
\[
\inferrule{
  \triple\varphi{C}\psi
  \\
  \mathsf{free}(P) \cap \mathsf{mod}(C) = \emptyset
}{
  \triple{\varphi\land \always P}C{\psi\land \always P}
}{\rulename{Constancy}}
\]
In the Outcome Logic instance with variable assignment as the only atomic action, all triples can be derived without the axioms $\Omega$ from \Cref{thm:completeness}.
\begin{restatable}[Soundness and Completeness]{theorem}{soundcompleteassign}
\label{thm:statecomplete}
\[
\vDash\triple{\varphi}{C}{\psi}
\qquad\iff\qquad
\vdash\triple{\varphi}{C}{\psi}
\]
\end{restatable}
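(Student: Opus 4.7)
The plan is to prove both directions by leveraging the previously established \Cref{thm:soundness,thm:completeness}. Since the only atomic action in this instance is variable assignment, it suffices to show that the \ruleref{Assign} axiom, in combination with \ruleref{Consequence}, internally derives exactly the set $\Omega$ of all semantically valid axioms of the form $\triple{\varphi}{x\coloneqq E}{\psi}$---thereby allowing the relative completeness result to become absolute.

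For the soundness direction ($\vdash \Rightarrow \vDash$), I would first verify that the two new rules \ruleref{Assign} and \ruleref{Constancy} are semantically valid. The essential calculation for \ruleref{Assign} is that the ``substitution operator'' coincides with the semantic denotation of assignment:
\[
\dem{x\coloneqq E}{m}(y) \;=\; \smashoperator{\sum_{s\in\supp(m)}} m(s)\cdot \eta(s[x\mapsto\de{E}_{\mathsf{Exp}}(s)])(y) \;=\; (\lambda s.\eta(s[x\mapsto\de{E}_{\mathsf{Exp}}(s)]))^\dagger(m)(y).
\]
Hence $m\vDash \varphi[E/x]$ iff $\dem{x\coloneqq E}{m}\vDash\varphi$, which is exactly what semantic validity of the axiom triple requires. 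For \ruleref{Constancy}, I would establish an auxiliary frame-style lemma: whenever $\mathsf{free}(P)\cap\mathsf{mod}(C)=\emptyset$, every $s'\in\supp(\dem{C}{m})$ agrees with some $s\in\supp(m)$ on $\mathsf{free}(P)$; this lifts $\always P$ through $\dem{C}{-}$. With these two facts in hand, the overall soundness argument proceeds exactly as in the proof of \Cref{thm:soundness}, with the atomic-action base case now discharged by \ruleref{Assign} rather than a hypothesis from $\Omega$.

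For the completeness direction ($\vDash \Rightarrow \vdash$), I would retrace the proof of \Cref{lem:completeness}, checking that the atomic-action case goes through without $\Omega$. Specifically, for $C = x\coloneqq E$ I need to derive $\vdash\triple{\varphi}{x\coloneqq E}{\spost(x\coloneqq E,\varphi)}$. The key observation is that
\[
\varphi \;\Rightarrow\; \spost(x\coloneqq E,\varphi)[E/x],
\]
because for any $m\vDash\varphi$ we have $\dem{x\coloneqq E}{m}\in\spost(x\coloneqq E,\varphi)$ directly from \Cref{def:post}, and by the substitution identity above this is equivalent to $m\vDash\spost(x\coloneqq E,\varphi)[E/x]$. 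Hence one instance of \ruleref{Assign} (with postcondition $\spost(x\coloneqq E,\varphi)$) followed by \ruleref{Consequence} suffices. All other inductive cases of \Cref{lem:completeness} carry over verbatim, since they never invoke atomic-action axioms, and the main biconditional then follows by applying \ruleref{Consequence} to pass from the strongest postcondition to an arbitrary valid $\psi$, exactly as in the proof of \Cref{thm:completeness}.

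The principal obstacle is not in either direction per se but in a subsidiary lemma needed for \ruleref{Constancy}: proving that $\dem{C}{m}$ cannot silently modify variables outside $\mathsf{mod}(C)$. The straightforward inductive cases (skip, sequencing, choice, assignment, assume) are routine, but the case for $\iter{C}{e}{e'}$ requires showing that the preservation property is inherited by the least fixed point of $\Phi_{\langle C,e,e'\rangle}$. I would handle this by showing that the set of functions $f$ satisfying ``$f(s)$ only affects $\mathsf{mod}(C)$-variables relative to $s$'' is closed under $\Phi_{\langle C,e,e'\rangle}$ and under suprema of directed sets, so the least fixed point lies in it. Once that modification lemma is in place, the rest of the proof is essentially bookkeeping over the arguments already established.
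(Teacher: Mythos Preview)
Your proposal is correct and follows essentially the same approach as the paper: soundness reduces to verifying the two new rules \ruleref{Assign} and \ruleref{Constancy} (the latter via an inductive lemma on $C$), and completeness reduces to showing that every semantically valid assignment triple is derivable via $\varphi \Rightarrow \psi[E/x]$ followed by \ruleref{Assign} and \ruleref{Consequence}. The only minor difference is that for the iteration case of the \ruleref{Constancy} lemma the paper reuses the soundness argument of the \ruleref{Iter} rule (taking $\varphi_n = \psi_n = \psi_\infty = \always P$) rather than your direct fixed-point closure argument, and for completeness the paper goes straight to $\psi[E/x]$ rather than through $\spost$; both variations are inessential.
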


\section{Connections to Other Logics}
\label{sec:connections}

Outcome Logic, in its full generality, allows one to quantify the precise weights of each outcome, providing significant expressive power.
Nevertheless, many common program logics do not provide this much power, which can be advantageous as they offer simplified reasoning principles---for example, Hoare Logic's loop \ruleref{Invariant} rule is considerably simpler than the \ruleref{While} rule needed for general Outcome Logic (\Cref{sec:derived}).
In this section, we show the connections between Outcome Logic and several other logics by first showing that OL can capture the semantics of specifications in those logics, and then also deriving the proof rules of those logics using the OL proof system.

\subsection{Dynamic Logic, Hoare Logic, and Lisbon Logic}\label{sec:dynamichoare}

We will now devise an assertion syntax to show the connections between Outcome Logic and Hoare Logic. We take inspiration from modal logic and Dynamic Logic \cite{pratt1976semantical,harel2001dynamic}, using the modalities $\always$ and $\sometimes$ to express that assertions always or sometimes occur, respectively. We encode these modalities using the operations from \Cref{sec:assertions}, where $U$ is the set of semiring weights.
\[\arraycolsep=.5em
\begin{array}{rclcl}
\always P &\triangleq& \exists u:U. ~\wg{P}{u} 
&=& \{ m \mid \supp(m) \subseteq P \}
\\
\sometimes P &\triangleq& \exists u: (U \setminus \{\zero\}).~\wg Pu \oplus \top
&=& \{ m \mid \supp(m) \cap P \neq \emptyset \}
\end{array}
\]
We define $\always P$ to mean that $P$ occurs with some weight, so $m\vDash\always P$ exactly when $\supp(m)\subseteq P$. Dually, $\sometimes P$ requires that $P$ has nonzero weight and the $-\oplus\top$ permits additional elements to appear in the support. So, $m\vDash \sometimes P$ when $\sigma\in P$ for some $\sigma\in\supp(m)$. It is relatively easy to see that these two modalities are De Morgan duals, that is $\always P \Leftrightarrow \lnot\sometimes \lnot P$ and $\sometimes P \Leftrightarrow \lnot\always\lnot P$.
Defining these constructs as syntactic sugar allows us to reason about them with the inference rules in \Cref{sec:proof}, rather than new specialized ones.
For Boolean-valued semirings (\Cref{ex:detsemi,ex:powersetsemi}), we get the following:
\[
  \always P
  \quad=\quad
  \exists u\colon\{0,1\}.\ \wg Pu
  \quad=\quad
  \wg{P}{0} \vee \wg{P}{1}
\]
Only $\zero$, the empty collection, satisfies $\wg{P}{0}$, indicating that there are no outcomes and therefore the program diverged (let us call this assertion $\mathsf{div}$), and $\wg{P}{1}$ is equivalent to $\sure P$. So, $\always P = \sure P\vee \mathsf{div}$, meaning that either $P$ covers all the reachable outcomes, or the program diverged ($\always$ will be useful for expressing partial correctness). Similarly, in Boolean semirings, we have:
\[
  \sometimes P
  \quad=\quad
  \exists u:(\{0, 1\} \setminus \{0\}).\ \wg{P}u \oplus \top
  \quad=\quad
  \wg{P}1 \oplus \top
  \quad=\quad
  \sure P \oplus\top
\]
So, $\sometimes P = \sure P \oplus\top$, which means that $P$ is one of the possibly many outcomes. This is useful for incorrectness applications, as we saw in \Cref{sec:ov-incorrectness}.

Now, we are going to use these modalities to encode other program logics in Outcome Logic. We start with nondeterministic, partial correctness Hoare Logic, where the meaning of the triple $\hoare PCQ$ is that any state resulting from running the program $C$ on a state satisfying $P$ must satisfy $Q$.
There are many equivalent ways to formally define the semantics of Hoare Logic; we will use a characterization based on Dynamic Logic \cite{pratt1976semantical,harel2001dynamic}, which is inspired by modal logic in that it defines modalities similar to $\always$ and $\sometimes$.
\[
\dlbox CQ = \{ \sigma \mid \de{C}(\sigma) \subseteq Q \}
\qquad\qquad
\dldia CQ = \{ \sigma \mid  \de{C}(\sigma) \cap Q \neq \emptyset \}
\]
That is, $\dlbox CQ$ asserts that $Q$ must hold after running the program $C$ (if it terminates). In the predicate transformer literature, $\dlbox CQ$ is called the weakest liberal precondition \cite{Dijkstra76,GCL}. The dual modality $\dldia CQ$ states that $Q$ might hold after running $C$ (also called the weakest possible precondition \cite{wpp,ilalgebra}).
A Hoare Triple $\hoare PCQ$ is valid iff $P \subseteq \dlbox CQ$, so to show that Outcome Logic subsumes Hoare Logic, it suffices to prove that we can express $P \subseteq \dlbox CQ$. We do so using the $\always$ modality defined previously.


\begin{restatable}[Subsumption of Hoare Logic]{theorem}{thmhoare}\label{thm:hoare}
\[
\vDash\triple{\sure P}C{\always Q}
\qquad\text{iff}\qquad
P \subseteq \dlbox CQ
\qquad\text{iff}\qquad
\vDash\hoare PCQ
\]
\end{restatable}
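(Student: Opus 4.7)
The plan is to prove the two equivalences separately, working throughout in the nondeterministic instance $\mathsf{Bool}$ of \Cref{ex:powersetsemi}, which is the setting presupposed by the modality $\dlbox{\cdot}{\cdot}$. The second equivalence, $P\subseteq \dlbox CQ \Leftrightarrow\ \vDash \hoare PCQ$, is the standard characterization of partial-correctness Hoare triples in terms of the weakest liberal precondition: both sides unfold to ``$\forall \sigma\in P.\ \de{C}(\sigma)\subseteq Q$,'' so I would just note this and move on.

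The substantive content is the first equivalence. First I would record three unfoldings in the $\mathsf{Bool}$ model, where $\mathcal W(\Sigma)\cong \bb{2}^\Sigma$. (i) Since $|m| = \bigvee_{\sigma\in\supp(m)} 1$ equals $\one$ exactly when $m\neq \emptyset$, we have $\sure P = \{ m\in \bb{2}^\Sigma \mid \emptyset\neq m\subseteq P\}$. (ii) From the derivation immediately preceding the theorem, $\always Q = \{ m\in \bb{2}^\Sigma \mid m\subseteq Q\}$. (iii) The Kleisli extension collapses to $\dem{C}{m} = \bigcup_{\sigma\in m} \de{C}(\sigma)$, because in the Boolean semiring $m(\sigma)=1$ for every $\sigma\in\supp(m)$. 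With these in hand, $\vDash \triple{\sure P}C{\always Q}$ unfolds to the statement that for every non-empty $m\subseteq P$, $\bigcup_{\sigma\in m}\de{C}(\sigma)\subseteq Q$.

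The forward direction then follows by specializing $m = \{\sigma\}$ for an arbitrary $\sigma\in P$, yielding $\de{C}(\sigma)\subseteq Q$, i.e.\ $\sigma\in \dlbox CQ$. The backward direction is equally direct: given $P\subseteq\dlbox CQ$ and a non-empty $m\subseteq P$, each $\sigma\in m$ satisfies $\de{C}(\sigma)\subseteq Q$, so taking the union over $\sigma\in m$ preserves the inclusion. The only bookkeeping subtlety---hardly an obstacle---is the $|m|=\one$ side condition baked into $\sure P$, which excludes the empty weighting; since the empty weighting trivially satisfies $\always Q$, allowing or disallowing it does not affect validity, so the proof goes through cleanly. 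In total, the theorem is essentially a matter of unfolding the definitions of \Cref{sec:algebra,sec:assertions} in the Boolean model and comparing them side-by-side with the semantics of $\dlbox CQ$.
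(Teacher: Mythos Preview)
Your proposal is correct and matches the paper's own proof essentially step for step: both dismiss the second equivalence as the standard weakest-liberal-precondition characterization, then prove the first by instantiating the outcome triple on singletons $\eta(\sigma)=\{\sigma\}$ for the forward direction and by observing that $\dem{C}{m}$ is the union $\bigcup_{\sigma\in m}\de{C}(\sigma)$ for the backward direction. The only cosmetic difference is that the paper phrases things via $\supp$ and the general encoding $\always Q=\exists u{:}U.\,\wg{Q}{u}$ rather than first passing through the isomorphism $\mathcal W_{\mathsf{Bool}}(\Sigma)\cong\bb{2}^\Sigma$, but the underlying argument is identical.
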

\noindent While \citet{outcome} previously showed that Outcome Logic subsumes Hoare Logic, our characterization is not tied to nondeterminism; the triple $\triple{\sure P}C{\always Q}$ does not necessarily have to be interpreted in a nondeterministic way, but can rather be taken to mean that running $C$ in a state satisfying $P$ results in $Q$ covering all the terminating traces with some weight. 
We will shortly develop rules for reasoning about loops using invariants, which will be applicable to \emph{any} instance of Outcome Logic.

Given that the formula $P\subseteq \dlbox CQ$ gives rise to a meaningful program logic, it is natural to ask whether the same is true for $P \subseteq \dldia CQ$. In fact, this formula is colloquially known as Lisbon Logic, which was proposed by Derek Dreyer and Ralf Jung during a meeting in Lisbon as a possible foundation for incorrectness reasoning \cite{il,ilalgebra,outcome}. The semantics of Lisbon triples, denoted $\lisbon PCQ$, is that  for any start state satisfying $P$, there exists a state resulting from running $C$ that satisfies $Q$. Given that $Q$ only covers a subset of the outcomes, it is not typically suitable for correctness, however it is useful for incorrectness as some bugs only occur in some of the traces.
\begin{restatable}[Subsumption of Lisbon Logic]{theorem}{thmlisbon}\label{thm:lisbon}
\[
\vDash\triple{\sure P}C{\sometimes Q}
\qquad\text{iff}\qquad
P \subseteq \dldia CQ
\qquad\text{iff}\qquad
\vDash\lisbon PCQ
\]
\end{restatable}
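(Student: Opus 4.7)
The plan is to prove each of the two equivalences in turn, working in the nondeterministic instance of Outcome Logic (\Cref{ex:powersetsemi}) so that $\mathcal W(\Sigma)\cong\bb 2^\Sigma$ and $\dem C m = \bigcup_{\sigma\in m}\de{C}(\sigma)$, as is implicit in the fact that the definition of $\dldia CQ$ uses set intersection. The second equivalence $P\subseteq\dldia CQ \iff \vDash\lisbon PCQ$ is essentially a restatement of definitions: $\sigma\in\dldia CQ$ iff $\de{C}(\sigma)\cap Q\neq\emptyset$, which is exactly the Lisbon Logic condition that from every start state in $P$ some resulting state lies in $Q$.

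For the first equivalence, I would first compute what $\sometimes Q$ means in this setting, mirroring the calculation already done for $\always$ just above the theorem. Unfolding, $\sometimes Q = \exists u\colon(\{0,1\}\setminus\{0\}).\ \wg Qu \oplus \top = \wg Q1 \oplus \top = \sure Q\oplus\top$, and an element $m$ satisfies this precisely when $\supp(m)\cap Q\neq\emptyset$; dually, $\sure P$ is satisfied by any nonempty set $m\subseteq P$.

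The forward direction then follows by testing on singletons: given $\vDash\triple{\sure P}C{\sometimes Q}$ and $\sigma\in P$, take $m=\eta(\sigma)$, so $m\vDash\sure P$ and hence $\de{C}(\sigma)=\dem C m\vDash\sometimes Q$, giving $\de{C}(\sigma)\cap Q\neq\emptyset$, i.e.\ $\sigma\in\dldia CQ$. For the reverse direction, suppose $P\subseteq\dldia CQ$ and $m\vDash\sure P$. Since $|m|=1$ the support is nonempty, so pick any $\sigma\in\supp(m)\subseteq P\subseteq\dldia CQ$ and choose $\sigma'\in\de{C}(\sigma)\cap Q$. Because $\dem C m\supseteq\de{C}(\sigma)$, we have $\sigma'\in\supp(\dem C m)\cap Q$, witnessing $\dem C m\vDash\sometimes Q$.

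The only subtle step is the initial unfolding of $\sometimes Q$ to the pointwise form $\supp(m)\cap Q\neq\emptyset$ in the Boolean semiring; once that calculation is dispatched (exactly paralleling the $\always$ case), both directions reduce to transparent manipulations of singletons and unions, and no additional machinery beyond \Cref{def:triples} and \Cref{ex:powersetsemi} is required.
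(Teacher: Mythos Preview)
Your proposal is correct and follows essentially the same approach as the paper's proof: both dismiss the second equivalence as definitional, establish the forward direction by testing on singletons $\eta(\sigma)$, and establish the reverse direction by observing that $\supp(\dem Cm)=\bigcup_{\sigma\in\supp(m)}\supp(\de C(\sigma))$ so that any witness from one $\de C(\sigma)$ survives into $\dem Cm$. The only cosmetic difference is that the paper notes the intersection with $Q$ is nonempty for \emph{every} $\sigma\in\supp(m)$ before picking a witness, whereas you pick a single $\sigma$ upfront; both are fine.
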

We now present derived rules for simplified reasoning in our embeddings of Hoare and Lisbon Logic within Outcome Logic.
For the full derivations, refer to \Aref{app:derived}.

\heading{Sequencing}
\label{sec:seq}

The \ruleref{Seq} rule requires that the postcondition of the first command exactly matches the precondition of the next. This is at odds with our encodings of Hoare and Lisbon Logic, which have asymmetry between the modalities used in the pre- and postconditions. Still, sequencing is possible using derived rules.
\[
\inferrule{
  \triple {\sure P}{C_1}{\always Q}
  \quad
  \triple {\sure Q}{C_2}{\always R}
}{
  \triple {\sure P}{C_1\fatsemi C_2}{\always R}
}{\rulename{Seq (Hoare)}}
\qquad
\inferrule{
  \triple {\sure P}{C_1}{\sometimes Q}
  \quad
  \triple {\sure Q}{C_2}{\sometimes R}
}{
  \triple {\sure P}{C_1\fatsemi C_2}{\sometimes R}
}{\rulename{Seq (Lisbon)}}
\]
These rules rely on the fact that $\triple{\sure P}C{\always Q} \vdash\triple{\always P}C{\always Q}$ and $\triple{\sure P}C{\sometimes Q} \vdash\triple{\sometimes P}C{\sometimes Q}$, which is true since for any $m \in \always P$, it must be that each $\sigma\in\supp(m)$ satisfies $P$, meaning that $\de{C}(\sigma) \in \always Q$ and so $\dem Cm$ must also satisfy $\always Q$. The $\sometimes$ case is similar, also making use of the fact that $(\sure{R} \oplus\top)\oplus\top \Leftrightarrow \sure{R}\oplus \top$.
Lisbon Logic adds an additional requirement on the semiring; $\zero$ must be the \emph{unique} annihilator of multiplication ($u\cdot v = \zero$ iff $u=\zero$ or $v=\zero$), which ensures that a finite sequence of commands does not eventually cause a branch to have zero weight. \Cref{ex:detsemi,ex:powersetsemi,ex:multisetsemi,ex:probsemi,ex:tropicalsemi,ex:langsemi} all obey this property.

Note that since $\sure{P} \Rightarrow \always P$ and $\sure{P} \Rightarrow \sometimes P$, the rule of \ruleref{Consequence} also gives us $\triple{\always P}C{\always Q}\vdash \triple{\sure P}C{\always Q}$ and $\triple{\sometimes P}C{\sometimes Q} \vdash \triple{\sure P}C{\sometimes Q}$, so we could have equivalently defined Hoare and Lisbon Logic as $\vDash\triple{\always P}C{\always Q}$ and $\vDash\triple{\sometimes P}C{\sometimes Q}$, respectively. 
We prefer the asymmetric use of modalities, as it allows for specifications of the form $\triple{\sure P}C{\sure Q}$, which can be easily weakened for use in both Hoare and Lisbon Logic, as we will see in \Cref{sec:analysis}.

\heading{If Statements and While Loops}\label{sec:ifwhile}

The familiar rule for if statements in Hoare Logic is also derivable, and does not require any semantic entailments, instead using the fact that $\sure{P}\Rightarrow \always(P\land b) \oplus\always(P\land\lnot b)$.
\[
\inferrule{
  \triple{\sure{P\land b}}{C_1}{\always Q}
  \\
  \triple{\sure{P\land\lnot b}}{C_2}{\always Q}
}{
  \triple{\sure{P}}{\iftf b{C_1}{C_2}}{\always Q}
}{\rulename{If (Hoare)}}
\]
A similar rule is derivable for Lisbon Logic, although the derivation is a bit more complex due to the reachability guarantees provided by Lisbon Logic, and the fact that $\sure P\not\Rightarrow \sometimes(P\land b)\oplus\sometimes (P\land\lnot b)$. Instead, the derivation involves case analysis on whether $\sometimes(P\land b)$ or $\sometimes(P\land \lnot b)$ is true.
\[
\inferrule{
  \triple{\sure{P\land b}}{C_1}{\sometimes Q}
  \\
  \triple{\sure{P\land\lnot b}}{C_2}{\sometimes Q}
}{
  \triple{\sure{P}}{\iftf b{C_1}{C_2}}{\sometimes Q}
}{\rulename{If (Lisbon)}}
\]

\heading{Loop Invariants}
\label{sec:invariant}

Loop invariants are a popular analysis technique in partial correctness logics. The idea is to find an invariant $P$ that is preserved by the loop body and therefore must remain true when---\emph{and if}---the loop terminates. Because loop invariants are unable to guarantee termination, the Outcome Logic rule must indicate that the program may diverge. We achieve this using the $\always$ modality from \Cref{sec:dynamichoare}. The rule for Outcome Logic loop invariants is as follows:
\[
\inferrule{
  \triple{\sure{P\land b}}C{\always P}
}{
  \triple{\sure P}{\whl bC}{\always(P \land \lnot b)}
}
{\rulename{Invariant}}
\]
This rule states that if the program starts in a state described by $P$, which is also preserved by each execution of the loop, then $P\land\lnot b$ is true in every terminating state. If the program diverges and there are no reachable end states, then $\always(P\land\lnot b)$ is vacuously satisfied, just like in Hoare Logic.

\ruleref{Invariant} is derived using the \ruleref{While} rule with $\varphi_n = \always (P\land b)$ and $\psi_n = \always(P\land\lnot b)$.
To show $\conv\psi$, first note that $m_n\vDash \always(P\land\lnot b)$ simply means that $\supp(m_n) \subseteq (P\land\lnot b)$. Since this is true for all $n\in\mathbb N$, then all the reachable states satisfy $P\land\lnot b$.


It is well known that \ruleref{Skip}, \ruleref{Seq (Hoare)}, \ruleref{If (Hoare)}, \ruleref{Invariant}, \ruleref{Assign}, and \ruleref{Consequence} constitute a relatively complete proof system for Hoare Logic \cite{cook1978soundness,kozen2001completeness}. It follows that these rules are complete for deriving any Outcome Logic triples of the form $\triple{\sure P}C{\always Q}$, avoiding the more complex machinery of \Cref{lem:completeness}\footnote{\emph{N.B.}, this only includes the deterministic program constructs---if statements and while loops instead of $C_1+C_2$ and $\iter Ce{e'}$. The inclusion of a few more derived rules completes the proof system for nondeterministic programs.}.

\heading{Loop Variants}

Although loop variants are valid in any Outcome Logic instance, they require loops to be \emph{deterministic}---the loop executes for the same number of iterations regardless of any computational effects that occur in the body. Examples of such scenarios include for loops, where the number of iterations is fixed upfront.

We also present a more flexible loop variant rule geared towards Lisbon triples. In this case, we use the $\sometimes$ modality to only require that \emph{some} trace is moving towards termination. 
\[
\inferrule{
  \forall n\in\mathbb N.
  \\
  \sure{P_0} \vDash \lnot b
  \\
  \sure{P_{n+1}}\vDash b
  \\
  \triple{\sure{P_{n+1}}}C{\sometimes P_n}
}{
  \triple{\exists n:\mathbb N.\ \sure{P_n}}{\whl bC}{\sometimes P_0}
}{\rulename{Lisbon Variant}}
\]
In other words, \ruleref{Lisbon Variant} witnesses a single terminating trace. As such, it does not require the lockstep termination of all outcomes like \ruleref{Variant} does.

\subsection{Hyper Hoare Logic}
\label{sec:hyper}

Hyper Hoare Logic (HHL) is a generalized version of Hoare Logic that uses predicates over \emph{sets of states} as pre- and postconditions to enable reasoning about \emph{hyperproperties}---properties of multiple executions of a single program \cite{hyperhoare}. Hyperproperties are useful for expressing information flow security properties of programs, among others. In this section, we show how hyperproperty reasoning inspired by HHL can be done in Outcome Logic.

Although HHL triples have equivalent semantics to the powerset instance of Outcome Logic (\Cref{ex:powersetsemi}), the spirit of those triples are quite different.\footnote{
  Note that we also used the powerset instance of Outcome Logic to encode Hoare Logic and Lisbon Logic in \Cref{sec:dynamichoare}, however those examples used restricted pre- and postconditions to limit their expressivity. As was shown by \citet[Propositions 2 and 9]{hyperhoare}, HHL also subsumes Hoare and Lisbon Logic.
}
In Outcome Logic the precondition often describes a single state, whereas the postcondition specifies the nondeterministic outcomes that stem from that state. By contrast, in HHL the precondition may describe the relationship between several executions of a program. To achieve this, HHL uses assertions that quantify over states. We provide equivalent notation for this below:
\[
  \forall \langle\sigma\rangle.\varphi
   \triangleq \{ m \mid \forall s\in\supp(m).~ m \in \varphi[s/\sigma] \}
  \qquad
  \exists \langle \sigma\rangle.\varphi
   \triangleq \{ m \mid \exists s\in\supp(m).~ m \in \varphi[s/\sigma] \}
\]
These bound metavariables $\sigma$, referring to states, can then be referenced in hypertests $B$, using the syntax below, where $\mathop\asymp \in \{ =, \le, \ldots \}$ ranges over the usual comparators.
\[
  B\Coloneqq \tru \mid \fls \mid \sigma(x) \mid B_1 \land B_2 \mid B_1 \lor B_2 \mid \lnot B \mid B_1 \asymp B_2 \mid \cdots
\]
Unlike normal tests $b$, which can reference program variables, hypertests can only reference variables from particular executions $\sigma(x)$.
So, for example, the assertion $\forall \langle\sigma\rangle.\ \sigma(x) = 5$ means that the variable $x$ has the value 5 in every execution, similar to the assertion $\always(x=5)$ that we saw in \Cref{sec:dynamichoare}. However, the quantifiers introduced by HHL provide significant expressive power over the $\always$ and $\sometimes$ modalities, since they allow us to express the relationship between program variables in multiple executions. For example, we can define the following $\mathsf{low}$ predicate \cite{hyperhoare}, which states that the value of some variable $\ell$ is the same in all executions.
\[
   \mathsf{low}(\ell) \triangleq \forall\langle\sigma\rangle.\ \forall\langle\tau\rangle.\ \sigma(\ell) = \tau(\ell)
\]
We call this \emph{low}, since we will use it to indicate that a variable has low sensitivity from an information security point of view. This allows us to both prove and disprove \emph{noninterference}, a hyperproperty stating that high sensitivity information cannot flow into the low sensitivity program variables. For example, the program below on the left is secure; if two executions have the same initial values of $\ell$, then they will also have the same final values for $\ell$. On the other hand, the program on the right is insecure; information flows from $h$ (a high-sensitivity input) to $\ell$, so the final values of $\ell$ will differ in any pair of executions where the initial values of $h$ differ.
\[
  \triple{\mathsf{low}(\ell)}{\ell \coloneqq \ell+1}{\mathsf{low}(\ell)}
  \qquad\qquad
  \triple{\mathsf{low}(\ell) \land \lnot \mathsf{low}(h)}{\ell \coloneqq h+1}{\lnot\mathsf{low}(\ell)}
\]
We will not explore the full expressiveness of Hyper Hoare Logic here, as it is discussed extensively by \citet{hyperhoare}. Instead, we will show how some of the Hyper Hoare Logic rules can be derived using the Outcome Logic proof system. In particular, although the use of quantifiers over states may appear to complicate reasoning significantly, \citet[\S4]{hyperhoare} showed that this kind of reasoning can be achieved with simple syntactic rules. We recreate this result here by deriving the syntactic rules of HHL in Outcome Logic.

The first step is to fix a syntax for assertions. We use the one below, where the semantics of the assertions are the same as were defined notationally in \Cref{fig:astsem}, with a few caveats. First, existential qualification can now only range over program values (as defined in \Cref{sec:state}), so that $T\subseteq\mathsf{Val}$. Rather than use a predicate $\phi\colon T \to \bb{2}^{\mathcal W_{\mathcal A}(\Sigma)}$, we also now presume that the syntactic assertion $\varphi$ may reference the newly bound variable $v$. We have added universal quantification, which is defined analogously to existential quantification, but using an intersection instead of a union. The quantifiers over states $\forall\langle\sigma\rangle.\varphi$ and $\exists\langle\sigma\rangle.\varphi$ are as described above. Finally, hypertests $B$ only have a meaning if they are \emph{closed}, that is, they do not contain any unbound state metavariables (notationally, we use $\sigma$ and $\tau$). Closed hypertests can be evaluated to Booleans in the usual way, and open hypertests (containing unbound variables) are considered to be false.
\[
  \varphi \Coloneqq
    \varphi\land \psi
    \mid \varphi\lor\psi
    \mid \forall v\colon T.\ \varphi
    \mid \exists v\colon T.\ \varphi
    \mid \forall\langle \sigma\rangle.\ \varphi
    \mid \exists\langle \sigma\rangle.\ \varphi
    \mid B
\]
We define $\lnot$ inductively, for example $\lnot (\varphi\land\psi) \triangleq \lnot\varphi\lor \lnot\psi$ and $\lnot( \forall\langle\sigma\rangle.\varphi) \triangleq \exists\langle\sigma\rangle.\lnot\varphi$, the full details are in \Aref{app:hyperhoare}. In addition, we let $\varphi\Rightarrow\psi \triangleq \lnot\varphi\lor\psi$.

We will now give syntactic rules due to \citet{hyperhoare} for commands that interact with state. These rules are significant, since they define precisely how substitution and satisfaction of tests interact with the new quantifiers introduced in HHL.

\subsubsection*{Variable Assignment}
We do not need a new rule for variable assignment, but rather we will show how to syntactically transform the postcondition $\varphi$ to match the semantic substitutions that we use in the \ruleref{Assign} rule from \Cref{sec:state}. We first define an operation $E[\sigma]$ which transforms an expression $E$ into a hyper-expression by replacing all occurrences of variables $x$ with $\sigma(x)$. So for example, $(x + 2\times y)[\sigma] = \sigma(x) + 2\times \sigma(y)$. We write $\varphi[ E[\sigma] / \sigma(x)]$ to be a standard capture-avoiding substitution, syntactically replacing any occurrence of $\sigma(x)$ with $E[\sigma]$. Now, we define a transformation on assertions $\mathcal A_x^E[-]$, which substitutes the expression $E$ for the variable $x$.
\[\arraycolsep=1pt
  \def\arraystretch{1.2}
  \begin{array}{rclcrcl}
    \mathcal A^E_x[ &\varphi \land\psi &] \triangleq \mathcal A^E_x[\varphi] \land \mathcal A^E_x[\psi]
    &\qquad&
    \mathcal A^E_x[&\varphi \lor\psi &] \triangleq \mathcal A^E_x[\varphi] \lor \mathcal A^E_x[\psi]
    \\
    \mathcal A^E_x[& \forall v\colon T.\ \varphi &] \triangleq \forall v\colon T.\ \mathcal A^E_x[\varphi]
    &&
    \mathcal A^E_x[& \exists v\colon T.\ \varphi &] \triangleq \exists v\colon T.\ \mathcal A^E_x[\varphi]
    \\
    \mathcal A^E_x[& \forall\langle\sigma\rangle.\ \varphi &] \triangleq \forall\langle\sigma\rangle.\ \mathcal A^E_x[\varphi[E[\sigma]/ \sigma(x)]]
    &&
    \mathcal A^E_x[& \exists\langle\sigma\rangle.\ \varphi &] \triangleq \exists\langle\sigma\rangle.\ \mathcal A^E_x[\varphi[E[\sigma]/ \sigma(x)]]
    \\
    \mathcal A^E_x[& B &] \triangleq B
  \end{array}
\]
For most of the syntactic assertions, the $\mathcal A_x^E[-]$ operation is just propagated recursively. The interesting cases are the state quantifiers. Whenever a state $\sigma$ is quantified (either universally or existentially), then $E[\sigma]$ is syntactically substituted for $\sigma(x)$.
In \Aref{lem:syntactic-subst} we show that $\mathcal A_x^E[-]$ is equivalent to the semantic substitution defined in \Cref{sec:state}, meaning that the \ruleref{Assign} rule can be written as follows:
\[
  \inferrule{\;}{
    \triple{\mathcal A_x^E[\varphi]}{x \coloneqq E}{\varphi}
  }{\ruleref{Assign}}
\]
As an example, we will show how the \ruleref{Assign} rule can be used to derive the specification that we gave above for the insecure program $\ell \coloneqq h+1$. We begin by applying the syntactic substitution to our desired postcondition $\lnot\mathsf{low}(\ell)$, as follows:
\begin{align*}
\mathcal A_\ell^{h+1}[\lnot\mathsf{low}(\ell)]
  &= \mathcal A_\ell^{h+1}[\lnot ( \forall\langle\sigma\rangle.\ \forall\langle\tau\rangle.\ \sigma(\ell) = \tau(\ell))]
  \\
  &= \mathcal A_\ell^{h+1}[\exists\langle\sigma\rangle.\ \exists\langle\tau\rangle.\ \sigma(\ell) \neq \tau(\ell))]
  \\
  &= \exists\langle\sigma\rangle.\ \exists\langle\tau\rangle.\ \sigma(h) + 1 \neq \tau(h) + 1
  \\
  &= \lnot\mathsf{low}(h + 1)
\end{align*}
Now, it is relatively easy to see that $\mathsf{low}(\ell) \land\lnot \mathsf{low}(h) \Rightarrow \lnot \mathsf{low}(h)\Rightarrow \lnot \mathsf{low}(h+1)$, so a simple application of \ruleref{Assign} and the rule of \ruleref{Consequence} completes the proof.

\subsubsection*{Assume}
We now discuss a syntactic rule for assume statements. This will again involve defining some syntactic transformations on the postconditions. First, for any test $b$, we let $b[\sigma]$ be the hypertest obtained by replacing all occurrences of variables $x$ with $\sigma(x)$. For example, $(x = y + 1)[\sigma] = (\sigma(x) = \sigma(y) + 1)$. Now, we define $\Pi_b[-]$, which transforms an assertion $\varphi$ such that the test $b$ is true in all the quantified states.
\[\arraycolsep=1pt
  \def\arraystretch{1.2}
  \begin{array}{rclcrcl}
    \Pi_b[ &\varphi \land\psi &] \triangleq \Pi_b[\varphi] \land \Pi_b[\psi]
    &\qquad&
    \Pi_b[&\varphi \lor\psi &] \triangleq \Pi_b[\varphi] \lor \Pi_b[\psi]
    \\
    \Pi_b[& \forall v\colon T.\ \varphi &] \triangleq \forall v\colon T.\ \Pi_b[\varphi]
    &&
    \Pi_b[& \exists v\colon T.\ \varphi &] \triangleq \exists v\colon T.\ \Pi_b[\varphi]
    \\
    \Pi_b[& \forall\langle\sigma\rangle.\ \varphi &] \triangleq \forall\langle\sigma\rangle.\ b[\sigma] \Rightarrow \Pi_b[\varphi]
    &&
    \Pi_b[& \exists\langle\sigma\rangle.\ \varphi &] \triangleq \exists\langle\sigma\rangle.\ b[\sigma] \land \Pi_b[\varphi]
    \\
    \Pi_b[& B &] \triangleq B
  \end{array}
\]
Once again, all cases except for the state quantifiers simply recursively apply $\Pi_b[-]$. In the case of state quantifiers, we modify quantification to be only over all states where $b$ holds. As we show in \Aref{lem:syntactic-assume}, this means that $\Pi_b[\varphi] \Rightarrow (\varphi \land\always b) \oplus \always(\lnot b)$, so that $\varphi$ will still hold after executing $\assume b$.  This allows us to derive the following \ruleref{Assume HHL} rule:
\[
  \inferrule{\;}{
    \triple{\Pi_b[\varphi]}{\assume b}{\varphi}
  }{\rulename{Assume HHL}}  
\]

\subsubsection*{Nondeterministic Assignment}
The final syntactic rules that we will define pertain to nondeterministic assignment. We again define a syntactic transformation on postconditions $\mathcal H_x^S[-]$ for programs where $x$ is nondeterministically assigned a value from the set $S$.\footnote{This varies slightly from the definition of \citet{hyperhoare}, $\mathcal H_x[-]$, in which $x$ is assigned any value, not from a particular set. We can recover their operation as $\mathcal H_x[-] = \mathcal H_x^{\mathsf{Val}}[-]$.}
\[\arraycolsep=1pt
  \def\arraystretch{1.2}
  \begin{array}{rclcrcl}
    \mathcal H_x^S[ &\varphi \land\psi &] \triangleq \mathcal H_x^S[\varphi] \land \mathcal H_x^S[\psi]
    &\qquad&
    \mathcal H_x^S[&\varphi \lor\psi &] \triangleq \mathcal H_x^S[\varphi] \lor \mathcal H_x^S[\psi]
    \\
    \mathcal H_x^S[& \forall v\colon T.\ \varphi &] \triangleq \forall v\colon T.\ \mathcal H_x^S[\varphi]
    &&
    \mathcal H_x^S[& \exists v\colon T.\ \varphi &] \triangleq \exists v\colon T.\ \mathcal H_x^S[\varphi]
    \\
    \mathcal H_x^S[& \forall\langle\sigma\rangle.\ \varphi &] \triangleq \forall\langle\sigma\rangle.\ \forall v\colon S.\ \mathcal H_x^S[\varphi[v/\sigma(x)]
    &&
    \mathcal H_x^S[& \exists\langle\sigma\rangle.\ \varphi &] \triangleq \exists\langle\sigma\rangle.\ \exists v\colon s.\ \mathcal H_x^S[\varphi[v/\sigma(x)]
    \\
    \mathcal H_x^S[& B &] \triangleq B
  \end{array}
\]
The operation $\mathcal H^S_x[-]$ is similar to $\mathcal A_x^E[-]$, except that in the cases for state quantifiers, $\sigma(x)$ is replaced with any (or some) value in the set $S$. So, for example, if we wish to nondeterministically set $x$ to be either 1 or 2, and then assert that $\forall\langle \sigma\rangle.\ \exists\langle\tau\rangle.\ \sigma(x) \neq\tau(x)$, then we get:
\begin{align*}
  \mathcal H_x^{\{1, 2\}}[\forall\langle \sigma\rangle.\ \exists\langle\tau\rangle.\ \sigma(x) \neq\tau(x)]
  &= \forall\langle \sigma\rangle.\ \forall v\colon \{1, 2\}.\ \exists\langle\tau\rangle.\ \exists v'\colon \{1,2\}.\ v \neq v'
\end{align*}
The above assertion is trivially true. We can now use $\mathcal H_x^S[-]$ to derive inference rules for nondeterministic assignments. First, we will define syntactic sugar for an additional programming construct that nondeterministically assigns a variable $x$ to be any natural number. We define this using the Kleene star $C^\star$ defined in \Cref{sec:sugar}, which repeats $C$ a nondeterministic number of times. 
\[
  x \coloneqq \bigstar
  \qquad\triangleq\qquad
  x \coloneqq 0 \fatsemi (x \coloneqq x+1)^\star
\]
Now, by selecting the appropriate set $S$, we can derive the following havoc rules for nondeterministic assignments. The derivations of these rules are shown in \Aref{app:havoc}.
\[
  \inferrule{\;}{
    \triple{\mathcal H^{\{a, b\}}_x[\varphi]}{(x \coloneqq a) + (x \coloneqq b)}{\varphi}
  }{\rulename{Havoc-2}} 
  \qquad
  \inferrule{\;}{
    \triple{\mathcal H^{\mathbb{N}}_x[\varphi]}{x \coloneqq \bigstar}{\varphi}
  }{\rulename{Havoc-N}} 
\]

%
%
%

\section{Case Study: Reusing Proof Fragments}\label{sec:examples}

The following case study serves as a proof of concept for how Outcome Logic's unified reasoning principles can benefit large-scale program analysis. The efficiency of such systems relies on pre-computing procedure specifications, which can simply be inserted whenever those procedures are invoked rather than being recomputed at every call-site.
Existing analysis systems operate over homogenous effects. Moreover---when dealing with nondeterministic programs---they must also fix either a demonic interpretation (for correctness) or an angelic interpretation (for bug-finding).

But many procedures do not have effects---they do not branch into multiple outcomes and use only limited forms of looping where termination is easily established (\eg iterating over a data structure)---suggesting that specifications for such procedures can be reused across multiple types of programs (\eg nondeterministic or probabilistic) and specifications (\eg partial or total correctness). Indeed, this is the case for the program in \Cref{sec:division}. We then show how a single proof about that program can be reused in both a partial correctness specification (\Cref{sec:collatz}) and a probabilistic program (\Cref{sec:prob}). The full derivations are given in \Aref{app:examples}.

\subsection{Integer Division}
\label{sec:division}

In order to avoid undefined behavior related to division by zero, our expression syntax from \Cref{sec:state} does not include division. However, we can write a simple procedure to divide two natural numbers $\var a$ and $\var b$ using repeated subtraction.
\[
\code{Div} \triangleq \left\{
\begin{array}{l}
\var q  \coloneqq0 \fatsemi \var r \coloneqq \var a \fatsemi \phantom{x} \\
\whl{\var r \ge \var b}{\;} \\
\quad \var r \coloneqq \var r - \var b \fatsemi\phantom{x} \\
\quad \var q \coloneqq \var q+1
\end{array}
\right.
\]
At the end of the execution, $\var q$ holds the quotient and $\var r$ is the remainder.
Although the $\code{Div}$ program uses a while loop, it is quite easy to establish that it terminates. To do so, we use the \ruleref{Variant} rule with the family of variants $\varphi_n$ shown below.
\[
\varphi_n \triangleq \left\{
\begin{array}{lll}
\sure{\var q +n = \lfloor \var a\div \var b\rfloor \land \var r =(\var a\bmod \var b) + n \times \var b} & \text{if} & n \le \lfloor \var a\div \var b\rfloor \\
\bot & \text{if} & n > \lfloor \var a\div \var b\rfloor
\end{array}
\right.
\]
Executing the loop body in a state satisfying $\varphi_n$ results in a state satisfying $\varphi_{n-1}$. At the end, $\varphi_0$ stipulates that $\var q = \lfloor \var a\div\var b\rfloor$ and $\var r = \var a\bmod \var b$, which immediately implies that $\var r < \var b$, so the loop must exit. This allows us to give the following specification for the program.
\[
\triple{\sure{\var a \ge 0 \land \var b > 0}}{\code{Div}}{\sure{\var q = \lfloor \var a \div \var b \rfloor \land \var r = \var a \bmod \var b}}
\]
Note that the $\code{Div}$ program is deterministic; it does not use branching and we did not make any assumptions about which interpretation of choice is used. This will allow us to reuse the proof of $\code{Div}$ in programs with different kinds of effects in the remainder of the section.

\subsection{The Collatz Conjecture}
\label{sec:collatz}

Consider the function $f$ defined below.
\[
f(n) \triangleq \left\{
\begin{array}{lll}
n\div 2 & \text{if} & n \bmod 2 = 0 \\
3n+1 & \text{if} & n \bmod 2 = 1
\end{array}
\right.
\]
The Collatz Conjecture---an elusive open problem in the field of mathematics---postulates that for any positive $n$, repeated applications of $f$ will eventually yield the value 1. Let the \emph{stopping time} $S_n$ be the minimum number of applications of $f$ to $n$ that it takes to reach 1. For example, $S_1 = 0$, $S_2 = 1$, and $S_3 = 7$. When run in an initial state where $\var a = n$, the following program computes $S_n$, storing the result in $\var i$. Note that this program makes use of $\mathsf{DIV}$, defined previously.
\[
\code{Collatz} \triangleq \left\{
\begin{array}{l}
\var i \coloneqq 0 \fatsemi \phantom{x} \\
\whl{\var a \neq 1}{\;} \\
\quad \var b \coloneqq 2 \fatsemi
 \code{Div} \fatsemi\phantom{x}\\
\quad \iftf{\var r = 0}{\var a \coloneqq \var q}{\var a \coloneqq 3 \times \var a + 1} \fatsemi\phantom{x} \\
\quad \var i \coloneqq \var i+1
\end{array}
\right.
\]
Since some numbers may not have a finite stopping time---in which case the program will not terminate---this is a perfect candidate for a partial correctness proof.
Assuming that $\var a$ initially holds the value $n$, we can use a loop invariant stating that $\var a = f^{\var i}(n)$ on each iteration. \emph{If} the program terminates, then $\var a = f^{\var i}(n) = 1$, and so $S_n = \var i$. We capture this using the following triple, where the $\always$ modality indicates that the program may diverge.
\[
\triple{\sure{\var a = n \land n > 0}}{\code{Collatz}}{\always(\var i = S_n)}
\]

\subsection{Embedding Division in a Probabilistic Program}
\label{sec:prob}

The following program loops for a random number of iterations, deciding whether to continue by flipping a fair coin. It is interpreted using the $\mathsf{Prob}$ semiring from \Cref{ex:probsemi}.
\[
\var a \coloneqq 0 \fatsemi \var r \coloneqq 0 \fatsemi \left(\var a \coloneqq \var a +1\fatsemi \var b\coloneqq 2 \fatsemi \code{Div}\right)^{\langle \frac12\rangle}
\]
Suppose we want to know the probability that it terminates after an even or odd number of iterations.
The program makes use of $\mathsf{DIV}$ to divide the current iteration number $\var a$ by 2, therefore the remainder $\var r$ will indicate whether the program looped an even or odd number of times. We can analyze the program with the \ruleref{Iter} rule, using the following two families of assertions.
\[
\varphi_n \triangleq \wg{\var a = n \land \var r = \var a \bmod 2}{\frac1{2^n}}
\qquad\quad
\psi_n \triangleq \wg{\var a = n \land \var r = \var a \bmod 2}{\frac1{2^{n+1}}}
\]
According to \ruleref{Iter}, the final postcondition can be obtained by taking an outcome conjunction of all the $\psi_n$ for $n\in\mathbb N$. However, we do not care about the precise value of $\var a$, only whether $\var r$ is 0 or 1. The probability that $\var r = 0$ is $\frac12 +\frac18+\frac1{32}+\cdots$, a geometric series whose sum converges to $\frac23$. A similar calculation for the $\var r=1$ case gives us the following specification, indicating that the program terminates after an even or odd number of iterations with probability $\frac23$ and $\frac13$, respectively.
\[
\triple{\sure\tru}{
  \var a \coloneqq 0 \fatsemi \var r \coloneqq 0 \fatsemi \left(\var a \coloneqq \var a +1\fatsemi \var b\coloneqq 2 \fatsemi \code{Div}\right)^{\langle \frac12\rangle}
}{
  \sure{\var r = 0} \oplus_{\frac23} \sure{\var r = 1}
}
\]

\subsection{Implications to Program Analysis}
\label{sec:analysis}

Building on the proof reusability demonstrated in this case study, we now explore how Outcome Logic can be used as a unifying foundation for correctness and incorrectness static analysis. Although correctness and incorrectness can have many meanings, we follow the lead of \citet{il}, with the distinction coming down to demonic vs angelic nondeterminism. More precisely, a correctness specification covers \emph{all} traces whereas an incorrectness specification witnesses the \emph{existence} of a single faulty one.

Many real world static analysis systems operate in this way, such as Meta's Infer tool \cite{infer}, which was initially developed as a verification engine to prove memory safety in large codebases. It accordingly reports specifications as Hoare Triples $\hoare PCQ$, where $P$ specifies the resources that must be available in order for the program $C$ to execute safely in all traces \cite{biab,biabjacm}. However, it was later discovered that Infer was more effective as a bug finding tool, both because correctness analysis was sometimes computationally intractable and also because the codebases contained bugs \cite{distefano2019scaling}. 

The problem is that Hoare Logic is not sound as a logical foundation for bug finding---it admits false positives \cite{il}. More specifically, if an erroneous outcome occurs nondeterministically, then Hoare Logic has no way to witness a trace proving that the bug occurs. For example, if $Q_\ok$ represents the desired outcome and $Q_\er$ represents the erroneous one, then the Hoare Logic postcondition must be a disjunction of the two: $\hoare PC{Q_\ok\vee Q_\er}$, but given a specification of that form, it is possible that every execution falls into the $Q_\ok$ branch. In practice, postconditions become imprecise due to the use of abstraction \cite{biab}, and so Infer often fails to prove the absence of bugs, but that cannot be used as evidence that a bug exists \cite{isl}.

In response, the Infer team developed a new analysis called Pulse, which is based on angelic nondeterminism and therefore has a different logical foundation \cite{isl,realbugs}. As we saw in \Cref{sec:dynamichoare}, both forms of nondeterminism can be represented in Outcome Logic, by using different modalities in the postcondition\footnote{Although Pulse was initially based on Incorrectness Logic \cite{il}, it was observed by \citet{zilberstein2024outcome} and \citet{raad2024nontermination} that it can also be modeled using Lisbon Logic}.
\[
  \triple {\sure P}C{\always Q}
  \qquad\qquad
  \triple {\sure P}C{\sometimes Q}
\]
However, these two types of specifications are still incompatible; one cannot be used in a sub-derivation of the other. Crucially---as is the case with Infer and Pulse---this means that intermediate procedure specifications cannot be shared between correctness and incorrectness analyses. This is unfortunate, as computing those intermediate specifications in a large codebase is costly.

Fortunately, sharable specifications can be expressed in Outcome Logic.
The trick is to use specifications of the form $\triple {\sure P}C{\sure{Q_1} \oplus \cdots \oplus \sure{Q_n}}$ wherever possible, which both cover all the outcomes (for correctness) and also guarantee reachability of each $\sure{Q_k}$ (for bug finding). Many procedures---even looping ones---fit into this format, \eg the one we saw in \Cref{sec:division}.

On the other hand, some procedures will need to be analyzed using specialized techniques. For example, the use of a loop invariant in \Cref{sec:collatz} introduced a $\always$ modality to indicate that there may not be any terminating outcomes. Alternatively, if a bug arises in one of the paths, the analysis can introduce a $\sometimes$ modality in order to retain less information about the other paths. Both of these can be achieved by an application of the rule of consequence.
\[
  \inferrule{
    \triple{\sure{P}}C{\sure{Q_1} \oplus \cdots \oplus \sure{Q_n}}
    \quad
    \forall k.\ Q_k \Rightarrow Q
  }{
    \triple{\sure P}C{\always Q}
  }
\qquad\qquad
  \inferrule{
    \triple{\sure P}C{\sure{Q_1} \oplus \cdots \oplus \sure{Q_n}}
  }{
    \triple{\sure P}C{\sometimes Q_k}
  }
\]
\citet{zilberstein2024outcome} designed an algorithm for this type of analysis, showing not only that Outcome Logic models Infer and Pulse, but also that the engines of those tools could be consolidated so as to share procedure specifications in many cases. That algorithm is described in \Cref{sec:limitations}.

\section{Case Study: Graph Problems}
\label{sec:graphs}

We now examine case studies using Outcome Logic to derive quantitative properties in alternative models of computation.

\subsection{Counting Random Walks}
\label{sec:walk}

Suppose we wish to count the number of paths between the origin and the point $(N, M)$ on a two dimensional grid. To achieve this, we first write a program that performs a random walk on the grid; while the destination is not yet reached, it nondeterministically chooses to take a step on either the $x$ or $y$-axis (or steps in a fixed direction if the destination on one axis is already reached).
\[
\code{Walk} \triangleq \left\{
\begin{array}{l}
\whl{\var x < N \vee \var y < M}{\;} \\
\quad \code{if} ~ \var x < N \land \var y < M ~\code{then} \\
\quad\quad (\var x \coloneqq \var x +1 ) + (\var y \coloneqq \var y +1 ) \\
\quad \code{else if}~\var x \ge N ~\code{then} \\
\quad\quad \var y \coloneqq \var y + 1 \\
\quad \code{else} \\
\quad\quad \var x \coloneqq \var x + 1
\end{array}\right.
\]
Using a standard program logic, it is relatively easy to prove that the program will always terminate in a state where $x=N$ and $y=M$. However, we are going to interpret this program using the $\mathsf{Nat}$ semiring (\Cref{ex:multisetsemi}) in order to count how many traces (\ie random walks) reach that outcome.

First of all, we know it will take exactly $N+M$ steps to reach the destination, so we can analyze the program using the \ruleref{Variant} rule, where the loop variant $\varphi_n$ records the state of the program $n$ steps away from reaching $(N, M)$.

If we are $n$ steps away, then there are several outcomes ranging from $x = N-n \land y =M$ to $x = N \land y = M-n$. More precisely, let $k$ be the distance to $N$ on the $x$-axis, meaning that the distance to $M$ on the $y$-axis must be $n-k$, so $x = N-k$ and $y = M- (n-k)$. At all times, it must be true that $0 \le x\le N$ and $0\le y\le M$, so it must also be true that $0 \le N-k \le N$ and $0\le M- (n-k) \le M$. solving for $k$, we get that $0 \le k \le N$ and $n-M \le k \le n$. So, $k$ can range between $\max(0, n-M)$ and $\min(N, n)$.

In addition, the number of paths to $(x, y)$ is ${x+y \choose x}$, \ie the number of ways to pick $x$ steps on the $x$-axis out of $x+y$ total steps. Putting all of that together, we define our loop variant as follows:
\[
\varphi_n \triangleq \smashoperator{\bigoplus_{k = \max(0, n-M)}^{\min(N, n)}} \wg{\var x = N-k \land \var y = M - (n-k)}{{N+M - n \choose N-k}}
\]
The loop body moves the program state from $\varphi_{n+1}$ to $\varphi_n$. The outcomes of $\varphi_{n+1}$ get divided among the three if branches. In the outcome where $x=N$ already, $y$ must step, so this goes to the second branch. Similarly, if $y=M$ already, then $x$ must step, corresponding to the third branch. All other outcomes go to the first branch, which further splits into two outcomes due to the nondeterministic choice.
Since we start $N+M$ steps from the destination, we get the following precondition:
\[
\varphi_{N+M} = \bigoplus_{k=N}^N \wg{x = N-k \land y= N-k}{{0 \choose N-N}} = \sure{x=0\land y=0}
\]
In addition, the postcondition is:
\[
\varphi_0 = \bigoplus_{k=0}^0 \wg{x = N-k \land y = M+k}{{N+M \choose N}} = \wg{x=N\land y=M}{{N+M\choose N}}
\]
This gives us the final specification below, which tells us that there are ${N+M\choose N}$ paths to reach $(N, M)$ from the origin. The full derivation is given in \Aref{app:walk}.
\[
\triple{\sure{x=0 \land y=0}}{\code{Walk}}{\wg{x=N\land y=M}{{N+M\choose N}}}
\]

\subsection{Shortest Paths}

We will now use an alternative interpretation of computation to analyze a program that nondeterministically finds the shortest path from $\var s$ to $\var t$ in a directed graph.
Let $G$ be the $N\times N$ Boolean adjacency matrix of a directed graph, so that $G[i][j] =\tru$ if there is an edge from $i$ to $j$ (or $\fls$ if no such edge exists). We also add the following expression syntax to read edge weights in a program, noting that $G[E_1][E_2] \in \mathsf{Test}$ since it is Boolean-valued.
\[
E ::= \cdots \mid G[E_1][E_2]
\]
\[
\de{G[E_1][E_2]}_{\mathsf{Exp}}(s) \triangleq
G
\left[ \de{E_1}_{\mathsf{Exp}}(s) \right]
\left[ \de{E_2}_{\mathsf{Exp}}(s)
\right]
\]
The following program loops until the current position $\var{pos}$ reaches the destination $\var t$. At each step, it nondeterministically chooses which edge ($\var{next})$ to traverse using an iterator; for all $\var{next} \le N$, each trace is selected if there is an edge from $\var{pos}$ to $\var{next}$, and a weight of 1 is then added to the path, signifying that we took a step.
\[
\code{SP} \triangleq \left\{
\begin{array}{l}
\whl{\var{pos}\neq \var{t}}{\;} \\
\quad \var{next} \coloneqq 1 \fatsemi \phantom{x} \\
\quad \iter{\left( \var{next} \coloneqq \var{next}+1\right)}{\var{next} < N}{~G[\var{pos}][\var{next}]} \fatsemi \phantom{x} \\
\quad \var{pos} \coloneqq \var{next}\fatsemi\phantom{x} \\
\quad \assume 1 
\end{array}
\right.
\]
We will interpret this program using the $\mathsf{Tropical}$ semiring from \Cref{ex:tropicalsemi}, in which addition corresponds to $\min$ and multiplication corresponds to addition. So, path lengths get accumulated via addition and nondeterministic choices correspond to taking the path with minimal weight. That means that at the end of the program execution, we should end up in a scenario where $\var{pos} = \var t$, with weight equal to the shortest path length from $\var s$ to $\var t$.

To prove this, we first formalize the notion of shortest paths below: $\short^t_n(G, s, s')$ indicates whether there is a path of length $n$ from $s$ to $s'$ in $G$ in without passing through $t$ and $\short(G, s, t)$ is the shortest path length from $s$ to $t$. Let $I = \{ 1, \ldots , N \} \setminus \{t\}$.
\begin{align*}
\short_0^t(G, s, s') &\triangleq (s = s')
\\
\short_{n+1}^t(G, s, s') &\triangleq \smashoperator{\bigvee_{i\in I}} \short_n^t(G, s,i)  \land G[i][s']
\\
\short(G, s, t) &\triangleq {\inf {\{n \in \mathbb{N}^\infty \mid \short_n^t (G,s,t) \}}} 
\end{align*}
We analyze the while loop using the \ruleref{While} rule, which requires $\varphi_n$ and $\psi_n$ to record the outcomes where the loop guard is true or false, respectively, after $n$ iterations. We define these as follows, where $+$ denotes regular arithmetic addition rather than addition in the tropical semiring.
\[
\varphi_n = \smashoperator{\bigoplus_{i \in I}} \wg{\var{pos} = i}{\short_n^t(G, \var s, i) + n}
\qquad
\psi_n = \wg{\var{pos} = \var t}{\short_n^t(G, \var s, \var t) + n}
\qquad
\psi_\infty = \wg{\var{pos} = \var t}{\short(G, \var s, \var t)}
\]
Recall that in the tropical semiring $\fls = \infty$ and $\tru = 0$. So, after $n$ iterations, the weight of the outcome $\var{pos} = i$ is equal to $n$ if there is an $n$-step path from $s$ to $i$, and $\infty$ otherwise. The final postcondition $\psi_\infty$ is the shortest path length to $t$, which is also the infimum of $\short_n^t(G,s,t)+n$ for all $n$. Using the \ruleref{Iter} rule we get the following derivation for the inner loop:
\[\def\arraystretch{1.5}
\begin{array}{l}
\lrob{\bigoplus_{i\in I }\wg{\var{pos} = i \land \var{next}=1}{\short_n^t(G, s, i) + n}}\\
\quad {\iter{\left( \var{next} \coloneqq \var{next}+1\right)}{\var{next} < N}{~G[\var{pos}][\var{next}]}} \fatsemi\phantom{X}\\
\lrob{\bigoplus_{j=1}^N{\bigoplus_{i\in I }} \wg{\var{pos} = i \land \var{next} = j}{(\short_n^t(G, s, i) \land G[i][j]) + n} } \\
\quad \var{pos} \coloneqq \var{next}\fatsemi\assume 1\\
\lrob{\bigoplus_{j=1}^N\bigoplus_{i\in I } \wg{\var{pos}= j}{(\short_n^t(G, s, i) \land G[i][j]) + n + 1} } \implies \\
\lrob{\bigoplus_{j=1}^N \wg{\var{pos} = j}{\short_{n+1}^t(G, s, j) + n+1}} \\
\end{array}
\]
The outcome conjunction over $i\neq t$ (corresponding to the minimum weight path) gives us $\var{pos} = j$ with weight $\short_{n+1}^t(G, s, j)+n+1$---it is $n+1$ if there is path of length $n$ to some $i$ and $G[i][j]$.

The precondition is $\varphi_0\oplus\psi_0 = (\var{pos} = s)$, since $\short_0^t(G, s, i) = \fls$ when $i\neq s$ and $\tru$ when $i=s$.
Putting this all together, we get the following triple, stating that the final position is $t$ and the weight is equal to the shortest path.
\[
\triple{\sure{\var{pos} = s}}{\code{SP}}{\wg{\var{pos} = t}{\short(G, s, t)}}
\]
This program does not terminate if there is no path from $s$ to $t$. In that case there are no reachable outcomes, so the interpretation of the program is $\zero$. Indeed, $\zero = \infty$ in the tropical semiring, which is the shortest path between two disconnected nodes. The postcondition is $\wg{\var{pos} = t}{\infty}$, meaning that the program diverged.

%
%

\section{Related Work}\label{sec:discussion}


\heading{Correctness, Incorrectness, and Unified Program Logics.}
While formal verification has long been the aspiration for static analysis, bug-finding tools are often more practical in real-world engineering settings. This partly comes down to efficiency---bugs can be found without considering all traces---and partly due to the fact that most real world software just is not correct \cite{distefano2019scaling}.

However, the standard logical foundations of program analysis such as Hoare Logic are prone to \emph{false positives} when used for bug-finding---they cannot witness the existence of erroneous traces. In response, \citet{il} developed Incorrectness Logic, which under-approximates the reachable states (as opposed to Hoare Logic's over-approximation) so as to only report bugs that truly occur.

Although Incorrectness Logic successfully serves as a logical foundation for bug-finding tools \cite{isl,realbugs}, it is semantically incompatible with correctness analysis, making sharing of toolchains difficult or impossible. Attention has therefore turned to unifying correctness and incorrectness theories. This includes Exact Separation Logic, which combines Hoare Logic and Incorrectness Logic to generate specifications that are  valid for both, but that also precludes under- or over-approximation via the rule of consequence \cite{maksimovi_c2023exact}. Local Completeness Logic combines Incorrectness Logic with an over-approximate abstract domain to similar effect; it also precludes dropping paths \cite{Bruni2021ALF,brunijacm}.

There have also been recent efforts to organize correctness and incorrectness logics into taxonomies, where their similarities and differences are expressed in terms of weakening, contrapositives, and Galois connections \cite{cousot2024calculational,qsp,ascari2025revealing,verscht2025taxonomy}.
Our goal in this article is orthogonal; we capture many different types of specifications with a \emph{single} semantics and proof theory, so as to reuse proof fragments as much as possible. We also include several kinds of effects rather than just nondeterminism.

\heading{Outcome Logic.}
Outcome Logic unifies correctness and incorrectness reasoning without compromising the use of logical consequences. This builds on an idea colloquially known as \emph{Lisbon Logic}, first proposed by Derek Dreyer and Ralf Jung in 2019, that has similarities to the diamond modality of Dynamic Logic \cite{pratt1976semantical,harel2001dynamic} and \cites{wpp} calculus of \emph{possible correctness}. The idea was briefly mentioned in the Incorrectness Logic literature \cite{il,ilalgebra,realbugs}, but using Lisbon Logic as a foundation of incorrectness analysis was not fully explored until the introduction of Outcome Logic \cite{outcome}, which subsumes both Lisbon Logic and Hoare Logic, as we saw in \Cref{sec:dynamichoare}. The metatheory of Lisbon Logic has  subsequently been explored more deeply and extended \cite{ascari2025revealing,raad2024nontermination}. Hyper Hoare Logic also generalizes Hoare and Lisbon Logics \cite{hyperhoare}, and is semantically equivalent to the Boolean instance of OL (\Cref{ex:powersetsemi}), but does not support effects other than nondeterminism.

Outcome Logic initially used a model based on both a monad and a monoid, with looping defined via the Kleene star $C^\star$ \cite{outcome}. The semantics of $C^\star$ had to be justified for each instance,
however it is not compatible with probabilistic computation (see \Cref{foot:star}), so an ad-hoc semantics was used in the probabilistic case. 
Moreover, only the \textsc{Induction} rule was provided for reasoning about $C^\star$, amounting to unrolling the loop one time. Some loops can be analyzed by applying \textsc{Induction} repeatedly, but it is inadequate if the number of iterations depends at all on the program state.
Our $\iter Ce{e'}$ construct fixes this, defining iteration in a way that supports both Kleene star ($\iter C\one\one$) and also probabilistic computation. As we showed in \Cref{sec:examples,sec:graphs}, \ruleref{Iter} can be used to reason about any loop, even ones that iterate an unbounded number of times.

The next Outcome Logic paper focused on a particular separation logic instance \cite{zilberstein2024outcome}. The model was refined to use semirings, and the programming language included while loops instead of $C^\star$ so that a single well-definedness proof could extend to all instances. However, the evaluation model included additional constraints ($\one = \top$ and normalization) that preclude, \eg the multiset model (\Cref{ex:multisetsemi}). Rather than giving inference rules, the paper provided a symbolic execution algorithm, which also only supported loops via bounded unrolling. 

This article extends the prior conference papers on Outcome Logic by giving a more general model with more instances and better support for iteration, providing a relatively complete proof system that is able to handle any loop, deriving additional inference rules for Hoare and Lisbon Logic embeddings, and exploring case studies related to previously unsupported types of computation and looping.

\heading{Computational Effects.}
Effects have been present since the early years of program analysis. Even basic programming languages with while loops introduce the possibility of \emph{nontermination}. Partial correctness was initially used to sidestep the termination question \cite{FLOYD67,hoarelogic}, but total correctness (requiring termination) was later introduced too \cite{manna1974axiomatic}. More recently, automated tools were developed to prove (non)termination in real-world software \cite{berdine2006automatic,brockschmidt2013better,cook2014disproving,cook2006termination1,cook2006termination2,raad2024nontermination}.

Nondeterminism also showed up in early variants of Hoare Logic, stemming from Dijkstra's Guarded Command Language (GCL) \cite{GCL} and Dynamic Logic \cite{pratt1976semantical,harel2001dynamic};
it is useful for modeling backtracking algorithms \cite{floyd1967nondeterministic} and opaque aspects of program evaluation such as user input and concurrent scheduling. While Hoare Logic has traditionally used demonic nondeterminism \cite{broy1981algebraic}, other program logics have recently arisen to deal with nondeterminism in different ways, particularly for incorrectness \cite{reversehoare,il,ilalgebra,outcome,raad2024nontermination,ascari2025revealing,hyperhoare}.

Beginning with the seminal work of \citet{psem,ppdl}, the study of probabilistic programs has a rich history. This eventually led to the development of probabilistic Hoare Logic variants \cite{hartog,ellora,rand2015vphl,den_hartog1999verifying} that enable reasoning about programs in terms of likelihoods and expected values. 
Doing so requires pre- and postconditions to be predicates on probability distributions.

Outcome Logic provides abilities to reason about those effects using a common set of inference rules. This opens up the possibility for static analysis tools that soundly share proof fragments between different types of programs, as shown in \Cref{sec:examples}.

\heading{Relative Completeness and Expressivity.}
Any sufficiently expressive program logic must necessarily be incomplete since, for example, the Hoare triple $\hoare{\tru}C{\fls}$ states that the program $C$ never halts, which is not provable in an axiomatic deduction system. In response, \citet{cook1978soundness} devised the idea of \emph{relative} completeness to convey that a proof system is adequate for analyzing a program, but not necessarily assertions about the program states.

Expressivity requires that the assertion language used in pre- and postconditions can describe the intermediate program states needed to, \eg apply the \ruleref{Seq} rule. In other words, the assertion syntax must be able to express $\spost(C, P)$ from \Cref{def:post}. Implications for an \emph{expressive} language quickly become undecidable, as they must encode Peano arithmetic \cite{apt1981ten,lipton1977necessary}.
With this in mind, the best we can hope for is a program logic that is complete \emph{relative} to an oracle that decides implications in the rule of \ruleref{Consequence}.

The question of what an expressive (syntactic) assertion language for Outcome Logic looks like remains open. In fact, the question of expressive assertion languages for probabilistic Hoare Logics (which are subsumed by Outcome Logic) is also open \cite{den_hartog1999verifying,ellora}. \citet{hartog} devised a relatively complete probabilistic logic with syntactic assertions, but the programming language does not include loops and is therefore considerably simplified; it is unclear if this approach would extend to looping programs. In addition \citet{batz2021relatively} created an expressive language for expectation-based reasoning, however the language only has constructs to describe states and expected values; it does not contain a construct like $\oplus$ to express properties involving of multiple traces of the program. As we saw in the discussion following \Cref{def:post}, the ability to describe multiple executions makes the expressivity question more complex.

Several program logics (including our own) use semantic assertions, which are trivially expressive \cite{semanticsep,yang,il,iris1,iris,ellora,kaminski,hyperhoare,raad2024nontermination,cousot2012abstract,ascari2025revealing}. This includes logics that are mechanized within proof assistants \cite{ellora,iris,iris1,hyperhoare}, so we do not see the extensional nature of our approach as a barrier to mechanization in the future.


\heading{Quantitative Reasoning and Weighted Programming.}
Whereas Hoare Logic provides a foundation for \emph{propositional} program analysis, quantitative program analysis has been explored too. 
Probabilistic Propositional Dynamic Logic \cite{ppdl} and weakest pre-expectation calculi \cite{wpe,kaminski,mciver2005abstraction} are used to reason about randomized programs in terms of expected values. This idea has been extended to non-probabilistic quantitative properties too \cite{qsp,batz2022weighted,zhang2024quantitative}.

Weighted programming \cite{batz2022weighted} generalizes pre-expectation reasoning using semirings to model branch weights, much like this paper. Outcome Logic is a propositional analogue to weighted programming's quantitative model, but it is also more expressive in its ability to reason about quantities over \emph{multiple outcomes}. For example, in \Cref{sec:prob}, we derive a single OL triple that gives the probabilities of two outcomes, whereas weighted programming (or weakest pre-expectations) would need to compute the weight of each branch individually. In addition, \citet{batz2022weighted} only support total semirings, so they cannot analyze standard probabilistic programs. Weighted programming was extended to handle hyperproperties by \citet{zhang2024quantitative}, which can be seen as a weakest precondition calculus for Outcome Logic.


\section{Conclusion, Limitations, and Future Work}
\label{sec:limitations}

In this article, we have presented a proof system for Outcome Logic, which is sufficient for reasoning about programs with branching effects. That is, effects that can be semantically encoded by assigning weights to a collection of final states. However, there remains room for future development of Outcome Logic in order to support additional kinds of effects and other capabilities. In this section, we describe the limitations of the present formalization, and how those limitations could be addressed in future work.

\heading{Total Correctness and Nontermination.}

Although Outcome Logic provides some ability to reason about nontermination---\ie it can be used to prove that a program \emph{sometimes} or \emph{never} terminates---it cannot be used for total correctness \cite{manna1974axiomatic}, as the semantics only tracks terminating traces, not the existence of infinite ones. Given some specification $\triple\varphi{C}\psi$, it is therefore impossible to know whether additional nonterminating behavior outside of $\psi$ is possible.

The challenge in tracking infinite traces is that the semantics can easily become non-continuous, making it more difficult to prove the existence of fixed points used to define the semantics of loops. More precisely, \citet{apt1986countable} showed that no continuous semantics is possible that both distinguishes nontermination and also supports unbounded nondeterminism. That means that, at least in the nondeterministic interpretation (\Cref{ex:powersetsemi}), a primitive action of the following form, which assigns $x$ to a nondeterministically chosen natural number, must be precluded.
\[
  \de{x \coloneqq \bigstar}(s)
  \quad\triangleq\quad
  \sum_{n\in\mathbb{N}} \eta(s[ x \coloneqq n ])
  \quad\cong\quad
  \{ s[ x \coloneqq n ]\mid n\in\mathbb{N} \}
\]
Note that in the present formulation, we could define the above construct as syntactic sugar, as we did in \Cref{sec:hyper}. The program below has the exact same semantics as above, since our model in \Cref{sec:semantics} does not record the fact that it also has an infinite (nonterminating) execution.
\[
  x \coloneqq \bigstar
  \quad\triangleq\quad
  x \coloneqq 0 \fatsemi (x \coloneqq x+1)^\star
\]
However, in a total correctness version, the semantics of those programs would not be the same; the version defined with the Kleene star has an additional nonterminating trace.
This corresponds to \cites{Dijkstra76} observation that a program cannot make infinitely many choices in a finite amount of time,  giving an operational argument for why unbounded nondeterminism should not be allowed. However, the story is more murky when generalizing to the semiring weighted model that we have presented in this paper. Although unbounded choice is not valid in the nondeterministic model, it \emph{is} valid in the probabilistic model (\Cref{ex:probsemi}). For example, the following probabilistic program makes infinitely many choices, but it also \emph{almost surely terminates} (there is an infinite trace, but it occurs with probability 0, so we do not consider it a possible outcome).
\[
  x \coloneqq 0 \fatsemi (x \coloneqq x+1)^{\langle \frac12\rangle}
\]
So, in the total correctness setting, the continuity of program operators depends on the semiring in a non-obvious way, which cannot be explained by existing work on powerdomains \cite{plotkin1976powerdomain,smyth1978power,s_ondergaard1992non}. Some work in this area has already been done by \citet{li2024totaloutcomelogicproving}, who develop a semantic model for Total Outcome Logic that explains this discrepancy in terms of properties of the semirings.

\heading{Mixing Effects.} While Outcome Logic, as described in this article, can be used to analyze programs that have various types of branching, it cannot handle programs that display multiple kinds of branching in tandem. Of particular interest is the combination of probabilistic choice and nondeterminism, which would be useful to analyze randomized distributed systems (where nondeterminism arises due to concurrent scheduling).
Unfortunately, the typical powerset monad representation of nondeterminism (\Cref{ex:powersetsemi}) does not compose well with probability distributions (\Cref{ex:probsemi}) \cite{varacca_winskel_2006}, and similar restrictions apply to other combinations of semirings \cite{parlant2020monad,zwart2019no,zwart2020non}.

Demonic Outcome Logic (DOL) \cite{zilberstein2025demonic} was recently introduced for reasoning about programs that are both probabilistic and nondeterministic, with semantics based on the convex powerset monad \cite{morgan1996refinement,jifeng1997probabilistic}, a specialized computational domain, which does not generalize like the semiring model that we use in this paper does. DOL supports reasoning about \emph{probabilistic} outcomes with $\oplus$, whereas nondeterminism is treated demonically (the postcondition applies to all nondeterministic paths). This makes DOL a correctness logic only, it cannot witness the \emph{existence} of bad distributions of outcomes. It would be interesting to develop a dual \emph{Angelic Outcome Logic} for reasoning about incorrectness, or even a logic that unifies the two.

Alternatively, it was recently shown that probabilistic monads do compose with multisets \cite{jacobs2021multisets,kozen2023multisets}. In \Cref{sec:walk}, we showed how the Outcome Logic instance based on multisets (\Cref{ex:multisetsemi}) can be used for quantitative analysis, however that example did not correspond to a canonical model of computation. By contrast, a logic using that same multiset instance for probabilistic nondeterminism would model a very pertinent combination of effects, motivating the study of these more exotic forms of nondeterminism in this paper. The powerdomain of indexed valuations \cite{varacca2002powerdomain,varacca_winskel_2006} is a similar computational domain that is also constructed via a distributive law between two weighting functions. Given these many different models, it would be interesting to investigate a variant of Outcome Logic where multiple weighting functions can be composed together.

\heading{Mutable State, Separation Logic, and Concurrency.}
As an extension to our Outcome Logic formulation in this article, it would be possible to add atomic commands for dynamically allocating and mutating heap pointers, and separation logic \cite{sl,localreasoning} style inference rules to reason about mutation. Separation logic introduces new logical primitives such as the points-to predicate $x \mapsto E$---stating that the $x$ is a pointer whose address maps to the value of $E$ on the heap---and the separating conjunction $P\sep Q$---stating that $P$ and $Q$ hold in disjoint regions of the heap, guaranteeing that if $x\mapsto E_1 \sep y\mapsto E_2$ holds, then $x$ and $y$ do not alias each other.

This idea was already explored in Outcome Separation Logic (OSL) \cite{zilberstein2024outcome}, but questions relating to unbounded looping and completeness were not addressed in that work. In separation logic, it is typical to provide \emph{small axioms} \cite{localreasoning}, which specify the behavior of mutation locally. For example, the following \ruleref{Write} rule specifies the behavior of mutation in a singleton heap. Global reasoning is then done via the \ruleref{Frame} rule \cite{yang2002semantic}, which states that a proof done in a local footprint is also valid in a larger heap.
\[
  \ruledef{Write}{\;}{\hoare{x \mapsto -}{[x] \gets E}{x \mapsto E}}
  \qquad\qquad
  \ruledef{Frame}{\hoare PCQ}{\hoare{P\sep F}C{Q\sep F}}
\]
\citet{yang} showed that the small axioms along with the \ruleref{Frame} rule are complete for standard separation logic; any specification can be derived using them. By contrast, \citet{zilberstein2024outcome} showed that the small axioms and \ruleref{Frame} rule are sound in OSL, but it is not known whether they are complete. In fact, there is some evidence to the contrary, since the resources specified outside the footprint of the local mutation may take on many different values, and therefore the \ruleref{Frame} rule would need to be applied for each branch of the computation. OSL could be obtained as an instantiation of the Outcome Logic model in this article, although the \ruleref{Frame} rule would need to be added (and as such, this article does not fully subsume OSL). It would be interesting to add such a frame rule, and explore the notion of completeness in that setting.

\citet{zilberstein2024outcome} also developed an algorithm to mechanically derive triples of the form $\triple{\sure P}C{\sure{Q_1} \oplus \cdots \oplus\sure{Q_n}}$, using the \ruleref{Seq} and \ruleref{Choice} rules to compositionally reason about sequential compositions via the following derived rule.
\[
  \inferrule{
    \triple{\sure P}{C_1}{\textstyle\bigoplus_{i\in I} \sure{Q_i}}
    \\
    \forall i\in I.\quad
    \triple{\sure{Q_i}}{C_2}{\textstyle\bigoplus_{j\in J_i}\sure{R_j}}
  }{
    \triple{\sure P}{C_1 \fatsemi C_2}{\textstyle\bigoplus_{i\in I, j\in J_i}\sure{R_j}}
  }
\]
The \ruleref{Frame} rule is a crucial part of the algorithm too; it is used to lift axioms about actions---such as the \ruleref{Write} rule above---into the current program context. The particular frames are inferred using \emph{biabduction} \cite{biab}, a heuristic procedure for inferring the \emph{missing} antiframe $M$ and leftover \emph{frame} $F$ in the following entailment $P \sep M \vdash Q \sep F$. So, for any precondition $\sure P$, a write action can be analyzed as follows:
\[
  \inferrule*[right=\ruleref{Consequence}]{
    P \sep M \vdash (x\mapsto-) \sep F
    \\
    \inferrule*[right=\ruleref{Frame}]{
      \inferrule*[right=\ruleref{Write}]{\;}{
        \triple{\sure{x \mapsto -}}{[x] \gets E}{\sure{x \mapsto E}}
      }
    }{
        \triple{\sure{(x \mapsto -) \sep F}}{[x] \gets E}{\sure{(x \mapsto E) \sep F}}
    }
  }{
        \triple{\sure{P \sep M}}{[x] \gets E}{\sure{(x \mapsto E) \sep F}}
  }
\]
One weakness of the algorithm is that it only has the ability to reason about loops via bounded unrolling. It would be interesting to extend the algorithm to use some of loop rules that we presented in this article. In particular, an algorithm could be developed for approximating loop behavior in Lisbon Logic using abstract domains \cite{cousot1977abstract}, something that cannot be done in Incorrectness Logic \cite{ascari}.

Going further, it would also be interesting to explore a concurrent variant of OSL. Along with the aforementioned challenges of reasoning about weighted branching in tandem with the nondeterministic behavior of the concurrent scheduler, this would also require a more powerful version of separation than that offered by OSL. The OSL separating conjunction $\varphi \osep P$ is asymmetric; $\varphi$ is an unrestricted outcome assertion, whereas $P$ can only be a basic assertion, like those described in \Cref{sec:assertions}. However, concurrent separation logic \cite{csl} typically requires the entire state space to be separated in order to compositionally analyze concurrent threads, so a symmetric separating conjunction $\varphi \osep\psi$ would be needed. Preliminary work has already been done to this effect \cite{zilberstein2024probabilistic,zilberstein2025denotational} based on Probabilistic Separation Logic \cite{psl}.

\medskip

\noindent Computational effects have traditionally beckoned disjoint program logics across two dimensions: different \emph{kinds} of effects (\eg nondeterminism vs randomization) and different \emph{assertions} about those effects (\eg angelic vs demonic nondeterminism).
Outcome Logic \cite{outcome,zilberstein2024outcome} captures those properties in a unified way, but until now the proof theory and connections to other logics have not been thoroughly explored.

This article expands on the prior Outcome Logic work and provides a comprehensive account of the OL metatheory by presenting a relatively complete proof system for Outcome Logic and a significant number of derived rules.
This shows that programs with branching effects are not only \emph{semantically} similar, but also share common reasoning principles. Specialized techniques (\ie analyzing loops with variants or invariants) are particular modes of use of our more general logic, and are compatible with each other rather than requiring distinct semantics.

\section*{Acknowledgements}

I would like to thank the anonymous TOPLAS referees for their detailed and thoughtful reviews, which greatly helped to improve this article. This work was partially supported by the \emph{Advanced Research and Invention Agency}'s (ARIA)  Safeguarded AI programme, under the project: \emph{Unified Automated Reasoning for Randomised Distributed Systems}.


 \bibliography{completeness}

\ifx\extended\undefined\else
\allowdisplaybreaks
\appendix

\clearpage
\ifx\apponly\undefined\else
\setcounter{page}{1}
\fi

{\noindent \huge\bfseries\sffamily Appendix}

\section{Totality of Language Semantics}
\label{app:totality}

\subsection{Semantics of Tests and Expressions}
\label{app:tests}

Given some semiring $\langle U,+,\cdot,\zero,\one\rangle$, the definition of the semantics of tests $\de{b}_{\mathsf{Test}} \colon \Sigma \to \{ \zero,\one\}$ is below.
\begin{align*}
\de{\tru}_{\mathsf{Test}}(\sigma) &\triangleq \one \\
\de{\fls}_{\mathsf{Test}}(\sigma) &\triangleq \zero \\
\de{b_1 \vee b_2}_{\mathsf{Test}}(\sigma) &\triangleq \left\{
\begin{array}{ll}
\one & \text{if} ~ \de{b_1}_{\mathsf{Test}}(\sigma) = \one ~\text{or}~ \de{b_2}_{\mathsf{Test}}(\sigma) = \one \\
\zero & \text{otherwise}
\end{array}
\right. \\
\de{b_1 \land b_2}_{\mathsf{Test}}(\sigma) &\triangleq \left\{
\begin{array}{ll}
\one & \text{if} ~ \de{b_1}_{\mathsf{Test}}(\sigma) = \one ~\text{and}~ \de{b_2}_{\mathsf{Test}}(\sigma) = \one \\
\zero & \text{otherwise}
\end{array}
\right. \\
\de{\lnot b}_{\mathsf{Test}}(\sigma) &\triangleq \left\{
\begin{array}{ll}
\one & \text{if} ~ \de{b}_{\mathsf{Test}}(\sigma) = \zero \\
\zero & \text{if} ~ \de{b}_{\mathsf{Test}}(\sigma) = \one
\end{array}
\right. \\
\de{t}_{\mathsf{Test}}(\sigma) &\triangleq \left\{
\begin{array}{ll}
\one & \text{if} ~ \sigma\in t \\
\zero & \text{if} ~ \sigma \notin t
\end{array}
\right.
\end{align*}
Based on that, we define the semantics of expressions $\de{e}\colon \Sigma\to U$.
\begin{align*}
\de{b}(\sigma) &\triangleq \de{b}_\mathsf{Test}(\sigma) \\
\de{u}(\sigma) &\triangleq u
\end{align*}

\subsection{Fixed Point Existence}

For all the proofs in this section, we assume that the operations $+$, $\cdot$, and $\sum$ belong to a Scott continuous, naturally ordered, partial semiring with a top element (as described in \Cref{sec:denotation}).

\begin{lemma}\label{lem:sumcont}
Let $\langle U, +, \cdot, \zero,\one\rangle$ be a continuous, naturally ordered, partial semiring. For any family of Scott continuous functions $(f_i : X\to \mathcal W(Y))_{i\in I}$ and directed set $D\subseteq X$:
\[
\sup_{x\in D} \sum_{i\in I} f_i(x) = \sum_{i\in I} f_i(\sup D)
\]
\end{lemma}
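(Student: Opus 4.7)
The plan is to reduce the equation to the Scott continuity of binary $+$ (from \Cref{def:continuity}) combined with the standard fact that iterated suprema may be swapped. Since the semiring operations on weighting functions are defined pointwise, it suffices to prove the analogous equation in $U$ at each $y \in Y$; so I will implicitly work pointwise throughout.

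First, I would unfold the infinite sum: by definition, $\sum_{i \in I} f_i(x) = \sup\{ g_F(x) \mid F \subseteq_{\mathsf{fin}} I \}$, where $g_F(x) \triangleq \sum_{i \in F} f_i(x)$ denotes the finite partial sum over $F$. I would check that this family of partial sums is directed in the natural order: for any two finite $F_1, F_2 \subseteq I$, the set $F_3 \triangleq F_1 \cup F_2$ satisfies $g_{F_k}(x) \le g_{F_3}(x)$ for $k \in \{1,2\}$, since $g_{F_3}(x) = g_{F_k}(x) + g_{F_3 \setminus F_k}(x)$ witnesses the natural-order inequality.

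Second, I would establish by induction on $|F|$ that each $g_F$ is Scott continuous in $x$. The base case $F = \emptyset$ gives the constant function $\zero$, which is trivially continuous. For the inductive step, write $g_{F \cup \{i\}}(x) = g_F(x) + f_i(x)$; then Scott continuity of binary $+$ together with the continuity of $f_i$ and of $g_F$ (inductive hypothesis) yields continuity of the sum.

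Third, with the pieces in place, I would compute
\[
\sup_{x \in D} \sum_{i \in I} f_i(x)
= \sup_{x \in D} \sup_{F \subseteq_{\mathsf{fin}} I} g_F(x)
= \sup_{F \subseteq_{\mathsf{fin}} I} \sup_{x \in D} g_F(x)
= \sup_{F \subseteq_{\mathsf{fin}} I} g_F(\sup D)
= \sum_{i \in I} f_i(\sup D),
\]
where the middle equality is the standard swap of directed suprema, and the third uses Scott continuity of each $g_F$. The main obstacle I anticipate is the justification of the double-supremum swap in the full generality of a naturally ordered partial semiring; I would handle it by noting that directedness in both indices implies the joint family $\{ g_F(x) \mid F \subseteq_{\mathsf{fin}} I, x \in D \}$ is directed (take componentwise upper bounds) so that both iterated suprema compute the supremum of this single directed family.
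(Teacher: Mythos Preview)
Your proposal is correct and matches the paper's approach: both establish Scott continuity of the finite partial sums inductively (via continuity of binary $+$ together with a diagonal sup-swap) and then handle arbitrary $I$ by unfolding the infinite sum as a supremum over finite subsets and exchanging the two directed suprema. The paper merely packages this as a transfinite induction on $|I|$---its successor step is your inductive step for $g_F$ and its limit step is your final computation---and the only point you leave implicit is that your step~3 already requires the same diagonal argument you spell out in step~4.
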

\begin{proof}
The proof proceeds by transfinite induction on the size of $I$.
\begin{itemize}
\item \textbf{Base case:} $I=\emptyset$, so clearly:
\[
\sup_{x\in D}\sum_{i\in \emptyset} f_i(x) = \sup_{x\in D} \zero = \zero = \sum_{i\in \emptyset} \sup_{x\in D} f_i(x)
\]
\item \textbf{Successor Case:} Suppose the claim holds for sets of size $\alpha$, and let $|I| = \alpha+1$. We can partition $I$ into $I' \cup \{ i \}$ where $i\in I$ is an arbitrary element and $I' = I\setminus \{i\}$ so that $|I'| = \alpha$. Now, we have that:
\begin{align*}
  \sup_{x\in D} \sum_{j\in I} f_j(x)
  &= \sup_{x\in D} (\sum_{j\in I'} f_j(x)) + f_{i}(x) \\
  \intertext{Now, note that since all the $f_j$ functions are Scott continuous, they must also be monotone, and addition is also monotone. Therefore the following equality holds \cite[Proposition 2.1.12]{abramsky1995domain}.}
  &= \sup_{x\in D}\sup_{y\in D}(\sum_{j\in I'} f_j(x)) + f_{i}(y) \\
  \intertext{By continuity of the semiring:}
  &= (\sup_{x\in D} \sum_{j\in I'} f_j(x)) + (\sup_{y\in D} f_{i}(y)) \\
  \intertext{By continuity of $f_{i}$ and the induction hypothesis:}
  &= \sum_{j\in I'} f_j(\sup D) + f_{i}(\sup D) \\
  &= \sum_{i\in I} f_i(\sup D)
\end{align*}

\item \textbf{Limit case:} suppose that the claim holds for all finite index sets. Now, given the definition of the sum operator:
\begin{align*}
\sup_{x\in D} \sum_{i\in I} f_i(x)
&= \sup_{x\in D} \sup_{J \subseteq_{\mathsf{fin}} I} \sum_{j \in J} f_j(x)\\
\intertext{The finite subsets of $I$ are a directed set and clearly the inner sum is monotone in $x$ and $J$, so we can rearrange the suprema \cite[Proposition 2.1.12]{abramsky1995domain}.}
&=  \sup_{J \subseteq_{\mathsf{fin}} I} \sup_{x\in D} \sum_{j \in J} f_j(x)\\
\intertext{By the induction hypothesis:}
&=  \sup_{J \subseteq_{\mathsf{fin}} I} \sum_{j \in J} f_j(\sup D)\\
&= \sum_{i\in I} f_i(\sup D)
\end{align*}
\end{itemize}
\end{proof}

%
%

\begin{lemma}\label{lem:summult}
If $\sum_{i\in I} u_i$ is defined, then for any $(v_i)_{i\in I}$, $\sum_{i\in I} u_i\cdot v_i$ is defined.
\end{lemma}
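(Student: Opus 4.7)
The plan is to reduce the claim to the distributivity property for defined sums, using the top element $\top$ together with the partition property. The key idea is to complement each $v_i$ to $\top$, obtain a defined sum via distributivity, and then extract the target sum by rearranging indices across two partitions.

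First I would exploit the top element. Since $v_i \le \top$, the natural order yields a $w_i\in U$ with $v_i + w_i = \top$, and distributivity gives $u_i\cdot v_i + u_i\cdot w_i = u_i\cdot(v_i+w_i) = u_i\cdot\top$ for each $i$. Because $\sum_i u_i$ is defined by hypothesis, property (2) of the complete semiring gives that $\sum_i u_i\cdot\top = (\sum_i u_i)\cdot\top$ is defined; equivalently, $\sum_{i\in I}(u_i\cdot v_i + u_i\cdot w_i)$ is defined.

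Next I would consider the doubled index set $K = I\sqcup I$ with terms $x_{(1,i)} = u_i\cdot v_i$ and $x_{(2,i)} = u_i\cdot w_i$. Partitioning $K$ by its second coordinate, property (3) yields $\sum_{k\in K} x_k = \sum_{i\in I}(u_i\cdot v_i + u_i\cdot w_i)$, so the flat sum $\sum_{k\in K} x_k$ is defined. Partitioning $K$ instead by its first coordinate, property (3) yields $\sum_{k\in K} x_k = (\sum_{i\in I} u_i\cdot v_i) + (\sum_{i\in I} u_i\cdot w_i)$. Since the flat sum is defined, the right-hand side is defined, which forces each summand to be defined individually; in particular $\sum_i u_i\cdot v_i$ is defined, as required.

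The main obstacle will be the careful reading of property (3) in the partial setting: the equation $\sum_{k\in K}\sum_{j\in J_k} u_j = \sum_{i\in I} u_i$ must be interpreted as ``whenever one side is defined, so is the other, and they are equal''. That reading is standard for partial algebras and matches how property (2) is already used (one side's definedness entails the other's), but it is the crucial background fact that propagates definedness in both directions through the argument above.
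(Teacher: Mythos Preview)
Your proposal is correct and follows essentially the same approach as the paper: complement each $v_i$ to the top element via the natural order, use distributivity together with the definedness of $\big(\sum_i u_i\big)\cdot\top$, and then extract $\sum_i u_i\cdot v_i$ as a summand of a defined expression. The paper's version is terser---it simply writes the chain $(\sum_i u_i)\cdot\top = \sum_i u_i\cdot(v_i+v_i') = \sum_i u_i\cdot v_i + \sum_i u_i\cdot v_i'$ and declares the target ``a subexpression of a well-defined term''---whereas you make the two applications of the partition property explicit; your caveat about the bidirectional reading of property~(3) in the partial setting is exactly the point the paper leaves implicit.
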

\begin{proof}
Let $v$ be the top element of $U$, so $v \ge v_i$ for all $i\in I$. That means that for each $i\in I$, there is a $v_i'$ such that $v_i+v'_i=v$. Now, since multiplication is total, then we know that $(\sum_{i\in I} u_i)\cdot v$ is defined. This gives us:
\[
(\sum_{i\in I} u_i)\cdot v
= \sum_{i\in I} u_i\cdot (v_i + v'_i) 
= \sum_{i\in I}u_i\cdot v_i +  \sum_{i\in I}u_i\cdot v'_i
\]
And since $\sum_{i\in I}u_i\cdot v_i$ is a subexpression of the above well-defined term, then it must be well-defined.
\end{proof}
\begin{lemma}\label{lem:binddef}
For any $m \in \mathcal W(X)$ and $f\colon X\to\mathcal W(Y)$, we get that $f^\dagger \colon \mathcal W(X) \to \mathcal W(Y)$ is a total function.
\end{lemma}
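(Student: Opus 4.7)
The plan is to show that $f^\dagger(m) \in \mathcal{W}(Y)$ whenever $m \in \mathcal{W}(X)$ and $f\colon X \to \mathcal{W}(Y)$, which by \Cref{def:weighting} requires three things: (i) each pointwise value $f^\dagger(m)(y)$ is a well-defined element of $U$, (ii) the total weight $|f^\dagger(m)|$ is defined, and (iii) $\supp(f^\dagger(m))$ is countable. I would tackle (i) and (iii) first, since they are nearly immediate, and treat (ii) as the heart of the proof.

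For (i), fix $y \in Y$. Since $|m|$ is defined by assumption, applying \Cref{lem:summult} with $u_x = m(x)$ and $v_x = f(x)(y)$ yields that $f^\dagger(m)(y) = \sum_{x \in \supp(m)} m(x) \cdot f(x)(y)$ is defined. For (iii), if $f^\dagger(m)(y) \ne \zero$ then at least one summand $m(x) \cdot f(x)(y)$ is nonzero, which by the annihilation axiom forces both $m(x) \ne \zero$ and $f(x)(y) \ne \zero$. Hence $\supp(f^\dagger(m)) \subseteq \bigcup_{x \in \supp(m)} \supp(f(x))$, a countable union of countable sets.

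For (ii), I would first use \Cref{lem:summult} to conclude that $S \triangleq \sum_{x \in \supp(m)} m(x) \cdot |f(x)|$ is defined, since $|m|$ is defined and each $|f(x)|$ exists. Then I would chain
\begin{align*}
  S
  &= \sum_{x \in \supp(m)} m(x) \cdot \sum_{y \in \supp(f(x))} f(x)(y) \\
  &= \sum_{x \in \supp(m)} \sum_{y \in \supp(f(x))} m(x) \cdot f(x)(y) \\
  &= \sum_{y \in Y} \sum_{x \in \supp(m)} m(x) \cdot f(x)(y) \\
  &= |f^\dagger(m)|,
\end{align*}
where the first equality is the definition of $|f(x)|$, the second uses property (2) of complete semirings, and the third applies property (3) to the disjoint partition of $\{(x,y) \mid x \in \supp(m),\ y \in \supp(f(x))\}$ along its two projections (using (iii) to recognize that summing over $\supp(f^\dagger(m))$ is the same as summing over $y \in Y$, up to $\zero$ terms). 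The main obstacle I anticipate is this reindexing step: the properties of complete semirings in \Cref{def:continuity} yield equalities, but in the partial-semiring setting one must also ensure that the sums on both sides are simultaneously defined. I would handle this by propagating definedness from $S$ outward through each intermediate expression, in the same distributivity-based style used in the proof of \Cref{lem:summult} itself.
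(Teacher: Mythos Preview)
Your proposal is correct and follows essentially the same approach as the paper: both establish pointwise definedness via \Cref{lem:summult}, then show $|f^\dagger(m)| = \sum_{x\in\supp(m)} m(x)\cdot|f(x)|$ by reindexing the double sum and conclude definedness from a second application of \Cref{lem:summult}. You are slightly more explicit than the paper in checking countability of the support and in flagging the definedness-propagation issue for the reindexing step (the paper simply writes ``by commutativity and associativity''), but the underlying argument is the same.
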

\begin{proof}
First, recall the definition of $(-)^\dagger$:
\[
f^\dagger(m)(y) = \smashoperator{\sum_{x\in\supp(m)}} m(x)\cdot f(x)(y)
\]
To show that this is well-defined, we need to show both that the sum exists, and that the resulting weighting function has a well-defined mass. First, we remark that since $m\in\mathcal W(A)$, then $|m| = \sum_{x\in\supp(m)} m(x)$ must be defined. By \Cref{lem:summult}, the sum in the definition of $(-)^\dagger$ is therefore defined. Now, we need to show that $|f^\dagger(m)|$ is defined:
\begin{align*}
|f^\dagger(m)|
&= \smashoperator{\sum_{y \in \supp(f^\dagger(m))}} f^\dagger(m)(y) \\
&= \smashoperator[l]{\sum_{y \in \bigcup_{a\in \supp(m)} \supp(f(a))}} \; \smashoperator[r]{\sum_{x\in\supp(m)}} m(x)\cdot f(x)(y) \\
\intertext{By commutativity and associativity:}
&= \smashoperator{\sum_{x\in\supp(m)}} m(x) \;\cdot\; \smashoperator{\sum_{y \in \supp(f(x))}} f(x)(y) 
= \smashoperator{\sum_{x\in\supp(m)}} m(x) \cdot |f(x)|
\end{align*}
Now, since $f(x) \in \mathcal W(Y)$ for all $x \in X$, we know that $|f(x)|$ must be defined. The outer sum also must be defined by \Cref{lem:summult}.
\end{proof}

In the following, when comparing functions $f,g\colon X\to\mathcal{W}(Y)$, we will use the pointwise order. That is, $f \sqsubseteq^\bullet g$ iff $f(x) \sqsubseteq g(x)$ for all $x\in X$.

\begin{lemma}\label{lem:bindcont}
$(-)^\dagger \colon (X \to \mathcal W(Y)) \to (\mathcal W(X) \to \mathcal W(Y))$ is Scott continuous.
\end{lemma}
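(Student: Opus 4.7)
The plan is to show Scott continuity of $(-)^\dagger$ with respect to the pointwise order on $X \to \mathcal W(Y)$: for any directed family $(f_i)_{i\in I}$ in $X \to \mathcal W(Y)$ with pointwise supremum $f = \sup_{i\in I} f_i$, I want to establish $f^\dagger = \sup_{i\in I} f_i^\dagger$ as a function in $\mathcal W(X) \to \mathcal W(Y)$, where the right-hand supremum is again pointwise. Since membership in $\mathcal W(Y)$ was already handled by \Cref{lem:binddef}, I only need to show that the outputs coincide on every input.

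To this end I would fix an arbitrary $m \in \mathcal W(X)$ and $y \in Y$ and perform the chain of equalities
\begin{align*}
f^\dagger(m)(y)
&= \smashoperator{\sum_{x\in\supp(m)}} m(x) \cdot f(x)(y)
\qquad\text{(definition of $(-)^\dagger$)} \\
&= \smashoperator{\sum_{x\in\supp(m)}} m(x) \cdot \sup_{i\in I} f_i(x)(y)
\qquad\text{(pointwise sup in $X \to \mathcal W(Y)$)} \\
&= \smashoperator{\sum_{x\in\supp(m)}} \sup_{i\in I} m(x) \cdot f_i(x)(y)
\qquad\text{(Scott continuity of $\cdot$)} \\
&= \sup_{i\in I} \smashoperator{\sum_{x\in\supp(m)}} m(x) \cdot f_i(x)(y)
\qquad\text{(\Cref{lem:sumcont}, applied pointwise)} \\
&= \sup_{i\in I} f_i^\dagger(m)(y).
\end{align*}
The third equality uses the Scott continuity of semiring multiplication in its right argument (\Cref{def:continuity}). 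The fourth equality is the crucial one: for each $x\in\supp(m)$, the function $g_x\colon I \to U$ defined by $g_x(i) = m(x) \cdot f_i(x)(y)$ is Scott continuous (as a composition of continuous operations), so swapping the supremum over the directed family $I$ past the sum indexed by $\supp(m)$ is exactly an instance of \Cref{lem:sumcont}, adapted from summing functions valued in $\mathcal W(Y)$ to summing functions valued in $U$.

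The main obstacle is precisely this last step: \Cref{lem:sumcont} as stated considers families of functions $X \to \mathcal W(Y)$, whereas here the $g_x$ are valued in $U$ itself. Reassuringly, the proof of \Cref{lem:sumcont} only invokes the Scott continuity of $+$ in the semiring together with monotonicity and continuity of its component functions, none of which relies on the codomain being a weighting space; so the argument transports verbatim, and the swap is justified. A small amount of extra bookkeeping is needed to check that every intermediate sum is well defined (using \Cref{lem:summult} as in the proof of \Cref{lem:binddef}) and that the pointwise supremum $\sup_{i\in I} f_i^\dagger$ taken at the end indeed yields a function into $\mathcal W(Y)$, but these are essentially the same well-definedness arguments already established earlier.
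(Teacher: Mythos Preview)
Your proposal is correct and follows essentially the same approach as the paper: show that each summand $f \mapsto m(x)\cdot f(x)(y)$ is Scott continuous via continuity of $\cdot$, then invoke \Cref{lem:sumcont} to swap the supremum over the directed family past the sum over $\supp(m)$. The paper presents this in two steps (first isolating continuity of the auxiliary $g$, then applying \Cref{lem:sumcont} at the level of $\mathcal W(Y)$), whereas you fold everything into a single chain of equalities at the pointwise level in $y$; you also make explicit the codomain mismatch in \Cref{lem:sumcont} that the paper silently glosses over, but the underlying argument is the same.
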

\begin{proof}
Let $D\subseteq (X\to\mathcal W(Y))$ be a directed set.
First, we show that for any $x\in X$, the function $g(f) = m(x)\cdot f(x)(y)$ is Scott continuous:
\begin{align*}
\sup_{f\in D}g(f)
&= \sup_{f\in D} m(x)\cdot f(x)(y) \\
\intertext{By Scott continuity of the $\cdot$ operator:}
&=  m(x)\cdot \sup_{f\in D} f(x)(y) \\
\intertext{Since we are using the pointwise ordering:}
&=  m(x)\cdot (\sup D)(x)(y) = g(\sup D)
\end{align*}
Now, we show that $(-)^\dagger$ is Scott continuous. For any $m$, we have:
\begin{align*}
(\sup_{f\in D} f^\dagger)(m)
&= \sup_{f\in D} f^\dagger(m) \\
&= \sup_{f\in D} \smashoperator[r]{\sum_{x\in\supp(m)}} m(x)\cdot f(x) \\
\intertext{By \Cref{lem:sumcont}, using the property we just proved.}
&= \smashoperator{\sum_{x\in\supp(m)}} m(x)\cdot (\sup D)(x)
= (\sup D)^\dagger(m)
\end{align*}
\end{proof}
\begin{lemma}\label{lem:phicont}
Let $\Phi_{\langle C, e, e'\rangle}(f)(\sigma) = \de{e}(\sigma)\cdot f^\dagger(\de{C}(\sigma)) + \de{e'}(\sigma)\cdot\unit(\sigma)$ and suppose that it is a total function, then $\Phi_{\langle C, e, e'\rangle}$ is Scott continuous with respect to the pointwise order: $f_1 \sqsubseteq^\bullet f_2$ iff $f_1(\sigma) \sqsubseteq f_2(\sigma)$ for all $\sigma\in \Sigma$.
\end{lemma}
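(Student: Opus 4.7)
The plan is to unfold the definition of $\Phi_{\langle C, e, e'\rangle}$ and propagate the supremum through its three component operations---the Kleisli extension $(-)^\dagger$, scalar multiplication, and semiring addition---by invoking the continuity properties already established in this section. Because the ordering $\sqsubseteq^\bullet$ on $\Sigma\to\mathcal W(\Sigma)$ and the ordering on $\mathcal W(\Sigma)$ are both pointwise, Scott continuity of $\Phi_{\langle C, e, e'\rangle}$ reduces to the pointwise semiring identity
\[
\Phi_{\langle C, e, e'\rangle}(\sup D)(\sigma)(\tau) \;=\; \sup_{f\in D}\,\Phi_{\langle C, e, e'\rangle}(f)(\sigma)(\tau)
\]
for each $\sigma,\tau\in\Sigma$ and each directed family $D\subseteq (\Sigma\to\mathcal W(\Sigma))$. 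Reducing to this pointwise statement is important because \Cref{def:continuity} gives Scott continuity of $+$ and $\cdot$ in $U$, not directly on weighting functions.

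Next I would chain three rewrites at a fixed $(\sigma,\tau)$. First, by \Cref{lem:bindcont} (Scott continuity of $(-)^\dagger$ in the pointwise order on functions $\mathcal W(\Sigma)\to\mathcal W(\Sigma)$), we have $(\sup D)^\dagger(\de{C}(\sigma))(\tau) = \sup_{f\in D} f^\dagger(\de{C}(\sigma))(\tau)$. Second, Scott continuity of $\cdot$ in its right argument (\Cref{def:continuity}) lets me pull out the scalar $\de{e}(\sigma)$, giving $\de{e}(\sigma)\cdot \sup_{f\in D} f^\dagger(\de{C}(\sigma))(\tau) = \sup_{f\in D}\bigl(\de{e}(\sigma)\cdot f^\dagger(\de{C}(\sigma))(\tau)\bigr)$. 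Third, Scott continuity of $+$ absorbs the constant summand $\de{e'}(\sigma)\cdot\eta(\sigma)(\tau)$ into the supremum. Composing these three equalities yields exactly the desired identity.

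Each invocation of continuity needs the set it is applied to be directed, so I would check at each step that directedness is preserved. This follows because $(-)^\dagger$ and left multiplication by $\de{e}(\sigma)$ are monotone (the former by \Cref{lem:bindcont}, the latter because $u\le v$ implies $\de{e}(\sigma)\cdot u \le \de{e}(\sigma)\cdot v$ via the natural order), so the image of the directed set $D$ under each composite remains directed.

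The main potential obstacle is partiality: one must ensure that all intermediate sums exist before invoking the continuity equations. The lemma assumes $\Phi_{\langle C, e, e'\rangle}$ is total, and \Cref{lem:binddef} supplies totality of $f^\dagger(\de{C}(\sigma))$ for every $f$ in the function space, so every intermediate expression we manipulate is well-defined. Aside from this bookkeeping, the argument is a straightforward chase through the three continuity lemmas already on the table.
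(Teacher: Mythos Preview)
Your proposal is correct and follows essentially the same approach as the paper: push the supremum through $+$, scalar multiplication, and $(-)^\dagger$ using the semiring's Scott continuity and \Cref{lem:bindcont}. The only cosmetic difference is that the paper carries out the calculation at the $\mathcal W(\Sigma)$ level (pointwise in $\sigma$ only), whereas you drop one level further to $U$ by also fixing $\tau$; your version is slightly more explicit about directedness and partiality bookkeeping, but the argument is the same.
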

\begin{proof}
For all directed sets $D\subseteq (\Sigma\to\mathcal W(\Sigma))$ and $\sigma\in\Sigma$, we have:
\begin{align*}
&\sup_{f\in D} \Phi_{\langle C, e, e'\rangle}(f)(\sigma)\\
&= \sup_{f\in D} \left(\de{e}(\sigma)\cdot f^\dagger(\de{C}(\sigma)) + \de{e'}(\sigma)\cdot \unit(\sigma)\right) \\
\intertext{By the continuity of $+$ and $\cdot$, we can move the supremum up to the $(-)^\dagger$, which is the only term that depends on $f$.}
&= \de{e}(\sigma)\cdot (\sup_{f\in D} f^\dagger(\de{C}(\sigma))) + \de{e'}(\sigma)\cdot \unit(\sigma) \\
\intertext{By \Cref{lem:bindcont}.}
&= \de{e}(\sigma)\cdot (\sup D)^\dagger(\de{C}(\sigma)) + \de{e'}(\sigma)\cdot \unit(\sigma) \\
&= \Phi_{\langle C, e, e'\rangle}(\sup D)(\sigma)
\end{align*}
Since this is true for all $\sigma\in\Sigma$, 
$\Phi_{\langle C, e, e'\rangle}$ is Scott continuous.
\end{proof}
Now, given \Cref{lem:phicont} and the Kleene fixed point theorem, we know that the least fixed point is defined and is equal to:
\[
\mathsf{lfp}\left(\Phi_{\langle C, e, e'\rangle}\right) = \sup_{n\in\mathbb N}\Phi^n_{\langle C, e, e'\rangle}(\lambda \tau.\zero)
\]
Therefore the semantics of iteration loops is well-defined, assuming that $\Phi_{\langle C, e, e'\rangle}$ is total. In the next section, we will see simple syntactic conditions to ensure this.

\subsection{Syntactic Sugar}

If a partial semiring is used to interpret the language semantics, then unrestricted use of the $C_1+C_2$ and $\iter C{e}{e'}$ constructs may be undefined. In this section, we give some sufficient conditions to ensure that program semantics is well-defined. This is based on the notion of compatible expressions, introduced below.
\begin{definition}[Compatibility]
The expressions $e_1$ and $e_2$ are compatible in semiring $\mathcal A = \langle U,+,\cdot,\zero,\one\rangle$ if $\de{e_1}(\sigma) + \de{e_2}(\sigma)$ is defined for any $\sigma\in\Sigma$.
\end{definition}
The nondeterministic (\Cref{ex:powersetsemi,ex:multisetsemi}) and tropical (\Cref{ex:tropicalsemi}) instances use total semirings, so any program has well-defined semantics. In other 
interpretations, we must ensure that programs are well-defined by ensuring that all uses of choice and iteration use compatible expressions. We begin by showing that any two collections can be combined if they are scaled by compatible expressions.
\begin{lemma}\label{lem:compatscale}
If $e_1$ and $e_2$ are compatible, then $\de{e_1}(\sigma)\cdot m_1 + \de{e_2}(\sigma)\cdot m_2$ is defined for any $m_1$ and $m_2$.
\end{lemma}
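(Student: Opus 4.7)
The plan is to reduce the claim to a single observation built around the top element $\top$, which can then be instantiated both pointwise and on the total masses. Let $u_i \triangleq \de{e_i}(\sigma)$, so that by compatibility $u_1 + u_2$ is defined. Since multiplication in the semiring is total, $(u_1 + u_2)\cdot \top$ is defined, and by distributivity equals $u_1\cdot\top + u_2\cdot\top$. This well-defined term will serve as a universal ``upper bound'' whose definedness we exploit to certify smaller sums.

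The core step I would prove is: for any $v_1, v_2 \in U$ with $v_i \le \top$, the sum $u_1\cdot v_1 + u_2\cdot v_2$ is defined. Picking witnesses $w_i$ with $v_i + w_i = \top$, distributivity gives
\[
u_1\cdot\top + u_2\cdot\top \;=\; (u_1 v_1 + u_1 w_1) + (u_2 v_2 + u_2 w_2) \;=\; (u_1 v_1 + u_2 v_2) + (u_1 w_1 + u_2 w_2),
\]
using associativity and commutativity of the partial monoid on the second equality. Since the left-hand side is defined, so is its subsum $u_1 v_1 + u_2 v_2$, by the same subexpression-of-a-defined-sum principle already invoked in the proof of \Cref{lem:summult}.

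I would then apply the core step in two ways to complete the lemma. First, taking $v_i \triangleq m_i(\tau)$ at each $\tau\in\Sigma$ (noting $m_i(\tau) \le \top$ because $\top$ is the top element) shows that $(u_1\cdot m_1 + u_2\cdot m_2)(\tau) = u_1\cdot m_1(\tau) + u_2\cdot m_2(\tau)$ is defined at every point. Second, to establish that the total mass exists, I would note that $\supp(u_1\cdot m_1 + u_2\cdot m_2) \subseteq \supp(m_1)\cup\supp(m_2)$ is countable, and then rearrange $\sum_\tau (u_1\cdot m_1(\tau) + u_2\cdot m_2(\tau))$ into $u_1\cdot|m_1| + u_2\cdot|m_2|$ using the infinite-sum properties from \Cref{def:continuity} (Scott continuity and distributivity of scalars through sums). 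Definedness of this quantity is then a second instantiation of the core step, this time with $v_i \triangleq |m_i|$, which are defined by $m_i \in \mathcal W(\Sigma)$ and bounded above by $\top$.

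The main obstacle is the implicit ``subexpression principle'' that $A + B$ being defined forces each of $A$ and $B$ to be individually defined. This is tacit in \Cref{lem:summult} and holds routinely in each concrete semiring of \Cref{ex:detsemi,ex:powersetsemi,ex:multisetsemi,ex:probsemi,ex:tropicalsemi,ex:langsemi}; I would either treat it as a standing property of the partial commutative monoid or discharge it by a short case check per instance. Every other step is routine distributivity or bookkeeping about supports.
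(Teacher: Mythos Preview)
Your proposal is correct and follows essentially the same approach as the paper: both hinge on the observation that if $u_1+u_2$ is defined then $u_1\cdot v_1 + u_2\cdot v_2$ is defined for any $v_1,v_2$, and both exploit this to show that $|u_1\cdot m_1 + u_2\cdot m_2| = u_1\cdot|m_1| + u_2\cdot|m_2|$ is defined. The only difference is packaging: your ``core step'' is precisely the two-element instance of \Cref{lem:summult}, which the paper simply cites rather than reproving, and the paper folds the pointwise check into the computation of the mass (each pointwise value appears as a summand in the defined total) rather than treating it separately.
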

\begin{proof}
Since $e_1$ and $e_2$ are compatible, then $\de{e_1}(\sigma) +\de{e_2}(\sigma)$ is defined. By \Cref{lem:summult}, that also means that $\de{e_1}(\sigma)\cdot |m_1| + \de{e_2}(\sigma)\cdot |m_2|$ is defined too. Now, we have:
\begin{align*}
&\de{e_1}(\sigma)\cdot |m_1| + \de{e_2}(\sigma)\cdot |m_2| \\
&= \de{e_1}(\sigma)\cdot \smashoperator{\sum_{\tau\in\supp(m_1)}} m_1(\tau) + \de{e_2}(\sigma)\cdot \smashoperator{\sum_{\tau\in\supp(m_2)}} m_2(\tau) \\
&= \smashoperator{\sum_{\tau\in\supp(m_1)}} \de{e_1}(\sigma)\cdot m_1(\tau) + \smashoperator{\sum_{\tau\in\supp(m_2)}} \de{e_2}(\sigma)\cdot m_2(\tau) \\
&= \smashoperator{\sum_{\tau\in\supp(m_1)\cup\supp(m_2)}} \de{e_1}(\sigma)\cdot m_1(\tau) + \de{e_2}(\sigma)\cdot m_2(\tau) \\
&= | \de{e_1}(\sigma)\cdot m_1 +  \de{e_2}(\sigma)\cdot m_2 |
\end{align*}
Therefore $\de{e_1}(\sigma)\cdot m_1 + \de{e_2}(\sigma)\cdot m_2$ must be well-defined.
\end{proof}
Now, we show how this result relates to program semantics. We begin with branching, by showing that guarding the two branches using compatible expressions yields a program that is well-defined.
\begin{lemma}
If $e_1$ and $e_2$ are compatible and $\de{C_1}$ and $\de{C_2}$ are total functions, then:
\[
\de{(\assume{e_1}\fatsemi C_1) + (\assume{e_2}\fatsemi C_2)}
\]
is a total function.
\end{lemma}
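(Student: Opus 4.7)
The plan is to unfold the denotational semantics of the composite program at a fixed state $\sigma$ and show that the resulting expression is always defined. First I would rewrite $\de{\assume{e_i}\fatsemi C_i}(\sigma)$ as $(\de{C_i})^\dagger\bigl(\de{e_i}(\sigma)\cdot\eta(\sigma)\bigr)$. The inner argument $\de{e_i}(\sigma)\cdot\eta(\sigma)$ has support contained in $\{\sigma\}$ and mass equal to $\de{e_i}(\sigma)$, so it is a well-defined element of $\mathcal W(\Sigma)$. Since $\de{C_i}$ is total by hypothesis, \Cref{lem:binddef} ensures that $(\de{C_i})^\dagger$ applied to this weighting function is itself a well-defined weighting function.

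Next, I would compute this Kleisli extension explicitly. Expanding the definition of $(-)^\dagger$ on a singleton-supported argument yields
\[
\de{\assume{e_i}\fatsemi C_i}(\sigma) \;=\; \de{e_i}(\sigma)\cdot \de{C_i}(\sigma),
\]
where the equality holds pointwise in the target state (and still holds if $\de{e_i}(\sigma)=\zero$, since then both sides are $\zero$ by the annihilation axiom). Consequently,
\[
\de{(\assume{e_1}\fatsemi C_1) + (\assume{e_2}\fatsemi C_2)}(\sigma)
\;=\; \de{e_1}(\sigma)\cdot \de{C_1}(\sigma) \,+\, \de{e_2}(\sigma)\cdot \de{C_2}(\sigma).
\]
This is exactly the situation handled by \Cref{lem:compatscale}: compatibility of $e_1,e_2$ guarantees that $\de{e_1}(\sigma)+\de{e_2}(\sigma)$ is defined, and the lemma upgrades this to the fact that the scaled sum of any two weighting functions $m_1,m_2$ (here $\de{C_1}(\sigma)$ and $\de{C_2}(\sigma)$, which exist by totality of the subprograms) is itself a well-defined weighting function.

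The only potentially delicate step is the explicit unfolding $(\de{C_i})^\dagger(\de{e_i}(\sigma)\cdot\eta(\sigma)) = \de{e_i}(\sigma)\cdot\de{C_i}(\sigma)$, which requires a small case split on whether $\de{e_i}(\sigma) = \zero$ (so that the support of $\de{e_i}(\sigma)\cdot\eta(\sigma)$ may be empty rather than $\{\sigma\}$). After that observation, the argument is essentially a direct chaining of \Cref{lem:binddef} (to justify that each branch is a total function $\Sigma\to\mathcal W(\Sigma)$) and \Cref{lem:compatscale} (to justify that the outer $+$ is total), so no genuinely new machinery is needed.
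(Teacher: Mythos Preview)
Your proposal is correct and follows essentially the same approach as the paper: fix $\sigma$, unfold the semantics to $\de{e_1}(\sigma)\cdot\de{C_1}(\sigma) + \de{e_2}(\sigma)\cdot\de{C_2}(\sigma)$, and then invoke \Cref{lem:compatscale}. The paper gets there slightly more slickly by pulling the scalar through the bind (using $f^\dagger(u\cdot m)=u\cdot f^\dagger(m)$) and then $f^\dagger\circ\eta=f$, which sidesteps your case split on $\de{e_i}(\sigma)=\zero$ and the explicit appeal to \Cref{lem:binddef}, but the argument is the same in substance.
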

\begin{proof}
Take any $\sigma\in\Sigma$, then we have:
\begin{align*}
\de{(\assume{e_1}\fatsemi C_1) + (\assume{e_2}\fatsemi C_2)}(\sigma)
&= \dem{C_1}{\de{\assume{e_1}}(\sigma)} + \dem{C_2}{\de{\assume{e_2}}(\sigma)} \\
&= \dem{C_1}{\de{e_1}(\sigma)\cdot\unit(\sigma)} + \dem{C_2}{\de{e_2}(\sigma)\cdot\unit(\sigma)} \\
&= \de{e_1}(\sigma)\cdot\dem{C_1}{\unit(\sigma)} + \de{e_2}(\sigma)\cdot\dem{C_2}{\unit(\sigma)} \\
&= \de{e_1}(\sigma) \cdot\de{C_1}(\sigma) + \de{e_2}(\sigma)\cdot\de{C_2}(\sigma)
\end{align*}
By \Cref{lem:compatscale}, we know that this sum is defined, therefore the semantics is valid.
\end{proof}
For iteration, we can similarly use compatibility to ensure well-definedness.
\begin{lemma}
If $e$ and $e'$ are compatible and $\de{C}$ is a total function, then $\de{\iter C{e}{e'}}$ is a total function.
\end{lemma}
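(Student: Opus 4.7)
The plan is to combine the preceding infrastructure---\Cref{lem:binddef}, \Cref{lem:compatscale}, and \Cref{lem:phicont}---to show that $\Phi \triangleq \Phi_{\langle C, e, e'\rangle}$ is a total, Scott continuous operator on $\Sigma\to\mathcal W(\Sigma)$, and then invoke Kleene's fixed point theorem to conclude that $\mathsf{lfp}(\Phi)$ is itself a total function.

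First I would establish the key lemma that $\Phi$ preserves totality: if $f\colon\Sigma\to\mathcal W(\Sigma)$ is total, then so is $\Phi(f)$. For any $\sigma\in\Sigma$, $\de{C}(\sigma)\in\mathcal W(\Sigma)$ (by totality of $\de{C}$), so $f^\dagger(\de{C}(\sigma))$ is a well-defined weighting function by \Cref{lem:binddef}. Since $e$ and $e'$ are compatible, \Cref{lem:compatscale} applied with $m_1 = f^\dagger(\de{C}(\sigma))$ and $m_2 = \eta(\sigma)$ shows that $\de{e}(\sigma)\cdot f^\dagger(\de{C}(\sigma)) + \de{e'}(\sigma)\cdot \eta(\sigma)$ is defined; this is exactly $\Phi(f)(\sigma)$. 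By induction on $n$, the constant-$\zero$ function is total (trivially) and each iterate $\Phi^n(\lambda\tau.\zero)$ is total.

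Next, since all iterates are total, \Cref{lem:phicont} gives Scott continuity of $\Phi$ on them, and they form an $\omega$-chain $\lambda\tau.\zero \sqsubseteq^\bullet \Phi(\lambda\tau.\zero) \sqsubseteq^\bullet \Phi^2(\lambda\tau.\zero) \sqsubseteq^\bullet \cdots$ in the pointwise order. By Kleene's theorem, $\mathsf{lfp}(\Phi) = \sup_n \Phi^n(\lambda\tau.\zero)$, computed pointwise.

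The main obstacle---and the step that needs the most care---is verifying that this pointwise supremum of total weighting functions is itself a total weighting function, i.e.\ that $\mathsf{lfp}(\Phi)(\sigma)$ has countable support and well-defined mass for each $\sigma$. Writing $f_n \triangleq \Phi^n(\lambda\tau.\zero)(\sigma)$, the value $f(\tau) \triangleq \sup_n f_n(\tau)$ exists in $U$ because the chain is bounded by the top element $\top$ of the semiring. The support satisfies $\supp(f) = \bigcup_n \supp(f_n)$ (using that $\sup_n f_n(\tau) = \zero$ iff every $f_n(\tau) = \zero$, since the chain is increasing), and a countable union of countable sets is countable. For the mass, Scott continuity of $+$ lets us exchange sum and supremum: $|f| = \sum_\tau \sup_n f_n(\tau) = \sup_n |f_n|$, which is a supremum of elements bounded by $\top$ and hence defined. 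Thus $\mathsf{lfp}(\Phi)(\sigma) \in \mathcal W(\Sigma)$ for every $\sigma$, completing the proof that $\de{\iter Ce{e'}}$ is total.
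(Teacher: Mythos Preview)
Your approach is essentially the same as the paper's: show $\Phi_{\langle C,e,e'\rangle}$ is total using compatibility and \Cref{lem:compatscale}, then invoke \Cref{lem:phicont} and the Kleene fixed point theorem. The paper's proof is considerably terser---it simply cites \Cref{lem:compatscale} for totality of $\Phi$ and then defers to the remark following \Cref{lem:phicont} (that the least fixed point exists and equals $\sup_n \Phi^n(\lambda\tau.\zero)$) to conclude. Your additional paragraph verifying that the pointwise supremum of the iterates lands in $\mathcal W(\Sigma)$ (countable support via countable union, well-defined mass via exchanging $\sup$ and $\sum$ using Scott continuity) makes explicit a step the paper leaves implicit, but the overall structure and key lemmas are the same.
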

\begin{proof}
Let $\Phi_{\langle C, e, e'\rangle}(f)(\sigma) = \de{e}(\sigma)\cdot f^\dagger(\de{C}(\sigma)) + \de{e'}(\sigma)\cdot\unit(\sigma)$. Since $e$ and $e'$ are compatible, it follows from \Cref{lem:compatscale} that $\Phi_{\langle C, e, e'\rangle}$ is a total function. By \Cref{lem:phicont}, we therefore also know that $\de{\iter C{e}{e'}}$ is total.
\end{proof}
To conclude, we will provide a few examples of compatible expressions. For any test $b$, it is easy to see that $b$ and $\lnot b$ are compatible. This is because at any state $\sigma$, one of $\de{b}(\sigma)$ or $\de{\lnot b}(\sigma)$ must be $\zero$, and given the semiring laws, $\zero+u$ is defined for any $u\in U$. Given this, our encodings of if statements and while loops from \Cref{sec:sugar} are well-defined in all interpretations.

In the probabilistic interpretation (\Cref{ex:probsemi}), the weights $p$ and $1-p$ are compatible for any $p\in[0,1]$. That means that our encoding of probabilistic choice $C_1 +_p C_2$ and probabilistic iteration $C^{\langle p\rangle}$ are both well-defined too.

\section{Soundness and Completeness of Outcome Logic}
\label{app:soundcomplete}

We provide a formal definition of assertion entailment $\varphi\vDash e=u$, which informally means that $\varphi$ has enough information to determine that the expression $e$ evaluates to the value $u$.
\begin{definition}[Assertion Entailment]
Given an outcome assertion $\varphi$, an expression $e$, and a weight $u\in U$, we define the following:
\[
\varphi\vDash e=u
\qquad\text{iff}\qquad
\forall m\in\varphi, \sigma\in\supp(m).\quad
\de{e}(\sigma) = u
\]
\end{definition}
Note that this can generally be considered equivalent to $\varphi \Rightarrow \always(e = u)$, however we did not assume that equality is in the set of primitive tests.
Occasionally we will also write $\varphi\vDash b$ for some test $b$, which is shorthand for $\varphi\vDash b=\one$. It is relatively easy to see that the following statements hold given this definition:
\[
\arraycolsep=.25em
\begin{array}{rllll}
\top & \vDash & e=u \qquad\;& \text{iff}\qquad\; & \forall\sigma\in\Sigma.\quad\de{e}(\sigma) = u \\
\bot & \vDash & e=u & \multicolumn{2}{l}{\text{always}} \\
\varphi\lor\psi & \vDash & e=u & \text{iff} & \varphi\vDash e=u \quad\text{and}\quad \psi\vDash e=u \\
\varphi\land\psi & \vDash & e=u & \text{iff} & \varphi\vDash e=u \quad\text{or}\quad \psi\vDash e=u \\
\varphi\oplus\psi & \vDash & e=u & \text{iff} & \varphi\vDash e=u \quad\text{and}\quad \psi\vDash e=u \\
\varphi\odot{v} & \vDash & e=u & \text{iff} & v=\zero \quad\text{or}\quad \varphi\vDash e=u \\
v\odot\varphi & \vDash & e=u & \text{iff} & v=\zero \quad\text{or}\quad \varphi\vDash e=u \\
\ind m &\vDash& e=u & \text{iff} & \forall\sigma\in\supp(m).\quad \de{e}(\sigma) = u \\
\wg Pv & \vDash & e=u & \text{iff} & v=\zero\quad\text{or}\quad  \forall\sigma\in P.\quad \de{e}(\sigma) = u \\
\always P & \vDash & e=u & \text{iff} & P \vDash e=u
\end{array}
\]
We now present the soundness proof, following the sketch from \Cref{sec:soundness}. The first results pertain to the semantics of iteration. We start by recalling the characteristic function:
\[
\Phi_{\langle C, e, e'\rangle}(f)(\sigma) = \de{e}(\sigma)\cdot f^\dagger(\de{C}(\sigma)) + \de{e'}(\sigma)\cdot\unit(\sigma)
\]
Note that, as defined in \Cref{fig:denotation}, $\de{\iter C{e}{e'}}(\sigma) = \mathsf{lfp}\left(\Phi_{\langle C, e, e'\rangle}\right)(\sigma)$. The first lemma relates $\Phi_{\langle C, e, e\rangle}$ to a sequence of unrolled commands.
\begin{lemma}\label{lem:ifcharacteristic}
For all $n\in\mathbb N$:
\[
\Phi^{n+1}_{\langle C,e,e'\rangle}(\lambda x.\zero) = \sum_{k=0}^n \de{(\assume{e}\fatsemi C)^k\fatsemi \assume{e'}}
\]
\end{lemma}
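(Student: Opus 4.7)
The plan is to proceed by induction on $n$. For the base case $n=0$, I would simply unroll the definition of $\Phi_{\langle C,e,e'\rangle}$ once on the zero function: the $\de{e}(\sigma)\cdot (\lambda x.\zero)^\dagger(\de{C}(\sigma))$ summand collapses to $\zero$, leaving $\de{e'}(\sigma)\cdot\unit(\sigma)$, which is exactly $\de{\assume{e'}}(\sigma) = \de{(\assume{e}\fatsemi C)^0 \fatsemi \assume{e'}}(\sigma)$. So the base case reduces to verifying the two auxiliary identities $(\lambda x.\zero)^\dagger(m) = \zero$ (from the annihilation law and the definition of $(-)^\dagger$) and $\de{\assume{e'}}(\sigma) = \de{e'}(\sigma)\cdot\unit(\sigma)$ (direct from Figure~\ref{fig:denotation}).

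For the inductive step, assuming the claim for $n$, I would apply the definition of $\Phi$ once more to obtain
\[
\Phi^{n+2}_{\langle C,e,e'\rangle}(\lambda x.\zero)(\sigma) = \de{e}(\sigma)\cdot \bigl(\Phi^{n+1}_{\langle C,e,e'\rangle}(\lambda x.\zero)\bigr)^\dagger(\de{C}(\sigma)) + \de{e'}(\sigma)\cdot\unit(\sigma),
\]
substitute the inductive hypothesis inside the $(-)^\dagger$ argument, and then push the dagger through the (finite) sum using the linearity statement in item~2 of the infinite-sum definition (the same property used in \Cref{lem:bindcont}). This turns the right-hand side into $\de{e}(\sigma)\cdot \sum_{k=0}^n \de{(\assume{e}\fatsemi C)^k \fatsemi \assume{e'}}^\dagger(\de{C}(\sigma)) + \de{e'}(\sigma)\cdot\unit(\sigma)$.

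The key small lemma I would invoke is that for any command $C'$, $\de{\assume{e}\fatsemi C'}(\sigma) = \de{e}(\sigma)\cdot \de{C'}(\sigma)$, obtained by unfolding sequential composition, expanding the definition of $\assume{e}$, and using the unit law $f^\dagger\circ\eta = f$ from \Cref{def:kleisli} together with the scaling behavior of $(-)^\dagger$. Using this with $C' = C\fatsemi (\assume{e}\fatsemi C)^k\fatsemi \assume{e'}$, the $k$-th summand becomes $\de{(\assume{e}\fatsemi C)^{k+1}\fatsemi \assume{e'}}(\sigma)$. Finally, reindexing the sum from $k{=}1$ to $n{+}1$ and absorbing the $\de{e'}(\sigma)\cdot\unit(\sigma)$ term as the $k=0$ case yields $\sum_{k=0}^{n+1}\de{(\assume{e}\fatsemi C)^k\fatsemi \assume{e'}}(\sigma)$, closing the induction.

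I do not expect any genuinely hard step: the argument is essentially algebraic manipulation. The main care point is to ensure each manipulation is justified by properties proved earlier — in particular the linearity of $(-)^\dagger$ over finite sums (which follows from the complete-semiring properties in \Cref{def:continuity}, since all sums here are finite), the compatibility needed to freely use $+$ (guaranteed because we have assumed well-formed programs, per \Cref{sec:sugar}), and the small identity about $\de{\assume{e}\fatsemi C'}$ that makes the $\de{e}(\sigma)$ factor fuse with the $k$-fold iterate to produce the $(k{+}1)$-fold one.
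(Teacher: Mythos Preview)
Your proposal is correct and follows essentially the same approach as the paper: induction on $n$, with the base case collapsing $(\lambda x.\zero)^\dagger$ to $\zero$ and the inductive step substituting the hypothesis, distributing $(-)^\dagger$ over the finite sum, and reindexing after absorbing the $\de{e}(\sigma)$ factor into an extra $\assume{e}\fatsemi C$ prefix. You are simply a bit more explicit than the paper about the auxiliary identities (linearity of $(-)^\dagger$ and $\de{\assume{e}\fatsemi C'}(\sigma) = \de{e}(\sigma)\cdot\de{C'}(\sigma)$) that the paper applies silently in a single step.
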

\begin{proof}
By mathematical induction on $n$.
\begin{itemize}
\item $n=0$. Unfolding the definition of $\Phi_{\langle C,e_1,e_2\rangle}$, for all $\sigma\in\Sigma$ we get:
\begin{align*}
\Phi_{\langle C,e,e'\rangle}(\lambda x.\zero)(\sigma)
&= \de{e}(\sigma)\cdot (\lambda x.\zero)^\dagger(\de{C}(\sigma)) + \de{e'}(\sigma)\cdot\unit(\sigma) \\
&= \zero + \de{e'}(\sigma)\cdot \unit(\sigma) \\
&= \de{\assume{e'}}(\sigma) \\
&= \de{(\assume{e}\fatsemi C)^0\fatsemi \assume{e'}}(\sigma)
\end{align*}
\item Inductive step, suppose the claim holds for $n$. Now, for all $\sigma\in\Sigma$:
\begin{align*}
\Phi^{n+2}_{\langle C,e,e'\rangle}(\lambda x.\zero)(\sigma)
&= \de{e}(\sigma)\cdot \Phi^{n+1}_{(\langle b,C\rangle)}(\lambda x.\zero)^\dagger(\de{C}(\sigma)) + \de{e'}(\sigma)\cdot\unit(\sigma) \\
\intertext{By the induction hypothesis}
&= \de{e}(\sigma)\cdot \left(\sum_{k=0}^n\de{(\assume{e}\fatsemi C)^k\fatsemi\assume{e'}}\right)^\dagger\!\!\!(\de{C}(\sigma))
   + \de{e'}(\sigma)\cdot\unit(\sigma)\\
&= \sum_{k=1}^{n+1}\de{(\assume{e}\fatsemi C)^k\fatsemi\assume{e'}}(\sigma) + \de{\assume{e'}}(\sigma)\\
&= \sum_{k=0}^{n+1}\de{(\assume{e}\fatsemi C)^k\fatsemi\assume{e'}}(\sigma)
\end{align*}
\end{itemize}
\end{proof}

\lemwhilesem*
\begin{proof}
First, by the Kleene fixed point theorem and the program semantics (\Cref{fig:denotation}), we get:
\begin{align*}
\de{\iter C{e}{e'}}(\sigma)
&= \sup_{n\in\mathbb N} \Phi^n_{\langle C, e, e'\rangle} (\lambda x.\zero)(\sigma) \\
\intertext{Now, since $\Phi^0_{\langle C,e,e'\rangle} (\lambda x.\zero)(\sigma) =\zero$ and $\zero$ is the bottom of the order $\sqsubseteq$, we can rewrite the supremum as follows.}
&= \sup_{n\in\mathbb N} \Phi^{n+1}_{\langle C,e,e'\rangle} (\lambda x.\zero)(\sigma) \\
\intertext{By \Cref{lem:ifcharacteristic}:}
&= \sup_{n\in\mathbb N}\sum_{k=0}^n\de{(\assume{e}\fatsemi C)^k\fatsemi \assume{e'}}(\sigma)\\
\intertext{By the definition of infinite sums:}
&= \sum_{n\in\mathbb N}\de{(\assume{e}\fatsemi C)^n\fatsemi \assume{e'}}(\sigma)
\end{align*}
\end{proof}

\thmsoundness*
\begin{proof}
The triple $\triple\varphi{C}\psi$ is proven using inference rules from \Cref{fig:rules}, or by applying an axiom in $\Omega$. If the last step is using an axiom, then the proof is trivial since we assumed that all the axioms in $\Omega$ are semantically valid. If not, then the proof is by induction on the derivation $\Omega\vdash\triple\varphi{C}\psi$.
\begin{itemize}
\item \ruleref{Skip}. We need to show that $\vDash\triple\varphi\skp\varphi$.
Suppose that $m\vDash\varphi$. Since $\dem\skp m = m$, then clearly $\dem\skp m\vDash\varphi$. 

\item\ruleref{Seq}. Given that $\Omega\vdash\triple\varphi{C_1}\vartheta$ and $\Omega\vdash\triple\vartheta{C_2}\psi$, we need to show that $\vDash\triple\varphi{C_1\fatsemi C_2}\psi$.
Note that since $\Omega\vdash\triple\varphi{C_1}\vartheta$ and $\Omega\vdash\triple\vartheta{C_2}\psi$, those triples must be derived either using inference rules (in which case the induction hypothesis applies), or by applying an axiom in $\Omega$. In either case, we can conclude that $\vDash\triple\varphi{C_1}\vartheta$ and $\vDash\triple\vartheta{C_2}\psi$.
Suppose that $m\vDash\varphi$. Since $\vDash\triple\varphi{C_1}\vartheta$, we get that $\dem{C_1}m\vDash\vartheta$ and using $\vDash\triple\vartheta{C_2}\psi$, we get that $\dem{C_2}{\dem{C_1}m}\vDash\psi$. Since $\dem{C_2}{\dem{C_1}m} = (\de{C_2}^\dagger \circ \de{C_1})^\dagger(m) = \dem{C_1\fatsemi C_2}m$, we are done.

\item\ruleref{Plus}. Given $\Omega\vdash\triple\varphi{C_1}{\psi_1}$ and $\Omega\vdash\triple\varphi{C_2}{\psi_2}$, we need to show $\vDash\triple\varphi{C_1+ C_2}{\psi_1\oplus\psi_2}$. Suppose that $m\vDash\varphi$, so by the induction hypotheses, $\dem{C_1}m\vDash\psi_1$ and $\dem{C_2}m\vDash\psi_2$. Recall from the remark at the end of \Cref{sec:sugar} that we are assuming that programs are well-formed, and therefore $\dem{C_1+C_2}m$ is defined and it is equal to $\dem{C_1}m + \dem{C_2}m$. Therefore by the semantics of $\oplus$, $\dem{C_1+C_2}m \vDash \psi_1\oplus\psi_2$.

\item\ruleref{Assume}. Given $\varphi\vDash e=u$, we must show $\vDash\triple\varphi{\assume e}{\varphi\odot u}$. Suppose $m\vDash\varphi$. Since $\varphi\vDash e=u$, then $\de{e}(\sigma) =u$ for all $\sigma\in\supp(m)$. This means that:
\begin{align*}
\dem{\assume e}m
&= \smashoperator{\sum_{\sigma\in\supp(m)}} m(\sigma) \cdot \de{\assume e}(\sigma) \\
&= \smashoperator{\sum_{\sigma\in\supp(m)}} m(\sigma) \cdot \de{e}(\sigma) \cdot \eta(\sigma) \\
&= \smashoperator{\sum_{\sigma\in\supp(m)}} m(\sigma) \cdot u \cdot \eta(\sigma) \\
&= (\smashoperator{\sum_{\sigma\in\supp(m)}} m(\sigma) \cdot \eta(\sigma)) \cdot u \\
&=  m \cdot u
\end{align*}
And by definition, $m\cdot u\vDash \varphi\odot u$, so we are done.

\item\ruleref{Iter}. We know that $\vDash\triple{\varphi_n}{\assume{e}\fatsemi C}{\varphi_{n+1}}$ and that $\vDash\triple{\varphi_n}{\assume{e'}}{\psi_n}$ for all $n\in\mathbb N$ by the induction hypotheses. Now, we need to show that $\vDash\triple{\varphi_0}{\iter C{e}{e'}}{\psi_\infty}$. Suppose $m\vDash\varphi_0$. It is easy to see that for all $n\in\mathbb N$:
\[
\dem{(\assume{e}\fatsemi C)^n}m\vDash\varphi_n
\]
and
\[
\dem{(\assume{e}\fatsemi C)^n\fatsemi \assume{e'}}m\vDash\psi_n
\]
by mathematical induction on $n$, and the two induction hypotheses. Now, since $\conv\psi$, we also know that:
\[
\sum_{n\in\mathbb N}\dem{(\assume{e}\fatsemi C)^n\fatsemi \assume{e'}}m \vDash \psi_\infty
\]
Finally, by \Cref{lem:whilesem} we get that $\dem{\iter C{e}{e'}}m\vDash\psi_\infty$.

\item\ruleref{False}. We must show that $\vDash\triple\bot{C}\varphi$. Suppose that $m\vDash\bot$. This is impossible, so the claim follows vacuously.

\item\ruleref{True}. We must show that $\vDash\triple\varphi{C}\top$. Suppose $m\vDash\varphi$. It is trivial that $\dem Cm\vDash\top$, so the triple is valid.

\item\ruleref{Scale}. By the induction hypothesis, we get that $\vDash\triple\varphi{C}\psi$ and we must show that $\vDash\triple{u\odot\varphi}C{u\odot\psi}$. Suppose $m\vDash u\odot\varphi$. So there is some $m'$ such that $m'\vDash\varphi$ and $m = u\cdot m'$. We therefore get that $\dem C{m'}\vDash \psi$. Now, observe that $\dem Cm = \dem C{u\cdot m'} = u\cdot\dem Cm$. Finally, by the definition of $\odot$, we get that $u\cdot\dem Cm\vDash u\odot\psi$.

\item\ruleref{Disj}. By the induction hypothesis, we know that $\vDash\triple{\varphi_1}C{\psi_1}$ and $\vDash\triple{\varphi_2}C{\psi_2}$ and we need to show  $\vDash\triple{\varphi_1\vee\varphi_2}C{\psi_1\vee\psi_2}$. Suppose $m\vDash\varphi_1\vee\varphi_2$. Without loss of generality, suppose $m\vDash\varphi_1$. By the induction hypothesis, we get $\dem Cm\vDash\psi_1$. We can weaken this to conclude that $\dem Cm\vDash\psi_1\vee\psi_2$. The case where instead $m\vDash\varphi_2$ is symmteric.

\item\ruleref{Conj}. By the induction hypothesis, we get that $\vDash\triple{\varphi_1}C{\psi_1}$ and $\vDash\triple{\varphi_2}C{\psi_2}$ and we need to show  $\vDash\triple{\varphi_1\land\varphi_2}C{\psi_1\land\psi_2}$. Suppose $m\vDash\varphi_1\land\varphi_2$, so $m\vDash\varphi_1$ and $m\vDash\varphi_2$. By the induction hypotheses, $\dem Cm\vDash\psi_1$ and $\dem Cm\vDash\psi_2$, so $\dem Cm\vDash\psi_1\land\psi_2$.

\item\ruleref{Choice}. By the induction hypothesis, $\vDash\triple{\phi(t)}C{\phi'(t)}$ for all $t\in T$, and we need to show that $\vDash\triple{\bigoplus_{x\in T}\phi(x)}C{\bigoplus_{x\in T} \phi'(x)}$. Suppose $m\vDash \bigoplus_{x\in T}\phi(x)$, so for each $t\in T$ there is an $m_t$ such that $m_t \in \phi(t)$ and $m = \sum_{t\in T} m_t$. By the induction hypothesis, we get that $\dem C{m_t} \in \phi'(t)$ for all $t\in T$. Now, we have:
\begin{align*}
\sum_{t\in T} \dem C{m_t}
&= \sum_{t\in T} \sum_{\sigma\in\supp(m_t)} m_t(\sigma) \cdot \de{C}(\sigma)
\\
&= \sum_{\sigma\in\supp(m)} (\sum_{t\in T} m_t(\sigma)) \cdot \de{C}(\sigma)
\\
&= \sum_{\sigma\in\supp(m)} m(\sigma) \cdot \de{C}(\sigma)
\\
&= \dem Cm
\end{align*}
Therefore, we get that $\dem Cm \in \bigoplus_{x\in T}\phi'(x)$.

\item\ruleref{Exists}. By the induction hypothesis, $\vDash\triple{\phi(t)}C{\phi'(t)}$ for all $t\in T$ and we need to show $\vDash\triple{\exists x:T.\phi(x)}C{\exists x:T.\phi'(x)}$. Now suppose $m\in\exists x:T.\phi(x) = \bigcup_{t\in T}\phi(t)$. This means that there is some $t\in T$ such that $m\in\phi(t)$. By the induction hypothesis, this means that $\dem Cm\vDash\phi'(t)$, so we get that $\dem Cm\vDash\exists x:T.\phi'(x)$.

\item\ruleref{Consequence}. We know that $\varphi'\Rightarrow\varphi$ and $\psi\Rightarrow\psi'$ and by the induction hypothesis $\vDash\triple\varphi C\psi$, and we need to show that $\vDash\triple{\varphi'}C{\psi'}$. Suppose that $m\vDash\varphi'$, then it also must be the case that $m\vDash\varphi$. By the induction hypothesis, $\dem Cm\vDash\psi$. Now, using the second consequence $\dem Cm\vDash\psi'$.
\end{itemize}
\end{proof}
Now, moving to completeness, we prove the following lemma.
\lemcompleteness*
\begin{proof}
By induction on the structure of the program $C$.
\begin{itemize}

\item $C=\skp$. Since $\dem\skp{m} = m$ for all $m$, then clearly $\spost(\skp, \varphi) = \varphi$. We complete the proof by applying the \ruleref{Skip} rule.
\[
\inferrule*[right=\ruleref{Skip}]{\;}{\triple{\varphi}\skp{\varphi}}
\]

\item $C = C_1\fatsemi C_2$. First, observe that:
\begin{align*}
\spost(C_1 \fatsemi C_2, \varphi)
&= \{ \dem{C_1\fatsemi C_2}m \mid m\in\varphi \} \\
&= \{ \dem{C_2}{\dem{C_1}m} \mid m\in\varphi \} \\
&= \{ \dem{C_2}{m'} \mid m' \in \{ \dem{C_1}m \mid m\in\varphi \} \} \\
&= \spost(C_2, \spost(C_1, \varphi))
\end{align*}
Now, by the induction hypothesis, we know that:
\begin{align*}
&\Omega\vdash\triple{\varphi}{C_1}{\spost(C_1, \varphi)}
\\
&\Omega\vdash\triple{\spost(C_1, \varphi)}{C_2}{\spost(C_1\fatsemi C_2, \varphi)}
\end{align*}
Now, we complete the derivation as follows:
\[
\inferrule*[right=\ruleref{Seq}]{
  \inferrule*{\Omega}{
      \triple{\varphi}{C_1}{\spost(C_1,\varphi)}
    }
    \\
    \inferrule*{\Omega}{
      \triple{\spost(C_1, \varphi)}{C_2}{\spost(C_1\fatsemi C_2, \varphi)}
    }
}{
  \triple{\varphi}{C_1\fatsemi C_2}{\spost(C_1\fatsemi C_2, \varphi)}
}
\]

\item $C = C_1 +C_2$. So, we have that:
\begin{align*}
\spost(C_1+C_2, \varphi)
&= \{ \dem{C_1+C_2}m \mid m\in\varphi \} \\
&= \{ \dem{C_1}m + \dem{C_2}m \mid m\in\varphi \} \\
&= \bigcup_{m\in\varphi} \{ \dem{C_1}m +\dem{C_2}m \mid m \in \ind{m} \} \\
&= \exists m:\varphi.~ \spost(C_1, \ind m) \oplus \spost(C_2, \ind m)
\end{align*}
We now complete the derivation as follows:
\[
\inferrule*[right=\ruleref{Exists}]{
  \inferrule*[Right=\ruleref{Plus}]{
    \inferrule*{\Omega}{\triple{\ind m}{C_1}{\spost(C_1, \ind m)}}
    \\
    \inferrule*{\Omega}{\triple{\ind m}{C_2}{\spost(C_2, \ind m)}}
  }{
    \triple{\ind m}{C_1 +C_2}{\spost(C_1, \ind m) \oplus\spost(C_2, \ind m)}
  }
}{
  \triple{\varphi}{C_1+C_2}{\spost(C_1+C_2, \varphi)}
}
\]

\item $C = \assume e$, and $e$ must either be a test $b$ or a weight $u\in U$. Suppose that $e$ is a test $b$. Now, for any $m$ define the operator $b\? m$ as follows:
\[
b\? m(\sigma) = \left\{
\begin{array}{lll}
\zero & \text{if} & \de{b}(\sigma) = \zero \\
m(\sigma) \;\;& \text{if}\;\; & \de{b}(\sigma) = \one
\end{array}\right.
\]
Therefore $m = b\?m + \lnot b\? m$ and $\ind m = \ind{b\?m} \oplus \ind{\lnot b\?m}$. We also have:
\begin{align*}
\spost(\assume b, \varphi)
&= \{ \dem{\assume b}m \mid m\in\varphi \} \\
&= \{ \dem{\assume b}{b\? m + \lnot b\? m} \mid m \in\varphi \} \\
&= \{ b\? m \mid m \in\varphi \} \\
&= \exists m:\varphi.~ \ind{b\? m}
\end{align*}
Clearly also $\ind{b\? m} \vDash b$ and $\ind{\lnot b\? m}\vDash\lnot b$. We now complete the derivation: 
\[
\inferrule*[right=\ruleref{Exists}]{
  \inferrule*[Right=\ruleref{Choice}]{
    \inferrule*[right=\ruleref{Assume}]{
      \ind{b\? m}\vDash b = \one
    }{
      \triple{\ind{b\? m}}{\assume b}{{\ind{b\? m}}\odot\one}
    }
    \\
    \inferrule*[Right=\ruleref{Assume}]{
      \ind{\lnot b\? m}\vDash b = \zero
    }{
      \triple{\ind{\lnot b\? m}}{\assume b}{\ind{\lnot b\? m}\odot\zero}
    }  
  }{
    \triple{\ind{b\? m} \oplus \ind{\lnot b\? m}}{\assume b}{\ind{b\? m}}
  }
}{
  \triple\varphi{\assume b}{\spost(\assume b, \varphi)}
}
\]
Now, suppose $e=u$, so $\varphi\vDash u=u$ and $\dem{\assume u}m = m\cdot u$ for all $m\in\varphi$ and therefore $\spost(\assume u, \varphi) = \varphi\odot{u}$. We can complete the proof as follows:
\[
\inferrule*[right=\ruleref{Assume}]{
    \varphi\vDash u=u
  }{
    \triple{\varphi}{\assume u}{\varphi\odot{u}}
  }
\]

\item $C = \iter C{e}{e'}$. For all $n\in\mathbb N$, let $\varphi_n(m)$ and $\psi_n(m)$ be defined as follows:
\begin{align*}
\varphi_n(m) &\triangleq \spost((\assume e\fatsemi C)^n, \ind m) 
= \ind{\dem{(\assume e\fatsemi C)^n}m} \\
\psi_n(m) &\triangleq \spost(\assume{e'}, \varphi_n(m)) 
= \ind{\dem{(\assume e\fatsemi C)^n\fatsemi \assume{e'}}m} \\
\psi_\infty(m) &\triangleq \spost(\iter Ce{e'}, \ind m) 
= \ind{\dem{\iter Ce{e'}}m}
\end{align*}
Note that by definition, $\varphi_0(m) = \ind m$, $\varphi = \exists m:\varphi.\varphi_0(m)$, and $\spost(\iter Ce{e'}, \varphi) = \exists m:\varphi.\psi_\infty(m)$.

We now show that $(\psi_n)_{n\in\mathbb N^\infty}$ converges ($\conv{\psi}$). Take any $(m_n)_{n\in\mathbb N}$ such that $m_n\vDash\psi_n$ for each $n$. That means that $m_n = \dem{(\assume e\fatsemi C)^n\fatsemi \assume{e'}}m$. Therefore by \Cref{lem:whilesem} we get that $\sum_{n\in\mathbb N}m_n = \dem{\iter Ce{e'}}m$, and therefore $\sum_{n\in\mathbb N}m_n\vDash\psi_\infty(m)$. We now complete the derivation as follows:
\[
\inferrule*[right=\ruleref{Exists}]{
  \inferrule*[Right=\ruleref{Iter}]{
    \inferrule*{\Omega}{\triple{\varphi_n(m)}{\assume e\fatsemi C}{\varphi_{n+1}(m)}}
    \\
    \inferrule*{\Omega}{\triple{\varphi_n(m)}{\assume{e'}}{\psi_n(m)}}
  }{
    \triple{\varphi_0(m)}{\iter Ce{e'}}{\psi_\infty(m)}
  }
}{
  \triple{\varphi}{\iter Ce{e'}}{\spost(\iter Ce{e'}, \varphi)}
}
\]

\item $C=a$. We assumed that $\Omega$ contains all the valid triples pertaining to atomic actions $a\in\mathsf{Act}$, so $\Omega\vdash \triple{\varphi}a{\spost(a,\varphi)}$ since $\vDash \triple{\varphi}a{\spost(a, \varphi)}$.
\end{itemize}
\end{proof}

\section{Variables and State}
\label{app:state}

We now give additional definitions and proofs from \Cref{sec:state}. First, we give the interpretation of expressions $\de{E}_{\mathsf{Exp}} \colon \mathcal S \to \mathsf{Val}$ where $x\in\mathsf{Var}$, $v\in\mathsf{Val}$, and $b$ is a test.
\begin{align*}
\de{x}_\mathsf{Exp}(s) &\triangleq s(x) \\
\de{v}_\mathsf{Exp}(s) &\triangleq v \\
\de{b}_\mathsf{Exp}(s) &\triangleq \de{b}_\mathsf{Test}(s) \\
\de{E_1 + E_2}_\mathsf{Exp}(s) &\triangleq \de{E_1}_\mathsf{Exp}(s) + \de{E_2}_\mathsf{Exp}(s) \\
\de{E_1 - E_2}_\mathsf{Exp}(s) &\triangleq \de{E_1}_\mathsf{Exp}(s) - \de{E_2}_\mathsf{Exp}(s) \\
\de{E_1 \times E_2}_\mathsf{Exp}(s) &\triangleq \de{E_1}_\mathsf{Exp}(s) \cdot \de{E_2}_\mathsf{Exp}(s)
\end{align*}
Informally, the free variables of an assertion $P$ are the variables that are used in $P$. Given that assertions are semantic, we define $\mathsf{free}(P)$ to be those variables that $P$ constrains in some way. Formally, $x$ is free in $P$ iff reassigning $x$ to some value $v$ would not satisfy $P$.
\[
\mathsf{free}(P) \triangleq \{ x\in \mathsf{Var} \mid \exists s\in P, v\in\mathsf{Val}.~ s[x\mapsto v]\notin P \}
\]
The modified variables of a program $C$ are the variables that are assigned to in the program, determined inductively on the structure of the program.
\begin{align*}
\mathsf{mod}(\skp) &\triangleq \emptyset \\
\mathsf{mod}(C_1 \fatsemi C_2) &\triangleq \mathsf{mod}(C_1) \cup \mathsf{mod}(C_2) \\
\mathsf{mod}(C_1 + C_2) &\triangleq \mathsf{mod}(C_1) \cup \mathsf{mod}(C_2) \\
\mathsf{mod}(\assume e) &\triangleq \emptyset \\
\mathsf{mod}\left(\iter Ce{e'}\right) &\triangleq \mathsf{mod}(C) \\
\mathsf{mod}(x \coloneqq E) &\triangleq \{x\}
\end{align*}
Now, before the main soundness and completeness result, we prove a lemma stating that $\triple{\always P}C{\always P}$ is valid as long as $P$ does not contain information about variables modified by $C$.
\begin{lemma}\label{lem:constancy} If $\mathsf{free}(P) \cap \mathsf{mod}(C) = \emptyset$, then:
\[
\vDash\triple{\always P}C{\always P}
\]
\end{lemma}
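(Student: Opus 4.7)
The plan is to reduce the semantic claim to a purely state-level invariance property: running $C$ from $\sigma$ can only produce end states that agree with $\sigma$ on the unmodified variables.

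Unfolding the abbreviation, $m \vDash \always P$ means $\supp(m) \subseteq P$, so I need to show $\supp(\dem{C}{m}) \subseteq P$ whenever $\supp(m) \subseteq P$. Since $\dem{C}{m}(\tau) = \sum_{\sigma \in \supp(m)} m(\sigma) \cdot \de{C}(\sigma)(\tau)$, any $\tau \in \supp(\dem{C}{m})$ must lie in $\supp(\de{C}(\sigma))$ for some $\sigma \in \supp(m) \subseteq P$. Thus it is enough to prove the pointwise claim: for every $\sigma \in P$ and every $\tau \in \supp(\de{C}(\sigma))$, we have $\tau \in P$.

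The key technical lemma, proved by induction on the structure of $C$, is that $\tau \in \supp(\de{C}(\sigma))$ implies $\sigma(x) = \tau(x)$ for every $x \notin \mathsf{mod}(C)$. The base cases are immediate: for $\skp$ and $\assume e$ we have $\supp(\de{C}(\sigma)) \subseteq \{\sigma\}$, and for the atomic action $x \coloneqq E$ the only possible end state is $\sigma[x \mapsto \de{E}_{\mathsf{Exp}}(\sigma)]$, which agrees with $\sigma$ on every variable other than $x = \mathsf{mod}(x \coloneqq E)$. The case $C_1 + C_2$ follows because $\supp(\de{C_1 + C_2}(\sigma)) \subseteq \supp(\de{C_1}(\sigma)) \cup \supp(\de{C_2}(\sigma))$ and $\mathsf{mod}(C_i) \subseteq \mathsf{mod}(C_1 + C_2)$. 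For sequencing, any $\tau \in \supp(\dem{C_2}{\de{C_1}(\sigma)})$ arises as $\tau \in \supp(\de{C_2}(\rho))$ for some $\rho \in \supp(\de{C_1}(\sigma))$, so two applications of the induction hypothesis chain the agreements. For iteration, \Cref{lem:whilesem} gives $\supp(\de{\iter C{e}{e'}}(\sigma)) \subseteq \bigcup_{n\in\mathbb{N}} \supp(\de{(\assume e\fatsemi C)^n \fatsemi \assume{e'}}(\sigma))$, and the induction hypothesis (combined with a subsidiary induction on $n$) handles each finite iterate, noting that $\mathsf{mod}((\assume e\fatsemi C)^n \fatsemi \assume{e'}) \subseteq \mathsf{mod}(C)$.

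Finally, I would close the gap between agreement on variables and membership in $P$. A straightforward induction on the syntax of $C$ shows $\mathsf{mod}(C)$ is finite. Since $\mathsf{mod}(C) \cap \mathsf{free}(P) = \emptyset$ and $\sigma,\tau$ agree outside $\mathsf{mod}(C)$, they differ on only finitely many variables, all of which lie outside $\mathsf{free}(P)$. Enumerating those differing variables $x_1, \dots, x_k$ and the corresponding values $v_i = \tau(x_i)$, the definition of $\mathsf{free}$ gives that updating $x_i \notin \mathsf{free}(P)$ preserves membership in $P$ one step at a time, so $\sigma[x_1 \mapsto v_1]\cdots[x_k \mapsto v_k] = \tau \in P$. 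The main subtlety is precisely this last step: the definition of $\mathsf{free}(P)$ only directly licenses single-variable updates, and lifting to simultaneous updates requires the finiteness of $\mathsf{mod}(C)$, which is why the inductive characterization of modified variables is crucial.
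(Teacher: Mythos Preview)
Your proof is correct, but it follows a genuinely different route from the paper. The paper also proceeds by structural induction on $C$, but at each step it appeals to the (already established) soundness of the Outcome Logic inference rules rather than computing directly on supports: for $C_1+C_2$ it uses \ruleref{Plus} together with $\always P\oplus\always P\Leftrightarrow\always P$; for $\iter Ce{e'}$ it instantiates \ruleref{Iter} with the constant families $\varphi_n=\psi_n=\psi_\infty=\always P$; and for $x\coloneqq E$ it shows $P[E/x]=P$ (using $x\notin\mathsf{free}(P)$ to argue $s\in P$ iff $s[x\mapsto\de{E}(s)]\in P$) and then invokes \ruleref{Assign}. Your approach instead isolates a purely semantic invariance lemma---end states agree with the start state on all variables outside $\mathsf{mod}(C)$---and only at the very end connects this to membership in $P$ via the definition of $\mathsf{free}$ and an iterated single-variable update argument. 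The paper's route is shorter and avoids the finiteness-of-$\mathsf{mod}(C)$ and chained-update subtlety you correctly flag, since the substitution $P[E/x]=P$ handles assignment in one shot; your route is more elementary in that it never touches the proof system, and it yields a reusable intermediate result about how execution affects unmodified variables.
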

\begin{proof}
By induction on the program $C$:
\begin{itemize}
\item $C=\skp$. Clearly the claim holds using \ruleref{Skip}.
\item $C=C_1\fatsemi C_2$. By the induction hypotheses, $\vDash\triple{\always P}{C_i}{\always P}$ for $i\in\{1,2\}$. We complete the proof using \ruleref{Seq}.
\item $C=C_1+C_2$. By the induction hypotheses, $\vDash\triple{\always P}{C_i}{\always P}$ for $i\in\{1,2\}$. We complete the proof using \ruleref{Plus} and the fact that $\always P \oplus \always P \Leftrightarrow \always P$.
\item $C = \assume e$. Since $\assume e$ can only remove states, it is clear that $\always P$ must still hold after running the program.
\item $C = \iter Ce{e'}$. The argument is similar to that of the soundness of \ruleref{Invariant}. Let $\varphi_n = \psi_n = \psi_\infty= \always P$. It is obvious that $\conv\psi$. We also know that $\vDash\triple{\always P}C{\always P}$ by the induction hypothesis. The rest is a straightforward application of the \ruleref{Iter} rule, also using the argument about $\mathsf{assume}$ from the previous case.
\item $C = x\coloneqq E$. We know that $x \notin \mathsf{free}(P)$, so for all $s \in P$ and $v\in\mathsf{Val}$, we know that $s[x\mapsto v] \in P$. We will now show that $P[E/x] = P$. Suppose $s\in P[E/x]$, this means that $s[x\mapsto \de{E}_{\mathsf{Exp}}(s)] \in P$, which also means that:
\[
(s[x\mapsto \de{E}_{\mathsf{Exp}}(s)])[x \mapsto s(x)] = s \in P
\]
Now suppose that $s \in P$, then clearly $s[x\mapsto \de{E}_{\mathsf{Exp}}(s)] \in P$, so $s \in P[E/x]$. Since $P[E/x] = P$, then $(\always P)[E/x]$ = $\always P$, so the proof follows from the \ruleref{Assign} rule.
\end{itemize}
\end{proof}
We now prove the main result. Recall that this result pertains specifically to the OL instance where variable assignment is the only atomic action.
\soundcompleteassign*
\begin{proof}\;
\iffcases{
Suppose $\vDash\triple\varphi{C}\psi$. By \Cref{thm:completeness}, we already know that this triple is derivable for all commands other than assignment so it suffices to show the case where $C = x\coloneqq E$.

Now suppose $\vDash\triple\varphi{x\coloneqq E}\psi$. For any $m\in\varphi$, we know that $(\lambda s.\eta(s[x\mapsto \de{E}(s)]))^\dagger(m) \in \psi$. By definition, this means that $m\in \psi[E/x]$, so we have shown that $\varphi \Rightarrow \psi[E/x]$. Finally, we complete the derivation as follows:
\[
\inferrule*[right=\ruleref{Consequence}]{
  \varphi\Rightarrow\psi[E/x]
  \\
  \inferrule*[Right=\ruleref{Assign}]{\;}{
    \triple{\psi[E/x]}{x \coloneqq E}{\psi}
  }
}{
  \triple{\varphi}{x\coloneqq E}{\psi}
}  
\]
}{
The proof is by induction on the derivation $\vdash\triple\varphi{C}\psi$. All the cases except for the two below follow from \Cref{thm:soundness}.
\begin{itemize}
\item\ruleref{Assign}.
Suppose that $m\vDash \varphi[E/x]$. By the definition of substitution, we immediately know that
\[(\lambda s.\eta(s[x\mapsto\de{E}(s)]))^\dagger(m) \in \varphi\]
Since $\de{x\coloneqq E}(s) = \eta(s[x\mapsto\de{E}(s)])$, we are done.
\item\ruleref{Constancy}. Follows immediately from \Cref{lem:constancy} and the soundness of the \ruleref{Conj} rule.
\end{itemize}
}
\end{proof}

\section{Subsumption of Program Logics}

In this section, we provide proofs for the theorems in \Cref{sec:dynamichoare}. Note that the following two theorems assume a nondeterministic interpretation of Outcome Logic, using the semiring defined in \Cref{ex:powersetsemi}.

\thmhoare*
\begin{proof} We only prove that $\vDash\triple PC{\always Q}$ iff $P\subseteq \dlbox CQ$, since $P\subseteq \dlbox CQ$ iff $\vDash\hoare PCQ$ is a well-known result \cite{pratt1976semantical}.
\iffcases{
Suppose $\sigma\in P$, then $\unit(\sigma) \vDash \sure P$ and since $\vDash\triple{\sure P}C{\always Q}$ we get that $\dem C{\unit(\sigma)}\vDash \always Q$, which is equivalent to $\de C(\sigma)\vDash\exists u:U.\wg Qu$. This means that $\supp(\de C(\sigma)) \subseteq Q$, therefore by definition $\sigma\in \dlbox CQ$. Therefore, we have shown that $P\subseteq \dlbox CQ$.
}{
Suppose that $P\subseteq \dlbox CQ$ and $m\vDash P$, so $|m| = \one$ and $\supp(m) \subseteq P \subseteq \dlbox CQ$. This means that $\supp(\de{C}(\sigma)) \subseteq Q$ for all $\sigma \in \supp(m)$, so we also get that:
\[
\supp(\dem Cm) = \smashoperator{\bigcup_{\sigma\in\supp(m)}}\supp(\de{C}(\sigma)) \subseteq Q
\]
This means that $\dem Cm\vDash \always Q$, therefore $\vDash\triple{\sure P}C{\always Q}$.
}
\end{proof}

\thmlisbon*
\begin{proof} We only prove that $\vDash\triple{\sure P}C{\sometimes Q}$ iff $P\subseteq \dldia CQ$, since $P\subseteq \dldia CQ$ iff $\vDash\lisbon PCQ$ follows by definition \cite{outcome,ilalgebra}.
\iffcases{
Suppose $\sigma\in P$, then $\unit(\sigma) \vDash \sure P$ and since $\vDash\triple{\sure P}C{\sometimes Q}$ we get that $\dem C{\unit(\sigma)}\vDash \sometimes Q$, which is equivalent to saying that there exists a $\tau \in \supp(\de{C}(\sigma))$ such that $\tau \in Q$, therefore by definition $\sigma \in \dldia CQ$.
So, we have shown that $P\subseteq \dldia CQ$.
}{
Suppose that $P\subseteq \dldia CQ$ and $m\vDash \sure P$, so $|m| = \one$ and $\supp(m) \subseteq P \subseteq \dldia CQ$. This means that $\supp(\de{C}(\sigma)) \cap Q \neq\emptyset$ for all $\sigma \in \supp(m)$. In other words, for each $\sigma\in\supp(m)$, there exists a $\tau \in \supp(\de{C}(\sigma))$ such that $\tau \in Q$. Since $\supp(\dem Cm) = \bigcup_{\sigma\in\supp(m)}\supp(\de{C}(\sigma))$, then there is also a $\tau \in\supp(\dem Cm)$ such that $\tau\in Q$, so $\dem Cm\vDash \sometimes Q$, therefore $\vDash\triple{\sure P}C{\sometimes Q}$.
}
\end{proof}

\section{Derived Rules}
\label{app:derived}

\subsection{Sequencing in Hoare and Lisbon Logic}
\label{app:seq}

We first prove the results about sequencing Hoare Logic encodings.
\begin{lemma}\label{lem:liftbox}
The following inference is derivable.
\[
\inferrule{
\triple{\sure P}C{\always Q}
}{
  \triple{\always P}C{\always Q}
}
\]
\end{lemma}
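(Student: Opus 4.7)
The goal is to derive $\triple{\always P}C{\always Q}$ from $\triple{\sure P}C{\always Q}$. The plan is to exploit that $m \vDash \always P$ just means $\supp(m) \subseteq P$, and to reduce to the hypothesis by decomposing each such $m$ into its point masses $m = \sum_{\sigma \in P} m(\sigma)\cdot\eta(\sigma)$ and applying the triple to each Dirac mass individually.

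Concretely, I would first establish the precondition entailment $\always P \Rightarrow \bigoplus_{\sigma \in P}(\exists u\colon U.\ u \odot \ind{\eta(\sigma)})$ by using that decomposition, with the existential absorbing the cases $\sigma \notin \supp(m)$ via the instance $u = \zero$, for which $\zero \in \zero \odot \ind{\eta(\sigma)}$. The syntactic derivation then proceeds as follows. For each $\sigma \in P$, weaken the hypothesis with \ruleref{Consequence} using $\ind{\eta(\sigma)} \Rightarrow \sure P$ to obtain $\triple{\ind{\eta(\sigma)}}C{\always Q}$; apply \ruleref{Scale} by an arbitrary $u \in U$ to get $\triple{u \odot \ind{\eta(\sigma)}}C{u \odot \always Q}$; weaken the postcondition using $u \odot \always Q \Rightarrow \always Q$ (scaling cannot enlarge support); and finally apply \ruleref{Exists} over $u$ to obtain $\triple{\exists u\colon U.\ u \odot \ind{\eta(\sigma)}}C{\always Q}$ (using that $\always Q$ does not depend on $u$). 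A single application of \ruleref{Choice} indexed by $\sigma \in P$ combines these into $\triple{\bigoplus_{\sigma \in P}(\exists u\colon U.\ u \odot \ind{\eta(\sigma)})}C{\bigoplus_{\sigma \in P} \always Q}$, after which one final \ruleref{Consequence} uses the precondition entailment above together with the evident postcondition entailment $\bigoplus_{\sigma \in P} \always Q \Rightarrow \always Q$ (a sum of weighting functions supported in $Q$ is still supported in $Q$) to deliver the goal.

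The main obstacle is the precondition entailment, which must hold uniformly for every $\sigma \in P$, including those outside $\supp(m)$, where the $\bigoplus$ demands a summand of $\zero$. This is exactly why the existential over $u$ is included: the instance $u = \zero$ witnesses $\zero \in \exists u\colon U.\ u \odot \ind{\eta(\sigma)}$, so that $\sigma \in P \setminus \supp(m)$ can be handled. An alternative, much shorter plan is to give the paper's informal semantic argument directly (for $m \in \always P$, each $\sigma \in \supp(m)$ satisfies $P$, whence $\de{C}(\sigma) \in \always Q$ and thus $\dem C m \in \always Q$) and then invoke relative completeness (\Cref{thm:completeness}); the syntactic derivation above has the advantage of being self-contained within the inference system.
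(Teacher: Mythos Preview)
Your proof is correct and rests on the same underlying idea as the paper's---decompose $m$ into its point masses, apply \ruleref{Scale} to the hypothesis at each Dirac mass, and recombine via \ruleref{Choice}---but you arrange the quantifiers differently. The paper puts an outer \ruleref{Exists} over $m\in\always P$ and then applies \ruleref{Choice} over the $m$-dependent index set $\supp(m)$ with the fixed weight $m(\sigma)$ in \ruleref{Scale}; you instead fix the index set to all of $P$, push an inner \ruleref{Exists} over $u\in U$ to absorb the (possibly zero) weight, and avoid quantifying over $m$ altogether. Both work: the paper's version mirrors the concrete decomposition $m=\sum_{\sigma\in\supp(m)} m(\sigma)\cdot\eta(\sigma)$ directly, while yours trades the $m$-dependent index set for a uniform one at the cost of the extra inner existential. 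Your remark about the semantic argument plus relative completeness is also a valid shortcut, though the paper (like you) gives a self-contained syntactic derivation.
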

\begin{proof}
We first establish two logical consequences. First, we have:
\begin{align*}
\always P & \implies \exists m:\always P.~ \ind m \\
&\implies \exists m:\always P.~ \textstyle\bigoplus_{\sigma\in\supp(m)}  m(\sigma)\odot \ind{\eta(\sigma)} \\
&\implies \exists m:\always P.~ \textstyle\bigoplus_{\sigma\in\supp(m)} m(\sigma)\odot\sure P
\end{align*}
And also:
\begin{align*}
\exists m:\always P.~ \textstyle\bigoplus_{\sigma\in\supp(m)}  m(\sigma)\odot \always Q
&\implies \exists m:\always P.~ \textstyle\bigoplus_{\sigma\in\supp(m)} \always Q \\
&\implies \exists m:\always P.~ \always Q \\
&\implies \always Q
\end{align*}
We now complete the derivation.
\[
\inferrule*[right=\ruleref{Consequence}]{
  \inferrule*[Right=\ruleref{Exists}]{
    \forall m\in \always P.
    \\
    \inferrule*[Right=\ruleref{Choice}]{
      \forall \sigma\in\supp(m).
      \\
      \inferrule*[Right=\ruleref{Scale}]{
        \triple{\sure P}C{\always Q}
      }{
        \triple{m(\sigma)\odot \sure{P}}C{m(\sigma)\odot\always Q}
      }
    }{
      \triple{\textstyle\bigoplus_{\sigma\in\supp(m)} m(\sigma)\odot \sure P}C{\textstyle\bigoplus_{\sigma\in\supp(m)} m(\sigma)\odot \always Q}
    }
  }{
    \triple{\exists m:\always P.~ \textstyle{\bigoplus_{\sigma\in\supp(m)}} m(\sigma)\odot \sure P}C{\exists m:\always P.~ \textstyle\bigoplus_{\sigma\in\supp(m)} m(\sigma)\odot\always Q}
  }
}{
  \triple{\always P}C{\always Q}
}
\]
%
\end{proof}

\begin{lemma}
The following inference is derivable.
\[
\inferrule{
  \triple{\sure P}{C_1}{\always Q}
  \\
  \triple{\sure Q}{C_2}{\always R}
}{
  \triple{\sure P}{C_1\fatsemi C_2}{\always R}
}
\]
\end{lemma}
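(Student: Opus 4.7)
The plan is to reduce this to a single application of the \ruleref{Seq} rule, after first patching the asymmetry between $\sure{Q}$ (the postcondition of the first premise) and $\sure{Q}$ (the precondition of the second premise) being wrapped in different modalities. The obstacle is that the \ruleref{Seq} rule of \Cref{fig:rules} requires the postcondition of $C_1$ to match the precondition of $C_2$ exactly, whereas here the postcondition of $C_1$ is $\always Q$ but the precondition of $C_2$ is $\sure{Q}$.

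The fix is immediate: \Cref{lem:liftbox} gives us exactly the tool to lift the second premise $\triple{\sure{Q}}{C_2}{\always R}$ into $\triple{\always Q}{C_2}{\always R}$. Once we have this lifted triple, its precondition $\always Q$ agrees syntactically with the postcondition of the first premise $\triple{\sure P}{C_1}{\always Q}$, so a direct application of \ruleref{Seq} concludes the derivation:
\[
\inferrule*[right=\ruleref{Seq}]{
  \triple{\sure P}{C_1}{\always Q}
  \\
  \inferrule*[Right=\Cref{lem:liftbox}]{
    \triple{\sure Q}{C_2}{\always R}
  }{
    \triple{\always Q}{C_2}{\always R}
  }
}{
  \triple{\sure P}{C_1 \fatsemi C_2}{\always R}
}
\]

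No further consequence reasoning is needed on the outer pre- and postconditions, since they already match the desired conclusion. The whole proof is essentially a two-line derivation, with all the real work already packaged inside \Cref{lem:liftbox}; in particular, the interesting observation --- that $\always Q$ can be decomposed pointwise across the support of any weighting function satisfying it, using \ruleref{Choice}, \ruleref{Scale}, and \ruleref{Exists} --- has already been discharged there. Thus there is no hidden difficulty in the sequencing step itself.
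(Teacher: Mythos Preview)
Your proposal is correct and matches the paper's proof exactly: apply \Cref{lem:liftbox} to lift the second premise from $\sure Q$ to $\always Q$, then use \ruleref{Seq}. The derivation tree you give is identical to the one in the appendix.
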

\begin{proof}
\[
\inferrule*[right=\ruleref{Seq}]{
  \triple{\sure P}{C_1}{\always Q}
  \\
  \inferrule*[Right=\Cref{lem:liftbox}]{
    \triple{\sure Q}{C_2}{\always R}
  }{
    \triple {\always Q}{C_2}{\always R}
  }
}{
\triple{\sure P}{C_1\fatsemi C_2}{\always R}
}
\]
\end{proof}
\noindent Now, we turn to Lisbon Logic

\begin{lemma}\label{lem:liftdiamond}
The following inference is derivable.
\[
\inferrule{
\triple{\sure P}C{\sometimes Q}
}{
  \triple{\sometimes P}C{\sometimes Q}
}
\]
\end{lemma}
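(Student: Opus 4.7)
The plan is to mirror the derivation of \Cref{lem:liftbox} but with a one-point decomposition of the input, reflecting the witness-style semantics of $\sometimes P$. For any $m\vDash\sometimes P$, fix a witness $\sigma_m\in\supp(m)\cap P$ and decompose $m$ as $m(\sigma_m)\cdot \eta(\sigma_m) + m''_m$, where $m''_m$ agrees with $m$ everywhere except at $\sigma_m$, where it is $\zero$. Since $\eta(\sigma_m)\vDash\sure P$ and any weighting satisfies $\top$, this yields the logical consequence
\[
\sometimes P \;\Rightarrow\; \exists m:\sometimes P.\; (m(\sigma_m)\odot \sure P)\oplus \top.
\]

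Dually, on the postcondition side, I plan to establish
\[
\exists m:\sometimes P.\; (m(\sigma_m)\odot \sometimes Q)\oplus \top \;\Rightarrow\; \sometimes Q.
\]
This relies on the ``unique annihilator of multiplication'' assumption flagged in \Cref{sec:seq} for Lisbon-style reasoning: if $n\vDash\sometimes Q$ with witness $\tau\in\supp(n)\cap Q$, then $m(\sigma_m)\cdot n(\tau)\neq\zero$ (since $m(\sigma_m)\neq\zero$ as $\sigma_m\in\supp(m)$), so $\tau$ remains in the support after scaling by $m(\sigma_m)$. Adding an arbitrary $\top$-component cannot cancel $\tau$ out of the support either, because in a naturally ordered semiring any equation $u+v=\zero$ forces $u=v=\zero$ via antisymmetry of the natural order, so $(n_1+n_2)(\tau)\neq\zero$ whenever $n_1(\tau)\neq\zero$.

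With both consequences in hand, the derivation can be assembled exactly as in \Cref{lem:liftbox}: apply \ruleref{Scale} to the hypothesis $\triple{\sure P}{C}{\sometimes Q}$ to get $\triple{m(\sigma_m)\odot\sure P}{C}{m(\sigma_m)\odot\sometimes Q}$, combine with the trivial $\triple{\top}{C}{\top}$ (from \ruleref{True}) via \ruleref{Choice} over the two-element index to obtain a triple over the outcome conjunction, apply \ruleref{Exists} to quantify over the choice of $m$, and finally invoke the rule of \ruleref{Consequence} using the two implications above to rewrite the pre- and postconditions as $\sometimes P$ and $\sometimes Q$ respectively.

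The main obstacle is the postcondition weakening step: without the unique-annihilator property, scaling by $m(\sigma_m)$ could collapse the $Q$-witness to $\zero$ and drop it from the support, breaking the implication $m(\sigma_m)\odot \sometimes Q \Rightarrow \sometimes Q$. This is precisely why Lisbon-style subsumption is not available in every semiring supported by the generic framework of \Cref{sec:semantics}, and is the reason the annihilator assumption appears here but not in the analogous box case of \Cref{lem:liftbox}. Once that point is handled, all of the remaining bookkeeping is structurally identical to the box derivation.
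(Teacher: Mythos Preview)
Your proposal is correct and mirrors the paper's proof essentially step for step: the paper also fixes a witness $\sigma_m\in\supp(m)\cap P$, sets $u_m=m(\sigma_m)$, establishes the two consequences $\sometimes P\Rightarrow\exists m{:}\sometimes P.\,(u_m\odot\sure P)\oplus\top$ and $\exists m{:}\sometimes P.\,(u_m\odot\sometimes Q)\oplus\top\Rightarrow\sometimes Q$, and then assembles the derivation via \ruleref{Scale}, \ruleref{True}, \ruleref{Choice}, \ruleref{Exists}, and \ruleref{Consequence}. Your explicit justification of the postcondition implication (unique-annihilator for the scaling step, and the $u+v=\zero\Rightarrow u=v=\zero$ argument for absorbing $\oplus\top$) spells out details the paper leaves implicit.
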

\begin{proof}
First, for each $m\in\sometimes P$, let $\sigma_m$ be an arbitrary state such that $\sigma_m \in\supp(m)$ and $\sigma_m\in P$. This state must exists given that $m\in\sometimes P$. We also let $u_m = m(\sigma_m)$ and note that $u_m \neq \zero$. We therefore have the following consequences:
\begin{align*}
\sometimes P
&\implies \exists m:\sometimes P.~\ind m \\
&\implies \exists m:\sometimes P.~(u_m\odot \ind{\eta(\sigma_m)} ) \oplus\top \\
&\implies \exists m:\sometimes P.~(u_m\odot \sure P) \oplus\top \\
\\
\exists m:\sometimes P.~(u_m\odot \sometimes Q) \oplus\top
&\implies \exists m:\sometimes P.~\sometimes Q \oplus\top \\
&\implies \exists m:\sometimes P.~\sometimes Q \\
&\implies \sometimes Q
\end{align*}
We now complete the derivation as follows:
\[
\inferrule*[right=\ruleref{Consequence}]{
  \inferrule*[Right=\ruleref{Exists}]{
    \forall m\in \sometimes P.
    \\
    \inferrule*[Right=\ruleref{Choice}]{
      \inferrule*[right=\ruleref{Scale}]{
        \triple{\sure P}C{\sometimes Q}
      }{
        \triple{u_m\odot \sure P}C{u_m\odot \sometimes Q}
      }
      \\
      \inferrule*[Right=\ruleref{True}]{\;}{
        \triple{\top}C\top
      }
    }{
      \triple{(u_m\odot \sure P) \oplus \top}C{(u_m\odot \sometimes Q)\oplus\top}
    }
  }{
    \triple{\exists m:\sometimes P.~ (u_m\odot \sure{P}) \oplus \top}C{\exists m:\sometimes P.~ (u_m\odot\sometimes Q) \oplus\top}
  }
}{
  \triple{\sometimes P}C{\sometimes Q}
}
\]
\end{proof}
\begin{lemma}
The following inference is derivable.
\[
\inferrule{
  \triple{\sure P}{C_1}{\sometimes Q}
  \\
  \triple{\sure Q}{C_2}{\sometimes R}
}{
  \triple{\sure P}{C_1\fatsemi C_2}{\sometimes R}
}
\]
\end{lemma}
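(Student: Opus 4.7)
The plan is to mirror the derivation of \ruleref{Seq (Hoare)} almost verbatim, substituting the $\sometimes$ modality for $\always$ and invoking \Cref{lem:liftdiamond} in place of \Cref{lem:liftbox}. The obstacle is that \ruleref{Seq} demands that the postcondition of $C_1$ match the precondition of $C_2$ syntactically, but here the first premise ends in $\sometimes Q$ while the second begins with $\sure Q$. The only real work is therefore to bridge this asymmetry, and that bridging has already been carried out for us.

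Concretely, I would first apply \Cref{lem:liftdiamond} to the second premise $\triple{\sure Q}{C_2}{\sometimes R}$ to obtain $\triple{\sometimes Q}{C_2}{\sometimes R}$. At this point the join point between the two triples is exactly $\sometimes Q$, so a single application of \ruleref{Seq} with the first premise $\triple{\sure P}{C_1}{\sometimes Q}$ discharges the goal. The derivation tree is:
\[
\inferrule*[right=\ruleref{Seq}]{
  \triple{\sure P}{C_1}{\sometimes Q}
  \\
  \inferrule*[Right=\Cref{lem:liftdiamond}]{
    \triple{\sure Q}{C_2}{\sometimes R}
  }{
    \triple{\sometimes Q}{C_2}{\sometimes R}
  }
}{
  \triple{\sure P}{C_1\fatsemi C_2}{\sometimes R}
}
\]

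Since both substantive ingredients (namely, the lifting of $\sure{-}$ preconditions to $\sometimes$ preconditions and the semiring side-condition that $\zero$ be the unique annihilator of multiplication, which prevents the sequential composition from silently dropping the witnessing trace) were already handled in \Cref{lem:liftdiamond}, no further obstacle remains. The proof is essentially a one-line appeal to \ruleref{Seq} once the right premise has been weakened via \Cref{lem:liftdiamond}.
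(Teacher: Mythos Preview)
Your proposal is correct and matches the paper's proof essentially verbatim: the paper also applies \Cref{lem:liftdiamond} to the second premise to lift $\sure Q$ to $\sometimes Q$, then finishes with a single application of \ruleref{Seq}.
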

\begin{proof}
\[
\inferrule*[right=\ruleref{Seq}]{
  \triple{\sure P}{C_1}{\sometimes Q}
  \\
  \inferrule*[Right=\Cref{lem:liftdiamond}]{
    \triple{\sure Q}{C_2}{\sometimes R}
  }{ 
    \triple {\sometimes Q}{C_2}{\sometimes R}
  }
}{
\triple{\sure P}{C_1\fatsemi C_2}{\sometimes R}
}
\]
\end{proof}

\subsection{If Statements and While Loops}

\begin{lemma}
The following inference is derivable.
\[
\inferrule{
  \varphi_1\vDash b
  \\
  \triple{\varphi_1}{C_1}{\psi_1}
  \\
  \varphi_2 \vDash\lnot b
  \\
    \triple{\varphi_2}{C_2}{\psi_2}
}{
  \triple{\varphi_1 \oplus\varphi_2}{\iftf b{C_1}{C_2}}{\psi_2\oplus\psi_2}
}{\ruleref{If}}
\]
\end{lemma}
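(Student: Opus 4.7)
The plan is to unfold the definition $\iftf b{C_1}{C_2} \triangleq (\assume b \fatsemi C_1) + (\assume{\lnot b} \fatsemi C_2)$ and build the derivation in three stages: first extract the two guarded assumptions from the precondition $\varphi_1 \oplus \varphi_2$, then sequence each guarded assumption with its respective branch, and finally join the two threads with \ruleref{Plus}. The key idea is that $\assume b$ preserves the $\varphi_1$ summand (since $\varphi_1 \vDash b$) and annihilates the $\varphi_2$ summand (since $\varphi_2 \vDash \lnot b$, i.e., $\varphi_2 \vDash b = \zero$), and dually for $\assume{\lnot b}$.

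For the first stage I would apply \ruleref{Assume} twice. From $\varphi_1 \vDash b$ I obtain $\triple{\varphi_1}{\assume b}{\varphi_1 \odot \one}$, which is the same as $\triple{\varphi_1}{\assume b}{\varphi_1}$ because scaling by $\one$ is the identity. Dually, from $\varphi_2 \vDash b = \zero$ I get $\triple{\varphi_2}{\assume b}{\varphi_2 \odot \zero}$. Joining these via \ruleref{Choice} over the index set $\{1,2\}$ yields $\triple{\varphi_1 \oplus \varphi_2}{\assume b}{\varphi_1 \oplus (\varphi_2 \odot \zero)}$, and one application of \ruleref{Consequence} collapses the postcondition to $\varphi_1$, using the fact that $m \cdot \zero = \zero$ pointwise and $m + \zero = m$, so $\varphi_2 \odot \zero$ contains only the zero weighting function and $\oplus$-adjoining it is the identity. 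A symmetric argument produces $\triple{\varphi_1 \oplus \varphi_2}{\assume{\lnot b}}{\varphi_2}$.

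In the second stage I sequence each of these with the corresponding premise triple via \ruleref{Seq}, obtaining $\triple{\varphi_1 \oplus \varphi_2}{\assume b \fatsemi C_1}{\psi_1}$ and $\triple{\varphi_1 \oplus \varphi_2}{\assume{\lnot b} \fatsemi C_2}{\psi_2}$. A single application of \ruleref{Plus}, whose premises now share the precondition $\varphi_1 \oplus \varphi_2$, delivers the desired $\triple{\varphi_1 \oplus \varphi_2}{\iftf b{C_1}{C_2}}{\psi_1 \oplus \psi_2}$.

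I expect no serious obstacle. The only mildly delicate step is justifying the semantic implication $\varphi_1 \oplus (\varphi_2 \odot \zero) \Rightarrow \varphi_1$ in the \ruleref{Consequence} step; this holds whenever $\varphi_2 \neq \emptyset$, and the degenerate case $\varphi_2 = \emptyset$ can be dispatched immediately by \ruleref{False} since then $\varphi_1 \oplus \varphi_2 = \emptyset$. Everything else is a routine composition of the base rules from \Cref{fig:rules,fig:struct-rules}.
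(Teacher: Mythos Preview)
Your proposal is correct and follows essentially the same route as the paper: the paper's proof also applies \ruleref{Assume} twice for each guard, combines them via the binary instance of \ruleref{Choice} (which the paper abbreviates as \ruleref{Split}), sequences with the branch via \ruleref{Seq}, and finishes with \ruleref{Plus}. Your worry about the case $\varphi_2 = \emptyset$ is unnecessary, since then $\varphi_2 \odot \zero = \emptyset$ and hence $\varphi_1 \oplus (\varphi_2 \odot \zero) = \emptyset \subseteq \varphi_1$, so the implication used in \ruleref{Consequence} holds vacuously without any case split.
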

\begin{proof}
First note that $\varphi \vDash b$ is syntactic sugar for $\varphi\vDash b=\one$, and so from the assumptions that $\varphi_1\vDash b$ and $\varphi_2\vDash\lnot b$, we get $\varphi_1\vDash b=\one$, $\varphi_2\vDash b=\zero$, $\varphi_1\vDash \lnot b = \zero$, and $\varphi_2\vDash\lnot b =\one$.
We split the derivation into two parts. Part (1) is shown below:
\[
\inferrule*[right=\ruleref{Seq}]{
  \inferrule*[right=\ruleref{Choice}]{
    \inferrule*[right=\ruleref{Assume}]{
      \varphi_1 \vDash b = \one
    }{
      \triple{\varphi_1}{\assume b}{{\varphi_1}\odot\one}
    }
    \\
    \inferrule*[Right=\ruleref{Assume}]{
      \varphi_2 \vDash b = \zero
    }{
      \triple{\varphi_2}{\assume b}{\varphi_2\odot\zero}
    }
  }{
    \triple{\varphi_1\oplus\varphi_2}{\assume b}{\varphi_1}
  }
  \\
  \triple{\varphi_1}{C_1}{\psi_1}
}{
  \triple{\varphi_1\oplus\varphi_2}{\assume b\fatsemi C_1}{\psi_1}
}
\]
We omit the proof with part (2), since it is nearly identical.
Now, we combine (1) and (2):
\[
\inferrule*{
  \inferrule*[Right=\ruleref{Plus}]{
    \inferrule*{(1)
    }{
      \triple{\varphi_1\oplus\varphi_2}{\assume b\fatsemi C_1}{\psi_1}
    }
    \\
    \inferrule*{(2)
    }{
      \triple{\varphi_1\oplus\varphi_2}{\assume{\lnot b}\fatsemi C_2}{\psi_2}
    }
  }{
    \triple{\varphi_1\oplus\varphi_2}{(\assume b\fatsemi C_1) + (\assume{\lnot b}\fatsemi C_2)}{\psi_1\oplus\psi_2}
  }
}{
  \triple{\varphi_1 \oplus\varphi_2}{\iftf b{C_1}{C_2}}{\psi_1\oplus\psi_2}
}
\]
\end{proof}

\begin{lemma}
The following inference is derivable:
\[
\inferrule{
  \varphi \vDash b
  \\
  \triple{\varphi}{C_1}{\psi}
}{
  \triple{\varphi}{\iftf b{C_1}{C_2}}{\psi}
}{\ruleref{If1}}
\]
\end{lemma}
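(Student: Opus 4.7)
The plan is to reduce \ruleref{If1} to the already-derived \ruleref{If} rule by padding the hypotheses with a vacuous ``dummy'' second branch built from the zero weighting function. Let $\mathbf{0} = \{\zero\}$ denote the singleton containing only the zero weighting function in $\mathcal W_{\mathcal A}(\Sigma)$. The key observations are: (i) $\mathbf{0} \vDash \lnot b$ holds vacuously, since $\supp(\zero) = \emptyset$, so the quantifier $\forall \sigma \in \supp(\zero)$ is trivially satisfied; (ii) $\mathbf{0} = \zero \odot \top$, because scaling any weighting function on the left by $\zero$ yields $\zero$ by semiring annihilation; and (iii) outcome-conjuncting with $\mathbf{0}$ is the identity, i.e., $\varphi \oplus \mathbf{0} \Leftrightarrow \varphi$ and $\psi \oplus \mathbf{0} \Leftrightarrow \psi$, since $m + \zero = m$ pointwise in the semiring.

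With these facts in hand, I would first derive $\triple{\mathbf{0}}{C_2}{\mathbf{0}}$ by applying \ruleref{Scale} with $u = \zero$ to the triple $\triple{\top}{C_2}{\top}$ coming from \ruleref{True}, obtaining $\triple{\zero \odot \top}{C_2}{\zero \odot \top}$, which by observation (ii) is exactly $\triple{\mathbf{0}}{C_2}{\mathbf{0}}$. Then I would instantiate the derived \ruleref{If} rule with $\varphi_1 = \varphi$, $\varphi_2 = \mathbf{0}$, $\psi_1 = \psi$, $\psi_2 = \mathbf{0}$, discharging its four premises via the given hypothesis $\varphi \vDash b$, the given hypothesis $\triple{\varphi}{C_1}{\psi}$, the vacuous $\mathbf{0} \vDash \lnot b$ from (i), and the freshly derived $\triple{\mathbf{0}}{C_2}{\mathbf{0}}$. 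This yields $\triple{\varphi \oplus \mathbf{0}}{\iftf b{C_1}{C_2}}{\psi \oplus \mathbf{0}}$.

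Finally, the rule of \ruleref{Consequence} combined with the semantic equivalences from observation (iii) collapses the pre- and postconditions to $\varphi$ and $\psi$, respectively, producing the desired conclusion. No step presents a significant obstacle; the only subtlety is verifying that the $\oplus\, \mathbf{0}$ identity holds uniformly, including the degenerate case where $\varphi$ or $\psi$ is empty (in which case both sides of the equivalence are also empty, so they still agree). This verification follows directly from unfolding the definition of $\bigoplus$ given in \Cref{fig:astsem}.
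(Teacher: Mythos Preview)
Your proposal is correct and follows essentially the same route as the paper's own derivation: instantiate \ruleref{If} with the given $\varphi$, $\psi$ on the first branch and the ``zero'' assertion $\zero\odot\top$ on the second, obtain the dummy triple for $C_2$ via \ruleref{True} followed by \ruleref{Scale} with $u=\zero$, and then collapse the $\oplus(\zero\odot\top)$ padding with \ruleref{Consequence}. The only cosmetic difference is that you introduce the name $\mathbf{0}$ for $\{\zero\}$ and separately argue $\mathbf{0}=\zero\odot\top$, whereas the paper works with $\zero\odot\top$ directly throughout.
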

\begin{proof}
\[
\inferrule*[right=\ruleref{Consequence}]{
  \inferrule*[Right=\ruleref{If}]{
    \varphi\vDash b
    \\
    \triple{\varphi}{C_1}{\psi}
    \\
    \zero\odot\top \vDash\lnot b
    \\
    \inferrule*[Right=\ruleref{Scale}]{
      \inferrule*[Right=\ruleref{True}]{\;}{
        \triple{\top}{C_2}{\top}
      }
    }{
      \triple{\zero\odot\top}{C_2}{\zero\odot\top}
    }
  }{
    \triple{\varphi \oplus (\zero\odot\top)}{\iftf b{C_1}{C_2}}{\psi\oplus (\zero\odot\top)}
  }
}{
  \triple{\varphi}{\iftf b{C_1}{C_2}}{\psi}
}
\]
\end{proof}

\begin{lemma}
The following inference is derivable:
\[
\inferrule{
  \varphi \vDash \lnot b
  \\
  \triple{\varphi}{C_2}{\psi}
}{
  \triple{\varphi}{\iftf b{C_1}{C_2}}{\psi}
}{\ruleref{If2}}
\]
\end{lemma}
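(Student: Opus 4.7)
The proof will mirror the \ruleref{If1} derivation given just above, but with the two branches swapped. The plan is to apply the general \ruleref{If} rule with $\varphi_1 = \zero\odot\top$ and $\varphi_2 = \varphi$, so that the precondition $\varphi_1\oplus\varphi_2 = (\zero\odot\top)\oplus\varphi$ is logically equivalent to $\varphi$ (since $\zero\odot\top$ contributes only the zero weighting function, which is absorbed by the $\oplus$), and the postcondition $(\zero\odot\top)\oplus\psi$ likewise collapses to $\psi$.

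First I would discharge the side conditions for \ruleref{If}: the assumption $\varphi \vDash \lnot b$ is exactly the second side condition, and $\zero\odot\top \vDash b$ holds because every element of $\zero\odot\top$ is the zero weighting function, over whose (empty) support the entailment $\de{b}(\sigma) = \one$ is vacuous. Next, the $C_1$ sub-derivation is obtained by combining \ruleref{True} with \ruleref{Scale} to get $\triple{\zero\odot\top}{C_1}{\zero\odot\top}$. The $C_2$ sub-derivation is supplied as a hypothesis. Feeding these four premises into \ruleref{If} yields
\[
  \triple{(\zero\odot\top)\oplus\varphi}{\iftf b{C_1}{C_2}}{(\zero\odot\top)\oplus\psi}.
\]

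Finally, an application of \ruleref{Consequence} strengthens the precondition and weakens the postcondition using $\varphi \Rightarrow (\zero\odot\top)\oplus\varphi$ and $(\zero\odot\top)\oplus\psi \Rightarrow \psi$, both of which hold because adding the zero weighting function to any $m$ leaves $m$ unchanged. The resulting derivation tree is
\[
\inferrule*[right=\ruleref{Consequence}]{
  \inferrule*[Right=\ruleref{If}]{
    \zero\odot\top \vDash b
    \\
    \inferrule*[right=\ruleref{Scale}]{
      \inferrule*[Right=\ruleref{True}]{\;}{\triple{\top}{C_1}{\top}}
    }{
      \triple{\zero\odot\top}{C_1}{\zero\odot\top}
    }
    \\
    \varphi\vDash \lnot b
    \\
    \triple{\varphi}{C_2}{\psi}
  }{
    \triple{(\zero\odot\top)\oplus\varphi}{\iftf b{C_1}{C_2}}{(\zero\odot\top)\oplus\psi}
  }
}{
  \triple{\varphi}{\iftf b{C_1}{C_2}}{\psi}
}
\]

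There is no real obstacle: the argument is dual to \ruleref{If1} and the only thing worth double-checking is the implicit assumption in $\zero\odot\top \vDash b$, which relies on the fact that $\varphi\vDash e = u$ holds vacuously whenever every $m\in\varphi$ has empty support; by the summary of entailment rules in \Cref{app:soundcomplete}, $\varphi\odot v \vDash e = u$ whenever $v = \zero$, so this clause is already justified.
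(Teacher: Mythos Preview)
Your derivation is correct and matches the paper's own proof essentially step for step: both instantiate \ruleref{If} with $\varphi_1 = \zero\odot\top$, $\varphi_2 = \varphi$, obtain the $C_1$ premise via \ruleref{True} followed by \ruleref{Scale}, and clean up the $\zero\odot\top$ terms with \ruleref{Consequence}. The only cosmetic difference is that the paper writes the $\oplus$ operands in the order $\varphi \oplus (\zero\odot\top)$ rather than $(\zero\odot\top)\oplus\varphi$, which is immaterial since the underlying semiring addition is commutative.
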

\begin{proof}
\[
\inferrule*[right=\ruleref{Consequence}]{
  \inferrule*[Right=\ruleref{If}]{
    \zero\odot\top \vDash b
    \\
    \inferrule*[right=\ruleref{Scale}]{
      \inferrule*[Right=\ruleref{True}]{\;}{
        \triple{\top}{C_1}{\top}
      }
    }{
      \triple{\zero\odot\top}{C_1}{\zero\odot\top}
    }
    \\
    \varphi\vDash \lnot b
    \\
    \triple{\varphi}{C_2}{\psi}
  }{
    \triple{\varphi \oplus (\zero\odot\top)}{\iftf b{C_1}{C_2}}{\psi\oplus (\zero\odot\top)}
  }
}{
  \triple{\varphi}{\iftf b{C_1}{C_2}}{\psi}
}
\]
\end{proof}

\begin{lemma}[Hoare Logic If Rule]\label{lem:ifhoare}
The following inference is derivable.
\[
\inferrule{
  \triple{\sure{P\land b}}{C_1}{\always Q}
  \\
  \triple{\sure{P\land\lnot b}}{C_2}{\always Q}
}{
  \triple{\sure P}{\iftf b{C_1}{C_2}}{\always Q}
}{\ruleref{If (Hoare)}}
\]
\end{lemma}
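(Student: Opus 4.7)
The plan is to reduce this to the general \ruleref{If} rule by instantiating its two precondition summands as $\varphi_1 = \always(P \land b)$ and $\varphi_2 = \always(P \land \lnot b)$, and then wrapping the result in an application of \ruleref{Consequence}. First, I would upgrade each hypothesis using \Cref{lem:liftbox}, obtaining $\triple{\always(P \land b)}{C_1}{\always Q}$ and $\triple{\always(P \land \lnot b)}{C_2}{\always Q}$. These new preconditions automatically satisfy the side conditions $\varphi_1 \vDash b$ and $\varphi_2 \vDash \lnot b$ required by \ruleref{If}, so a single application of that rule yields
\[
  \triple{\always(P \land b) \oplus \always(P \land \lnot b)}{\iftf b{C_1}{C_2}}{\always Q \oplus \always Q}.
\]

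Next, I would close the derivation with \ruleref{Consequence}, using two semantic implications. For strengthening the precondition, the key fact (noted in the text immediately preceding the rule) is $\sure P \Rightarrow \always(P \land b) \oplus \always(P \land \lnot b)$: given any $m \vDash \sure P$, define $m_1(\sigma) = m(\sigma)$ when $\sigma$ satisfies $b$ and $\zero$ otherwise, and $m_2$ symmetrically for $\lnot b$. Then $m = m_1 + m_2$, $\supp(m_1) \subseteq P \land b$, and $\supp(m_2) \subseteq P \land \lnot b$, witnessing $m$ as an element of the outcome conjunction. For weakening the postcondition, $\always Q \oplus \always Q \Rightarrow \always Q$ holds because if $m = m_1 + m_2$ with each $\supp(m_i) \subseteq Q$, then $\supp(m) \subseteq \supp(m_1) \cup \supp(m_2) \subseteq Q$, so $m \vDash \always Q$.

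Nothing in this plan is delicate; the entire derivation fits in a single proof tree rooted at \ruleref{Consequence}. The only content beyond mechanical rule application is the pointwise split of $m$ by the guard $b$ in the first implication, which is well-defined because $\de{b}(\sigma) + \de{\lnot b}(\sigma)$ is always defined. Notably, no case analysis on whether either branch is vacuously reached is needed---unlike in the Lisbon case (\ruleref{If (Lisbon)})---since $\always$ is absorbent under $\oplus$, which is exactly what makes the postcondition implication a one-liner.
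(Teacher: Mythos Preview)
Your proposal is correct and matches the paper's proof essentially step for step: lift each hypothesis via \Cref{lem:liftbox}, apply \ruleref{If} with $\varphi_1 = \always(P\land b)$ and $\varphi_2 = \always(P\land\lnot b)$, then close with \ruleref{Consequence} using exactly the two implications you identified. The paper presents this as a single proof tree without spelling out the justifications for the two implications, but your explanations of those are accurate.
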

\begin{proof} The derivation is shown below:
\[
\inferrule*[right=\ruleref{Consequence}]{
  \inferrule*[Right=\ruleref{If}]{
    \always (P\land b) \vDash b
    \hspace{-2.5em}
    \inferrule*[Right=\Cref{lem:liftbox},vdots=3em]{
      \triple{\sure{P\land b}}{C_1}{\always Q}
    }{
      \triple{\always(P\land b)}{C_1}{\always Q}
    }
    \hspace{-2.5em}
    \always (P\land \lnot b) \vDash \lnot b
    \\
    \inferrule*[right=\Cref{lem:liftbox}]{
      \triple{\sure{P\land \lnot b}}{C_1}{\always Q}
    }{
      \triple{\always(P\land \lnot b)}{C_2}{\always Q}
    }
  }{
    \triple{\always (P\land b) \oplus \always(P\land\lnot b)}{\iftf b{C_1}{C_2}}{\always Q \oplus \always Q}
  }
}{
  \triple{\sure P}{\iftf b{C_1}{C_2}}{\always Q}
}
\]

\end{proof}


\begin{lemma}[Lisbon Logic If Rule]\label{lem:iflisbon}
The following inference is derivable.
\[
\inferrule{
  \triple{\sure{P\land b}}{C_1}{\sometimes Q}
  \\
  \triple{\sure{P\land\lnot b}}{C_2}{\sometimes Q}
}{
  \triple{\sure P}{\iftf b{C_1}{C_2}}{\sometimes Q}
}{\ruleref{If (Lisbon)}}
\]
\end{lemma}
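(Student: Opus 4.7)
The plan is to derive the rule by a two-way case split on the precondition, captured by the semantic entailment
\[
  \sure P \;\Rightarrow\; \sometimes(P\land b) \;\vee\; \sure{P\land\lnot b}.
\]
To see that this holds, I take any $m\vDash\sure P$: its support is nonempty (since $|m|=\one\neq\zero$) and contained in $P$, so either some $\sigma\in\supp(m)$ satisfies $b$---giving $m\vDash\sometimes(P\land b)$---or else every $\sigma\in\supp(m)$ satisfies $\lnot b$, in which case $\supp(m)\subseteq P\land\lnot b$ and hence $m\vDash\sure{P\land\lnot b}$.

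Given this entailment, I would derive a triple for each disjunct and then combine them. For the first disjunct, I apply \ruleref{If1} to the premise $\triple{\sure{P\land b}}{C_1}{\sometimes Q}$ (the side condition $\sure{P\land b}\vDash b$ is immediate) to obtain $\triple{\sure{P\land b}}{\iftf b{C_1}{C_2}}{\sometimes Q}$, and then lift the precondition to $\sometimes(P\land b)$ via \Cref{lem:liftdiamond}. For the second disjunct, I apply \ruleref{If2} to the premise $\triple{\sure{P\land\lnot b}}{C_2}{\sometimes Q}$ directly, yielding $\triple{\sure{P\land\lnot b}}{\iftf b{C_1}{C_2}}{\sometimes Q}$.

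Combining these two triples with \ruleref{Disj} gives a triple with precondition $\sometimes(P\land b)\vee\sure{P\land\lnot b}$ and postcondition $\sometimes Q\vee\sometimes Q$, and a final \ruleref{Consequence} step closes the derivation, using the entailment above together with the trivial $\sometimes Q\vee\sometimes Q\Rightarrow\sometimes Q$.

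The main obstacle---and what makes this derivation more delicate than its Hoare counterpart---is choosing the correct case split. As the excerpt notes, $\sure P\not\Rightarrow\sometimes(P\land b)\oplus\sometimes(P\land\lnot b)$, so one cannot mimic the \Cref{lem:ifhoare} strategy of routing through \ruleref{If}. The asymmetric disjunction $\sometimes(P\land b)\vee\sure{P\land\lnot b}$ is precisely what is needed: either $b$ is reachable somewhere (so the $C_1$ branch supplies a $\sometimes Q$ witness that survives being added to $\dem{\assume{\lnot b}\fatsemi C_2}{m}$, which is the content of \Cref{lem:liftdiamond}), or the entire configuration already lies within $\lnot b$ and \ruleref{If2} reduces the problem to the $C_2$ branch alone.
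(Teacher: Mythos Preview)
Your derivation is correct and follows the same overall shape as the paper's: a case split on the precondition, \ruleref{If1}/\ruleref{If2} on the two branches, a lift via \Cref{lem:liftdiamond}, \ruleref{Disj} to recombine, and \ruleref{Consequence} to finish.

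The one genuine difference is your choice of case split. You use the asymmetric entailment $\sure P \Rightarrow \sometimes(P\land b)\vee\sure{P\land\lnot b}$ and therefore only need \Cref{lem:liftdiamond} once (on the first disjunct). The paper instead uses the symmetric entailment $\sure P \Rightarrow \sometimes(P\land b)\vee\sometimes(P\land\lnot b)$ and applies \Cref{lem:liftdiamond} to \emph{both} branches. Your closing paragraph suggests the asymmetric split is ``precisely what is needed'', but in fact the symmetric split works just as well: if some $\sigma\in\supp(m)$ lies in $P$ then it lies in $P\land b$ or in $P\land\lnot b$, so $m$ satisfies one of the two $\sometimes$-disjuncts. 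Your version is slightly more economical; the paper's is more uniform. Both are fine.
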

\begin{proof}
The derivation is shown below:
\[
\inferrule*[right=\ruleref{Consequence}]{
  \inferrule*[Right=\ruleref{Disj}]{
    \inferrule*[right=\Cref{lem:liftdiamond}]{
      \inferrule*[Right=\ruleref{If1}]{
        \sure{P\land b} \vDash b
        \\
        \triple{\sure{P\land b}}{C_1}{\sometimes Q}
      }{
        \triple{\sure{P\land b}}{\iftf b{C_1}{C_2}}{\sometimes Q}
      }
    }{
      \triple{\sometimes(P\land b)}{\iftf b{C_1}{C_2}}{\sometimes Q}
    }
    \hspace{-6em}
    \inferrule*[Right=\Cref{lem:liftdiamond},vdots=5em]{
      \inferrule*[Right=\ruleref{If2}]{
        \sure{P\land \lnot b} \vDash \lnot b
        \\
        \triple{\sure{P\land \lnot b}}{C_2}{\sometimes Q}
      }{
        \triple{\sure{P\land\lnot b}}{\iftf b{C_1}{C_2}}{\sometimes Q}
      }
    }{
      \triple{\sometimes(P\land\lnot b)}{\iftf b{C_1}{C_2}}{\sometimes Q}
    }
  }{
    \triple{\sometimes(P\land b) \vee \sometimes(P\land\lnot b)}{\iftf b{C_1}{C_2}}{\sometimes Q \vee \sometimes Q}
  }
}{
  \triple{\sure P}{\iftf b{C_1}{C_2}}{\sometimes Q}
}
\]

\end{proof}

\begin{lemma}[While Rule]
The following inference is derivable.
\[
\inferrule{
  \conv{\psi}
  \quad
  \triple{\varphi_n}{C}{\varphi_{n+1} \oplus \psi_{n+1}}
  \quad \varphi_n \vDash b
  \quad \psi_n\vDash \lnot b
}{
  \triple{\varphi_0 \oplus \psi_0}{\whl bC}{\psi_\infty}
}
{\ruleref{While}}
\]
\end{lemma}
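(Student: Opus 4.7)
I would derive \ruleref{While} directly from \ruleref{Iter}, using the fact that $\whl bC \triangleq \iter C{b}{\lnot b}$. The idea is to fold the two families $(\varphi_n)$ and $(\psi_n)$ of the \ruleref{While} rule into a single family $\varphi'_n \triangleq \varphi_n \oplus \psi_n$ suitable for \ruleref{Iter}, and to take the postcondition family $\psi'_n \triangleq \psi_n$, so that $\conv{\psi'}$ is immediate from the hypothesis $\conv{\psi}$ and $\psi'_\infty = \psi_\infty$.

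\textbf{Main steps.} First, I would establish the assume-step premise $\triple{\varphi_n \oplus \psi_n}{\assume{\lnot b}}{\psi_n}$. Since $\varphi_n \vDash b$ gives $\varphi_n \vDash \lnot b = \zero$ and $\psi_n \vDash \lnot b$ gives $\psi_n \vDash \lnot b = \one$, two applications of \ruleref{Assume} combined via \ruleref{Choice} (instantiated at $T=\{1,2\}$) yield
\[
  \triple{\varphi_n \oplus \psi_n}{\assume{\lnot b}}{(\varphi_n \odot \zero) \oplus (\psi_n \odot \one)}.
\]
An application of \ruleref{Consequence} then simplifies the postcondition to $\psi_n$, using the two easy identities $\psi \odot \one = \psi$ (since $m\cdot\one = m$) and $\varphi \odot \zero \oplus \psi \Rightarrow \psi$ (because any $m \in \varphi\odot\zero$ satisfies $m = \zero$, so $m + m' = m'$).

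Second, I would establish the iterate-step premise $\triple{\varphi_n \oplus \psi_n}{\assume{b}\fatsemi C}{\varphi_{n+1} \oplus \psi_{n+1}}$. The same Choice/Assume argument, but with $b$ in place of $\lnot b$, gives
\[
  \triple{\varphi_n \oplus \psi_n}{\assume{b}}{(\varphi_n \odot \one) \oplus (\psi_n \odot \zero)},
\]
which reduces by \ruleref{Consequence} to $\triple{\varphi_n \oplus \psi_n}{\assume{b}}{\varphi_n}$. Sequencing with the hypothesis $\triple{\varphi_n}{C}{\varphi_{n+1}\oplus\psi_{n+1}}$ via \ruleref{Seq} delivers the required triple.

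Finally, plugging these two derivations into \ruleref{Iter} with $e = b$, $e' = \lnot b$, $\varphi'_n = \varphi_n \oplus \psi_n$, and $\psi'_n = \psi_n$ yields $\triple{\varphi_0 \oplus \psi_0}{\iter{C}{b}{\lnot b}}{\psi_\infty}$, which is $\triple{\varphi_0 \oplus \psi_0}{\whl bC}{\psi_\infty}$ by the syntactic-sugar definition. I do not anticipate a real obstacle here; the only mildly delicate point is justifying the two $\odot \zero$ simplifications under \ruleref{Consequence}, which amounts to noting that scaling a weighting function by $\zero$ yields the zero function and that $\zero$ is the identity of $+$ on $\mathcal W_{\mathcal A}(\Sigma)$.
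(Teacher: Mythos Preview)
Your proposal is correct and follows essentially the same approach as the paper: set $\varphi'_n = \varphi_n \oplus \psi_n$, use two applications of \ruleref{Assume} joined by \ruleref{Choice} and simplified via \ruleref{Consequence} to handle both the $\assume b$ and $\assume{\lnot b}$ obligations, compose the former with the loop-body hypothesis via \ruleref{Seq}, and feed both into \ruleref{Iter}. The paper packages your Choice-plus-Consequence step into a single derived rule it calls \textsc{Split}, but the underlying derivation is identical.
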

\begin{proof}
First, let $\varphi'_n = \varphi_n \oplus \psi_n$. The derivation has two parts. First, part (1):
\[
\inferrule*[right=\ruleref{Seq}]{
  \inferrule*[right=\ruleref{Choice}]{
    \inferrule*[right=\ruleref{Assume}]{
      \varphi_n\vDash b
    }{
      \triple{\varphi_n}{\assume b}{\varphi_n}
    }
    \\
    \inferrule*[Right=\ruleref{Assume}]{
      \psi_n\vDash \lnot b
    }{
      \triple{\psi_n}{\assume b}{{\psi_n}\odot\zero}
    }
  }{
    \triple{\varphi_n\oplus\psi_n}{\assume b}{\varphi_n}
  }
  \\
  \triple{\varphi_n}{C}{\varphi_{n+1} \oplus\psi_{n+1}}
}{
  \triple{\varphi_n\oplus\psi_n}{\assume b\fatsemi C}{\varphi_{n+1} \oplus\psi_{n+1}}
}
\]
Now part (2):
\[
\inferrule*[right=\ruleref{Choice}]{
  \inferrule*[right=\ruleref{Assume}]{
    \varphi_n\vDash b
  }{
    \triple{\varphi_n}{\assume{\lnot b}}{{\varphi_n}\odot\zero}
  }
  \\
  \inferrule*[Right=\ruleref{Assume}]{
    \psi_n\vDash \lnot b
  }{
    \triple{\psi_n}{\assume{\lnot b}}{\psi_n}
  }
}{
  \triple{\varphi_n\oplus\psi_n}{\assume{\lnot b}}{\psi_n}
}
\]
Finally, we complete the derivation as follows:
\[
  \inferrule*[right=\ruleref{Iter}]{
    \forall n\in\mathbb{N}.
    \\
    \inferrule*{(1)}{
      \triple{\varphi_n\oplus\psi_n}{\assume b\fatsemi C}{\varphi_{n+1} \oplus\psi_{n+1}}
    }
    \\
    \inferrule*{(2)}{
      \triple{\varphi_n\oplus\psi_n}{\assume{\lnot b}}{\psi_n}
    }
}{
  \triple{\varphi_0 \oplus \psi_0}{\whl bC}{\psi_\infty}
}
\]
\end{proof}

\subsection{Loop Invariants}

\begin{lemma}[Loop Invariant Rule] The following inference is derivable.
\[
\inferrule{
  \triple{\sure{P\land b}}C{\always P}
}{
  \triple{\sure P}{\whl bC}{\always(P \land \lnot b)}
}
{\ruleref{Invariant}}
\]
\end{lemma}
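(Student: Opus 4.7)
The plan is to derive \ruleref{Invariant} as a direct instantiation of the \ruleref{While} rule, following exactly the hint already given in the paper: take
\[
  \varphi_n \triangleq \always(P\land b),
  \qquad
  \psi_n \triangleq \always(P\land\lnot b),
  \qquad
  \psi_\infty \triangleq \always(P\land\lnot b).
\]
With these constant families the four premises of \ruleref{While} reduce to purely semantic checks plus one application of the hypothesis, and the conclusion of \ruleref{While} is already almost what we want, up to a use of \ruleref{Consequence} on the precondition.

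First I would discharge the two side conditions $\varphi_n \vDash b$ and $\psi_n \vDash \lnot b$: both are immediate from the definition of $\always$, since any $m$ with $\supp(m)\subseteq P\land b$ satisfies $b$ at every state in its support (and symmetrically for $\lnot b$). Next I would verify convergence $\conv\psi$: given a sequence $(m_n)_{n\in\mathbb N}$ with each $m_n\vDash\always(P\land\lnot b)$, we have $\supp\!\bigl(\sum_n m_n\bigr)\subseteq \bigcup_n \supp(m_n) \subseteq P\land\lnot b$, so $\sum_n m_n \vDash \always(P\land\lnot b) = \psi_\infty$.

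The substantive step is the inductive triple $\triple{\always(P\land b)}{C}{\always(P\land b)\oplus\always(P\land\lnot b)}$. I would obtain it in two moves: (i) lift the hypothesis $\triple{\sure{P\land b}}{C}{\always P}$ to $\triple{\always(P\land b)}{C}{\always P}$ using \Cref{lem:liftbox}; (ii) apply \ruleref{Consequence} using the semantic implication
\[
  \always P \;\Rightarrow\; \always(P\land b) \oplus \always(P\land\lnot b),
\]
which holds because any $m$ with $\supp(m)\subseteq P$ can be split as $m = b?m + \lnot b?m$ (the construction used in the soundness proof of \ruleref{Assume}), with the two summands having supports in $P\land b$ and $P\land\lnot b$ respectively.

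Applying \ruleref{While} then yields the triple $\triple{\always(P\land b)\oplus\always(P\land\lnot b)}{\whl bC}{\always(P\land\lnot b)}$. A final use of \ruleref{Consequence} with $\sure P \Rightarrow \always(P\land b)\oplus\always(P\land\lnot b)$ (again by the same splitting $m = b?m + \lnot b?m$) strengthens the precondition to $\sure P$, producing exactly \ruleref{Invariant}. I expect the main obstacle to be bookkeeping around these two splittings into $\always(P\land b)$ and $\always(P\land\lnot b)$: they are conceptually trivial but the paper's conventions mean each requires spelling out the support argument carefully, and one must check that the semantics of $\oplus$ really accepts any such pointwise split (not only disjoint ones), which is the case here since the two halves have disjoint supports by construction.
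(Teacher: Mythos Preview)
Your proposal is correct and matches the paper's own proof essentially step for step: the same constant families $\varphi_n = \always(P\land b)$ and $\psi_n = \psi_\infty = \always(P\land\lnot b)$, the same convergence argument via supports, the lift of the hypothesis via \Cref{lem:liftbox}, and the same two applications of \ruleref{Consequence} using the splitting $\always P \Rightarrow \always(P\land b)\oplus\always(P\land\lnot b)$ and $\sure P \Rightarrow \always(P\land b)\oplus\always(P\land\lnot b)$. The paper presents exactly this derivation tree, so there is nothing to add.
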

\begin{proof}
We will derive this rule using the \ruleref{While} rule.
For all $n$, let $\varphi_n = \always(P\land b)$ and $\psi_n = \always(P\land \lnot b)$. We will now show that $(\psi_n)_{n\in\mathbb N^\infty}$ converges. Suppose that $m_n\vDash \always(P\land \lnot b)$ for each $n\in\mathbb N$. That means that $\supp(m_n) \subseteq P \land \lnot b$. We also have that $\supp(\sum_{n\in\mathbb N} m_n) = \bigcup_{n\in\mathbb N} \supp(m_n)$, and since for each $n\in\mathbb N$, $\supp(m_n) \subseteq P \land \lnot b$, then $\bigcup_{n\in\mathbb N} \supp(m_n) \subseteq P\land\lnot b$, and thus $\sum_{n\in\mathbb N} m_n\vDash\always(P\land\lnot b)$. We remark that $\always(P\land b)\vDash b$ and $\always(P\land\lnot b)\vDash \lnot b$ trivially. We complete the derivation as follows:
\[
\inferrule*[right=\ruleref{Consequence}]{
  \inferrule*[Right=\ruleref{While}]{
    \inferrule*[Right=\ruleref{Consequence}]{
      \inferrule*[Right=\Cref{lem:liftbox}]{
        \triple{\sure{P\land b}}C{\always P}
      }{
        \triple{\always (P\land b)}C{\always P}
      }
    }{
      \triple{\always (P\land b)}C{\always(P\land b) \oplus \always (P\land\lnot b)}
    }
  }{
    \triple{\always (P\land b) \oplus \always(P\land\lnot b)}{\whl bC}{\always(P\land \lnot b)}
  }
}{
  \triple{\sure P}{\whl bC}{\always(P\land\lnot b)}
}
\]
\end{proof}

\subsection{Loop Variants}

\begin{lemma}\label{lem:variant}
The following inference is derivable.
\[
\inferrule{
  \forall n <N.
  \\
  \varphi_0 \vDash \lnot b
  \\
  \varphi_{n+1}\vDash b
  \\
  \triple{\varphi_{n+1}}C{\varphi_n}
}{
  \triple{\exists n:\mathbb N.\varphi_n}{\whl bC}{\varphi_0}
}{\ruleref{Variant}}
\]
\end{lemma}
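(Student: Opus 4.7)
The plan is to derive \ruleref{Variant} from \ruleref{While} by first peeling off the existential quantifier and then constructing families of assertions that encode a countdown from $n$ iterations to termination. Since the postcondition $\varphi_0$ does not depend on $n$, \ruleref{Exists} reduces the goal to deriving, for each fixed $n\in\mathbb N$, the triple $\triple{\varphi_n}{\whl bC}{\varphi_0}$; the existential in the conclusion collapses because $\bigcup_{n\in\mathbb N}\varphi_0 = \varphi_0$.

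For a fixed $n$, I would apply \ruleref{While} with the families
\[
  \varphi'_k \triangleq \begin{cases} \varphi_{n-k} & k < n \\ \wg\fls\zero & k \ge n \end{cases}
  \qquad
  \psi'_k \triangleq \begin{cases} \varphi_0 & k = n \\ \wg\fls\zero & k \ne n \end{cases}
\]
Here $\wg\fls\zero = \{\zero\}$ plays the role of a ``silent slot'' outside the active window. The side conditions $\varphi'_k\vDash b$ and $\psi'_k\vDash\lnot b$ hold either by premise (inside the window) or vacuously (because $\supp(\zero)=\emptyset$). Convergence $\conv{\psi'}$ is immediate: if $m_k\vDash\psi'_k$ for each $k$, then $m_k=\zero$ for $k\ne n$ and $m_n\vDash\varphi_0$, so $\sum_k m_k = m_n\vDash\varphi_0$, giving $\psi'_\infty = \varphi_0$.

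The per-iteration obligations $\triple{\varphi'_k}C{\varphi'_{k+1}\oplus\psi'_{k+1}}$ split into three cases, all handled using \ruleref{Consequence}: (i) for $k<n-1$, apply the premise $\triple{\varphi_{n-k}}C{\varphi_{n-k-1}}$ and weaken via $\varphi_{n-k-1}\Rightarrow\varphi_{n-k-1}\oplus\wg\fls\zero$ (using the decomposition $m = m + \zero$); (ii) for $k=n-1$, apply the premise $\triple{\varphi_1}C{\varphi_0}$ and weaken via $\varphi_0\Rightarrow \wg\fls\zero\oplus\varphi_0$; (iii) for $k\ge n$, derive $\triple{\wg\fls\zero}C{\wg\fls\zero}$ by applying \ruleref{Scale} with scalar $\zero$ to the trivial triple $\triple\top C\top$, then weaken to $\wg\fls\zero\oplus\wg\fls\zero$. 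Applying \ruleref{While} yields $\triple{\varphi_n\oplus\wg\fls\zero}{\whl bC}{\varphi_0}$, and a final \ruleref{Consequence} using $\varphi_n\Rightarrow\varphi_n\oplus\wg\fls\zero$ delivers the fixed-$n$ triple.

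The main obstacle is aligning the indexing so that the ``exit'' event in $\psi'_k$ fires at exactly the step $k=n$ while keeping $\varphi'_k\vDash b$ and $\psi'_k\vDash\lnot b$ at every index. The $\wg\fls\zero$ slots resolve this tension: they are compatible with both $b$ and $\lnot b$ vacuously, and they act as identities under $\oplus$ so the weakening steps in \ruleref{Consequence} are routine. The degenerate case $n=0$ is absorbed by the same scheme, since $\varphi'_0 = \wg\fls\zero$, $\psi'_0 = \varphi_0$, and the precondition $\wg\fls\zero\oplus\varphi_0$ is equivalent to $\varphi_0$, matching the loop exiting immediately because $\varphi_0\vDash\lnot b$.
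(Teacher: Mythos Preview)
Your proposal is correct and takes essentially the same approach as the paper: fix the starting index via \ruleref{Exists}, instantiate \ruleref{While} with a countdown family padded by the singleton $\{\zero\}$ outside the active window, and discharge the silent iterations via \ruleref{Scale} with weight $\zero$ on \ruleref{True}. The only cosmetic differences are that the paper writes the zero assertion as $\zero\odot\top$ rather than $\wg\fls\zero$ (these denote the same set) and collapses your three cases into two by folding $k<n-1$ and $k=n-1$ together.
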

\begin{proof}
For the purpose of applying the \ruleref{While} rule, we define the following for all $n$ and $N$:
\[\arraycolsep=4pt
\varphi'_n = \left\{\begin{array}{lll}
\varphi_{N - n} & \text{if} & n < N \\
\zero\odot\top & \text{if} & n \ge N
\end{array}\right.
\qquad
\psi_n = \left\{\begin{array}{lll}
\varphi_0 & \text{if} & n\in\{N,\infty\} \\ 
\zero\odot\top & \multicolumn{2}{l}{\text{otherwise}}
\end{array}\right.
\]
It is easy to see that $\conv\psi$. Each $\psi_n$ except for $\psi_N$ and $\psi_\infty$ is only satisfied by $\zero$, so taking $(m_n)_{n\in\mathbb N}$ such that $m_n\vDash\psi_n$ for each $n\in\mathbb N$, it must be the case that $\sum_{n\in\mathbb N}m_n = m_N$. By assumption, we know that $m_N \vDash \psi_\infty$ since $\psi_\infty = \psi_N = \varphi_0$. 
We also know that $\varphi'_n\vDash b$ and $\psi_n\vDash\lnot b$ by our assumptions and the fact that $\zero \odot \top\vDash e=u$ for any $e$ and $u$.
There are two cases for the premise of the \ruleref{While} rule (1) where $n<N$ (left) and $n\ge N$ (right).
\[
\inferrule*{
  \inferrule*{
    \forall m<N.\;
    \triple{\varphi_{m+1}}C{\varphi_m}
  }{
    \forall n<N.\;
    \triple{\varphi_{N-n}}C{\varphi_{N-(n+1)}}
  }
}{
  \forall n<N.\;
  \triple{\varphi'_n}{C}{\varphi'_{n+1} \oplus \psi_{n+1}}
}
\qquad\qquad
\inferrule*{
  \inferrule*[Right=\ruleref{Scale}]{
    \inferrule*[Right=\ruleref{True}]{\;}{
      \triple{\top}C{\top}
    }
  }{
    \triple{\zero\odot\top}C{\zero\odot\top}
  }
}{
  \forall n\ge N.\;
  \triple{\varphi'_n}{C}{\varphi'_{n+1} \oplus \psi_{n+1}}
}
\]
Finally, we complete the derivation.
\[
\inferrule*[right=\ruleref{Exists}]{
  \forall N\in\mathbb N.
  \quad
  \inferrule*[Right=\ruleref{While}]{
    \inferrule*{(1)
    }{
      \triple{\varphi'_n}{C}{\varphi'_{n+1} \oplus \psi_{n+1}}
    }
  }{
    \triple{\varphi_N}{\whl bC}{\varphi_0}
  }
}{
  \triple{\exists n:\mathbb N.\varphi_n}{\whl bC}{\varphi_0}
}
\]
\end{proof}

\begin{lemma}[Lisbon Logic Loop Variants]
The following inference is derivable.
\[
\inferrule{
  \forall n\in\mathbb N.
  \quad
  \sure{P_0} \vDash \lnot b
  \quad
  \sure{P_{n+1}}\vDash b
  \quad
  \triple{\sure{P_{n+1}}}C{\sometimes P_n}
}{
  \triple{\exists n:\mathbb N. \sure{P_n}}{\whl bC}{\sometimes P_0}
}{\ruleref{Lisbon Variant}}
\]
\end{lemma}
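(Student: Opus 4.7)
The plan is to derive the rule by applying \ruleref{Iter} (noting that $\whl bC \equiv \iter Cb{\lnot b}$) for each fixed $N\in\mathbb{N}$ to obtain $\triple{\sure{P_N}}{\whl bC}{\sometimes P_0}$, then aggregating over $N$ with \ruleref{Exists}. The guiding intuition is to thread a single witness trace through the loop using $\sometimes$, while absorbing the remaining nondeterminism into $\top$ at every other iteration index.

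For a fixed $N$, I would instantiate the \ruleref{Iter} families as $\varphi_k \triangleq \sometimes P_{N-k}$ for $0 \le k \le N$ and $\varphi_k \triangleq \top$ for $k > N$, together with $\psi_N \triangleq \sometimes P_0$ and $\psi_k \triangleq \top$ otherwise. Convergence $\conv\psi$ is verified as follows: given $m_k \vDash \psi_k$ for each $k$, split $m_N = m_N^1 + m_N^2$ with $m_N^1\vDash \sure{P_0}$ and $m_N^2\vDash\top$; then $\sum_k m_k \vDash \sure{P_0}\oplus\top = \sometimes P_0$, so $\psi_\infty = \sometimes P_0$ is correct.

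The technical core is to establish $\triple{\sometimes P_{N-k}}{\assume b\fatsemi C}{\sometimes P_{N-k-1}}$ for each $0 \le k < N$. I would decompose $\sometimes P_{N-k}$ as $\sure{P_{N-k}} \oplus \top$ via Split (an instance of \ruleref{Choice} with $T=\{1,2\}$): on the $\sure{P_{N-k}}$ summand, $\sure{P_{N-k}}\vDash b$ lets $\assume b$ pass through by \ruleref{Assume}, and the hypothesis yields $\sometimes P_{N-k-1}$; on the $\top$ summand, \ruleref{True} suffices. Recombining produces $\sometimes P_{N-k-1}\oplus\top$, which collapses to $\sometimes P_{N-k-1}$ by \ruleref{Consequence}. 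A parallel split argument gives $\triple{\sometimes P_0}{\assume{\lnot b}}{\sometimes P_0}$ using $\sure{P_0}\vDash\lnot b$; the remaining premises for $k \ge N$ or $k \neq N$ reduce directly to \ruleref{True}.

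The main obstacle is orchestrating the $\varphi_k,\psi_k$ families so that the single useful iteration is isolated at index $k=N$ while all other iterations dissolve into $\top$, with convergence resting on the absorption identity $\sure{P_0}\oplus\top\oplus\top \Rightarrow \sure{P_0}\oplus\top$. Once the \ruleref{Iter} derivation for a fixed $N$ is in place, a final \ruleref{Consequence} step weakens the precondition via $\sure{P_N}\Rightarrow\sometimes P_N$, and \ruleref{Exists} over $N\in\mathbb N$ yields the stated conclusion.
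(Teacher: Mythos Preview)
Your approach is essentially the paper's: fix $N$, apply \ruleref{Iter} with $\varphi_k = \sometimes P_{N-k}$ (and $\top$ past $N$), $\psi_N = \psi_\infty = \sometimes P_0$ (and $\top$ elsewhere), then close with \ruleref{Consequence} and \ruleref{Exists}. The paper does exactly this, except it packages the passage from $\triple{\sure P}C{\sometimes Q}$ to $\triple{\sometimes P}C{\sometimes Q}$ as a standalone lifting lemma rather than inlining it.

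There is one genuine slip in your inlined version. Your decomposition of $\sometimes P$ as $\sure P \oplus \top$ is only correct in Boolean-valued semirings; in general $\sometimes P = \exists u:(U\setminus\{\zero\}).\ \wg Pu \oplus \top$, so the witnessing summand carries an arbitrary nonzero weight $u$, not $\one$. Consequently you cannot apply the premise $\triple{\sure{P_{N-k}}}C{\sometimes P_{N-k-1}}$ directly to that summand via \ruleref{Choice}. The fix is what the paper's lifting lemma does: for each $m\in\sometimes P$ pick some $\sigma_m\in\supp(m)\cap P$, write $m = m(\sigma_m)\cdot\eta(\sigma_m) + (\text{rest})$, and use $m(\sigma_m)\odot\sure P$ as the first summand, invoking \ruleref{Scale} on the premise before \ruleref{Choice}. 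The same imprecision appears in your convergence argument where you take $m_N^1\vDash\sure{P_0}$; the paper instead argues pointwise that any $\sigma\in\supp(m_N)\cap P_0$ still has nonzero weight in $\sum_n m_n$. With this weight-tracking correction your derivation goes through and matches the paper's.
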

\begin{proof}
First, for all $N\in \mathbb N$, let $\varphi_n$ and $\psi_n$ be defined as follows:
\[\arraycolsep=4pt
\varphi_n = \left\{\begin{array}{lll}
\sometimes P_{N-n} & \text{if} & n \le N \\
 \top & \text{if} & n > N
\end{array}\right.
\qquad
\psi_n = \left\{\begin{array}{lll}
\sometimes P_0 & \text{if} & n\in\{N,\infty\} \\ 
\top & \multicolumn{2}{l}{\text{otherwise}}
\end{array}\right.
\]
Now, we prove that $\conv\psi$. Take any $(m_n)_{n\in\mathbb N}$ such that $m_n\vDash\psi_n$ for each $n\in\mathbb N$. Since $m_N\vDash\sometimes P_0$, then there is some $\sigma\in\supp(m_N)$ such that $\sigma\in P_0$. By definition $(\sum_{n\in\mathbb N} m_n)(\sigma)\ge m_N(\sigma) > \zero$, so $\sum_{n\in\mathbb N} m_n\vDash\sometimes P_0$ as well.
We opt to derive this rule with the \ruleref{Iter} rule rather than \ruleref{While} since it is inconvenient to split the assertion into components where $b$ is true and false.
We complete the derivation in two parts, and each part is broken into two cases. We start with (1), and the case where $n < N$. In this case, we know that $\varphi_n = \sometimes P_{N-n}$ and $\varphi_{n+1} = \sometimes P_{N-n-1}$ (even if $n = N-1$, then we get $\varphi_N = \sometimes P_0 = \sometimes P_{N-(N-1) -1}$).
\[\hspace{-4em}
\inferrule*{
  \inferrule*[Right=\Cref{lem:liftdiamond}]{
    \inferrule*[Right=\ruleref{Seq}]{
      \inferrule*[right=\ruleref{Assume}]{
        \sure{P_{N-n}}\vDash b
      }{
        \triple{\sure{P_{N-n}}}{\assume b}{\sure{P_{N-n}}}
      }
      \\
        \triple{\sure{P_{N-n}}}C{\sometimes P_{N-n-1}}
    }{
      \triple{\sure{P_{N-n}}}{\assume b\fatsemi C}{\sometimes P_{N-n-1}}
    }
  }{
    \triple{\sometimes P_{N-n}}{\assume b\fatsemi C}{\sometimes P_{N-n-1}}
  }
}{
  \triple{\varphi_n}{\assume b\fatsemi C}{\varphi_{n+1}}
}
\]
Now, we prove (1) where $n\ge N$, $\varphi_{n+1} = \top$.
\[
\inferrule*{
  \inferrule*[Right=\ruleref{True}]{
  }{
    \triple{\varphi_n}{\assume b\fatsemi C}{\top}
  }
}{
  \triple{\varphi_n}{\assume b\fatsemi C}{\varphi_{n+1}}
}
\]
We now move on to (2) below. On the left, $n = N$, and so $\varphi_n = \sometimes P_0$ and $\psi_n = \sometimes P_0$. On the right, $n \neq N$, so $\psi_n = \top$.
\[
\inferrule*{
  \inferrule*[Right=\Cref{lem:liftdiamond}]{
    \inferrule*[Right=\ruleref{Consequence}]{
      \inferrule*[Right=\ruleref{Assume}]{
        \sure{P_0}\vDash \lnot b
      }{
        \triple{\sure{P_0}}{\assume{\lnot b}}{\sure{P_0}}
      }
    }{
      \triple{\sure{P_0}}{\assume{\lnot b}}{\sometimes P_0}
    }
  }{
    \triple{\sometimes P_0}{\assume{\lnot b}}{\sometimes P_0}
  }
}{
  \triple{\varphi_n}{\assume{\lnot b}}{\psi_n}
}
\qquad\qquad\qquad\qquad
\inferrule*{
  \inferrule*[Right=\ruleref{True}]{\;}{
    \triple{\varphi_n}{\assume{\lnot b}}{\top}
  }
}{
  \triple{\varphi_n}{\assume{\lnot b}}{\psi_n}
}
\]
Finally, we complete the derivation using the \ruleref{Iter} rule.
\[
\inferrule*[right=\ruleref{Exists}]{
  \forall N\in\mathbb N. \quad 
  \inferrule*[Right=\ruleref{Consequence}]{
    \inferrule*{
      \inferrule*[Right=\ruleref{Iter}]{
        \inferrule*{(1)}{
          \triple{\varphi_n}{\assume b\fatsemi C}{\varphi_{n+1}}
        }
        \\
        \inferrule*{(2)}{
          \triple{\varphi_n}{\assume{\lnot b}}{\psi_n}
        }
      }{
        \triple{\varphi_0}{\iter C{b}{\lnot b}}{\psi_\infty}
      }
    }{
      \triple{\sometimes P_N}{\whl bC}{\sometimes P_0}
    }
  }{
      \triple{\sure{P_N}}{\whl bC}{\sometimes P_0}
  }
}{
  \triple{\exists n:\mathbb N.\sure{P_n}}{\whl bC}{\sometimes P_0}
}
\]
\end{proof}

\section{Hyper Hoare Logic}
\label{app:hyperhoare}

In this section, we present the omitted proof related the derived rules in Hyper Hoare Logic from \Cref{sec:hyper}. First, we give the inductive definition of negation for syntactic assertions below:
\begin{align*}
  \lnot(\varphi\land\psi) &\triangleq \lnot\varphi \lor\lnot\psi
  \\
  \lnot(\varphi\lor\psi) &\triangleq \lnot\varphi \land\lnot\psi
  \\
  \lnot(\forall x\colon\mathsf{Val}.\varphi) &\triangleq \exists x\colon\mathsf{Val}.\lnot\varphi
  \\
  \lnot(\exists \langle\sigma\rangle.\varphi) &\triangleq \forall x\colon\mathsf{Val}.\lnot\varphi
  \\
  \lnot(\forall \langle\sigma\rangle.\varphi) &\triangleq \exists \langle\sigma\rangle.\lnot\varphi
  \\
  \lnot(\exists \langle\sigma\rangle.\varphi) &\triangleq \forall \langle\sigma\rangle.\lnot\varphi
  \\
  \lnot(B) &\triangleq \lnot B
\end{align*}

\subsection{Variable Assignment}

For convenience, we repeat the definition of the syntactic variable assignment transformation.
\begin{align*}
  \mathcal A_x^E[\varphi\land \psi] & \triangleq \mathcal A_x^E[\varphi] \land \mathcal A_x^E[\psi]
  \\
  \mathcal A_x^E[\varphi\lor \psi] & \triangleq \mathcal A_x^E[\varphi] \lor \mathcal A_x^E[\psi]
  \\
  \mathcal A_x^E[\forall y: T.\varphi] & \triangleq \forall y: T. \mathcal A_x^E[\varphi]
  \\
  \mathcal A_x^E[\exists y: T.\varphi] & \triangleq  \exists y:\ T. \mathcal A_x^E[\varphi]
  \\
  \mathcal A_x^E[\forall \langle\sigma\rangle.\varphi] & \triangleq  \forall \langle\sigma\rangle. \mathcal A_x^E[\varphi[E[\sigma]/\sigma(x)]]
  \\
  \mathcal A_x^E[\exists \langle\sigma\rangle.\varphi] & \triangleq  \exists \langle\sigma\rangle. \mathcal A_x^E[\varphi[E[\sigma]/\sigma(x)]]
  \\
  \mathcal A_x^E[B] & \triangleq  B
\end{align*}

\begin{lemma}\label{lem:syntactic-subst}
\[
  \mathcal A_x^E[\varphi] = \{ m \in \mathcal W(\mathcal S) \mid \dem{x\coloneqq E}m \in\varphi \}
\]
\end{lemma}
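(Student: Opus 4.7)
The plan is to prove the equality by structural induction on the syntactic assertion $\varphi$. First I unfold the right-hand side using the definition given in Section~\ref{sec:state}: since $\de{x \coloneqq E}(s) = \eta(s[x\mapsto \de{E}_{\mathsf{Exp}}(s)])$, we have $\dem{x\coloneqq E}m = (\lambda s.\eta(s[x\mapsto\de{E}_{\mathsf{Exp}}(s)]))^\dagger(m)$, so the right-hand side of the claim is exactly the semantic substitution $\varphi[E/x]$. The task therefore reduces to showing $\mathcal A_x^E[\varphi] = \varphi[E/x]$ for every syntactic assertion $\varphi$.

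The base case is a hypertest $B$. Because hypertests only reference variables through bound state metavariables $\sigma(x)$, a closed $B$ does not depend on $m$ at all; it is either $\top$ or $\bot$. Hence $B[E/x] = B = \mathcal A_x^E[B]$ trivially. The Boolean and value-quantifier cases are purely routine: the semantic substitution distributes over $\land$, $\lor$, and over quantification on $v\colon T$ (which never interacts with the store update), matching the recursive clauses of $\mathcal A_x^E[-]$ one for one by the induction hypothesis.

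The crux is the state-quantifier cases, which I expect to be the main obstacle because they require exchanging a ``global'' semantic substitution with a per-state syntactic one. For $\forall\langle\sigma\rangle.\psi$, unfolding the semantics and using $\supp(\dem{x\coloneqq E}m) = \{ s[x\mapsto \de{E}_{\mathsf{Exp}}(s)] \mid s\in\supp(m) \}$ gives
\[
  m\in (\forall\langle\sigma\rangle.\psi)[E/x]
  \iff \forall s\in\supp(m).\ \dem{x\coloneqq E}m \in \psi[\,s[x\mapsto \de{E}_{\mathsf{Exp}}(s)]/\sigma].
\]
The right-hand substitution replaces every occurrence of $\sigma(x)$ by $\de{E}_{\mathsf{Exp}}(s)$ while leaving $\sigma(y)$ for $y\neq x$ intact; this is precisely what the purely syntactic rewrite $\psi[E[\sigma]/\sigma(x)]$ accomplishes once $\sigma$ is later bound to $s$. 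So after this textual replacement the body no longer mentions $\sigma(x)$, which means the remaining outer substitution $[E/x]$ agrees with its $\mathcal A_x^E$-transform by the induction hypothesis applied to $\psi[E[\sigma]/\sigma(x)]$. Reassembling the universal quantifier over original states yields $\forall\langle\sigma\rangle.\mathcal A_x^E[\psi[E[\sigma]/\sigma(x)]] = \mathcal A_x^E[\forall\langle\sigma\rangle.\psi]$. The existential case is entirely analogous, trading the universal witness for an existential one and using the same bijection between $\supp(m)$ and $\supp(\dem{x\coloneqq E}m)$. The only delicate bookkeeping is ensuring that the syntactic substitution $[E[\sigma]/\sigma(x)]$ is capture-avoiding with respect to nested state metavariables, which is handled by the standard $\alpha$-renaming convention and so introduces no real difficulty.
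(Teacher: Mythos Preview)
Your proposal is correct and follows essentially the same approach as the paper: structural induction on $\varphi$, with the routine Boolean and value-quantifier cases, and the state-quantifier cases handled by applying the induction hypothesis to $\psi[E[\sigma]/\sigma(x)]$ together with the correspondence between $\supp(m)$ and $\supp(\dem{x\coloneqq E}m)$ via $s \mapsto s[x\mapsto\de{E}_{\mathsf{Exp}}(s)]$. One minor nit: that correspondence is a surjection rather than a bijection in general (distinct $s$ differing only in $x$ may collapse), but surjectivity is all that either quantifier case needs, so the argument goes through unchanged.
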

\begin{proof} By induction on the structure of $\varphi$.
 \begin{itemize}
 \item $\varphi= \varphi_1\land\varphi_2$.
   \begin{align*}
     \mathcal A_x^E[\varphi_1 \land\varphi_2] &= \mathcal A_x^E[\varphi_1] \land \mathcal A_x^E[\varphi_2]
     \\
     \intertext{By the induction hypothesis}
     &=\{ m \in \mathcal W(\mathcal S) \mid \dem{x\coloneqq E}m \in\varphi_1 \} \land \{ m \in \mathcal W(\mathcal S) \mid
     \dem{x\coloneqq E}m \in\varphi_2 \}
     \\
     &=\{ m \in \mathcal W(\mathcal S) \mid \dem{x\coloneqq E}m \in\varphi_1 \land \varphi_2 \}
   \end{align*}
   
 \item $\varphi= \varphi_1\lor\varphi_2$. 
   \begin{align*}
     \mathcal A_x^E[\varphi_1 \lor\varphi_2] &= \mathcal A_x^E[\varphi_1] \lor \mathcal A_x^E[\varphi_2]
     \\
     \intertext{By the induction hypothesis}
     &=\{ m \in \mathcal W(\mathcal S) \mid \dem{x\coloneqq E}m \in\varphi_1 \} \lor \{ m \in \mathcal W(\mathcal S) \mid
     \dem{x\coloneqq E}m \in\varphi_2 \}
     \\
     &=\{ m \in \mathcal W(\mathcal S) \mid \dem{x\coloneqq E}m \in\varphi_1 \lor \varphi_2 \}
   \end{align*}

\item $\varphi = \forall y\colon T.\ \psi$.
  \begin{align*}
    \mathcal{A}_x^E[\forall y\colon T.\ \psi]
    &= \forall y\colon T.\ \mathcal{A}_x^E[\psi]
    \\
    &= \bigcup_{v \in T} \{ m \in \mathcal W(\mathcal S) \mid \dem{x\coloneqq E}m \in\psi[v/y] \}
    \\
    &= \{ m \in \mathcal W(\mathcal S) \mid \dem{x\coloneqq E}m \in \bigcup_{v \in  T}\psi[v/y] \}
    \\
    &= \{ m \in \mathcal W(\mathcal S) \mid \dem{x\coloneqq E}m \in \forall y \colon T.\ \psi \}
 \end{align*}

\item $\varphi = \exists y\colon T.\ \psi$.
  \begin{align*}
    \mathcal{A}_x^E[\exists y\colon T.\ \psi]
    &= \exists y\colon T.\ \mathcal{A}_x^E[\psi]
    \\
    &= \bigcap_{v \in T} \{ m \in \mathcal W(\mathcal S) \mid \dem{x\coloneqq E}m \in\psi[v/y] \}
    \\
    &= \{ m \in \mathcal W(\mathcal S) \mid \dem{x\coloneqq E}m \in \bigcap_{v \in T}\psi[v/y] \}
    \\
    &= \{ m \in \mathcal W(\mathcal S) \mid \dem{x\coloneqq E}m \in \exists y \colon T.\ \psi \}
 \end{align*}
   
 \item $\forall\langle\sigma\rangle.\varphi$.
 \begin{align*}
   \mathcal A_x^E[\forall \langle\sigma\rangle.\varphi]
   &=  \forall \langle\sigma\rangle. \mathcal A_x^E[\varphi[E[\sigma]/\sigma(x)]] \\
   \intertext{By the induction hypothesis.}
   &= \forall \langle\sigma\rangle. \{ m \mid \dem{x\coloneqq E}m \in\varphi[E[\sigma]/\sigma(x)] \}
   \\
   &= \{ m \mid \forall t\in\supp(m).~ \dem{x\coloneqq E}m \in\varphi[E[t]/\sigma(x)][t/\sigma] \}
  \intertext{Since $m$ and $\dem{x\coloneqq E}m$ differ only in the values of $x$, we can instead quantify over the post-states, which are updated such that $t(x) = E[t]$.}
   &=  \{ m \mid \dem{x\coloneqq E}m \in
     \{ m' \mid \forall t\in\supp(m').\ m' \in \varphi[t/\sigma]
   \}
   \\
   &=  \{ m \mid \dem{x\coloneqq E}m \in \forall\langle\sigma\rangle.\varphi \}
 \end{align*}
 \end{itemize}
\end{proof}

\subsection{Nondeterministic Assignment}
\label{app:havoc}

\begin{align*}
  \mathcal H^S_x[\varphi\land \psi] & \triangleq \mathcal H^S_x[\varphi] \land \mathcal H^S_x[\psi]
  \\
  \mathcal H^S_x[\varphi\lor \psi] & \triangleq \mathcal H^S_x[\varphi] \lor \mathcal H^S_x[\psi]
  \\
  \mathcal H^S_x[\forall y\colon T.\varphi] & \triangleq \forall y\colon T. \mathcal H^S_x[\varphi]
  \\
  \mathcal H^S_x[\exists y\colon T.\varphi] & \triangleq  \exists y\colon T. \mathcal H^S_x[\varphi]
  \\
  \mathcal H^S_x[\forall \langle\sigma\rangle.\varphi] & \triangleq  \forall \langle\sigma\rangle. \
    \forall v\colon S.\
    \mathcal H^S_x[\varphi[v/\sigma(x)]]
  \\
  \mathcal H^S_x[\exists \langle\sigma\rangle.\varphi] & \triangleq  \exists \langle\sigma\rangle. \
    \exists v\colon S.\
    \mathcal H^S_x[\varphi[v/\sigma(x)]]
  \\
  \mathcal H^S_x[B] & \triangleq  B
\end{align*}

\begin{lemma}
If $m \in \mathcal H^S_x[\varphi]$, then $\bigoplus_{v\in S} \ind{\dem{x\coloneqq v}m} \Rightarrow \varphi$
%
\end{lemma}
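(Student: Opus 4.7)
The plan is to first unpack the claim: since $\ind{\dem{x\coloneqq v}m}$ is the singleton $\{\dem{x\coloneqq v}m\}$, the outcome conjunction $\bigoplus_{v\in S}\ind{\dem{x\coloneqq v}m}$ is itself the singleton $\{\sum_{v\in S}\dem{x\coloneqq v}m\}$, and the implication $\Rightarrow$ between assertions is subset inclusion. Thus the goal reduces to showing $\sum_{v\in S}\dem{x\coloneqq v}m \in \varphi$ under the assumption $m \in \mathcal H^S_x[\varphi]$. I will proceed by structural induction on $\varphi$.

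The cases for $\land$, $\lor$, and the value quantifiers $\forall v\colon T$ and $\exists v\colon T$ are routine: the $\mathcal H^S_x[-]$ operation commutes with each of these connectives, and the conclusion follows from the induction hypothesis together with the pointwise definitions in \Cref{fig:astsem}. Closed hypertests $B$ are trivial since $\mathcal H^S_x[B] = B$ and hypertests do not refer to the weighting function at all.

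The heart of the argument lies in the two state-quantifier cases, which I will handle analogously to the corresponding cases of \Cref{lem:syntactic-subst}. For $\varphi = \forall\langle\sigma\rangle.\psi$, the hypothesis unfolds to: for every $s\in\supp(m)$ and every $v\in S$, $m \in \mathcal H^S_x[\psi[v/\sigma(x)]][s/\sigma]$. The goal is that for every $t\in\supp(\sum_{v\in S}\dem{x\coloneqq v}m)$, $\sum_{v\in S}\dem{x\coloneqq v}m \in \psi[t/\sigma]$. Since $\supp(\dem{x\coloneqq v}m) = \{s[x\mapsto v] \mid s\in\supp(m)\}$, every such $t$ decomposes as $t = s[x\mapsto v]$ for some $s\in\supp(m)$ and $v\in S$. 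Applying the induction hypothesis to the formula $\psi[v/\sigma(x)][s/\sigma]$ (which is structurally smaller than $\varphi$, though it may have grown in size due to substitution---this is fine because the induction is on the assertion tree of the original $\varphi$, or equivalently on a well-founded measure that ignores closed hyperexpressions produced by substitution) yields $\sum_{v'\in S}\dem{x\coloneqq v'}m \in \psi[v/\sigma(x)][s/\sigma]$. The existential case is dual, picking the witnessing $s$ and $v$ from the single state $t = s[x\mapsto v]$ supplied by the hypothesis.

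The main obstacle will be justifying the semantic equivalence $\psi[v/\sigma(x)][s/\sigma] \equiv \psi[s[x\mapsto v]/\sigma]$, which is the key step that closes the state-quantifier cases. This is a standard substitution-composition fact for hyperassertions: once $\sigma(x)$ has been replaced by the literal value $v$, substituting the concrete state $s$ for the remaining occurrences of $\sigma$ yields exactly the same assertion as directly substituting the updated state $s[x\mapsto v]$ for $\sigma$. I would prove this by a small auxiliary induction on $\psi$, and to keep the induction on $\varphi$ well-founded I will measure assertions by a notion that is unaffected by substitution of values for expressions (for instance, counting only connectives and quantifiers). With that substitution lemma in hand, the two state-quantifier cases close cleanly, completing the proof.
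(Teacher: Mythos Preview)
Your proposal is correct and follows essentially the same approach as the paper: structural induction on $\varphi$, with the state-quantifier cases resting on the substitution identity $\psi[v/\sigma(x)][s/\sigma] = \psi[s[x\mapsto v]/\sigma]$ and the observation that $\supp(m') = \{s[x\mapsto v] \mid s\in\supp(m),\ v\in S\}$. You are in fact more careful than the paper in two places: you explicitly flag the well-foundedness concern (that the induction hypothesis is applied to $\psi$ after substitution) and propose a connective-counting measure to resolve it, and you isolate the substitution identity as an auxiliary lemma rather than asserting it in passing.
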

\begin{proof}
Suppose that $m' \vDash\bigoplus_{v\in S}\ind{\dem{x\coloneqq v}m}$. The proof proceeds by induction on the structure of $\varphi$.
\begin{itemize}

\item $\varphi = \varphi_1\land\varphi_2$. We know that $m\in \mathcal H^S_x[\varphi_i]$ for each $i\in\{1,2\}$, so by the induction hypothesis $m'\vDash\varphi_i$ and therefore $m'\vDash \varphi_1\land\varphi_2$.

\item $\varphi = \varphi_1\lor\varphi_2$. Without loss of generality, suppose that $m\in \mathcal H^S_x[\varphi_1]$, so by the induction hypothesis $m'\vDash \varphi_1$ and therefore we can weaken the assertion to conclude that $m'\vDash \varphi_1\lor\varphi_2$.

\item $\varphi = \forall y : T.\ \psi$. We know that $m \in \mathcal H^S_x[\psi[v/y]]$ for all $v\in T$, therefore by the induction hypothesis, $m' \vDash\varphi[v/y]$ for all $v\in T$. This means that $m' \vDash \forall y\colon T.\ \varphi$.

\item $\varphi = \exists y : T.\ \psi$. We know that $m \in \mathcal H^S_x[\psi[v/y]]$ for some $v\in T$, therefore by the induction hypothesis, $m' \vDash\varphi[v/y]$. This means that $m' \vDash \exists y\colon T.\ \varphi$.

\item $\varphi = \forall\langle\sigma\rangle.\ \psi$. We know that $m\in \mathcal H_x^S[\psi[v/\sigma(x)][s/\sigma]]$ for all $s\in\supp(m)$ and $v\in S$. Therefore, by the induction hypothesis, $m'\vDash \psi[v/\sigma(x)][s/\sigma]$. Now, note that  $\psi[v/\sigma(x)][s/\sigma] = \psi[s[x\coloneqq v] / \sigma]$, so clearly $m'\vDash \psi[s[x\coloneqq v] / \sigma]$ for each $s \in \supp(m)$ and $v\in S$. In addition, $\supp(m') = \{ s[x \coloneqq v] \mid s\in \supp(m), v\in S\}$, and so $m' \vDash \psi[t/\sigma]$ for each $t\in \supp(m')$, therefore $m'\vDash \forall\langle\sigma\rangle.\ \psi$.

\item $\varphi = \exists\langle\sigma\rangle.\ \psi$. We know that $m\in \mathcal H_x^S[\psi[v/\sigma(x)][s/\sigma]]$ for some $s\in\supp(m)$ and $v\in S$. Therefore, by the induction hypothesis, $m'\vDash \psi[v/\sigma(x)][s/\sigma]$. Now, note that  $\psi[v/\sigma(x)][s/\sigma] = \psi[s[x\coloneqq v] / \sigma]$, so clearly $m'\vDash \psi[s[x\coloneqq v] / \sigma]$ for some $s \in \supp(m)$ and $v\in S$. In addition, $\supp(m') = \{ s[x \coloneqq v] \mid s\in \supp(m), v\in S\}$, and so $m' \vDash \psi[t/\sigma]$ for some $t\in \supp(m')$, therefore $m'\vDash \exists\langle\sigma\rangle.\ \psi$.

\item $\varphi = B$. If $m \in \mathcal H^S_x[B] = B$, then $B$ must be a tautology, so it must be the case that $m'\vDash B$.


\end{itemize}
\end{proof}

\begin{lemma}
The following rule is derivable
\[
  \inferrule{\;}{
    \triple{\mathcal H_x^{\{a, b\}}[\varphi]}{(x\coloneqq a) + (x\coloneqq b)}{\varphi}
  }
\]
\end{lemma}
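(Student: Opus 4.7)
The plan is to derive the \ruleref{Havoc-2} rule by existentially quantifying over elements of the precondition, applying \ruleref{Assign} and \ruleref{Plus} pointwise, and then using the preceding lemma to collapse the resulting outcome conjunction back to $\varphi$. The key observation is that for a \emph{fixed} $m \in \mathcal{H}^{\{a,b\}}_x[\varphi]$, running the two deterministic branches yields the two concrete weighting functions $\dem{x \coloneqq a}m$ and $\dem{x \coloneqq b}m$, and the preceding lemma guarantees that their sum satisfies $\varphi$.

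First, for each $m \in \mathcal{H}^{\{a,b\}}_x[\varphi]$, I would derive the two branch specifications
\[
  \triple{\ind m}{x \coloneqq a}{\ind{\dem{x \coloneqq a}m}}
  \qquad
  \triple{\ind m}{x \coloneqq b}{\ind{\dem{x \coloneqq b}m}}
\]
from \ruleref{Assign}. Unfolding the semantic definition of substitution, one has $\ind{\dem{x \coloneqq E}m}[E/x] = \{ m' \mid \dem{x\coloneqq E}{m'} = \dem{x\coloneqq E}{m}\}$, which trivially contains $m$, so the implication $\ind m \Rightarrow \ind{\dem{x \coloneqq E}m}[E/x]$ holds and \ruleref{Consequence} finishes the derivation of each branch.

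Next I apply \ruleref{Plus} to combine the two triples into
\[
  \triple{\ind m}{(x \coloneqq a) + (x \coloneqq b)}{\ind{\dem{x \coloneqq a}m} \oplus \ind{\dem{x \coloneqq b}m}}
\]
The preceding lemma (instantiated at $S = \{a, b\}$) says precisely that whenever $m \in \mathcal{H}^{\{a,b\}}_x[\varphi]$, the postcondition above entails $\varphi$. A single application of \ruleref{Consequence} weakens the postcondition to $\varphi$, yielding $\triple{\ind m}{(x \coloneqq a) + (x \coloneqq b)}{\varphi}$.

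Finally, I apply \ruleref{Exists} quantified over $m \in \mathcal{H}^{\{a,b\}}_x[\varphi]$. The precondition becomes $\exists m : \mathcal{H}^{\{a,b\}}_x[\varphi].\ \ind m$, which by definition unfolds to $\mathcal{H}^{\{a,b\}}_x[\varphi]$ itself, and the postcondition becomes $\exists m : \mathcal{H}^{\{a,b\}}_x[\varphi].\ \varphi$, which equals $\varphi$ whenever $\mathcal{H}^{\{a,b\}}_x[\varphi]$ is nonempty. The only subtle point—and the nearest thing to an obstacle—is the degenerate case where $\mathcal{H}^{\{a,b\}}_x[\varphi]$ is empty; then the precondition is $\bot$ and the derivation is completed by \ruleref{False} instead. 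The rest is essentially a bookkeeping exercise analogous to the $C_1 + C_2$ case in the proof of \Cref{lem:completeness}.
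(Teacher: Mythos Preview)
Your proposal is correct and follows essentially the same route as the paper: for each $m\in\mathcal H^{\{a,b\}}_x[\varphi]$, use \ruleref{Assign} on each branch, combine with \ruleref{Plus}, weaken the postcondition to $\varphi$ via \ruleref{Consequence} and the preceding lemma, then close with \ruleref{Exists}. Your explicit handling of the empty-precondition edge case and the justification that $\ind m \Rightarrow \ind{\dem{x\coloneqq E}m}[E/x]$ are details the paper leaves implicit, but the overall structure is identical.
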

\begin{proof}
\[
\inferrule*[right=\ruleref{Exists}]{
  \forall m\in\mathcal H^{\{a,b\}}_x[\varphi].
  \inferrule*[Right=\ruleref{Consequence}]{
    \inferrule*[Right=\ruleref{Plus}]{
      \inferrule*[right=\ruleref{Assign}]{\;}{
        \triple{\ind m}{x\coloneqq a}{\ind{\dem{x = a}m}}
      }
      \\
      \inferrule*[Right=\ruleref{Assign}]{\;}{
        \triple{\ind m}{x\coloneqq b}{\ind{\dem{x = b}m}}
      }
    }{
      \triple{\ind m}{(x\coloneqq a)+(x\coloneqq b)}{\textstyle\bigoplus_{v\in\{a, b\}} \ind{\dem{x = v}m}}
    }
  }{
    \triple{\ind m}{(x\coloneqq a)+(x\coloneqq b)}{\varphi}
  }
}{
  \triple{\mathcal H_x^{\{a,b\}}[\varphi]}{(x\coloneqq a)+(x\coloneqq b)}{\varphi}
}
\]
\end{proof}

\begin{lemma}
The following rule is derivable
\[
  \inferrule{\;}{
    \triple{\mathcal H_x^{\mathbb N}[\varphi]}{x \coloneqq \bigstar}{\varphi}
  }
\]
\end{lemma}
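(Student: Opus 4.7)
The plan is to unfold $x \coloneqq \bigstar$ as $x \coloneqq 0 \fatsemi (x \coloneqq x+1)^\star = x \coloneqq 0 \fatsemi \iter{(x \coloneqq x+1)}{\one}{\one}$, derive a triple whose postcondition is the outcome conjunction $\bigoplus_{n \in \mathbb N} \ind{\dem{x \coloneqq n}m}$ for an arbitrary $m$ satisfying the precondition, and then close the gap to $\varphi$ using the preceding lemma about $\mathcal H_x^S[-]$ together with \ruleref{Consequence} and \ruleref{Exists} (exactly as in the \ruleref{Havoc-2} derivation above).

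Concretely, I first apply \ruleref{Exists} to quantify over $m \in \mathcal H_x^{\mathbb N}[\varphi]$, reducing the goal to deriving, for each such $m$, the triple $\triple{\ind m}{x \coloneqq \bigstar}{\bigoplus_{n \in \mathbb N} \ind{\dem{x\coloneqq n}m}}$. I use \ruleref{Seq} to split this into a step for $x \coloneqq 0$ (handled by \ruleref{Assign}, yielding the intermediate assertion $\ind{\dem{x \coloneqq 0}m}$) and a step for the Kleene star $(x \coloneqq x+1)^\star$. For the star, I apply \ruleref{Iter} with
\[
  \varphi_n \triangleq \ind{\dem{x\coloneqq n}m},
  \qquad
  \psi_n \triangleq \ind{\dem{x\coloneqq n}m},
  \qquad
  \psi_\infty \triangleq \bigoplus_{n\in\mathbb N}\ind{\dem{x\coloneqq n}m}.
\]
The two premises of \ruleref{Iter} specialize to $\triple{\varphi_n}{\assume{\one} \fatsemi (x \coloneqq x+1)}{\varphi_{n+1}}$ (which follows from \ruleref{Assume} with $\one$, since scaling by $\one$ is identity, followed by \ruleref{Assign}, because $\dem{x \coloneqq x+1}{\dem{x\coloneqq n}m} = \dem{x\coloneqq n+1}m$) and $\triple{\varphi_n}{\assume{\one}}{\psi_n}$ (immediate from \ruleref{Assume}). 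Convergence $\conv\psi$ is immediate from the definition: given $m_n \vDash \psi_n$ for all $n$, each $m_n$ is exactly $\dem{x\coloneqq n}m$, so $\sum_{n\in\mathbb N} m_n = \sum_{n\in\mathbb N} \dem{x\coloneqq n}m \vDash \psi_\infty$ by definition of the outcome conjunction.

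Finally, I apply the preceding lemma, which states that $m \in \mathcal H_x^{\mathbb N}[\varphi]$ implies $\bigoplus_{v \in \mathbb N} \ind{\dem{x\coloneqq v}m} \Rightarrow \varphi$, so \ruleref{Consequence} weakens the postcondition to $\varphi$, completing the subderivation; wrapping this inside \ruleref{Exists} over $m \in \mathcal H_x^{\mathbb N}[\varphi]$ yields the final rule exactly as in the \ruleref{Havoc-2} proof.

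The only subtle step is the \ruleref{Iter} premise that increments through all $n \in \mathbb N$: here I rely on the fact that the iteration weights $e = e' = \one$ leave states unchanged, so the $\assume \one$ contributes nothing and the indexing of $\varphi_n$ faithfully tracks the number of unrollings. The main obstacle is therefore merely bookkeeping---matching the syntactic shape of $(x \coloneqq x+1)^\star$ to \ruleref{Iter} and verifying convergence---since all the semantic content has already been packaged into the prior lemma about $\mathcal H_x^S[-]$.
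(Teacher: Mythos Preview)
Your proposal is correct and matches the paper's proof essentially line for line: the same unfolding of $x\coloneqq\bigstar$, the same assertion families $\varphi_n=\psi_n=\ind{\dem{x\coloneqq n}m}$ with $\psi_\infty=\bigoplus_{k\in\mathbb N}\ind{\dem{x\coloneqq k}m}$, and the same outer wrapping with \ruleref{Exists}, \ruleref{Seq}, \ruleref{Assign}, and \ruleref{Consequence} invoking the preceding lemma on $\mathcal H_x^S[-]$.
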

\begin{proof}
Recall that $x\coloneqq \bigstar$ is syntactic sugar for $x \coloneqq 0\fatsemi (x\coloneqq x+1)^\star$ and $(x\coloneqq x+1)^\star$ is syntactic sugar for $\iter{(x\coloneqq x+1)}\one\one$.
We will derive this rule using the \ruleref{Iter} rule. The first step is to select the assertion families:
\[
  \varphi_n \triangleq \psi_n \triangleq \ind{\dem{x\coloneqq n}m}
  \qquad
  \psi_\infty \triangleq \bigoplus_{k\in\mathbb N} \ind{\dem{x\coloneqq k}m}
\]
So clearly $\conv\psi$. We now complete derivation $(1)$ below:
\[
  \inferrule*[right=\ruleref{Iter}]{
    \inferrule*[Right=\ruleref{Seq}]{
      \inferrule*[right=\ruleref{Assume}]{
        \ind{\dem{x\coloneqq n}m} \vDash \one=\one
      }{
        \triple{\ind{\dem{x\coloneqq n}m}}{\assume\one}{\ind{\dem{x\coloneqq n}m}}
      }
      \quad
      \inferrule*[Right=\ruleref{Assign}]{\;}{
        \triple{\ind{\dem{x\coloneqq n}m}}{x\coloneqq x+1}{\ind{\dem{x\coloneqq n+1}m}}
      }
    }{
      \triple{\ind{\dem{x\coloneqq n}m}}{\assume\one \fatsemi x\coloneqq x+1}{\ind{\dem{x\coloneqq n+1}m}}
    }
  }{
    \triple{\ind{\dem{x\coloneqq n}m}}{(x\coloneqq x+1)^\star}{\textstyle\bigoplus_{k\in\mathbb N}\ind{\dem{x\coloneqq k}m}}
  }
\]
Using $(1)$ above, we complete the derivation as follows:
\[
\inferrule*[right=\ruleref{Exists}]{
  \forall m\in\mathcal H^{\mathbb N}_x[\varphi].
  \quad
  \inferrule*[Right=\ruleref{Consequence}]{
    \inferrule*[Right=\ruleref{Seq}]{
      \inferrule*[right=\ruleref{Assign}]{\;}{
        \triple{\ind m}{x\coloneqq 0}{\ind{\dem{x\coloneqq 0}m}}
      }
      \\
      (1)
    }{
      \triple{\ind m}{x\coloneqq\bigstar}{\textstyle\bigoplus_{v\in\mathbb N} \ind{\dem{x = v}m}}
    }
  }{
    \triple{\ind m}{x\coloneqq\bigstar}{\varphi}
  }
}{
  \triple{\mathcal H_x^{\mathbb N}[\varphi]}{x\coloneqq\bigstar}{\varphi}
}
\]

\end{proof}

\subsection{Assume}

\begin{align*}
  \Pi_b[\varphi\land \psi] & \triangleq \Pi_b[\varphi] \land \Pi_b[\psi]
  \\
  \Pi_b[\varphi\lor \psi] & \triangleq \Pi_b[\varphi] \lor \Pi_b[\psi]
  \\
  \Pi_b[\forall x\colon T.\varphi] & \triangleq \forall x\colon T. \Pi_b[\varphi]
  \\
  \Pi_b[\exists x\colon T.\varphi] & \triangleq  \exists x\colon T. \Pi_b[\varphi]
  \\
  \Pi_b[\forall \langle\sigma\rangle.\varphi] & \triangleq  \forall \langle\sigma\rangle. \
    \lnot b[\sigma] \vee \Pi_b[\varphi]
  \\
  \Pi_b[\exists \langle\sigma\rangle.\varphi] & \triangleq  \exists \langle\sigma\rangle. \
    b[\sigma] \land
    \Pi_b[\varphi]
  \\
  \Pi_b[B] & \triangleq  B
\end{align*}

\begin{lemma}\label{lem:syntactic-assume}
For any assertion $\varphi$ (from the syntax in \Cref{sec:hyper}) and test $b$, $\vdash\triple{\Pi_b[\varphi]}{\assume b}{\varphi}$
\end{lemma}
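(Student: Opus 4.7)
The plan is to combine a structural induction on $\varphi$ with a derivation skeleton that reduces the desired triple to a singleton case via \ruleref{Exists}, then applies \ruleref{Consequence} using a semantic lemma about how $\Pi_b$ interacts with $\dem{\assume b}{-}$.

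First I would establish the purely derivational fact that for every $m \in \mathcal W(\Sigma)$,
\[
  \vdash \triple{\ind m}{\assume b}{\ind{\dem{\assume b}m}}.
\]
To see this, let $m_b$ and $m_{\lnot b}$ be the restrictions of $m$ to states where $b$ holds (respectively fails), so $m = m_b + m_{\lnot b}$ and $\ind m \Leftrightarrow \ind{m_b} \oplus \ind{m_{\lnot b}}$. Since $\ind{m_b}\vDash b = \one$ and $\ind{m_{\lnot b}}\vDash b = \zero$, the same binary \ruleref{Choice}-plus-\ruleref{Assume} pattern used in the proofs of \ruleref{If} and \ruleref{While} yields the triple $\triple{\ind{m_b}\oplus\ind{m_{\lnot b}}}{\assume b}{(\ind{m_b}\odot\one)\oplus(\ind{m_{\lnot b}}\odot\zero)}$. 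The postcondition collapses to $\ind{m_b}$ by the semiring laws, and $m_b = \dem{\assume b}m$ by the denotational semantics of $\assume b$.

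The heart of the proof is then the semantic claim: if $m \in \Pi_b[\varphi]$, then $\dem{\assume b}m \in \varphi$. I would prove this by induction on the grammar of $\varphi$. The cases $\varphi_1\land\varphi_2$, $\varphi_1\lor\varphi_2$, $\forall v\colon T.\psi$, and $\exists v\colon T.\psi$ are routine propagations of the IH, since $\Pi_b$ commutes with these connectives. The hypertest case $\Pi_b[B] = B$ is immediate: a closed $B$ evaluates to a Boolean independently of $m$, and open $B$ is vacuous by convention. The interesting cases are the state quantifiers. For $\forall\langle\sigma\rangle.\psi$, unfolding gives that for every $s\in\supp(m)$, either $b$ fails at $s$ or $m \in \Pi_b[\psi[s/\sigma]]$ (using that syntactic substitution commutes past $\Pi_b$). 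Since $\supp(\dem{\assume b}m) = \{ s\in\supp(m)\mid b\text{ holds at }s\}$, every $s$ in this smaller support lands in the second disjunct, so by IH $\dem{\assume b}m \in \psi[s/\sigma]$, whence $\dem{\assume b}m \in \forall\langle\sigma\rangle.\psi$. The existential case is dual, witnessed by the same $s$ that satisfies both $b[\sigma]$ and $\Pi_b[\psi]$ in the hypothesis.

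Combining the pieces, the final derivation is
\[
  \inferrule*[right=\ruleref{Exists}]{
    \forall m\in\Pi_b[\varphi].
    \quad
    \inferrule*[Right=\ruleref{Consequence}]{
      \triple{\ind m}{\assume b}{\ind{\dem{\assume b}m}}
      \\
      \ind{\dem{\assume b}m} \Rightarrow \varphi
    }{
      \triple{\ind m}{\assume b}{\varphi}
    }
  }{
    \triple{\Pi_b[\varphi]}{\assume b}{\varphi}
  }
\]
after noting $\Pi_b[\varphi] \Leftrightarrow \exists m\colon\Pi_b[\varphi].\ind m$. The main obstacle will be the commutation step in the state-quantifier cases: verifying that the ``push $[s/\sigma]$ inside $\Pi_b$'' rearrangement is semantically valid. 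This is a small but fiddly syntactic bookkeeping lemma, since $\Pi_b$ only inserts the guard $b[\sigma]$ under state quantifiers and this guard is built from $\sigma$'s bound variable references, so no variable capture occurs; all other clauses of $\Pi_b$ commute with substitution trivially.
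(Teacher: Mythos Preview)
Your proposal is correct, and the core inductive argument is essentially the same as the paper's, but the packaging differs in an instructive way.

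The paper proves the semantic implication $\Pi_b[\varphi] \Rightarrow (\varphi\land\always b)\oplus\always(\lnot b)$ by induction on $\varphi$, then discharges the triple in a single step: apply \ruleref{Consequence} to strengthen the precondition, then use \ruleref{Choice} with two instances of \ruleref{Assume} (on $\varphi\land\always b$, where $b=\one$, and on $\always(\lnot b)$, where $b=\zero$). No \ruleref{Exists} or passage through singletons is needed. Your semantic lemma ``$m\in\Pi_b[\varphi]$ implies $\dem{\assume b}m\in\varphi$'' is equivalent to the paper's (the $b$/$\lnot b$ decomposition of $m$ is unique, so $(\varphi\land\always b)\oplus\always(\lnot b)$ holds iff the $b$-part of $m$ satisfies $\varphi$), but by phrasing it in terms of $\dem{\assume b}m$ you are forced into the strongest-postcondition pattern: quantify over $m$, derive the singleton triple, then \ruleref{Exists}. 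That works, but it is one layer of indirection the paper avoids.

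On your flagged obstacle: the commutation $\Pi_b[\psi][s/\sigma] = \Pi_b[\psi[s/\sigma]]$ is indeed needed, and the paper's proof silently uses it too (its state-quantifier cases write $\Pi_b[\varphi[s/\sigma]]$ immediately after unfolding). So you are not missing anything the paper supplies; you have simply been more explicit about a step both arguments require.
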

\begin{proof}
We first establish that for any $m$, if $m\vDash\Pi_b[\varphi]$, then $m\vDash (\varphi\land \always b) \oplus \always(\lnot b)$ by induction on the structure of $\varphi$.
\begin{itemize}

\item $\varphi = \varphi_1\land\varphi_2$. We know that $m\vDash \Pi_b[\varphi_i]$ for each $i\in\{1,2\}$. By the induction hypothesis, we get that $m\vDash(\varphi_i\land\always b) \oplus \always(\lnot b)$, so $m\vDash ((\varphi_1\land\always b) \oplus \always(\lnot b)) \land ((\varphi_2\land\always b) \oplus \always(\lnot b))$. Given that $b$ and $\lnot b$ are disjoint, we can simplify this to $m\vDash (\varphi_1\land \varphi_2\land\always b) \oplus \always(\lnot b)$.

\item $\varphi = \varphi_1\lor\varphi_2$. Without loss of generality, suppose $m\vDash \Pi_b[\varphi_1]$. By the induction hypothesis, we get that $m\vDash(\varphi_1\land\always b) \oplus \always(\lnot b)$. We can weaken this to $m\vDash((\varphi_1\lor\varphi_2)\land\always b) \oplus \always(\lnot b)$.

\item $\varphi = \forall x\colon T.\varphi$. Similar to the case for conjunctions above.
\item $\varphi = \exists x\colon T.\varphi$. Similar to the case for disjunctions above.

\item $\varphi = \forall\langle\sigma\rangle.\varphi$. We know that $m \vDash \lnot b[s] \lor \Pi_b[\varphi[s/\sigma]]$ for all $s\in\supp(m)$. So, for each $s$, we know that either $\de{b}_{\mathsf{Test}}(s) = \fls$, of $m\vDash \Pi_b[\varphi[s/\sigma]]$, in which case $m\vDash (\varphi[s/\sigma] \land\always b) \oplus (\always(\lnot b))$ by the induction hypothesis. Since this is true for all $s$, we get $m\vDash ((\forall\langle\sigma\rangle.\varphi) \land\always b) \oplus (\always(\lnot b))$.

\item $\varphi = \exists\langle\sigma\rangle.\varphi$. We know that $m \vDash b[s] \land \Pi_b[\varphi[s/\sigma]]$ for some $s\in\supp(m)$. So, by the induction hypothesis $m\vDash (\varphi[s/\sigma] \land \always b) \oplus \always(\lnot b)$, which we can weaken to $m\vDash ((\exists\langle\sigma\rangle.\varphi) \land \always b) \oplus \always(\lnot b)$

\item $\varphi = B$. Since $m\vDash B$, then $B$ must not contain any free variables, and therefore $B$ is a tautology. So we have:
\[
  m\vDash\top
  \quad\Leftrightarrow\quad
  m\vDash \always b \oplus \always(\lnot b)
  \quad\Leftrightarrow\quad
    m\vDash (B \land \always b) \oplus \always(\lnot b)
\]

\end{itemize}
We complete the derivation as follows:
\[
\inferrule*[right=\ruleref{Consequence}]{
    \inferrule*[Right=\ruleref{Choice}]{
      \inferrule*[right=\ruleref{Assume}]{
        \varphi\land\always b \vDash b
      }{
        \triple{\varphi\land\always b}{\assume b}{\varphi\land \always b}
      }
      \\
      \inferrule*[Right=\ruleref{Assume}]{
        \always (\lnot b) \vDash b = \zero
      }{
        \triple{\always(\lnot b)}{\assume b}{(\always(\lnot b))\odot\zero}
      }
    }{
    \triple{(\varphi \land \always b) \oplus \always(\lnot b)}{\assume b}{(\varphi \land\always b) \oplus (\always(\lnot b))\odot\zero}
  }
}{
  \triple{\Pi_b[\varphi]}{\assume b}{\varphi}
}
\]

\end{proof}

\section{Reusing Proof Fragments}
\label{app:examples}

\subsection{Integer Division}

Recall the definition of the program below that divides two integers.
\[
\code{Div} \triangleq \left\{
\begin{array}{l}
\var q \coloneqq 0 \fatsemi \var r \coloneqq \var a \fatsemi \phantom{x} \\
\whl{\var r \ge \var b}{\;} \\
\quad \var r \coloneqq \var r - \var b \fatsemi\phantom{x} \\
\quad \var q \coloneqq \var q+1
\end{array}
\right.
\]
To analyze this program with the \ruleref{Variant} rule, we need a family of variants $(\varphi_n)_{n \in \mathbb{N}}$, defined as follows.
\[
\varphi_n \triangleq \left\{
\begin{array}{lll}
\sure{\var q +n = \lfloor \var a\div \var b\rfloor \land \var r =(\var a\bmod \var b) + n \times \var b} & \text{if} & n \le \lfloor \var a\div \var b\rfloor \\
\bot & \text{if} & n > \lfloor \var a\div \var b\rfloor
\end{array}
\right.
\]
Additionally, it must be the case that $\varphi_n \vDash \var r \ge \var b$ for all $n\ge 1$ and $\varphi_0\vDash \var r < \var b$. For $n > \lfloor \var a\div\var b\rfloor$, $\varphi_n = \bot$ and $\bot\vDash \var r \ge \var b$ vacuously. If $1 \le n \le \lfloor \var a\div\var b\rfloor$, then we know that $\var r = (\var a \bmod{\var b}) + n\times \var b$, and since $n \ge 1$, then $\var r \ge \var b$. When $n=0$, we know that $\var r = \var a\bmod\var b$, and so by the definition of $\bmod$, it must be that $\var r < \var b$.

\begin{figure}[t]
\[\def\arraystretch{1.25}
\begin{array}{l}
\ob{\sure{\var a \ge 0 \land \var b > 0}}\\
\;\; \var q \coloneqq 0 \fatsemi \phantom{X}\\
\ob{\sure{\var a \ge 0 \land \var b > 0 \land \var q=0}}\\
\;\; \var r \coloneqq \var a \fatsemi \phantom{x} \\
\ob{\sure{\var a \ge 0 \land \var b > 0 \land \var q=0 \land \var r = \var a}} \implies\\
\ob{\sure{\var q + \lfloor \var a \div \var b\rfloor = \lfloor \var a\div \var b\rfloor \land \var r = (\var a \bmod{\var b}) + \lfloor \var a\div \var b\rfloor\times \var b}} \implies \\
\ob{\varphi_{\lfloor \var a \div \var b\rfloor }} \implies\\
\ob{\exists n:\mathbb N. \varphi_n}\\
\;\; \whl{\var r \ge \var b}{\;}\\
\quad \ob{\varphi_n} \implies\\
\quad \ob{\sure{\var q +n = \lfloor \var a\div \var b\rfloor \land \var r =(\var a\bmod \var b) + n \times \var b}}\\
\;\;\quad \var r \coloneqq \var r - \var b \fatsemi\phantom{x} \\
\quad \ob{\sure{\var q +n = \lfloor \var a\div \var b\rfloor \land \var r =(\var a\bmod \var b) + (n-1) \times \var b}}\\
\;\;\quad \var q \coloneqq \var q+1 \\
\quad \ob{\sure{\var q +(n-1) = \lfloor \var a\div \var b\rfloor \land \var r =(\var a\bmod \var b) + (n-1) \times \var b}} \implies\\
\quad \ob{\varphi_{n-1}}\\
\ob{\varphi_0} \implies\\
\ob{\sure{\var q + 0 = \lfloor \var a\div \var b\rfloor \land \var r =(\var a\bmod \var b) + 0 \times \var b}} \implies\\
\ob{\sure{\var q = \lfloor \var a\div \var b\rfloor \land \var r =(\var a\bmod \var b)}}
\end{array}
\]
\caption{Derivation for the $\mathsf{DIV}$ program.}
\label{fig:div}
\end{figure}

The derivation is given in \Cref{fig:div}. Most of the steps are obtained by straightforward applications of the inference rules, with consequences denoted by $\implies$. In the application of the \ruleref{Variant} rule, we only show the case where $n \le \lfloor \var a\div \var b\rfloor$. The case where $n > \lfloor \var a\div \var b\rfloor$ is vacuous by applying the \ruleref{False} rule from \Cref{fig:rules}.

\subsection{The Collatz Conjecture}

\begin{figure*}
\begin{minipage}{\linewidth}
\[\def\arraystretch{1.2}
\begin{array}{l}
\ob{\sure{\var a = n \land n>0}} \\
\;\; \var i \coloneqq 0 \fatsemi \phantom{x} \\
\ob{\sure{\var a = n \land n>0 \land \var i = 0}} \implies \\
\ob{\sure{\var a = f^{\var i}(n) \land \forall k<\var i. f^k(n) \neq 1}} \\
\;\;\whl{\var a \neq 1}{\;}\\
\quad \ob{\sure{\var a = f^{\var i}(n) \land \forall k<\var i. f^k(n) \neq 1 \land \var a \neq 1}} \implies \\
\quad \ob{\sure{\var a = f^{\var i}(n) \land \forall k<\var i+1. f^k(n) \neq 1}} \\
\;\;\quad \var b \coloneqq 2 \fatsemi \phantom{x} \\
\quad \ob{\sure{\var a = f^{\var i}(n) \land \forall k<\var i+1. f^k(n) \neq 1 \land \var b = 2}} \\
\;\;\quad \code{Div} \fatsemi\phantom{x}\\
\quad \ob{\sure{\var a = f^{\var i}(n) \land \forall k<\var i+1. f^k(n) \neq 1 \land \var b = 2 \land \var q = \lfloor \var a\div \var b\rfloor \land \var r =(\var a\bmod \var b)}} \implies\\
\quad \ob{\sure{\var a = f^{\var i}(n) \land \forall k<\var i+1. f^k(n) \neq 1 \land \var q = \lfloor f^{\var i}(n)\div 2\rfloor \land \var r =(f^{\var i}(n)\bmod 2)}}\\
\;\; \quad \code{if} ~\var r = 0 ~ \code{then} \\
\quad\quad \ob{\sure{\var a = f^{\var i}(n) \land \forall k<\var i+1. f^k(n) \neq 1 \land \var q = \lfloor f^{\var i}(n)\div 2\rfloor \land \var r =(f^{\var i}(n)\bmod 2)  \land \var r = 0}} \implies\\
\quad\quad \ob{\sure{\forall k<\var i+1. f^k(n) \neq 1 \land \var q = \lfloor f^{\var i}(n)\div 2\rfloor \land \var (f^{\var i}(n)\bmod 2) = 0}}\\
\;\;\quad\quad {\var a \coloneqq \var q} \\
\quad\quad \ob{\sure{\forall k<\var i+1. f^k(n) \neq 1 \land \var a = \lfloor f^{\var i}(n)\div 2\rfloor \land (f^{\var i}(n)\bmod 2)= 0}} \implies\\
\quad\quad \ob{\always( \var a =  f^{\var i +1 }(n) \land \forall k<\var i+1. f^k(n) \neq 1)}\\
\;\;\quad\code{else} \\
\quad\quad \ob{\sure{\var a = f^{\var i}(n) \land \forall k<\var i+1. f^k(n) \neq 1 \land \var q = \lfloor f^{\var i}(n)\div 2\rfloor \land \var r =(f^{\var i}(n)\bmod 2)  \land \var r \neq 0}} \implies\\
\quad\quad \ob{\sure{\var a = f^{\var i}(n) \land \forall k<\var i+1. f^k(n) \neq 1 \land (f^{\var i}(n)\bmod 2) =1}} \\
\;\;\quad\quad {\var a \coloneqq 3 \times \var a + 1} \fatsemi\phantom{x} \\
\quad\quad \ob{\sure{\var a = 3\times f^{\var i}(n) +1 \land \forall k<\var i+1. f^k(n) \neq 1 \land (f^{\var i}(n)\bmod 2) = 1}} \implies\\
\quad\quad \ob{\always( \var a =  f^{\var i +1 }(n) \land \forall k<\var i+1. f^k(n) \neq 1)}\\
\quad \ob{\always( \var a =  f^{\var i +1 }(n) \land \forall k<\var i+1. f^k(n) \neq 1)}\\
\;\;\quad \var i \coloneqq \var i+1 \\
\quad \ob{\always( \var a =  f^{\var i }(n) \land \forall k<\var i. f^k(n) \neq 1)}\\
\ob{\always( \var a =  f^{\var i }(n) \land \forall k<\var i. f^k(n) \neq 1 \land \var a = 1)}\implies\\
\ob{\always(\var i = S_n)}
\end{array}
\]
\end{minipage}
\caption{Derivation for the $\mathsf{COLLATZ}$ program.}
\label{fig:collatz}
\end{figure*}

Recall the definition of the program below that finds the stopping time of some positive number $n$.
\[
\code{Collatz} \triangleq \left\{
\begin{array}{l}
\var i \coloneqq 0 \fatsemi \phantom{x} \\
\whl{\var a \neq 1}{\;}\\
\quad \var b \coloneqq 2 \fatsemi
 \code{Div} \fatsemi\phantom{x}\\
\quad \iftf{\var r = 0}{\var a \coloneqq \var q}{\var a \coloneqq 3 \times \var a + 1} \fatsemi\phantom{x} \\
\quad \var i \coloneqq \var i+1
\end{array}
\right.
\]
The derivation is shown in \Cref{fig:collatz}.
Since we do not know if the program will terminate, we use the \ruleref{Invariant} rule to obtain a partial correctness specification. We choose the loop invariant:
\[
\var a = f^{\var i}(n) \quad\land\quad \forall k<\var i.\ f^k(n) \neq 1
\]
So, on each iteration of the loop, $\var a$ holds the value of applying $f$ repeatedly $\var i$ times to $n$, and 1 has not yet appeared in this sequence.

Immediately upon entering the while loop, we see that $\var a = f^{\var i}(n) \neq 1$, and so from that and the fact that $\forall k<\var i. f^k(n)\neq 1$, we can conclude that $\forall k<\var i+1. f^k(n)\neq 1$.

The $\mathsf{DIV}$ program is analyzed by inserting the proof from \Cref{fig:div}, along with an application of the rule of \ruleref{Constancy} to add information about the other variables. We can omit the $\always$ modality from rule of \ruleref{Constancy}, since $\sure P \land \always Q \Leftrightarrow \sure{P\land Q}$.

When it comes time to analyze the if statement, we use the \ruleref{If (Hoare)} rule (\Cref{lem:ifhoare}) to get a partial correctness specification. The structure of the if statement mirrors the definition of $f(n)$, so the effect is the same as applying $f$ to $\var a$ one more time, therefore we get that $\var a = f^{\var i+1}(n)$.

After exiting the while loop, we know that $f^{\var i}(n) = 1$ and $f^k(n)\neq 1$ for all $k <\var i$, therefore $\var i$ is (by definition) the stopping time, $S_n$.

\subsection{Embedding Division in a Probabilistic Program}

Recall the program that loops an even number of iterations with probability $\frac23$ and an odd number of iterations with probability $\frac13$.
\[
\var a := 0 \fatsemi \var r := 0 \fatsemi \left(\var a := \var a +1\fatsemi \var b:= 2 \fatsemi \code{Div}\right)^{\langle \frac12\rangle}
\]
To analyze this program with the \ruleref{Iter} rule, we define the two families of assertions below for $n\in\mathbb{N}$.
\[
\varphi_n \triangleq \wg{\var a = n \land \var r = \var a \bmod 2}{\frac1{2^n}}
\qquad
\psi_n \triangleq \wg{\var a = n \land \var r = \var a \bmod 2}{\frac1{2^{n+1}}}
\]
Additionally, let $\psi_\infty = \sure{\var r = 0}\oplus_{\frac23} \sure{\var r = 1}$. We now show that $(\psi_n)_{n\in\mathbb{N}^\infty}$ converges. Suppose that $m_n \vDash \psi_n$ for each $n\in\mathbb N$. So $m_n \vDash \wg{\var r = 0}{\frac1{2^{n+1}}}$ for all even $n$ and $m_n \vDash \wg{\var r = 1}{\frac1{2^{n+1}}}$ for all odd $n$. In other words, the cummulative probability mass for each $m_n$ where $n$ is even is:
\[
\sum_{k \in \mathbb N}\frac1{2^{2k+1}} = \frac12\cdot\sum_{k\in\mathbb N}\left(\frac14\right)^k = \frac12\cdot \frac1{1 - \frac14} = \frac23
\]
Where the second-to-last step is obtained using the standard formula for geometric series. Similarly, the total probability mass for $n$ being odd is:
\[
\sum_{k \in \mathbb N}\frac1{2^{2k+2}} = \frac14\cdot\sum_{k\in\mathbb N}\left(\frac14\right)^k = \frac14\cdot \frac1{1 - \frac14} = \frac13
\]
We therefore get that $\sum_{n\in\mathbb N} m_n\vDash \sure{\var r = 0}\oplus_{\frac23}\sure{\var r = 1}$. Having shown that, we complete the derivation, shown in \Cref{fig:probmain}. 
Two proof obligations are generated by applying the \ruleref{Iter} rule, the first is proven in \Cref{fig:probmain}. Note that in order to apply our previous proof for the $\mathsf{DIV}$ program, it is necessary to use the \ruleref{Scale} rule. The second proof obligation of the \ruleref{Iter} rule is to show $\triple{\varphi_n}{\assume{\frac12}}{\psi_n}$, which is easily dispatched using the \ruleref{Assume} rule.
\begin{figure}[t]
\[
\begin{array}{l|l}
\begin{array}{l}
\ob{\sure\tru} \\
\;\; \var a := 0 \fatsemi \phantom{\;} \\
\ob{\sure{\var a = 0}} \\
\;\;\var r := 0 \fatsemi \phantom{\;} \\ 
\ob{\sure{\var a = 0 \land \var r = 0}} \implies \\
\ob{\wg{\var a = 0 \land \var r = 0 \bmod 2}1} \implies \\
\ob{\varphi_0} \\
\left(\begin{array}{l}
\var a := \var a +1\fatsemi \phantom{\;} \\
 \var b:= 2 \fatsemi \phantom{\;} \\
\code{Div}
\end{array}
\right)^{\langle \frac12\rangle} \\
\ob{\psi_\infty} \implies \\
\ob{\sure{\var r = 0}\oplus_{\frac23}\sure{\var r = 1}}
\end{array}
\quad
&
\quad
\begin{array}{l}
\ob{\varphi_n} \implies \\
\ob{\wg{\var a = n}{\frac1{2^n}}} \\
\;\; \assume{\frac12} \fatsemi \phantom{\;}\\
\ob{\wg{\var a = n}{\frac1{2^{n+1}}}} \\
\;\; \var a := \var a + 1 \fatsemi \phantom{\;}\\
\ob{\wg{\var a = n+1}{\frac1{2^{n+1}}}} \\
\;\;  \var b:= 2 \fatsemi \phantom{\;} \\
\ob{\wg{\var a = n+1 \land \var b = 2}{\frac1{2^{n+1}}}} \\
\;\; \code{Div} \\
\ob{\wg{\var a = n+1 \land \var r = \var a \bmod 2}{\frac1{2^{n+1}}}} \implies\\
\ob{\varphi_{n+1}}
\end{array}
\end{array}
\]
\caption{Left: derivation for the main body of the probabilistic looping program. Right: derivation of the probabilistic loop.}
\label{fig:probmain}
\end{figure}

\section{Graph Problems and Quantitative Analysis}

\subsection{Counting Random Walks}
\label{app:walk}

\begin{figure*}
\begin{minipage}{\linewidth}
\[\def\arraystretch{1}
\begin{array}{l}
\ob{\sure{\var x = 0 \land \var y = 0}} \implies \\
\ob{\varphi_{N+M}} \implies \\
\ob{\exists n:\mathbb{N}.~\varphi_n} \\
\whl{\var x < N \vee \var y < M}{}\\
\quad\ob{\varphi_{n+1}} \\
\quad \code{if}~\var x < N \land \var y < M ~\code{then} \\
\quad\quad \ob{ \smashoperator{\bigoplus_{k = \max(1, n+1-M)}^{\min(N, n)}} \wg{\var x = N-k \land \var y = M - (n+1-k) }{{N+M - (n+1) \choose N-k}} } \\
\quad\quad (\var x \coloneqq \var x +1 ) + (\var y \coloneqq \var y +1 ) \\
\quad\quad \ob{ \smashoperator{\bigoplus_{k = \max(1, n+1-M)}^{\min(N, n)}}
\wg{\var x = N-k+1 \land \var y = M - (n+1-k) }{{N+M - (n+1) \choose N-k}} \oplus 
\wg{\var x = N-k \land \var y = M - (n-k) }{{N+M - (n+1) \choose N-k}}
} \\
\quad \code{else if}~\var x \ge N ~\code{then} \\
\quad\quad \ob{ \smashoperator{\bigoplus_{k = \max(0, n+1-M)}^{0}} \wg{\var x = N-k \land \var y = M - (n+1-k)}{{N+M - (n+1) \choose N-k}} } \\
\quad\quad \var y \coloneqq \var y + 1 \\
\quad\quad \ob{ \smashoperator{\bigoplus_{k = \max(0, n+1-M)}^{0}} \wg{\var x = N-k \land \var y = M - (n-k)}{{N+M - (n+1) \choose N-k}} } \\
\quad \code{else} \\
\quad\quad \ob{ \smashoperator{\bigoplus_{k = n+1}^{\min(N, n+1)}} \wg{\var x = N-k \land \var y = M - (n+1-k)}{{N+M - (n+1) \choose N-k}} } \\
\quad\quad \var x \coloneqq \var x + 1 \\
\quad\quad \ob{ \smashoperator{\bigoplus_{k = n+1}^{\min(N, n+1)}} \wg{\var x = N-k+1 \land \var y = M - (n+1-k)}{{N+M - (n+1) \choose N-k}} } \\
\quad \ob{\varphi_n} \\
\ob{\wg{\var x = N \land \var y = M}{{N+M\choose N}} }
\end{array}
\]
\end{minipage}
\caption{Random walk proof}
\label{fig:walk}
\end{figure*}

Recall the following program that performs a random walk on a two dimensional grid in order to discover how many paths exist between the origin $(0,0)$ and the point $(N,M)$.
\[
\code{Walk} \triangleq \left\{
\begin{array}{l}
\whl{\var x < N \vee \var y < M}{} \\
\quad \code{if}~\var x < N \land \var y < M ~\code{then} \\
\quad\quad (\var x \coloneqq \var x +1 ) + (\var y \coloneqq \var y +1 ) \\
\quad \code{else if}~\var x \ge N ~\code{then} \\
\quad\quad \var y \coloneqq \var y + 1 \\
\quad \code{else} \\
\quad\quad \var x \coloneqq \var x + 1
\end{array}\right.
\]
The derivation is provided in \Cref{fig:walk}.
Since this program is guaranteed to terminate after exactly $N+M$ steps, we use the following loop \ruleref{Variant}, where the bounds for $k$ are described in \Cref{sec:walk}.
\[
\varphi_n \triangleq \smashoperator{\bigoplus_{k = \max(0, n-M)}^{\min(N, n)}} \wg{\var x = N-k \land \var y = M - (n-k)}{{N+M - n \choose N-k}}
\]
Recall that $n$ indicates how many steps $(x,y)$ is from $(N,M)$, so $\varphi_{N+M}$ is the precondition and $\varphi_0$ is the postcondition.
Upon entering the while loop, we encounter nested if statements, which we analyze with the \ruleref{If} rule. This requires us to split $\varphi_{n+1}$ into three components, satisfying $n < N \land y<M$, $n \ge N$, and $y \ge M$, respectively. The assertion $x \ge N$ is only possible if we have already taken at least $N$ steps, or in other words, if $n+1 \le (N+M) - N = M$. Letting $k$ range from $\max(0, n+1-M)$ to $0$ therefore gives us a single term $k=0$ when $n+1 \le M$ and an empty conjunction otherwise. A similar argument holds when $y \ge M$. All the other outcomes go into the first branch, where we preclude the $k=0$ and $k=n+1$ cases since it must be true that $x\neq N$ and $y\neq M$.

Let $P(n, k) = (x=N-k\land y=M-(n-k))$. Using this shorthand, the postcondition at the end of the if statement is obtained by taking an outcome conjunction of the results from the three branches.
\begin{align*}
&\smashoperator{\bigoplus_{k = \max(1, n+1-M)}^{\min(N, n)}}
\wg{P(n, k-1)}{{N+M - (n+1) \choose N-k}}
\oplus 
\smashoperator{\bigoplus_{k = \max(1, n+1-M)}^{\min(N, n)}}\wg{P(n, k)}{{N+M - (n+1) \choose N-k}}
\\& \oplus
\smashoperator{\bigoplus_{k = \max(0, n+1-M)}^{0}} \wg{P(n, k)}{{N+M - (n+1) \choose N-k}}
\oplus
\smashoperator{\bigoplus_{k = n+1}^{\min(N, n+1)}} \wg{P(n, k-1)}{{N+M - (n+1) \choose N-k}}
\end{align*}
Now, we can combine the conjunctions with like terms.
\[
\smashoperator{\bigoplus_{k = \max(1, n+1-M)}^{\min(N, n+1)}}\wg{P(n, k-1)}{{N+M - (n+1) \choose N-k}}
\oplus
\smashoperator{\bigoplus_{k = \max(0, n+1-M)}^{\min(N, n)}}\wg{P(n, k)}{{N+M - (n+1) \choose N-k}}
\]
And adjust the bounds on the first conjunction by subtracting 1 from the lower and upper bounds of $k$:
\[
\smashoperator{\bigoplus_{k = \max(0, n-M)}^{\min(N-1, n)}}\wg{P(n, k)}{{N+M - (n+1) \choose N-(k+1)}}
\oplus
\smashoperator{\bigoplus_{k = \max(0, n+1-M)}^{\min(N, n)}}\wg{P(n, k)}{{N+M - (n+1) \choose N-k}}
\]
Now, we examine when the bounds of these two conjunctions differ. If $n \ge M$, then the first conjunction has an extra $k = n-M$ term. Similarly, the second conjunction has an extra $k = N$ term when $n \ge N$.
Based on that observation, we split them as follows:
\begin{align*}
&\smashoperator{\bigoplus_{k \in \{n-M \mid n \ge M \}}}\wg{P(n, k)}{{N+M - (n+1) \choose N-(k+1)}}
\oplus\smashoperator{\bigoplus_{k \in \{ N \mid n \ge N \}}}\wg{P(n, k)}{{N+M - (n+1) \choose N-k}}
\\&\oplus
\smashoperator{\bigoplus_{k = \max(0, n+1-M)}^{\min(N-1, n)}}\wg{P(n, k)}{{N+M - (n+1) \choose N-(k+1)} + {N+M - (n+1) \choose N-k}}
\end{align*}
Knowing that $k=n-M$ in the first conjunction, we get that:
\[
{N+M-(n+1)\choose N-(k+1)}
=
{N+M-(n+1) \choose N- (n-M +1)}
=
1
=
{N+M - n \choose N - k}
\]
Similarly, for the second conjunction we get the same weight. Also, observe that for any $a$ and $b$:
\begin{align*}
{a \choose b}+{a \choose b +1 }
&= \frac{a!}{b!(a-b)!} + \frac{a!}{(b+1)!(a-b-1)!} \\
&= \frac{a!}{b!(a-b)(a-b-1)!} + \frac{a!}{(b+1)b!(a-b-1)!} \\
&= \frac{a!(b+1) + a!(a-b)}{(b+1)b!(a-b)(a-b-1)!} \\
&= \frac{a!(b+1+a-b)}{(b+1)!(a-b)!} \\
&= \frac{(a+1)!}{(b+1)!((a+1)-(b+1))!} \\
&= {a+1 \choose b+1}
\end{align*}
So, letting $a=N+M-(n+1)$ and $b = N-(k+1)$, it follows that:
\[
{N+M - (n+1) \choose N-(k+1)} + {N+M - (n+1) \choose N-k}
=
{N+M - n \choose N-k}
\]
We can therefore rewrite the assertion as follows:
\begin{align*}
&\smashoperator{\bigoplus_{k \in \{n-M \mid n \ge M \}}}\wg{P(n, k)}{{N+M - n \choose N-k}}
\oplus\smashoperator{\bigoplus_{k \in \{ N \mid n \ge N \}}}\wg{P(n, k)}{{N+M - n \choose N-k}}
\\&\oplus
\smashoperator{\bigoplus_{k = \max(0, n+1-M)}^{\min(N-1, n)}}\wg{P(n, k)}{{N+M - n \choose N-k}}
\end{align*}
And by recombining the terms, we get:
\[
\smashoperator{\bigoplus_{k = \max(0, n-M)}^{\min(N, n)}}\wg{P(n, k)}{{N+M - n \choose N-k}}
\]
Which is precisely $\varphi_n$. According to the \ruleref{Variant} rule, the final postcondition is just $\varphi_0$.

\subsection{Shortest Paths}

Recall the following program that nondeterministically finds the shortest path from $s$ to $t$ using a model of computation based on the tropical semiring (\Cref{ex:tropicalsemi}).
\[
\code{SP} \triangleq \left\{
\begin{array}{l}
\whl{\var{pos}\neq \var{t}}{} \\
\quad \var{next} \coloneqq 1 \fatsemi \phantom{x} \\
\quad \iter{\left( \var{next} \coloneqq \var{next}+1\right)}{\var{next} < N}{~G[\var{pos}][\var{next}]} \fatsemi \phantom{x} \\
\quad \var{pos} \coloneqq \var{next}\fatsemi\phantom{x} \\
\quad \assume 1
\end{array}
\right.
\]
The derivation is shown in \Cref{fig:shortestpath}. We use the \ruleref{While} rule to analyze the outer loop. This requires the following families of assertions, where $\varphi_n$ represents the outcomes where the guard remains true after exactly $n$ iterations and $\psi_n$ represents the outcomes where the loop guard is false after $n$ iterations. Let $I = \{1, \ldots, N\} \setminus \{t\}$.
\[
\varphi_n \triangleq \smashoperator{\bigoplus_{i \in I}} \wg{\var{pos} = i}{\short_n^t(G, \var s, i) + n}
\]
\[
\psi_n \triangleq \wg{\var{pos} = \var t}{\short_n^t(G, \var s, \var t) + n}
\qquad\qquad
\psi_\infty \triangleq \wg{\var{pos} = \var t}{\short(G, \var s, \var t)}
\]
We now argue that $\conv\psi$. Take any $(m_n)_{n\in\mathbb N}$ such that $m_n \vDash\psi_n$ for each $n$, which means that $|m_n| = \short_n^t(G,s,t)+n$ and $\supp(m_n) \subseteq (\var{pos} = t)$. In the tropical semiring, $|m_n|$ corresponds to the minimum weight of any element in $\supp(m_n)$, so we know there is some $\sigma \in \supp(m_n)$ such that $m_n(\sigma) = \short_n^t(G,s,t) + n$, and since $\short_n^t(G,s,t)$ is Boolean valued and $\tru = 0$ and $\fls = \infty$, then $m_n(\sigma)$ is either $n$ or $\infty$.

By definition, the minimum $n$ for which $\short_n^t(G,s,t) = \tru$ is $\short(G, s, t)$, so for all $n < \short_n^t(G,s,t)$, it must be the case that $|m_n| = \infty$ and for all $n \ge \short_n^t(G,s,t)$, it must be the case that $|m_n| = n$. Now, $|\sum_{n\in\mathbb N}m_n| = \min_{n\in\mathbb N} |m_n| = \short_n^t(G,s,t)$, and since all elements of each $m_n$ satisfies $\var{pos} = t$, then we get that $\sum_{n\in\mathbb N}m_n\vDash\psi_\infty$.

Now, we will analyze the inner iteration using the \ruleref{Iter} rule and the following two families of assertions, which we will assume are 1-indexed for simplicity of the proof.
\begin{figure}[t]
\begin{minipage}{\linewidth}
\[\def\arraystretch{1.5}
\begin{array}{l}
\ob{\sure{\var{pos} = \var s}} \implies \\
\ob{\bigoplus_{i = 1}^N \wg{\var{pos} = i}{\short_0^t(G, \var s, i)}} \implies \\
\ob{\varphi_0 \oplus \psi_0} \\
\;\whl{\var{pos}\neq \var{t}}{} \\
\quad\ob{\varphi_n} \implies \\
\quad \ob{\smashoperator{\bigoplus_{i \in I}} \wg{\var{pos} = i}{\short_n^t(G, \var s, i)+n}} \\
\quad\; \var{next} \coloneqq 1 \fatsemi \phantom{x} \\
\quad \ob{{\bigoplus_{i \in I}} \wg{\var{pos} = i \land \var{next} = 1}{\short_n^t(G, \var s, i)+n}} \\
\quad\; \iter{\left(
\var{next} \coloneqq \var{next}+1
\right)}{\var{next} < N}{~G[\var{pos}][\var{next}]} \fatsemi \phantom{x} \\
\quad \ob{ \bigoplus_{j=1}^N {\bigoplus_{i \in I}} \wg{\var{pos} = i \land \var{next} = j}{(\short_n^t(G, \var s, i)\land G[i][j])+n} } \implies \\
\quad \ob{ \bigoplus_{j=1}^N {\bigoplus_{i \in I}} \wg{\var{next} = j}{(\short_n^t(G, \var s, i)\land G[i][j])+n} } \\
\quad\; \var{pos} \coloneqq \var{next} \fatsemi \phantom{x}\\
\quad \ob{ \bigoplus_{j=1}^N {\bigoplus_{i \in I}} \wg{\var{pos} = j}{(\short_n^t(G, \var s, i)\land G[i][j])+n} } \\
\quad\; \assume 1 \\
\quad \ob{ \bigoplus_{j=1}^N {\bigoplus_{i \in I}} \wg{\var{pos} = j}{(\short_n^t(G, \var s, i)\land G[i][j])+n+1} } \implies \\
\quad \ob{ \bigoplus_{j=1}^N \wg{\var{pos} = j}{\short_{n+1}^t(G, \var s, j)+n+1} } \implies\\
\quad \ob{ \varphi_{n+1} \oplus \psi_{n+1}} \\
\ob{\psi_\infty} \implies \\
\ob{ \wg{\var{pos} = \var t}{\short(G, \var s, \var t)} }
\end{array}
\]
\end{minipage}
\caption{Shortest path proof}
\label{fig:shortestpath}
\end{figure}
\begin{align*}
\vartheta_j &\triangleq \left\{
\def\arraystretch{1.4}
\begin{array}{lll}
\bigoplus_{i\in I} \wg{\var{pos}=i \land \var{next} = j}{\short_n^t(G, s, i) +n} & \text{if} & j < N \\
\top\odot\zero & \text{if} & j \ge N
\end{array}\right.
\\
\xi_j &\triangleq\left\{
\def\arraystretch{1.4}
\begin{array}{lll}
\bigoplus_{i\in I} \wg{\var{pos}=i \land \var{next} = j}{(\short_n^t(G, s, i)\land G[i][j]) +n} & \text{if} & j < N \\
\top\odot\zero & \text{if} & j \ge N
\end{array}\right.
\\
\xi_\infty &\triangleq \bigoplus_{j=1}^N {\bigoplus_{i \in I}} \wg{\var{pos} = i \land \var{next} = j}{(\short_n^t(G, \var s, i)\land G[i][j])+n}
\end{align*}
It is easy to see that $\conv\xi$ since $\xi_\infty$ is by definition an outcome conjunction of all the non-empty terms $\xi_j$. When $j < N$, then we get $\vartheta_j \vDash \var{next} < N$, so we dispatch the first proof obligation of the \ruleref{Iter} rule as follows:
\[\def\arraystretch{1.25}
\begin{array}{l}
\ob{\vartheta_j} \implies \\
\ob{ \bigoplus_{i\in I} \wg{\var{pos} = i \land \var{next}= j}{\short_n^t(G, s, i) + n}} \\
\; \assume{\var{next} < N} \fatsemi\phantom{x} \\
\ob{ \bigoplus_{i\in I} \wg{\var{pos} = i \land \var{next}= j}{\short_n^t(G, s, i) + n}} \\
\; \var{next} \coloneqq \var{next}+1 \\
\ob{ \bigoplus_{i\in I} \wg{\var{pos} = i \land \var{next}= j+1}{\short_n^t(G, s, i) + n}} \implies \\
\ob{\vartheta_{j+1}}
\end{array}
\]
When instead $j \ge N$, then we know that $\vartheta_j\vDash \lnot (\var{next}< N)$ and so it is easy to see that:
\[
\triple{\vartheta_j}{\assume{\var{next}<N}\fatsemi \var{next} \coloneqq \var{next}+1}{\top\odot\zero}
\]
For the second proof obligation, we must show that:
\[
\triple{\vartheta_j}{\assume{G[\var{pos}][\var{next}]}}{\xi_j}
\]
For each outcome, we know that $\var{pos} = i$ and $\var{next}= j$. If $G[i][j] = \tru = 0$, then $(\short_n^t(G,s,i)\land G[i][j]) +n= \short_n^t(G, s, i)+n$, so the postcondition is unchanged. If $G[i][j] = \fls = \infty$, then $(\short_n^t(G,s,i)\land G[i][j]) +n = \infty$ and the outcome is eliminated as expected.
We now justify the consequence after $\assume 1$. Consider the term:
\[
\bigoplus_{i \in I} \wg{\var{pos} = j}{(\short_n^t(G, \var s, i)\land G[i][j])+n+1}
\]
This corresponds to just taking the outcome of minimum weight, which will be $n+1$ if $\short_n(G, \var s, i)\land G[i][j]$ is true for some $i\in I$ and $\infty$ otherwise. By definition, this corresponds exactly to $\short_{n+1}^t(G, s, j) + n+1$.



\fi

\end{document}